\newcommand{\ud}{\mathrm{d}}
\newcommand{\ii}{\mathrm{i}}
\newcommand{\cH}{\mathcal{H}}
\theoremstyle{plain}
\newtheorem{theorem}{Theorem}[section]
\newtheorem{lemma}[theorem]{Lemma}
\newtheorem{corollary}[theorem]{Corollary}
\newtheorem{proposition}[theorem]{Proposition}
\theoremstyle{definition}
\newtheorem{remark}[theorem]{Remark}
\newtheorem*{remark*}{Remark}
\numberwithin{equation}{section}
\begin{document}

\title[Non-relativistic limit of Dirac Hamiltonians with Aharonov-Bohm fields]
{Non-relativistic limit of Dirac Hamiltonians with Aharonov-Bohm fields}

\author[M.~Gallone]{Matteo Gallone}
\address[M.~Gallone]{International School for Advanced Studies, SISSA Trieste}
\email{matteo.gallone@sissa.it}

\author[A.~Michelangeli]{Alessandro Michelangeli}
\address[A.~Michelangeli]{Mathematics and Science Department, AUBG Blagoevgrad \\
and Hausdorff Center for Mathematics, HCM Bonn\\ and  Trieste Institute for Theoretical Quantum Technologies, TQT Trieste}
\email{amichelangeli@aubg.edu}

\author[D.~Noja]{Diego Noja}
\address[D.~Noja]{Department of Mathematics and Applications, University of Milano Bicocca}
\email{diego.noja@unimib.it}

\date{\today}

\subjclass[2010]{81Q10, 34L40, 47B25, 47N50.}

\keywords{Dirac Aharonov-Bohm operators; Schr\"{o}dinger Aharonov-Bohm operators; Kre\u{\i}n-Vi\v{s}ik-Birman-Grubb self-adjoint extension theory;  Non-relativistic limit}

\thanks{\emph{Acknowledgements.} For this work all authors gratefully acknowledge the support of the INdAM-GNFM. In addition, M.G.~acknowledges financial support by the the European Research Council (ERC) under the European Union’s Horizon 2020 research and innovation program ERC StG MaMBoQ, n.802901, as well as the kind hospitality of the Department of Mathematics and Applications of the University of Milano Bicocca, where part of this work was carried out. A.M.~is also most grateful to the David Flanagan Professorship fund at the AUBG, American University Bulgaria, and to the Alexander von Humboldt Foundation, Germany. Finally, D.N. further acknowledges the support of the Next Generation EU - Prin 2022 project "Singular Interactions and Effective Models in Mathematical Physics- 2022CHELC7"  }

\begin{abstract}
	We characterise the families of self-adjoint Dirac and Schr\"{o}dinger operators with Aharonov-Bohm magnetic field, and we exploit the non-relativistic limit of infinite light speed to connect the former to the latter. The limit consists of the customary removal of the rest energy and of a suitable scaling, with the light speed, of the short-scale boundary condition of self-adjointness. This ensures that the scattering length of the Aharonov-Bohm interaction is preserved along the limit. Noteworthy is the fact that the whole family of Dirac-AB operators is mapped, in the non-relativistic limit, into the physically relevant sub-family of $s$-wave, angular-momentum-commuting, Schr\"{o}\-dinger-AB Hamiltonians with relativistic Dirac approximants.
\end{abstract}

\maketitle

 
\section{Introduction}\label{sec:intro}
The motion of a charged spin-$\frac12$ quantum particle in the presence of a thin, infinite solenoid displays the celebrated \emph{Aharonov-Bohm effect}, with the gauge vector potential affecting the particle's motion even in regions where both the electric and magnetic fields are zero.

This is described by the Dirac-Aharonov-Bohm Hamiltonian (Dirac-AB) in the relativistic setting, and by the Schr\"{o}dinger-Aharonov-Bohm Hamiltonian (Schr\"{o}dinger-AB) in the non-relativistic setting. As a matter of fact, it is well known that neither such operator, when minimally defined on smooth functions supported away from the solenoid, is self-adjoint. This is the standard signature of the need of further physics to be declared in the vicinity of the solenoid, in the form of suitable asymptotic behavior of the wave functions in the Hamiltonian's domain.

The goal of this paper is two-fold. On the one hand, we characterise the families of (relevant) self-adjoint realisations of the Aharonov-Bohm model both in the relativistic and in the non-relativistic case. To this aim, we shall proceed by exploiting the Kre\u{\i}n-Vi\v{s}ik-Birman-Grubb scheme for self-adjoint extensions of densely defined and gapped symmetric operators in Hilbert space.

After that, we identify, for each Dirac-AB self-adjoint Hamiltonian, its `non-relativistic limit' obtained by subtracting the rest energy $mc^2$ and sending $c\to +\infty$ (infinite speed of light). We show, in particular, that a suitable, physically meaningful re-scaling in $c$ allows to connect Dirac-AB with Schr\"{o}dinger-AB self-adjoint extensions, thereby providing a complete picture of the relativistic vs non-relativistic correspondence.

In this Section we present the two classes of models and the nature of the $c\to +\infty$ limit. We then state and discuss our main results, supplemented with further comments and remarks.

\subsection{Relativistic setting}

In the usual modelling of the Aharonov-Bohm effect, the solenoid is idealised as straight and infinitesimally thin. Owing to the natural translational symmetry along the solenoid's axis, the analysis of the physical problem is reduced to the two-dimensional plane orthogonal to such axis.

Consider first the relativistic setting. It should be actually observed that not only is the Dirac-AB operator a model for a relativistic particle in an Aharonov-Bohm field \cite{deSousaGerbert_1989,Tamura_rel_2003, Persson06,Khalilov2010,GTV-2012,Khalilov2014}, but also a meaningful tool for modelling scattering and pair production from a gravitational spinning point source, a model known as cosmic string \cite{Alford-Wilczeck-1989, deSousaGerbert_1989,deSousaGerbert_Jackiw_1989, Khalilov2010, Khalilov2014}.

The Hilbert space of a relativistic spin-$\frac{1}{2}$ particle in two spatial dimension is
\begin{equation}\label{eq:Hilbsp}
 \mathcal{H}\,:=\,L^2(\mathbb{R}^2, \ud x \, \ud y) \otimes \mathbb{C}^2\,\cong\, L^2(\mathbb{R}^2;\mathbb{C}^2, \ud x \, \ud y)\,\cong\,L^2(\mathbb{R}^2;\mathbb{C}, \ud x \, \ud y)\oplus L^2(\mathbb{R}^2;\mathbb{C}, \ud x \, \ud y) \,.
\end{equation}

The dynamics of the \emph{free} particle is described by the Dirac operator acting as
\begin{equation}\label{freeDirac}
H_{\mathrm{free}}=-\ii \,c\, \hbar\, \boldsymbol{\sigma}\cdot \nabla + \sigma_3\, m \,c^2 \, ,
\end{equation}
where, as customary, $m\geqslant 0$ is the particle's mass, $\boldsymbol{\sigma}$ is the vector $\boldsymbol{\sigma}=(\sigma_1, \sigma_2,\sigma_3)$ of Pauli matrices
\begin{equation}
	\sigma_1 \;:= \; \begin{pmatrix}
		0 & 1 \\
		1 & 0
	\end{pmatrix} \, , \qquad \sigma_2 \;:=\; \begin{pmatrix}
		0 & -\ii \\
		\ii & 0
	\end{pmatrix} \, , \qquad \sigma_3 \;:= \; \begin{pmatrix}
		1 & 0 	\\
		0 & -1
	\end{pmatrix} \, ,
\end{equation}
and  $\nabla=(\partial_x, \partial_y, 0)$. We adopt units throughout so that the Plank's constant is conveniently set to $\hbar=1$, since we are solely discussing the quantum setting with no focus on $\hbar\to 0$ semi-classics.

To account for the magnetic field for the Aharonov-Bohm effect, one modifies \eqref{freeDirac} above by introducing on $\mathbb{R}^2$ the Aharonov-Bohm vector potential
\begin{equation}\label{ABvector}
A_\alpha(x,y)=-\Big(\alpha+\frac{1}{2}\Big)\Big(\frac{-y}{\,x^2+y^2\,},\frac{x}{\,x^2+y^2\,}\Big)\,\equiv\,(A_{1,\alpha},A_{2,\alpha})\,,
\end{equation}
which corresponds to the $\delta$-type magnetic field
\begin{equation}\label{eq:Bfield}
{ B}_\alpha=\partial_xA_{2,\alpha}-\partial_y A_{1,\alpha}=-2\pi\Big(\alpha+\frac{1}{2}\Big)\delta_{(x,y)=(0,0)}
\end{equation}
located at the origin with flux
\begin{equation}\label{eq:fluxp}
  \Phi_\alpha\,=\,-2\pi\Big(\alpha+\frac{1}{2}\Big)\,,
\end{equation}
as a standard computation in the sense of Schwartz distributions shows.

Thus, by making use of minimal magnetic coupling in \eqref{freeDirac}, one obtains the operator
\begin{equation}\label{magneticDirac}
	H_{\alpha}=\begin{pmatrix}
		mc^2 & c (- \ii \partial_x-A_{1,\alpha})-\ii c(-\ii \partial_y-A_{2,\alpha}) \\
		c(-\ii \partial_x-A_{1,\alpha})+\ii c(-\ii \partial_y-A_{2,\alpha}) & - mc^2
	\end{pmatrix} \,,
\end{equation}
expressed in blocks with respect to the last representation \eqref{eq:Hilbsp} of the underlying Hilbert space in direct orthogonal sum. This is the (action of the) Dirac-AB operator with flux parameter \eqref{eq:fluxp}.

In the expressions above we directly switched to units where the particle's charge is $q=1$, another constant that is mute in the present discussion. We also omitted to write the pre-factor $c$ in \eqref{ABvector} as it does \emph{not} participate in the limit $c\to +\infty$, being it cancelled as usual by the $c^{-1}$ factor of the `minimal coupling substitution' (see, e.g., \cite[Remark 2, Sect.~6.1.1]{Thaller-Dirac-1992}).

With reference to preceding literature \cite{Adami-Teta-1998-AharonovBohm,Tamura_rel_2003,Persson06}, in the notation above we re-named as $(\alpha+\frac{1}{2})$ the pre-factor in the definition \eqref{ABvector} of $A_\alpha(x,y)$, so as to have simple $\alpha$-dependent expressions of relevant short-distance asymptotics of the wave-function throughout our discussion (see \eqref{eq:AsymptoticsZeroIntroduction} and formulas below). Concerning the overall sign in $A_\alpha$ and hence in $\Phi_\alpha$, this is here the same as in \cite{Adami-Teta-1998-AharonovBohm} and the opposite to the convention of \cite{Tamura_rel_2003,Persson06}.

For the present purposes, we obviously exclude the case $\alpha+\frac{1}{2}=0$ of absence of magnetic field. Moreover, it is standard knowledge, which is straightforward to check from \eqref{ABvector} and \eqref{magneticDirac}, that for any $z\in\mathbb{Z}$ the operators $H_\alpha$ and $H_{\alpha-z}$ are unitarily equivalent via the correspondence $\exp(\ii z\vartheta(x,y))H_\alpha \exp(-\ii z\vartheta(x,y))=H_{\alpha-z}$, where $\vartheta(x,y)=\mathrm{arg}(x,y)$ in polar coordinates. Thus, changing $\alpha$ into $\alpha-z$, an operation that amounts to increasing by $z$ the normalised flux $\Phi/(2\pi)$, merely duplicates the discussion on the self-adjointess and the self-adjoint realisations of the Dirac-AB operator, which we will not repeat. Based on these considerations, in the following we restrict, as customary, to the regime $\alpha+\frac{1}{2}\in(0,1)$, namely, $\alpha\in(-\frac{1}{2},\frac{1}{2})$.

In fact, it is known from \cite{deSousaGerbert_1989,Tamura_rel_2003, Persson06} (see also \cite{GTV-2012} for the three-dimensional Dirac-AB counterpart) that assigning to $H_\alpha$ the minimal domain $C^\infty_0(\mathbb{R}^2 \setminus \{(0,0)\}; \mathbb{C}^2)$ only defines a symmetric, non-self-adjoint operator (with dense domain) in $\cH$, which admits a one-real-parameter family of self-adjoint extensions.

Differently from the preceding literature, where the extension problem for
\[
 H_{\alpha}\upharpoonright C^\infty_0(\mathbb{R}^2 \setminus \{(0,0)\}; \mathbb{C}^2)
\]
is analysed by Green function's methods and the von Neumann's extension scheme,
here we identify the family of self-adjoint extensions, both in the relativistic setting and in the non-relativistic setting below, by means of the Kre\u{\i}n-Vi\v{s}ik-Birman-Grubb scheme \cite{Krein-1947,Vishik-1952,Birman-1956,Grubb-1968}. Within this framework, it is more transparent to draw a correspondence between relativistic and non-relativistic realisations in the $c\to +\infty$ limit and to interpret the scaling in $c$ needed to obtain non-trivial limits.


In preparation for that, we exploit as customary the commutativity of $H_\alpha$ with the total angular momentum operator $J_3=L_3+S_3=(-\ii x \partial_2+\ii y\partial_1)+\frac{1}{2}\sigma_3$ and the consequent decomposition onto common eigenspaces. Thus, we implement canonical isomorphisms (with polar coordinates $(r,\varphi)\equiv(x,y))$
\begin{align}
 L^2(\mathbb{R}^2;\mathbb{C}^2,\ud x\,\ud y)\, & \cong \, L^2(\mathbb{R}^+\times\mathbb{S}^1;\mathbb{C}^2,r\,\ud r\,\ud\varphi) \,\cong\,\bigoplus_{k\in\mathbb{Z}}\cH_k\,, \label{HoplusDecomp}\\
 \cH_k\, & := \, L^2(\mathbb{R}^+;\mathbb{C}^2,\ud r)\,\cong\,L^2(\mathbb{R}^+;\mathbb{C},\ud r)\oplus L^2(\mathbb{R}^+;\mathbb{C},\ud r)\,, \label{eq:spinHk}
\end{align}
with the latter expression in \eqref{HoplusDecomp} accounting for suitable Fourier modes of the compact variable $\varphi$, so that the corresponding unitarily equivalent version of $H_\alpha$ is reduced by the Hilbert space direct orthogonal sum above as
\begin{equation}\label{HalphaBlockDecomp}
 H_\alpha\,\cong\,\bigoplus_{k \in \mathbb{Z}} \,\mathsf{h}_{\alpha,k}\,,\qquad\mathcal{D}(\mathsf{h}_{\alpha,k})\,=\,C^{\infty}_0(\mathbb{R}^+;\mathbb{C}^2)\,,
\end{equation}
where $\mathsf{h}_{\alpha,k}$, with respect to the last representation of \eqref{eq:spinHk}, acts as the block operator
\begin{equation}
	\mathsf{h}_{\alpha,k} \;=\;\label{eq:Hsupersym}
	  \begin{pmatrix}
	mc^2 & \ii c \left( -\frac{\ud}{\ud r}+\frac{\alpha+k}{r}\right) \\
	-\ii c \left(\frac{\ud}{\ud r} + \frac{\alpha+k}{r} \right) & -mc^2
	\end{pmatrix} \, .
\end{equation}
The essential details of this procedure are collected in Appendix \ref{app:A}. The same reasoning can be found, e.g., in \cite[Prop.~2.1]{CacciaFanelli2017} where the analogous massless model is treated.

In fact, the way we pass from $H_\alpha$ to the unitarily equivalent version \eqref{HalphaBlockDecomp}-\eqref{eq:Hsupersym} is completely analogous to the customary `partial wave decomposition' of the free Dirac operator in the absence magnetic potential, i.e., when formally $\alpha=-\frac{1}{2}$ (see, e.g., \cite[Sect.~4.6]{Thaller-Dirac-1992} or \cite[Sect.~1]{MG_DiracCoulomb2017}).


The block decomposition \eqref{HalphaBlockDecomp}-\eqref{eq:Hsupersym} boils down the problem of the self-adjoint realisations of $H_\alpha$ to the same problem in each $k$-block separately, where it becomes more easily manageable. This analysis is carried on in Section \ref{sec:Dirac} within the Kre\u{\i}n-Vi\v{s}ik-Birman-Grubb extension scheme. The result is the following.

The operator $\mathsf{h}_{\alpha,k}$ is essentially self-adjoint in $L^2(\mathbb{R}^+;\mathbb{C}^2, \ud r)$ for all $k$'s but $k=0$, where its deficiency indices are instead $(1,1)$. Thus, $\mathsf{h}_{\alpha,0}$ admits a one-real-parameter family $(\mathsf{h}_{\alpha,0}^{(\gamma)})_{\gamma \in \mathbb{R} \cup \{\infty\}}$ of self-adjoint extensions. To describe this family, first one establishes that a generic function $g \in \mathcal{D}(\mathsf{h}_{\alpha,0}^*)$ is characterised by the short-distance asymptotics
\begin{equation}\label{eq:AsymptoticsZeroIntroduction}
	g(r)\stackrel{(r \downarrow 0)}{=}  \begin{pmatrix}
		1 \\ 0
	\end{pmatrix} g_0 r^{-\alpha} + \begin{pmatrix}
		0 \\ 1
	\end{pmatrix} g_1 r^\alpha + o(r^{\frac{1}{2}})
\end{equation}
for $g$-dependent constants $g_0,g_1 \in \mathbb{C}$. 
In the correspondence $\gamma \leftrightarrow \mathsf{h}_{\alpha,0}^{(\gamma)}$, the action of $\mathsf{h}_{\alpha,0}^{(\gamma)}$ is the same as \eqref{eq:Hsupersym} and the domain is
\begin{equation}\label{eq:SADiracIntro}
	\mathcal{D}(\mathsf{h}_{\alpha,0}^{(\gamma)}) = \{ g \in \mathcal{D}(\mathsf{h}_{\alpha,0}^*) \, | \, g_1 = -\ii \gamma g_0 \} \, .
\end{equation}

Re-assembling, block by block, the self-adjoint extensions of the $\mathsf{h}_{\alpha,k}$'s, namely setting
\begin{equation}\label{eq:familyofDAB}
 H_\alpha^{(\gamma)}\,:=\,\bigg(\bigoplus_{ \substack{k \in \mathbb{Z} \\ k\leqslant -1 } } \,\overline{\mathsf{h}_{\alpha,k}}\bigg)\;\oplus\; \mathsf{h}_{\alpha,0}^{(\gamma)}  \;\oplus\;\bigg(\bigoplus_{ \substack{k \in \mathbb{Z} \\ k\geqslant 1 } } \,\overline{\mathsf{h}_{\alpha,k}}\bigg)\,,
\end{equation}
yields the family $(H_\alpha^{(\gamma)})_{\gamma \in \mathbb{R} \cup \{\infty\}}$ of self-adjoint extensions of $H_\alpha$ in $\cH$, the Dirac-AB Hamiltonians under consideration in this work.


\begin{remark}\label{rem:commentsDiracExt}
 One would question about the physical meaningfulness of the Dirac-AB Hamiltonians \eqref{eq:familyofDAB}. $H_\alpha^{(\infty)}$ is the standard pick in modelling, and has a distinguished status together with its counterpart $H_\alpha^{(0)}$, for a variety of reasons.
 \begin{enumerate}
  \item $H_\alpha^{(\infty)}$ and $H_\alpha^{(0)}$ are the sole self-adjoint extensions for which the short-distance behaviour of the functions in their domain displays only one of the two leading terms $r^{-\alpha}$ and $r^{\alpha}$ (since either $g_0=0$ or $g_1=0$ in \eqref{eq:AsymptoticsZeroIntroduction}), followed by the $o(r^{\frac{1}{2}})$-subleading term. All other extensions have both the $r^{\pm\alpha}$-terms.
  \item $H_\alpha^{(\infty)}$ and $H_\alpha^{(0)}$ also emerge as distinguished Hamiltonians in suitable processes of removal of regularisation from smoothed versions of the Aharonov-Bohm vector potential. In \cite{Tamura_rel_2003} regularised operators were considered, given by the massless free Dirac plus a suitably scaled vector potentials that in the limit where the regularisation is removed reproduce the $\delta$-type Aharonov-Bohm field \eqref{eq:Bfield}. These regularised operators are actually proved to converge to $H_\alpha^{(\infty)}$ in norm resolvent sense, as long as $\alpha\neq 0$.
With an additional, conveniently scaled scalar potential, the regularised operators converge to $H_\alpha^{(\infty)}$ or to $H_\alpha^{(0)}$ depending on the sign of $\alpha\neq 0$ and on the presence or absence of certain zero-energy resonances.
 \item $H_\alpha^{(\infty)}$ and $H_\alpha^{(0)}$ are the sole supersymmetric extensions in the sense described, e.g., in \cite[Eq.~(5.66)] {Thaller-Dirac-1992}. This will be clear from Proposition \ref{eq:Balphakstar} and the expressions \eqref{eq:halkclos}-\eqref{eq:B+generic} and \eqref{eq:halpha0D}-\eqref{eq:Bpiuzero}, in comparison with \cite[Eq.~(5.66)] {Thaller-Dirac-1992}.
 \end{enumerate}
 Generic $H_\alpha^{(\gamma)}$'s ($\gamma\neq\infty$ and $\gamma\neq 0$) model a singular point-like interaction localised at the origin, whose scattering length does in fact depend on $\gamma$: this is discussed below in Section \ref{sec:nonrelLim}. In the `exceptional case' $\alpha=0$ (corresponding to half-integer normalised flux $\Phi_0/(2\pi)=-\frac{1}{2}$, see \eqref{eq:fluxp} above), the same above-mentioned analysis \cite{Tamura_rel_2003} shows that the removal of the regularisation from a smoothed Dirac-AB Hamiltonian can indeed be tuned in such a way as to obtain a generic $H_0^{(\gamma)}$, with a convenient dependence $\gamma=\gamma(c)$.
\end{remark}

\begin{remark}
 Recently, a particular self-adjoint realisation of the massless Dirac-AB operator has received attention concerning its dispersive properties (see \cite{CacciaFanelli2017, CacciaYinZhang2022, CacciaDancYinZhang-2024} and references therein).
\end{remark}

\subsection{Non-relativistic setting}\label{sec:nonrelset}

In a completely analogous fashion, Schr\"{o}dinger operators with Aharonov-Bohm magnetic potentials (Schrödinger-AB) are defined and the problem of their self-adjointness is studied. We refer to \cite{Arai-1993,Dabrowski-Stovicek-1997-AharonovBohm,Adami-Teta-1998-AharonovBohm,BorgPule2003,GitmanTyutinVoronov-AB-2012,BorrelliCorreggiFermiPlag2024} for this segment of the literature and related investigations.

In the Hilbert space
\begin{equation}
 \mathfrak{H}\,:=\,L^2(\mathbb{R}^2;\mathbb{C},\ud x\,\ud y)
\end{equation}
supporting the description of a non-relativistic quantum particle, one introduces the densely defined, symmetric, non-negative operator
\begin{equation}\label{eq:2DSochroedinger}
	S_\alpha\,:=\,\frac{1}{2m}(- \ii \nabla-A_\alpha(x,y))^2 \, , \qquad\mathcal{D}(S_\alpha)\,:=\,C^\infty_0(\mathbb{R}^2;\mathbb{C})
\end{equation}
with $A_\alpha$ from \eqref{ABvector}.

Analogous to \eqref{HoplusDecomp}-\eqref{eq:Hsupersym}, one exploits canonical isomorphisms (with polar coordinates $(r,\varphi)\equiv(x,y))$
\begin{align}
 L^2(\mathbb{R}^2;\mathbb{C},\ud x\,\ud y)\, & \cong \, L^2(\mathbb{R}^+\times\mathbb{S}^1;\mathbb{C},r\,\ud r\,\ud\varphi) \,\cong\,\bigoplus_{k\in\mathbb{Z}}\mathfrak{H}_k\,, \label{SchrHoplusDecomp}\\
 \mathfrak{H}_k\, & := \, L^2(\mathbb{R}^+;\mathbb{C},\ud r)\,, \label{eq:SchrHk}
\end{align}
so that the corresponding unitarily equivalent version of $S_\alpha$ is reduced by the Hilbert space direct orthogonal sum above as
\begin{equation}\label{eq:pwdSalpha}
 S_\alpha\,\cong\,\bigoplus_{k \in \mathbb{Z}} \,\mathsf{S}_{\alpha,k}\,,
\end{equation}
where
\begin{equation}\label{eq:OperatoreS}
	\mathsf{S}_{\alpha,k}\,:=\, \frac{1}{2m} \Big( -\frac{\ud^2}{\ud r^2}+\frac{(\alpha+k)(\alpha+k+1)}{r^2}\,\Big)\,,\qquad \mathcal{D}(\mathsf{S}_{\alpha,k})\,:=\,C^\infty_0(\mathbb{R}^+;\mathbb{C})\,.
\end{equation}
(See again Appendix \ref{app:A} for the relevant computations).

Radial Schr\"{o}dinger operators with inverse square potentials, like \eqref{eq:OperatoreS}, have been widely investigated and are by now classical material (see, e.g., \cite[Sect.~X.1]{rs2}). They are also the object of a renewed flurry of abstract interest in modern days concerning self-adjointness and spectral properties \cite{B-Derezinski-G-AHP2011,Derezinski-Richard-2017,Derezinski-Georgescu-2021}, as well as in application to singular point-like perturbations of quantum models of non-relativistic particles with central potentials (see, e.g., \cite{GM-hydrogenoid-2018} and references therein).

It turns out that the $\mathsf{S}_{\alpha,k}$'s are essentially self-adjoint in $L^2(\mathbb{R}^+;\mathbb{C},\ud r)$ except for the blocks $k=-1$ and $k=0$, in each of which the deficiency indices are $(1,1)$. Thus, $S_\alpha$ admits a four-real-parameter family of self-adjoint extensions in $\mathfrak{H}$. The physically grounded requirement that extensions commute with the angular momentum selects a two-real-parameter sub-family. Out of the latter, we further select the one-real-parameter collection of extensions $S_\alpha^{(\theta)}$, $\theta \in \mathbb{R} \cup \{\infty\}$, defined by
\begin{equation}\label{eq:S-AB-Sa-Intro}
	S_{\alpha}^{(\theta)}\cong \Big(\bigoplus_{\substack{k \in \mathbb{Z} \\ k\leqslant -2 }}\overline{\mathsf{S}_{\alpha,k}} \Big)\oplus \mathsf{S}_{\alpha,-1}^{(F)} \oplus\mathsf{S}_{\alpha,0}^{(\theta)}  \oplus \Big(\bigoplus_{\substack{k \in \mathbb{Z}\\ k \geqslant 1}} \overline{\mathsf{S}_{\alpha,k}}\Big)
\end{equation}
where, with respect to the decomposition \eqref{eq:pwdSalpha} above, $\mathsf{S}_{\alpha,-1}^{(F)}$ is the Friedrichs extension of $\mathsf{S}_{\alpha,-1}$ and $\big(\mathsf{S}_{\alpha,0}^{(\theta)}\big)_{\theta \in \mathbb{R} \cup \{\infty\}}$ is the family of self-adjoint extensions of $S_{\alpha,0}$ in $L^2(\mathbb{R}^+;\mathbb{C},\ud r)$.

The extension parameter $\theta$ accounts for the short-distance behaviour of the functions in the domain of $\mathsf{S}_{\alpha,0}^{(\theta)}$ in the following sense. A generic function $g \in \mathcal{D}(\mathsf{S}_{\alpha,0}^*)$ is characterised by the short-distance asymptotics
\begin{equation}\label{eq:PsiAsymptSchrodingerIntro}
	g(r)\,\stackrel{(r\downarrow 0)}{=} \,a_0 r^{-\alpha}+a_1 r^{1+ \alpha} + o(r^{\frac{3}{2}})
\end{equation}
for $g$-dependent constants $a_0,a_1 \in \mathbb{C}$. 
In the correspondence $\theta \leftrightarrow \mathsf{S}_{\alpha,0}^{(\theta)}$, the differential action of $\mathsf{S}_{\alpha,0}^{(\theta)}$ is the same as \eqref{eq:OperatoreS} and the domain is
\begin{equation}\label{eq:SchrodingerDomainIntro}
	\mathcal{D}(\mathsf{S}^{(\theta)}_{\alpha,0}) = \{ g \in \mathcal{D}(\mathsf{S}_{\alpha,0}^*) \, | \, a_1 = \theta a_0 \} \, .
\end{equation}
Observe in particular that the extension with $\theta=\infty$ is the one with boundary condition $a_0=0$ in \eqref{eq:PsiAsymptSchrodingerIntro}, hence it is characterised by having a domain with regular functions at the origin: this is precisely the Friedrichs extension $\mathsf{S}_{\alpha,0}^{(F)}$ of $S_{\alpha,0}$.

For completeness of discussion and for a transparent comparison between the relativistic and non-relativistic treatments of each $k$-block, we review in Section \ref{sec:Schroe} the derivation of the well-known information \eqref{eq:PsiAsymptSchrodingerIntro}-\eqref{eq:SchrodingerDomainIntro} using again the Kre\u{\i}n-Vi\v{s}ik-Birman-Grubb extension scheme, thus mirroring how we used such scheme to obtain \eqref{eq:AsymptoticsZeroIntroduction}-\eqref{eq:SADiracIntro} in the counterpart relativistic setting above.

\begin{remark}\label{rem:commentsSchrExt}
 Similar considerations can be made here, as in Remark \ref{rem:commentsDiracExt}, concerning the physical meaningfulness of the Schr\"{o}dinger-AB Hamiltonians $S_{\alpha}^{(\theta)}$, that is, of the $\theta$-dependent short-distance boundary condition \eqref{eq:SchrodingerDomainIntro}. Moreover, analogous to the already-mentioned analysis \cite{Tamura_rel_2003} on smoothed Dirac-AB operators, suitably regularised Schr\"{o}dinger-AB operators were studied in \cite{BorgPule2003}: for them it is possible to adjust to any arbitrary $\theta$ the scaling of the regularisation in such a way that in the limit when the regularisation is removed one either obtains the operator $S_{\alpha}^{(\theta)}$ from \eqref{eq:S-AB-Sa-Intro}, namely Friedrichs in the block $k=-1$ and $\theta$-extension in the block $k=0$, or vice versa the counterpart of \eqref{eq:S-AB-Sa-Intro} with Friedrichs in the block $k=0$ and $\theta$ extension in the block $k=-1$.
\end{remark}

\begin{remark}\label{rem:squareDAB}
  Our self-adjointness analysis of Sections \ref{sec:Dirac} and \ref{sec:Schroe} also provides a complete characterisation of the squares of self-adjoint Dirac-AB Hamiltonians, a side question recently considered in analogous contexts \cite{Posilicano-Reginato-2023,BorrelliCorreggiFermiPlag2024,BCF2024-err}. It is straightforward to check that, $\forall k\in\mathbb{Z}$,
  \[
   (\mathsf{h}_{\alpha,k})^2\qquad\textrm{and}\qquad
   2mc^2\begin{pmatrix}
   \mathsf{S}_{\alpha,k} & \mathbb{O} \\
   \mathbb{O} & \mathsf{S}_{-\alpha,-k}
  \end{pmatrix}+(m c^2)^2 \mathbbm{1}
  \]
  have the same differential action (tacitly representing $L^2(\mathbb{R}^2;\mathbb{C}^2,\ud x\,\ud y)\cong L^2(\mathbb{R}^2;\mathbb{C},\ud x\,\ud y) \oplus L^2(\mathbb{R}^2;\mathbb{C},\ud x\,\ud y)$),
  and the question is to determine the boundary conditions of self-adjointness when the square of a self-adjoint realisation of $\mathsf{h}_{\alpha,k}$ is taken. It turns out (Section \ref{sec:Schroe}, Theorem \ref{thm:DiracSquareSchrAB}
  ) that all self-adjoint operators $ (\overline{\mathsf{h}_{\alpha,k}})^2$, $k\in\mathbb{Z}\setminus\{0\}$, are reduced as $\mathsf{S}_{\alpha,k}^{(F)}\oplus\mathsf{S}_{-\alpha,-k}^{(F)}$, up to trivial $mc^2$ multiplicative/additive factors (recall that for $k\neq 1$, $\mathsf{S}_{\alpha,k}^{(F)}$ is merely $\overline{\mathsf{S}_{\alpha,k}}$), whereas in general $\big(\mathsf{h}_{\alpha,0}^{(\gamma)}\big)^2$ is not reduced and has boundary conditions of self-adjointness that entangle the electron and positron sector. Explicitly,
  \begin{equation}\label{eq:square04-INTRO}
  \begin{split}
    \big(\mathsf{h}_{\alpha,0}^{(\gamma)}\big)^2\,&=\,\big(2mc^2\big(\mathsf{S}_{\alpha,0}^*\oplus\mathsf{S}_{\alpha,0}^*\big)+m^2c^4\mathbbm{1}\big)\upharpoonright\mathcal{D}\big(\big(\mathsf{h}_{\alpha,0}^{(\gamma)}\big)^2\big)\,, \\
    \mathcal{D}\big(\big(\mathsf{h}_{\alpha,0}^{(\gamma)}\big)^2\big)\,&=\, \left\{
    \begin{array}{l}
    g\in\mathcal{D}\big( \mathsf{S}_{\alpha,0}^*\oplus\mathsf{S}_{\alpha,0}^* \big)\,\left|
    \begin{array}{l}
    b_0\,=\,-\ii\gamma a_0\,, \\
	a_1\,=\,-\ii \gamma\,\displaystyle\frac{\,2 \ii a_0 mc+(1-2\alpha)b_1\,}{2\alpha+1}\,,
    \end{array}\right. \\
    \textrm{where $a_0,a_1,b_0,b_1\in\mathbb{C}$ are the $g$-dependent constants} \\
    \textrm{characterised by the asymptotics} \\
     g(r)\,\stackrel{r\downarrow 0}{=}\,
   \begin{pmatrix}
    a_0 r^{-\alpha} + a_1 r^{1+\alpha} \\
    b_0 r^{\alpha} + b_1 r^{1-\alpha}
   \end{pmatrix}+o(r^{\frac{3}{2}})
    \end{array}
    \right\}.
  \end{split}
  \end{equation}
 From \eqref{eq:square04-INTRO} one deduces (see for the details Theorem \ref{thm:DiracSquareSchrAB}, Corollary \ref{cor:squaregammazeroinf} and the subsequent Remark) that in the block $k=0$ only the supersymmetric Dirac-AB Hamiltonians ($\gamma=\infty$ or $\gamma=0$) have exactly reduced squares:
 \begin{align}
    \big(\mathsf{h}_{\alpha,k}^{(\infty)}\big)^2 \,& =\,2mc^2\big(\mathsf{S}_{\alpha,0}^{(\infty)}\oplus \mathsf{S}_{\alpha,0}^{(0)}\big)+m^2c^4\mathbbm{1}\,, \label{eq:squaregammainfINTRO}\\
   \big(\mathsf{h}_{\alpha,k}^{(0)}\big)^2 \,& =\, 2mc^2\big(\mathsf{S}_{\alpha,0}^{(0)}\oplus \mathsf{S}_{\alpha,0}^{(\infty)}\big)+m^2c^4\mathbbm{1}\,. \label{eq:squaregammazeroINTRO}
  \end{align}
 The special cases \eqref{eq:squaregammainfINTRO}-\eqref{eq:squaregammazeroINTRO} of \eqref{eq:square04-INTRO} recover the same recent findings of \cite[Prop.~2.26]{BCF2024-err}, where the massless case is considered.
 \end{remark}

\subsection{Non-relativistic limit}\label{sec:nonrelLim} Our next goal, after characterising the two families of operators \eqref{eq:familyofDAB} and \eqref{eq:S-AB-Sa-Intro}  (Section \ref{sec:Dirac}), is to investigate the non-relativistic limit of the Dirac-AB Hamiltonians, consisting, as discussed in a moment, of removing the rest energy and sending formally $c\to +\infty$.

The analysis of the connection between the Dirac theory and its non-relativistic approximation has a long history started by Pauli himself (see, e.g., \cite[Sect.~6.1]{Thaller-Dirac-1992} and references therein).

In the Aharonov-Bohm setting, we are not aware of explicit precursors: the conceptual scheme is the same, but the presence of the magnetic vector potential $A_\alpha$ alters non-trivially the control of the limit.

At least we should recall in this respect the above-mentioned works \cite{Tamura_rel_2003,BorgPule2003}, for two technical features that are central also in the present analysis. The first is the control of the convergence of the unbounded operators of interest in the norm-resolvent sense and block by block, as we do in this work when letting $c\to +\infty$; and the second is the general idea that the limit is reached by re-scaling the approximating operators in the parameter on which the limit is taken -- the smoothing parameter therein, the speed of light $c$ here.

Concerning the first aspect: the representation \eqref{eq:familyofDAB} and \eqref{eq:S-AB-Sa-Intro}, respectively, of Dirac-AB and Schr\"{o}dinger-AB Hamiltonians is clearly functional to taking the non-relativistic limit block by block, thus preserving the block structure. The choice of the sub-family \eqref{eq:S-AB-Sa-Intro} of self-adjoint realisations of Schr\"{o}dinger-AB operators is dictated by the nature of the family \eqref{eq:familyofDAB} of Dirac-AB Hamiltonians. Indeed, in \eqref{eq:familyofDAB} there is room for a variety of non-trivial self-adjoint realisations in the $k=0$ block only.

Concerning instead the realisation of the limit $c\to+\infty$ by introducing a $c$-dependent scaling in the Dirac-AB Hamiltonians: as customary, the general principle should be maintained that in the idealisation when $c\to +\infty$ certain relevant physical quantities of interest remain unaltered (see, e.g., the analogous discussion in the well-known problems of the effective many-body dynamics in the formal limit of infinitely many particles \cite{am_GPlim} and references therein. A similar idea is behind renormalization of several model field theories (see, e.g., \cite{NP98, NP99}). Letting $c\to+\infty$ does amount to considering a trajectory of distinct Dirac-AB Hamiltonians (different $c$'s identify different operators \eqref{HalphaBlockDecomp}-\eqref{eq:Hsupersym}), starting from $H_{\alpha}^{(\gamma)}$ with the `real-world' parameter $c_{\mathrm{real}}\simeq 3\cdot 10^5$ km/sec: yet, one should ensure that along such trajectory the model preserves relevant physical properties of the initial `physical' operator.

One obvious modification with $c$, as mentioned, is the removal of the rest energy of the Dirac electron from $H_{\alpha}^{(\gamma)}$: the actual limit is therefore taken in $H_{\alpha}^{(\gamma)}-mc^2$. The rest energy is indeed a purely relativistic object, that does not have non-relativist limit. It has to be subtracted in order to maintain the possibility of getting a non-trivial limit as $c\to +\infty$ (see, e.g., \cite[Sect.~6.1.1, Remark 1] {Thaller-Dirac-1992}).

A less obvious modification concerns the \emph{scaling of $\gamma$ with $c$ to preserve the scattering length of the interaction} even when formally $c$ is sent to infinity. Here is an amount of heuristics to justify this.

Recall from elementary scattering theory (see, e.g., \cite[Sect.~4.2]{KraneNuclPhys1988}) that in a low-energy scattering process of a Schr\"{o}dinger quantum particle subject to a short-range central potential, say, for concreteness, in the approximation of a small spherical potential well $V_0$, one defines the ($s$-wave) scattering length $a$ in terms of the total cross section $\sigma$ (the total probability to be scattered in any direction) through the following reasoning. First, the eigenvalue problem at (low) energy $E$ is solved, from which incident and scattered currents $j_{\textrm{incident}}$ and $j_{\textrm{scattered}}$ are computed, and the differential angular cross section $\ud \sigma/\ud\Omega$ is obtained by $\ud\sigma=(j_{\textrm{scattered}}/j_{\textrm{incident}})r^2\ud \Omega$. Then, from $\sigma=\int(\ud \sigma/\ud\Omega)\ud\Omega$ one obtains the familiar expression $\sigma(p)=4\pi(\sin^2\delta_0(p))/{p}$, $p:=\sqrt{(2m(E+V_0)}$, where $2\delta_0(p)$ is the phase shift of the outgoing wave with wave number $p$ compared to the incident one. Finally, one computes the zero-energy limit $p\to 0$ in $\sigma$ and defines the scattering length $a$ by
\begin{equation}
 \lim_{p\to 0}\sigma\,=\,4\pi a^2\,,\qquad\textrm{ that is, }\qquad a\,=\,-\lim_{p\to 0}\,\frac{\sin\delta_0(p)}{p}\,.
\end{equation}
This in turn implies that $a$ is the first node of the zero-energy wave function in the positive or in the negative radial direction, depending on the attractive or repulsive nature of the interaction (see, e.g., \cite[Fig.~4.7]{KraneNuclPhys1988}).

Heuristically, the key point is the identification of the node(s) of the wave function solving the zero-energy eigenvalue problem.

Let us apply this reasoning to the self-adjoint Hamiltonians $S_\alpha^{(\theta)}$ and $H_\alpha^{(\gamma)}$ from \eqref{eq:S-AB-Sa-Intro} and from \eqref{eq:familyofDAB} respectively, considering the non-trivial sector $k=0$. Clearly, in doing so one is using the above assumption of short-range potential well to the Aharonov-Bohm field \eqref{ABvector}, which means that formulas \eqref{eq:scattS} and \eqref{eq:scattD} below are only first approximations of the actual scattering lengths.

For Schr\"{o}dinger-AB operators, the zero-energy problem consists of finding the  $L^1_{\mathrm{loc}}$-solution (up to multiples) $u$ to
\begin{equation}\label{eq:condScattL}
 \Big( -\frac{\ud^2}{\ud r^2}+\frac{\,\alpha(\alpha+1)\,}{r^2}\,\Big)u\,=\,0\,,\qquad a_1^{(u)}=\theta \,a_0^{(u)}\,.
\end{equation}
Imposing the short-range condition that characterises the domain of $S_\alpha^{(\theta)}$ is the signature of the fact that ideally first one solves the $L^2$-eigenvalue problem in that domain, and then sends $E\to 0$, this way loosing the square-integrability of the solution.

The scattering length of $S_\alpha^{(\theta)}$, in this approximation, is then the first zero $r_{\mathrm{S}\textrm{-}\mathrm{AB}}$ of the solution to \eqref{eq:condScattL}. The latter is $u =  a_0^{(u)}(r^{-\alpha}+ \theta\, r^{1+\alpha})$. For $\theta<0$ such zero is in the positive radial half-line, and we find
\begin{equation}\label{eq:scattS}
  r_{\mathrm{S}\textrm{-}\mathrm{AB}}\,=\,(-\theta)^{-\frac{1}{1+2\alpha}} \, .
\end{equation}
 The reasoning for $\theta>0$ would produce the same power of $\theta$: in this case the node is to be found in the negative radial half-line upon extrapolating the wave function for $r<0$ from its short-scale behaviour (its tangent) at the distance $r=R$ of the effective range of the interaction -- the radius of the potential well, in that approximation (see, e.g., \cite[Fig.~4.7]{KraneNuclPhys1988}).

 In a completely analogous manner, for Dirac-AB operators we take as (first approximation of the) scattering length the first zero $r_{\mathrm{D}\textrm{-}\mathrm{AB}}$ of the upper (electron) component of a $L^1_{\mathrm{loc}}$-solution (up to multiples) $u$ to
\begin{equation}\label{eq:l1sol}
	  \bigg(\begin{pmatrix}
	mc^2 & \ii c \left( -\frac{\ud}{\ud r}+\frac{\alpha+k}{r}\right) \\
	-\ii c \left(\frac{\ud}{\ud r} + \frac{\alpha+k}{r} \right) & -mc^2
	\end{pmatrix}-mc^2\mathbbm{1}\bigg)u\,=\,0\,,\qquad g_1^{(u)}=-\ii\gamma g_0^{(u)}\,.
\end{equation}
Since \eqref{eq:l1sol} is solved by scalar multiples of
\begin{equation*}
	u(r)\,= \, \begin{pmatrix}
		r^{-\alpha}+\frac{2 m c \gamma}{2 \alpha +1} r^{\alpha+1} \\
		-\ii \gamma r^\alpha
	\end{pmatrix}\, ,
\end{equation*}
then
\begin{equation}\label{eq:scattD}
  r_{\mathrm{D}\textrm{-}\mathrm{AB}}\,=\,\big(-{\textstyle \frac{1+2\alpha}{2mc\gamma}} \big)^{\frac{1}{1+2\alpha}} \, ,
\end{equation}
having assumed $\gamma<0$ (and repeating what observed above for $\theta>0$ vs $\theta<0$ when instead $\gamma>0$).

 Notice that the scattering length \eqref{eq:scattS} depends on the extension parameter $\theta$ and the scattering length \eqref{eq:scattD} depends both on the extension parameter $\gamma$ and on the parameter $c$ to be sent to infinity. This confirms the link between each extension parameter for a given Aharonov-Bohm Hamiltonian $S_\alpha^{(\theta)}$ or $H_\alpha^{(\gamma)}$ and quantities (short-distance singularity, scattering length) that are determined by the point-like interaction localised at the origin.

In addition, \eqref{eq:scattD} indicates that in the order for the limit $c\to +\infty$ to preserve the scattering length and set it equal to the non-relativistic scattering length \eqref{eq:scattS}, one must have
\begin{equation}\label{eq:scalinglimitc}
 \lim_{c\to +\infty}\frac{2 m c\gamma}{1+2\alpha}\,=\,\theta\,.
\end{equation}

Condition \eqref{eq:scalinglimitc} thus prescribes how to re-scale with $c$ either of the quantities $m$, $\alpha$, $\gamma$ to obtain in the non-relativistic limit an interaction with given $\theta$.

Conceptually, any re-scaling $m(c)$, $\alpha(c)$, $\gamma(c)$ which fulfills \eqref{eq:scalinglimitc} is compatible with the idea of a trajectory of artificial Hamiltonians (i.e., with non-physical values of the parameters) that preserve the scattering length of the interaction. In practice, it is technically more manageable to keep the mass $m$ and the magnitude $\alpha$ of the Aharonov-Bohm field as fixed parameters, and to only re-scale $\gamma$ as $\gamma(c)$.

\subsection{Main Results and further remarks}

The characterisation, using the Kre\u{\i}n-Vi\v{s}ik-Birman-Grubb self-adjoint extension scheme, of the two families of Aharonov-Bohm Hamiltonians $(H_\alpha^{(\gamma)})_{\gamma\in\mathbb{R}\cup\{\infty\}}$ and $(S_\alpha^{(\theta)})_{\theta\in\mathbb{R}\cup\{\infty\}}$ in the relativistic (Dirac-AB) and non-relativistic (Schr\"{o}dinger-AB) setting is, as mentioned, our first main result. It is stated in Theorems \ref{prop:Dirac} and \ref{prop:k0}, and the proofs are developed in Sections \ref{sec:Dirac} and \ref{sec:Schroe}.

Note in particular the explicit expressions \eqref{eq:mainres} and \eqref{eq:Scr0res} of the resolvents, a crucial information to take the limit in the resolvent sense.

Our second main result is the analysis of the non-relativistic limit in the Dirac-AB Hamiltonians $H_\alpha^{(\gamma)}$'s.

This can be regarded as a generalisation of the non-relativistic limit of the free Dirac operator $H_{\mathrm{free}}=-\ii \,c\, \boldsymbol{\sigma}\cdot \nabla + \sigma_3\, m c^2$  \eqref{freeDirac}, realised as a self-adjoint operator in $L^2(\mathbb{R}^2;\mathbb{C}^2,\ud x\ud y)$ (see \eqref{eq:Hilbsp} above) on the domain $H^1(\mathbb{R}^2;\mathbb{C}^2)$. In that well-known case (see, e.g., \cite[Corollary 6.2]{Thaller-Dirac-1992}), for any $z$ in the resolvent set $\rho(H_{\mathrm{free}} - mc^2)$ one also has $z\in\rho(S_{\mathrm{free}})$, the resolvent set of the free Schr\"{o}dinger operator $S_{\mathrm{free}}=-\frac{1}{2m}\Delta$ in $L^2(\mathbb{R}^2,\mathbb{C},\ud x \ud y)$ ($\Delta\equiv\frac{\ud^2}{\ud x^2}+\frac{\ud^2}{\ud y^2}$) with domain of self-adjointness $H^2(\mathbb{R}^2;\mathbb{C})$, and moreover, in operator norm,
\begin{equation}\label{eq:limfree}
\lim_{c \to +\infty}\big( H_{\mathrm{free}} - mc^2-z\big)^{-1}\,=\,
 \begin{pmatrix}
  (S_{\mathrm{free}}-z)^{-1} & \mathbb{O} \\ \mathbb{O} & \mathbb{O}
 \end{pmatrix}\,=\, (S_{\mathrm{free}}-\lambda)^{-1} \oplus \mathbb{O}\,.
\end{equation}
That is, $H_{\mathrm{free}} - mc^2\xrightarrow{\;c\to +\infty\;}S_{\mathrm{free}}\oplus\mathbb{O}$ in norm resolvent sense.\;

In this work we prove $H_{\alpha}^{(\gamma)} - mc^2\xrightarrow{\;c\to +\infty\;}S_{\alpha}^{(\theta)}\oplus\mathbb{O}$, thereby establishing a correspondence in the non-relativistic limit between Dirac-AB ($H_{\alpha}^{(\gamma)}$) and Schr\"{o}dinger-AB ($S_{\alpha}^{(\theta)}$) Hamiltonians, with the scaling \eqref{eq:scalinglimitc}, whose physical meaningfulness was discussed above.

\begin{theorem}\label{thm:Main}
 Let $\alpha\in(-\frac{1}{2},\frac{1}{2})$, $m>0$, $c>0$, $\theta\in\mathbb{R}\cup\{\infty\}$, and let $\gamma\equiv\gamma(c)\in\mathbb{R}\cup\{\infty\}$ scale with $c$ according to
	\begin{equation}\label{eq:LimitGammaTheta}
		\lim_{c \to +\infty} \frac{\,2\, m\, c \,\gamma(c)\,}{1+ 2\alpha} = \theta \, .
	\end{equation}
	Then there is a non-empty $c$-independent interval $J$ of the negative real line contained, eventually as $c\to +\infty$, in the infinite intersection $\bigcap_{c}\rho\big(H_{\alpha}^{(\gamma{(c)})}-mc^2\big)$ and also in $\rho\big(S_\alpha^{(\theta)}\big)$.
	Moreover, $\forall z\in J\cup\mathbb{C}^{\pm}$,
	\begin{equation}\label{eq:NonRelLimThmMain}
		\lim_{c \to +\infty} \big(H_{\alpha}^{(\gamma(c))}-mc^2-z)^{-1} = (S_\alpha^{(\theta)}  -z\big)^{-1}\oplus \mathbb{O}
	\end{equation}
	in operator norm.
\end{theorem}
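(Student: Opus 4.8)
The plan is to exploit the block decompositions \eqref{eq:familyofDAB} and \eqref{eq:S-AB-Sa-Intro} and to establish the norm-resolvent convergence \eqref{eq:NonRelLimThmMain} block by block, with a uniform-in-$k$ control that upgrades the block-wise statement to convergence of the whole direct sum. Since the norm of a block-diagonal operator is the supremum of the block norms, it suffices to show that the difference between the $k$-th block of $(H_\alpha^{(\gamma(c))}-mc^2-z)^{-1}$ and the $k$-th block of $(S_\alpha^{(\theta)}-z)^{-1}\oplus\mathbb{O}$ tends to zero as $c\to+\infty$, uniformly in $k\in\mathbb{Z}$. I would first fix $z$ in a suitable negative interval $J$ (constructed at the very end), where all resolvents are uniformly bounded, and only afterwards propagate the convergence to $z\in\mathbb{C}^{\pm}$.

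For every $k\neq 0$ the Dirac block is the essentially self-adjoint operator $\overline{\mathsf{h}_{\alpha,k}}$, and I would compute its resolvent by a Schur-complement reduction of the $2\times2$ block structure \eqref{eq:Hsupersym}. Writing $B=-\ii c(\tfrac{\ud}{\ud r}+\tfrac{\alpha+k}{r})$ and its formal adjoint $B^*$, one has $B^*B=2mc^2\,\mathsf{S}_{\alpha,k}^{(F)}$ and $BB^*=2mc^2\,\mathsf{S}_{-\alpha,-k}^{(F)}$ (precisely the squaring identity recorded in Remark \ref{rem:squareDAB}), so that the $(1,1)$ (electron) entry of $(\mathsf{h}_{\alpha,k}-mc^2-z)^{-1}$ is the Schur complement
\begin{equation*}
 \Big(-z+\tfrac{2mc^2}{2mc^2+z}\,\mathsf{S}_{\alpha,k}^{(F)}\Big)^{-1},
\end{equation*}
while the remaining three entries carry factors $B/(2mc^2)$, $B^*/(2mc^2)$ or $(2mc^2+z)^{-1}$. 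Using the first resolvent identity together with the elementary bounds $\|\mathsf{S}_{\alpha,k}^{(F)}(\mathsf{S}_{\alpha,k}^{(F)}-z)^{-1}\|\leqslant1$ and $\|B(\mathsf{S}_{\alpha,k}^{(F)}-z)^{-1/2}\|\lesssim c$ — both valid \emph{uniformly in} $k$ because all $\mathsf{S}_{\alpha,k}^{(F)}\geqslant0$ — one checks that for $z\in J$ the $(1,1)$ entry converges to $(\mathsf{S}_{\alpha,k}^{(F)}-z)^{-1}$ with an error $O(c^{-2})$ and the other three entries are $O(c^{-1})$, all uniformly in $k$. This single estimate simultaneously controls the tail of the direct sum and identifies the limit block as the Friedrichs realisation $\mathsf{S}_{\alpha,k}^{(F)}$; in particular it matches the Friedrichs choice at $k=-1$ and the closures elsewhere prescribed by \eqref{eq:S-AB-Sa-Intro}, which is thereby seen to be \emph{dictated} by the Dirac limit rather than imposed ad hoc.

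The heart of the proof is the non-trivial block $k=0$. Here I would start from the explicit Kre\u{\i}n-Vi\v{s}ik-Birman-Grubb resolvent formulas of Theorems \ref{prop:Dirac} and \ref{prop:k0} (formulas \eqref{eq:mainres} and \eqref{eq:Scr0res}), which express $(\mathsf{h}_{\alpha,0}^{(\gamma)}-mc^2-z)^{-1}$ and $(\mathsf{S}_{\alpha,0}^{(\theta)}-z)^{-1}$ as a reference-extension resolvent plus a rank-one correction. The reference part is treated exactly as in the previous paragraph. For the rank-one part I would establish two convergences. First, the Dirac deficiency element $u_z^{D}$, solving $(\mathsf{h}_{\alpha,0}^*-mc^2-z)u=0$, has an electron component solving $\mathsf{S}_{\alpha,0}\,\psi=(z+\tfrac{z^2}{2mc^2})\psi$ (again by the squaring relation), and hence converges in $L^2$, as $c\to+\infty$, to the Schr\"{o}dinger deficiency element $u_z^{S}$ solving $(\mathsf{S}_{\alpha,0}^*-z)u=0$, its positron component vanishing in norm. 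Second, the scalar coefficient multiplying the rank-one term, which encodes the boundary condition $g_1=-\ii\gamma g_0$ through the Weyl function of $\mathsf{h}_{\alpha,0}$, converges to the coefficient encoding $a_1=\theta a_0$. The decisive algebraic mechanism is that the first-order Dirac structure links the positron coefficient $g_1$ (of $r^{\alpha}$) to the subleading electron coefficient $a_1$ (of $r^{1+\alpha}$) exactly through the factor $\tfrac{2mc}{1+2\alpha}$ visible in the zero-energy solution \eqref{eq:l1sol} and in \eqref{eq:scattD}, so that the scaling hypothesis \eqref{eq:LimitGammaTheta} makes the Dirac boundary condition collapse onto the Schr\"{o}dinger one. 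Because the correction is rank one, its norm convergence follows from the $L^2$-convergence of the two deficiency vectors and the convergence of the scalar coefficient; I expect the main obstacle to be precisely the $L^2$-convergence step, which requires uniform control of the modified-Bessel asymptotics of $u_z^{D}$ as the effective spectral parameter $z+z^2/(2mc^2)$ tends to $z$.

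Finally, I would reassemble the blocks and fix $J$. From the explicit resolvents one reads off that, inside the spectral gap $(-2mc^2,0)$, the only possible singularity of $(H_\alpha^{(\gamma(c))}-mc^2-z)^{-1}$ is a single pole which, under \eqref{eq:LimitGammaTheta}, converges to the unique negative eigenvalue of $S_\alpha^{(\theta)}$ (present only for $\theta<0$); choosing $J$ as a bounded interval of the negative axis lying strictly below that limiting value, and recalling $-2mc^2\to-\infty$, guarantees $J\subset\bigcap_{c}\rho(H_\alpha^{(\gamma(c))}-mc^2)$ eventually and $J\subset\rho(S_\alpha^{(\theta)})$, with resolvent norms bounded uniformly in $c$. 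Collecting the uniform-in-$k$ estimates then yields \eqref{eq:NonRelLimThmMain} for $z\in J$. Since norm-resolvent convergence at a single point of the common resolvent set propagates to the whole connected resolvent set via the first resolvent identity and the self-adjoint bound $\|(\cdot-z)^{-1}\|\leqslant|\mathrm{Im}\,z|^{-1}$, the convergence extends to every $z\in J\cup\mathbb{C}^{\pm}$, completing the proof.
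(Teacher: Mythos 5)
Your proposal is correct, and its architecture coincides with the paper's: block decomposition, norm-resolvent limit block by block, and in the $k=0$ block the Kre\u{\i}n-type resolvent formulas \eqref{eq:mainres} and \eqref{eq:Scr0res}, with the limit split into the reference (Friedrichs) part, the rank-one projection, and the scalar coefficient (the paper's Lemma \ref{lem:tauprefactors} is exactly your ``scalar coefficient'' step, carried out on the explicit expression \eqref{eq:TauDGamma}, where the scaling \eqref{eq:LimitGammaTheta} enters through the factor $\frac{2mc\gamma}{1+2\alpha}$ you identified). The differences are localized and each buys something. Where you perform the Schur-complement reduction of the $2\times 2$ block with explicit $O(c^{-2})$ and $O(c^{-1})$ error bounds, the paper instead cites Thaller's abstract non-relativistic limit for supersymmetric Dirac operators (\cite[Corollary 6.2]{Thaller-Dirac-1992}, invoked in Proposition \ref{prop:LimitFriedrichs}); your computation is essentially a re-derivation of that result, but it has the merit of making the uniformity in $k$ explicit, a point the theorem genuinely needs (norm convergence of an infinite direct sum requires $\sup_k$ of the block errors to vanish) and which the paper leaves implicit in the $k$-independence of Thaller's constants. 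Conversely, where you identify $(B_{\alpha,k})^*B_{\alpha,k}$ with $2mc^2\mathsf{S}^{(F)}_{\alpha,k}$ by citing the squaring theorem (Remark \ref{rem:squareDAB}, Theorem \ref{thm:DiracSquareSchrAB}), the paper proves this identification afresh inside Proposition \ref{prop:LimitFriedrichs} via short-distance asymptotics, showing that the only Schr\"{o}dinger extension whose domain fits inside $\mathcal{D}((B_{\alpha,k})^*B_{\alpha,k})$ is the Friedrichs one; your shortcut is legitimate and non-circular, since Theorem \ref{thm:DiracSquareSchrAB} is proved independently of the limit. Finally, for the interval $J$: you assert that the possible gap eigenvalue converges to the negative eigenvalue of $S_\alpha^{(\theta)}$ (present only for $\theta<0$) and place $J$ below it; the paper's Proposition \ref{prop:myprop} is more cautious, using a Cauchy-type closeness argument on the resolvents to show the gap eigenvalues eventually cluster in a fixed subinterval, \emph{without} presupposing resolvent convergence on a common domain — a mild circularity risk in your phrasing that you can cleanly remove by noting the gap eigenvalue is the unique solution of the explicit scalar equation $\big(\tau^{(D)}_{\alpha,\lambda,\gamma(c),c}\big)^{-1}=0$ from \eqref{eq:TauDGamma}, whose roots converge under \eqref{eq:LimitGammaTheta} to the root of $\big(\tau^{(S)}_{\alpha,\lambda,\theta}\big)^{-1}=0$.
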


 Here, as customary, $\mathbb{C}^\pm=\{z\in\mathbb{C}\,|\,\mathfrak{Im} z\gtrless 0\}$, and we content ourselves to indicate that the limit \eqref{eq:NonRelLimThmMain} can be taken at fixed complex non-real $z$, or also for $z$ in an eventually-$c$-fixed region $J$ of the negative real axis (further details in Proposition \ref{prop:myprop}).

Theorem \ref{thm:Main} validates the heuristics for the scaling \eqref{eq:scalinglimitc}: along the limit the magnitude of the relativistic Aharonov-Bohm interaction, measured through its scattering length, remains of the same order and attains the limit value of the prescribed non-relativistic Aharonov-Bohm interaction.

In this respect, Theorem \ref{thm:Main} covers these three scenarios.
\begin{enumerate}
 \item If $\gamma$ stays fixed in the limit, or more generally if $\gamma^{-1}=o(c)$ as $c\to +\infty$, then the non-relativistic limit of $H_\alpha^{(\gamma)}$ reproduces the standard Friedrichs realisation of the Aharonov-Bohm operator $S_\alpha^{(\infty)}$. The latter is therefore the non-relativistic `attractor' of all the Dirac-AB Hamiltonians at fixed $\gamma$.
 \item The regime $\gamma\sim c^{-1}$ reproduces, in the limit of $H_\alpha^{(\gamma)}$ as $c\to +\infty$, any of the $\theta$-labelled Schr\"{o}dinger-AB Hamiltonians, except for $S_\alpha^{(0)}$. The precise $\theta$ is the one satisfying \eqref{eq:LimitGammaTheta}.
 \item If $\gamma=o(c^{-1})$, then the non-relativistic limit of $H_\alpha^{(\gamma)}$ reproduces $S_\alpha^{(0)}$.
\end{enumerate}

Scenario (1) above includes, in particular, the case where $\gamma=\infty$ constantly in $c$: the non-relativistic limit of $H_{\alpha}^{(\infty)}$ is $S_\alpha^{(\infty)}$. This is the direct counterpart, in the presence of an Aharonov-Bohm magnetic field, of the non-magnetic limit \eqref{eq:limfree}. 

Thus, Theorem \ref{thm:Main} connects \emph{trajectories} of self-adjoint Dirac-AB Hamiltonians with Schr\"{o}\-dinger-AB Hamiltonians, this way mapping, in the non-relativistic limit, the whole family of Dirac-AB operators into a distinguished sub-family of Schr\"{o}\-dinger-AB operators.
It is spontaneous to refer to the latter as the sub-family of $s$-wave, angular-momentum-commuting, Schr\"{o}\-dinger-AB Hamiltonians with relativistic Dirac approximants (namely, admitting natural, block-wise relativistic Ahrnonov-Bohm approximants), since the non-triviality of self-adjoint realisations occurs in the respective (relativistic and non-relativistic) $k=0$ blocks. This leaves the interesting question of what status to attribute to the rest of the family of self-adjoint Schr\"{o}\-dinger-AB operators, from the point of view of the relativistic approximability.

We already argued about the distinguished status of the two supersymmetric Dirac-AB Hamiltonians $H_\alpha^{(\infty)}$ and $H_\alpha^{(0)}$ (Remark \ref{rem:commentsDiracExt}). As a matter of fact, these two operators are connected by positron-electron exchange symmetry. By exchanging $\alpha \leftrightarrow -\alpha$, $m \leftrightarrow -m$ and $\gamma \leftrightarrow \gamma^{-1}$, and going through the same steps of the proof of Theorem \ref{thm:Main}, one obtains the following positron-counterpart version of the non-relativistic limit.

\begin{theorem}\label{thm:Main2}
 Let $\alpha\in(-\frac{1}{2},\frac{1}{2})$, $m>0$, $c>0$, $\theta\in\mathbb{R}\cup\{\infty\}$, and let $\gamma\equiv\gamma(c)\in\mathbb{R}\cup\{\infty\}$ scale with $c$ according to
	\begin{equation}\label{eq:sc2}
		\lim_{c \to +\infty} \frac{2\, m \, c}{(1-2\alpha)\, \gamma(c) \,} \,=\,\theta\, .
	\end{equation}
	Then there is a non-empty $c$-independent interval $J$ of the negative real line contained, eventually as $c\to +\infty$, in the infinite intersection $\bigcap_{c}\rho\big(H_{\alpha}^{(\gamma{(c)})}+mc^2\big)$ and also in $\rho\big(-S_\alpha^{(\theta)}\big)$.
	Moreover, $\forall z\in J\cup\mathbb{C}^{\pm}$,
	\begin{equation}\label{eq:PositronLimit}
		\lim_{c \to +\infty} (H_{\alpha}^{(\gamma(c))}+mc^2-z)^{-1} = \mathbb{O}\oplus (-S_\alpha^{(\theta)}  -z)^{-1}
	\end{equation}
	in operator norm.
\end{theorem}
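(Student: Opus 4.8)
The plan is to deduce Theorem \ref{thm:Main2} from Theorem \ref{thm:Main} by exploiting the positron-electron (charge-conjugation) symmetry of the Dirac-AB family, rather than re-running the whole resolvent analysis. Concretely, I would introduce the anti-unitary involution $\mathcal{U}=\sigma_1 K$ on $\mathcal{H}$, where $K$ is complex conjugation and $\sigma_1$ interchanges the electron and positron components, composed with the block relabeling $k\mapsto -k$ in the decomposition \eqref{eq:familyofDAB}. A direct check on the block action \eqref{eq:Hsupersym} -- using that $\sigma_1$ swaps the diagonal entries $\pm mc^2$ and the two off-diagonal first-order operators, while $K$ flips the imaginary unit so as to restore the correct derivative signs -- gives the intertwining
\begin{equation*}
 \mathcal{U}\,\big(H_\alpha^{(\gamma)}+mc^2\big)\,\mathcal{U}^{-1}\;=\;-\big(H_{-\alpha}^{(\gamma^{-1})}-mc^2\big)\,,
\end{equation*}
which realises exactly the substitution $\alpha\mapsto-\alpha$, $m\mapsto-m$, $\gamma\mapsto\gamma^{-1}$ anticipated in the text. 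The passage $\gamma\mapsto\gamma^{-1}$ at the level of boundary conditions must be verified on the asymptotics \eqref{eq:AsymptoticsZeroIntroduction}: $\mathcal{U}$ exchanges the two leading coefficients $g_0\leftrightarrow\overline{g_1}$, so the constraint $g_1=-\ii\gamma g_0$ of \eqref{eq:SADiracIntro} is mapped to $\tilde g_1=-\ii\gamma^{-1}\tilde g_0$, confirming that $\mathcal{U}$ carries $\mathsf{h}_{\alpha,0}^{(\gamma)}$ onto $-\mathsf{h}_{-\alpha,0}^{(\gamma^{-1})}+mc^2$, domain included.

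Next I would transfer the resolvent convergence. Since $\mathcal{U}$ is anti-unitary, $\mathcal{U}(A-z)^{-1}\mathcal{U}^{-1}=(\mathcal{U}A\mathcal{U}^{-1}-\bar z)^{-1}$, and applied to the intertwining this turns $(H_\alpha^{(\gamma)}+mc^2-z)^{-1}$ into $-(H_{-\alpha}^{(\gamma^{-1})}-mc^2+\bar z)^{-1}$. I then invoke Theorem \ref{thm:Main} for the operator $H_{-\alpha}^{(\gamma^{-1})}$: its hypothesis \eqref{eq:LimitGammaTheta}, with $\alpha\mapsto-\alpha$ and $\gamma\mapsto\gamma^{-1}$, reads $\lim_{c}\frac{2mc\gamma^{-1}}{1-2\alpha}=\theta$, which is precisely the scaling \eqref{eq:sc2}. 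Hence, as $c\to+\infty$, $(H_{-\alpha}^{(\gamma^{-1})}-mc^2+\bar z)^{-1}\to(S_{-\alpha}^{(\theta)}+\bar z)^{-1}\oplus\mathbb{O}$ in norm on the common region $J\cup\mathbb{C}^\pm$, with the uniform-in-$c$ control of $J$ and of $\mathbb{C}^\pm$ inherited verbatim. Conjugating back by $\mathcal{U}$ -- which swaps the two blocks and restores $z$ in place of $\bar z$, the limiting Schr\"odinger operator having real coefficients and a conjugation-invariant domain -- yields the limit $\mathbb{O}\oplus\big(-S_{-\alpha}^{(\theta)}-z\big)^{-1}$ in the lower block, together with the transferred resolvent-set statement for $J$.

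The genuinely delicate step, and the one I expect to be the main obstacle, is the precise identification of this limiting operator with the right-hand side of \eqref{eq:PositronLimit}. The charge-conjugation swap sends the non-trivial $k=0$ Dirac block to the lower (positron) component, which by the square computation of Remark \ref{rem:squareDAB} is governed by $\mathsf{S}_{-\alpha,-k}$; the reconciliation of the $-\alpha$ produced by conjugation with the $\alpha$ appearing in \eqref{eq:PositronLimit} rests on the identity $\mathsf{S}_{-\alpha,-k}=\mathsf{S}_{\alpha,k-1}$. One must then check with care which block carries the non-trivial $\theta$-extension and which the Friedrichs realisation -- this is exactly the interchange between ``Friedrichs in $k=0$, $\theta$-extension in $k=-1$'' and its converse discussed in Remark \ref{rem:commentsSchrExt} -- and confirm that the assembled operator matches the family \eqref{eq:S-AB-Sa-Intro} entering \eqref{eq:PositronLimit}. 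I would isolate this as a short lemma on the block-wise conjugation action $\mathcal{U}\,\mathsf{h}_{\alpha,k}^{(\gamma)}\,\mathcal{U}^{-1}$ and on the induced map of extension parameters, leaving the analytic core (norm-resolvent convergence, uniformity, location of $J$) entirely to Theorem \ref{thm:Main}. As a cross-check that bypasses this bookkeeping, one can alternatively re-derive \eqref{eq:PositronLimit} directly, repeating the $k$-block resolvent estimates of the proof of Theorem \ref{thm:Main} with $+mc^2$ in place of $-mc^2$; the zero-energy (scattering-length) computation \eqref{eq:scattD}, redone for $(\mathsf{h}_{\alpha,0}+mc^2)u=0$, already exhibits the correct lower-component node and reproduces the scaling \eqref{eq:sc2}.
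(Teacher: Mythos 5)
Your proposal is correct in substance and rests on exactly the symmetry the paper invokes, but it executes it differently. The paper's proof of Theorem \ref{thm:Main2} is, literally, to \emph{re-run} the whole proof of Theorem \ref{thm:Main} under the substitutions $\alpha\leftrightarrow-\alpha$, $m\leftrightarrow-m$, $\gamma\leftrightarrow\gamma^{-1}$ (block closures and Friedrichs limits as in Proposition \ref{prop:LimitFriedrichs}, the prefactor limit as in Lemma \ref{lem:tauprefactors}, the uniform spectral window as in Proposition \ref{prop:myprop}); you instead promote the substitution to an explicit anti-unitary charge conjugation and deduce \eqref{eq:PositronLimit} as a \emph{corollary} of Theorem \ref{thm:Main} as stated. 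Your intertwining is right: block-wise $\sigma_1 K\,\mathsf{h}_{\alpha,k}\,K\sigma_1=-\,\mathsf{h}_{-\alpha,-k}$, the coefficients of \eqref{eq:AsymptoticsZeroIntroduction} transform as $\tilde g_0=\overline{g_1}$, $\tilde g_1=\overline{g_0}$, so $g_1=-\ii\gamma g_0$ goes to $\tilde g_1=-\ii\gamma^{-1}\tilde g_0$, and \eqref{eq:LimitGammaTheta} written for $(-\alpha,\gamma^{-1})$ is precisely \eqref{eq:sc2}. This buys genuine economy: Theorem \ref{thm:Main} is used as a black box with \emph{positive} mass (the paper's $m\mapsto-m$ is absorbed into the overall sign of the intertwining), no resolvent estimate is repeated, and the supersymmetric exchange $H_\alpha^{(\infty)}\leftrightarrow H_\alpha^{(0)}$ drops out of $\gamma\mapsto\gamma^{-1}$ for free. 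The ``delicate step'' you isolate is indeed the only content beyond Theorem \ref{thm:Main}, and your plan for it works: with $\mathsf{S}_{-\alpha,-k}=\mathsf{S}_{\alpha,k-1}$ and the re-indexing of Appendix \ref{app:A} (the lower component of the Dirac block $k$ carries the angular mode $k-1$), the positron channel acquires the $\theta$-type condition on the $r^{\alpha},r^{1-\alpha}$ exponents (the asymptotics of Theorem \ref{prop:k-1}) and the Friedrichs realisation in the mode $k=0$ --- exactly the interchange of Remark \ref{rem:commentsSchrExt}, which is how $S_\alpha^{(\theta)}$ in \eqref{eq:PositronLimit} must be read, modulo the implicit angular relabeling; your zero-energy cross-check is also sound, since $(\mathsf{h}_{\alpha,0}+mc^2)u=0$ yields a lower component proportional to $r^{\alpha}\big(\gamma+\frac{2mc}{1-2\alpha}\,r^{1-2\alpha}\big)$, whose node reproduces the scaling \eqref{eq:sc2}.

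One caveat: the claim that the region of convergence is ``inherited verbatim'' is not right. Under $z\mapsto-\bar z$ the real spectral window of Theorem \ref{thm:Main}, a negative interval $J'\subset(-2mc^2,0)$, is transported to the \emph{positive} interval $-J'\subset(0,2mc^2)$. This is in fact forced by the spectral picture: $\sigma\big(H_\alpha^{(\gamma)}+mc^2\big)\supset(-\infty,0]\cup[2mc^2,+\infty)$ and $\sigma\big(-S_\alpha^{(\theta)}\big)\supset(-\infty,0]$, so no interval of the negative half-line can lie in either resolvent set; your conjugation argument thus exposes the printed ``negative real line'' in Theorem \ref{thm:Main2} as a slip carried over from Theorem \ref{thm:Main}, and you should state $J\subset\mathbb{R}^+$ explicitly rather than transfer $J$ unchanged.
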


Thus, whereas the non-relativistic limit in the Dirac-AB Hamiltonian $H_{\alpha}^{(\infty)} $ with re-scaling \eqref{eq:LimitGammaTheta} yields the Friedrichs realisation $S_\alpha^{(\infty)}$ of the Schr\"odinger-AB Hamiltonian, here in the limit \eqref{eq:PositronLimit} with re-scaling \eqref{eq:sc2} the analogous role for the positron is played by the Dirac operator $H_{\alpha}^{(0)} $.


\begin{remark} One may conceive a sequence of \emph{two} limiting procedures: first, the removal of regularisation from a smoothed Dirac-AB Hamiltonian as in the above-mentioned work \cite{Tamura_rel_2003} (Remark \ref{rem:commentsDiracExt}), and then the non-relativistic limit described in Theorem \ref{thm:Main}. This gives rise to a generic case, when $\alpha\neq 0$, where the removal of the regularisation only selects $H_\alpha^{(\infty)}$ or $H_\alpha^{(0)}$, and hence the subsequent non-relativistic limit only produces $S_\alpha^{(\infty)}$ or $S_\alpha^{(0)}$, respectively. Next to this `rigid' case, there is the exceptional case $\alpha=0$ of half-integer flux: as recalled already (Remark \ref{rem:commentsDiracExt}), here the removal of regularisation can be tuned so as to yield in the limit $H_0^{(\gamma)}$ for suitable $\gamma=\gamma(c)$; from this, the further non-relativistic limit re-scaled with $c$ would then yield generic Schr\"{o}dinger-AB Hamiltonians $S_0^{(\theta)}$.
\end{remark}

 \begin{remark}
  We do not indulge here in the explicit spectral analysis of the Hamiltonians $H_\alpha^{(\gamma)}$ and $S_\alpha^{(\theta)}$ -- even though the discussion that follows shall make all basic features well evident, including the emergence of one isolated eigenvalue for $H_\alpha^{(\gamma)}$ in the gap, and for $S_\alpha^{(\theta)}$ below the continuum threshold. Yet, it is worth recalling that the limit in Theorem \ref{thm:Main} being in the norm resolvent sense, then a principle of non-contraction of the spectrum of $H_\alpha^{(\gamma)}-mc^2$ holds (see, e.g., \cite[Theorem VIII.23]{rs1}). In particular, if below the continuum threshold the operator $S_\alpha^{(\theta)}$ has an isolated eigenvalue, then that is the limit, as $c\to +\infty$, of the isolated eigenvalue in the gap of $H_\alpha^{(\gamma(c))}-mc^2$.
 \end{remark}

\subsection{Notation} All the general notation used here is standard in the field, and specific definitions are made in the following when special symbols or notions are introduced. For general reference, the $H^s_0$-Sobolev spaces used in our discussion are, as customary, the $H^s$-closures of the underlying $C^{\infty}_0$-spaces (the vanishing being declared at the origin). Thus, for our purposes,
 \begin{equation}
	H^1_0(\mathbb{R}^+;\mathbb{C}^2) \, = \, \{ \varphi \in L^2(\mathbb{R}^+;\mathbb{C}^2) \, | \, \varphi' \in L^2(\mathbb{R}^+;\mathbb{C}^2) \, \text{ and } \, \varphi(0) =0 \}
\end{equation}
and
\begin{equation}
		H^2_0(\mathbb{R}^+;\mathbb{C}) \, =\,\{ \varphi \in L^2(\mathbb{R}^+;\mathbb{C}) \, | \, \varphi', \varphi'' \in L^2(\mathbb{R}^+;\mathbb{C}) \, \text{ and } \, \varphi'(0)=\varphi(0)=0\}
	\end{equation}
(see, e.g., \cite[Definition 4.5(2) and Theorem 4.25]{Grubb-DistributionsAndOperators-2009}).

Let us also point out that the adjoint operators considered in the following are in fact adjoints of minimally defined differential operators and, as such, they are maximally defined and act with the same differential action, by standard general facts (see e.g.~\cite[Lemma 4.3]{Grubb-DistributionsAndOperators-2009}).

\section{Block-wise self-adjoint extensions of the Dirac-AB operator}\label{sec:Dirac}

 The object of this Section is the classification of the self-adjoint extensions in $L^2(\mathbb{R}^+;\mathbb{C}^2,\ud r)$ of the operators $\mathsf{h}_{\alpha,k}$, $k\in\mathbb{Z}$, introduced in \eqref{eq:Hsupersym} and minimally defined on the domain $C^\infty_0(\mathbb{R}^+;\mathbb{C}^2)$. The final result is Theorem \ref{prop:Dirac} here below.

 For convenience, we also introduce, for $\lambda\in(0,2mc^2)$, the short-hand expression
 \begin{equation}\label{eq:MuLambda}
	\mu_{c,\lambda}\,:=\, \frac{\sqrt{2mc^2-\lambda}}{c} 
\end{equation}
 and the two $\mathbb{R}^+\to\mathbb{C}^2$ functions
 \begin{align}
  \Phi_{\alpha,\lambda,c}^{(D)}(r)\,&:=\,\begin{pmatrix}-\frac{\ii c}{\sqrt{\lambda}} \mu_{c,\lambda} \sqrt{r} K_{\alpha+\frac12}(\mu_{c,\lambda}\sqrt{\lambda}\, r) \\
	\sqrt{r} K_{\alpha-\frac{1}{2}}(\mu_{c,\lambda} \sqrt{\lambda} \,r)\end{pmatrix}  \, , \label{eq:PhiD} \\
  F_{\alpha,\lambda,c}^{(D)}(r)\,&:=\,\begin{pmatrix}\frac{\ii c}{\sqrt{\lambda}} \mu_{c,\lambda} \sqrt{r} I_{\alpha+\frac12}(\mu_{c,\lambda} \sqrt{\lambda}\, r) \\
	\sqrt{r} I_{\alpha-\frac{1}{2}}(\mu_{c,\lambda} \sqrt{\lambda} \,r)\end{pmatrix}\label{eq:FD}
 \end{align}
 defined in terms of the Bessel functions of the second kind $K_\nu$ and $I_\nu$ \cite[Eq.~(9.6.2) and (9.6.10)]{Abramowitz-Stegun-1964}.

\begin{theorem}\label{prop:Dirac}
	Let $\alpha \in (-\frac{1}{2},\frac{1}{2})$. 
\begin{itemize}
	\item[(i)] For all $k\in\mathbb{Z}$ the operator $\mathsf{h}_{\alpha,k}$ is closable in $L^2(\mathbb{R}^+;\mathbb{C}^2,\ud r)$, with closure
	\begin{equation}
	 \mathcal{D}(\overline{\mathsf{h}_{\alpha,k}})\,=\,H^1_0(\mathbb{R}^+;\mathbb{C}^2)
	\end{equation}
	and spectral gap such that
	\begin{equation}
	 (-mc^2,mc^2)\,\subset\,\rho(\overline{\mathsf{h}_{\alpha,k}})\,.
	\end{equation}
    \item[(ii)] For all $k\in\mathbb{Z}\setminus\{0\}$, the operator $\mathsf{h}_{\alpha,k}$ is essentially self-adjoint in $L^2(\mathbb{R}^+;\mathbb{C}^2,\ud r)$.
	\item[(iii)] $\mathsf{h}_{\alpha,0}$ has deficiency indices $(1,1)$. Any function $g \in \mathcal{D}(\mathsf{h}_{\alpha,0}^*)$ satisfies the short-distance asymptotics
		\begin{equation}\label{eq:AsymptoticAdjoint}
			g(r)\,\stackrel{(r\downarrow 0)}{=} \,\begin{pmatrix}
				1  \\ 0
			\end{pmatrix} g_0 r^{-\alpha} + \begin{pmatrix}
				0 \\ 1
			\end{pmatrix} g_1 r^{\alpha} + o(r^{\frac{1}{2}})
		\end{equation}
		for $g$-dependent constants $g_0,g_1 \in \mathbb{C}$. The self-adjoint extensions of $\mathsf{h}_{\alpha,0}$ in $L^2(\mathbb{R}^+;\mathbb{C}^2,\ud r)$ constitute a one-real-parameter family $(\mathsf{h}_{\alpha,0}^{(\gamma)})_{\gamma \in \mathbb{R} \cup \{\infty\}}$ of restrictions of the adjoint $\mathsf{h}_{\alpha,0}^*$ each of which is defined, in terms of \eqref{eq:AsymptoticAdjoint}, by
		\begin{equation}\label{eq:BoundarySelfAdjointDirac}
			\begin{split}
				\mathsf{h}_{\alpha,0}^{(\gamma)} \,&:=\, \mathsf{h}_{\alpha,0}^* \upharpoonright \mathcal{D}(\mathsf{h}_{\alpha,0}^{(\gamma)}) \,,\\
				\mathcal{D}(\mathsf{h}_{\alpha,0}^{(\gamma)}) \,&:=\, \{ g \in \mathcal{D}(\mathsf{h}_{\alpha,0}^*) \, | \, g_1 =- \ii \gamma g_0\}\,.
			\end{split}
		\end{equation}
	\item[(iv)] Both $\mathsf{h}_{\alpha,0}^{(\infty)}$ and $\mathsf{h}_{\alpha,0}^{(\gamma)}$ have spectral gap, with $\rho(\mathsf{h}_{\alpha,0}^{(\infty)}-mc^2)\supset (-2mc^2,0)$ and with $\rho(\mathsf{h}_{\alpha,0}^{(\gamma)}-mc^2)$ containing $(-2mc^2,0)$ up to possibly a single point.
	For $-\lambda \in (-2mc^2,0)\subset\rho(\mathsf{h}_{\alpha,0}^{(\infty)}-mc^2)$, respectively,  $-\lambda \in \rho(\mathsf{h}_{\alpha,0}^{(\gamma)}-mc^2)\cap (-2mc^2,0)$, the extensions $\mathsf{h}_{\alpha,0}^{(\infty)}-mc^2$ and $\mathsf{h}_{\alpha,0}^{(\gamma)}-mc^2$, $\gamma\in\mathbb{R}$, have resolvent, respectively,
	\begin{equation}\label{eq:ThmReshinf}
	  (\mathsf{h}_{\alpha,0}^{(\infty)}-mc^2+\lambda)^{-1}\,=\,\textrm{integral operator with kernel $G_{\alpha,\lambda,c}^{(D)}(r,\rho)$}\,,
	\end{equation}
	where, for $r,\rho>0$,
  \begin{equation}\label{eq:GreenF}
	G_{\alpha,\lambda,c}^{(D)}(r,\rho)\,:=\,-\frac{\lambda}{\,c^2}\Big(\Phi_{\alpha,\lambda,c}^{(D)}(r) \otimes_{\mathbb{C}^2} \overline{F_{\alpha,\lambda,c}^{(D)}(\rho)} \mathbf{1}_{(0,r)}(\rho)+F_{\alpha,\lambda,c}^{(D)}(r) \otimes_{\mathbb{C}^2} \overline{ \Phi_{\alpha,\lambda,c}^{(D)}(\rho)} \mathbf{1}_{(r,+\infty)}(\rho)\Big)\,,
\end{equation}
  and
  \begin{equation}\label{eq:mainres}
   (\mathsf{h}_{\alpha,0}^{(\gamma)}-mc^2+\lambda)^{-1} \,=\, (\mathsf{h}_{\alpha,0}^{(\infty)}-mc^2+\lambda)^{-1}+ \tau_{\alpha,\lambda,\gamma,c}^{(D)}\,\frac{\lambda}{c^2}  \big| \Phi^{(D)}_{\alpha,\lambda,c} \big\rangle \big\langle \Phi_{\alpha,\lambda,c}^{(D)} \big|\, ,
  \end{equation}
 where
 \begin{equation}\label{eq:TauDGamma}
	\tau_{\alpha,\lambda,\gamma,c}^{(D)}\,:=\, \frac{2}{\,\pi \sec(\pi \alpha)+4^\alpha \mu_{c,\lambda}^2 (\mu_{c,\lambda}\sqrt{\lambda})^{-1-2 \alpha} \,\Gamma^2(\frac12+\alpha) c \gamma\,} \, .
\end{equation}
\end{itemize}
\end{theorem}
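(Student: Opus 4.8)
The plan is to treat each angular block $\mathsf{h}_{\alpha,k}$ as a one-dimensional first-order Dirac system on $\mathbb{R}^+$ and to carry out the whole analysis — gap, deficiency indices, extension parametrisation and resolvents — at a spectral point $-\lambda$ inside the gap, so that every reference inverse is a genuine bounded operator and the Kre\u{\i}n-Vi\v{s}ik-Birman-Grubb formula applies to the gapped symmetric block directly.

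For (i), closability is automatic from symmetry on $C^\infty_0(\mathbb{R}^+;\mathbb{C}^2)$. Writing $D_\pm:=\pm\frac{\ud}{\ud r}+\frac{\alpha+k}{r}$, a direct computation gives $(\mathsf{h}_{\alpha,k})^2=c^2\,\mathrm{diag}(D_-D_+,\,D_+D_-)+m^2c^4\mathbbm{1}$ with $D_\mp D_\pm\geqslant 0$, hence $(\mathsf{h}_{\alpha,k})^2\geqslant m^2c^4$ on the minimal domain. A Cauchy-Schwarz estimate upgrades this to the coercive bound $\|(\mathsf{h}_{\alpha,k}-z)g\|\geqslant(mc^2-|z|)\|g\|$ for real $z\in(-mc^2,mc^2)$, showing that every gap point is of regular type; together with the essential self-adjointness of (ii) this yields $(-mc^2,mc^2)\subset\rho(\overline{\mathsf{h}_{\alpha,k}})$ for $k\neq 0$ and locates the deficiency subspaces inside the gap for $k=0$. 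The same coercivity identifies the graph norm of $\overline{\mathsf{h}_{\alpha,k}}$ with the $H^1$-norm — the singular term $\frac{\alpha+k}{r}$ being subordinate via Hardy's inequality — giving $\mathcal{D}(\overline{\mathsf{h}_{\alpha,k}})=H^1_0(\mathbb{R}^+;\mathbb{C}^2)$.

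For (ii)-(iii) I would count the $L^2$-solutions of $(\mathsf{h}_{\alpha,k}^*-mc^2+\lambda)g=0$ at each endpoint. Eliminating the upper component via $g^{(1)}=-\frac{\ii c}{\lambda}D_-g^{(2)}$ reduces the system to $-g^{(2)\prime\prime}+\frac{(\alpha+k)(\alpha+k-1)}{r^2}\,g^{(2)}=\frac{\lambda(\lambda-2mc^2)}{c^2}\,g^{(2)}$, a modified Bessel equation with fundamental solutions $\sqrt{r}\,I_{\alpha+k-\frac12}$ and $\sqrt{r}\,K_{\alpha+k-\frac12}$ and indicial exponents $\alpha+k$ and $1-(\alpha+k)$; this is exactly what produces the decaying $\Phi_{\alpha,\lambda,c}^{(D)}$ and the origin-regular $F_{\alpha,\lambda,c}^{(D)}$ of \eqref{eq:PhiD}-\eqref{eq:FD}. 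At infinity $K$ decays while $I$ grows, so exactly one solution is square-integrable (limit point). At the origin the decisive point — the one subtlety I would stress — is that $L^2$-integrability must be tested on the full spinor, not only on $g^{(2)}$: applying $D_-$ to the $r^{1-(\alpha+k)}$-branch raises a derivative and forces $g^{(1)}\sim r^{-(\alpha+k)}$, square-integrable only when $\alpha+k<\frac12$, whereas the $r^{\alpha+k}$-branch gives an $L^2$ spinor only when $\alpha+k>-\frac12$. Both branches are therefore $L^2$ exactly when $\alpha+k\in(-\frac12,\frac12)$, i.e. only for $k=0$; for $k\neq 0$ a single branch survives (limit point also at the origin), so $\mathsf{h}_{\alpha,k}$ is essentially self-adjoint by Weyl's alternative, while $\mathsf{h}_{\alpha,0}$ has deficiency indices $(1,1)$. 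Reassembling the two origin-branches yields the adjoint-domain asymptotics \eqref{eq:AsymptoticAdjoint}, with leading $r^{-\alpha}$ in the upper and $r^{\alpha}$ in the lower component and all cross terms of orders $1+\alpha$ and $1-\alpha$ absorbed into the $o(r^{1/2})$ remainder; reading the symmetric boundary form of Green's identity as a sesquilinear pairing of $(g_0,g_1)$ then singles out the Lagrangian restrictions \eqref{eq:BoundarySelfAdjointDirac}. (The resonant exponents at $\alpha=0$ are harmless, since the inverse-square coefficient vanishes there and the half-integer Bessel functions are elementary, producing no logarithmic correction.)

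For (iv) I would take $\mathsf{h}_{\alpha,0}^{(\infty)}$ — the regular, $g_0=0$ extension, in the Friedrichs role — as the Kre\u{\i}n-Vi\v{s}ik-Birman-Grubb reference, and build its Green function \eqref{eq:GreenF} by the usual Sturm-Liouville/Wronskian recipe, pairing the origin-regular $F_{\alpha,\lambda,c}^{(D)}$ with the decaying $\Phi_{\alpha,\lambda,c}^{(D)}$ and fixing the prefactor $-\lambda/c^2$ from their constant Wronskian. Since $\Phi_{\alpha,\lambda,c}^{(D)}$ spans the one-dimensional deficiency space at $-\lambda$, Kre\u{\i}n's resolvent formula delivers the rank-one correction \eqref{eq:mainres}, and the scalar $\tau_{\alpha,\lambda,\gamma,c}^{(D)}$ is fixed by imposing the boundary condition $g_1=-\ii\gamma g_0$ on $(\mathsf{h}_{\alpha,0}^{(\infty)}-mc^2+\lambda)^{-1}\psi+\tau\,\frac{\lambda}{c^2}\,\langle\Phi_{\alpha,\lambda,c}^{(D)},\psi\rangle\,\Phi_{\alpha,\lambda,c}^{(D)}$; the single pole of $\tau_{\alpha,\lambda,\gamma,c}^{(D)}$ in $\lambda$ accounts for the at most one eigenvalue excised from the gap in the generic case. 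I expect this last step to be the genuine computational obstacle: extracting $\tau_{\alpha,\lambda,\gamma,c}^{(D)}$ in closed form requires the exact small-argument expansions of $K_{\alpha\pm\frac12}$ and $I_{\alpha\pm\frac12}$ with their $\Gamma$-function constants and a careful match against the boundary functionals, so as to reproduce the precise $\pi\sec(\pi\alpha)$, $4^\alpha$ and $\Gamma^2(\tfrac12+\alpha)$ coefficients of \eqref{eq:TauDGamma}.
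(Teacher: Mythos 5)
Your proposal is correct and reproduces all the analytic core of the paper's argument: the anticommutation/supersymmetry identity $\Vert \mathsf{h}_{\alpha,k}g\Vert^2=\Vert \mathsf{h}_{\alpha,k}^{(m=0)}g\Vert^2+m^2c^4\Vert g\Vert^2$ plus Hardy's inequality for $\mathcal{D}(\overline{\mathsf{h}_{\alpha,k}})=H^1_0(\mathbb{R}^+;\mathbb{C}^2)$ and the gap bound; the elimination of the upper component leading to the modified Bessel equation of order $\alpha+k-\frac12$, hence to $\Phi_{\alpha,\lambda,c}^{(D)}$ and $F_{\alpha,\lambda,c}^{(D)}$ (your reduced equation and indicial exponents check out against the paper's, and your insistence that square-integrability at the origin be tested on the \emph{full spinor} — the $D_-$-image of the $r^{1-(\alpha+k)}$-branch producing the $r^{-(\alpha+k)}$ singularity in the upper component — is exactly the mechanism singling out $k=0$; the paper outsources the limit-point/limit-circle count to Weidmann, you derive it by hand, both legitimate); and the Wronskian value $W=\ii c/\lambda$ fixing the prefactor $-\lambda/c^2$ in \eqref{eq:GreenF}. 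Where you genuinely diverge is the classification and resolvent step. The paper stays inside the Kre\u{\i}n--Vi\v{s}ik--Birman--Grubb scheme: it writes $g=\varphi+c_1\Psi_{\alpha,\lambda,c}^{(D)}+c_0\Phi_{\alpha,\lambda,c}^{(D)}$, imposes the Birman condition $c_1=\beta c_0$, relabels $\beta\mapsto\gamma$ through \eqref{eq:GammaBeta}, and obtains \eqref{eq:mainres} from the abstract formula \eqref{eq:Res11KVB} together with the explicit norm \eqref{eq:L2normPhi}. You instead classify by the Lagrangian planes of the Green boundary form, which at $r\downarrow 0$ reduces to $-\ii c\,(\overline{g_0}h_1+\overline{g_1}h_0)$ and vanishes precisely on $g_1=-\ii\gamma g_0$ with $\gamma\in\mathbb{R}\cup\{\infty\}$, and you fix $\tau_{\alpha,\lambda,\gamma,c}^{(D)}$ by imposing this boundary condition on Kre\u{\i}n's rank-one formula. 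Your route makes the reality of $\gamma$ transparent and skips the $\beta$-bookkeeping; its costs are that exhaustiveness of the family must be drawn from the $(1,1)$ deficiency count, that boundedness of the Wronskian Green operator needs separate verification (the paper's Schur-test/Hilbert--Schmidt Lemma \ref{lem:BoundednessIntegralOperator}), and — as you rightly flag — that imposing $g_1=-\ii\gamma g_0$ on $(\mathsf{h}_{\alpha,0}^{(\infty)}-mc^2+\lambda)^{-1}\psi+\tau\frac{\lambda}{c^2}\langle\Phi_{\alpha,\lambda,c}^{(D)},\psi\rangle\Phi_{\alpha,\lambda,c}^{(D)}$ for \emph{arbitrary} $\psi$ requires the $o(r^{\frac12})$ expansion of $R_{G_{\alpha,\lambda,c}^{(D)}}\psi$ (the paper's Lemma \ref{lem:Asympt0RanRG}), whose only non-routine point is the exact cancellation of the $r^{-\alpha}$-singular pieces when $\alpha\leqslant 0$; since the paper needs the same lemma for \eqref{eq:GammaBeta}, neither route escapes this computation. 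Two further points in your favour: your remark that for $k=0$ the coercive bound only makes the gap consist of points of \emph{regular type} (the range of $\overline{\mathsf{h}_{\alpha,0}}-z$ having codimension one) is the careful reading of part (i), which is how the paper in fact uses it; and your account of the at most one excised gap point via the single pole of $\tau_{\alpha,\lambda,\gamma,c}^{(D)}$ in $\lambda$ is sound, since $\mu_{c,\lambda}^2(\mu_{c,\lambda}\sqrt{\lambda})^{-1-2\alpha}$ is strictly monotone on $(0,2mc^2)$ — the paper argues instead abstractly via rank-one perturbation of $\mathsf{h}_{\alpha,0}^{(\infty)}$.
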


 Theorem \ref{prop:Dirac}(i)-(iii) states facts that are general knowledge in the non-magnetic case and are well expected in the Aharonov-Bohm setting as well.

 A version of Theorem \ref{prop:Dirac} in the framework of von Neumann's self-adjoint extension scheme is established in
 \cite[Sect.~II]{{Tamura_rel_2003}}. Here, instead, as announced, we obtain Theorem \ref{prop:Dirac} within the Kre\u{\i}n-Vi\v{s}ik-Birman-Grubb extension theory \cite{Krein-1947,Vishik-1952,Birman-1956,Grubb-1968}, for which we refer to the general discussion in \cite{GMO-KVB2017}, \cite[Chapter 2]{GM-SelfAdj_book-2022}, \cite[Chapter 13]{Grubb-DistributionsAndOperators-2009}, and \cite{KM-2015-Birman}, and to the short summary cast in Appendix \ref{sec:proofOfMain} specifically for the case of deficiency indices $(1,1)$.

This is a constructive procedure that goes through the following sequence of steps:
 \begin{enumerate}
  \item the identification of the operator closure $\overline{\mathsf{h}_{\alpha,k}}$ (Sect.~\ref{subsec:Closure});
  \item the characterisation of the canonical deficiency subspace $\ker(\mathsf{h}_{\alpha,k}^*-z)$ (Sect.~\ref{sec:KerHStar});
  \item the identification of a distinguished self-adjoint extension, actually $\mathsf{h}_{\alpha,k}^{(\infty)}$, such that $\mathsf{h}_{\alpha,k}^{(\infty)}-z$ is invertible with everywhere bounded inverse, and the characterisation of the action of $(\mathsf{h}_{\alpha,k}^{(\infty)}-z)^{-1}$ on $\ker(\mathsf{h}_{\alpha,k}^*-z)$ (Sect.~\ref{sec:distinguishedSA}).
 \end{enumerate}
 Once the above data $\overline{\mathsf{h}_{\alpha,k}}$, $\ker(\mathsf{h}_{\alpha,k}^*-z)$, and $(\mathsf{h}_{\alpha,k}^{(\infty)}-z)^{-1}$ are available, a general construction (see Theorem \ref{thm:KVB-General-App}) allows to obtain the whole family of self-adjoint extensions of $\mathsf{h}_{\alpha,k}$.

 We shall exploit this construction in Section \ref{subsect:SD}, after the necessary preparation developed in Sections \ref{subsec:Closure}-\ref{sec:distinguishedSA}.

\subsection{Operator closure and deficiency indices}\label{subsec:Closure}~

We start by characterising $\overline{\mathsf{h}_{\alpha,k}}$ and computing its deficiency indices.
The result is the following.

\begin{proposition}\label{prop:Closure}~
\begin{itemize}
	\item[(i)] For $|\alpha+k| \neq \frac{1}{2}$,
	\begin{equation}\label{eq:Closure}
		\mathcal{D}(\overline{\mathsf{h}_{\alpha,k}}) \;=\;H^1_0(\mathbb{R}^+;\mathbb{C}^2 )\, .
	\end{equation}
	In particular, for $|\alpha+k| \neq \frac{1}{2}$, any $\varphi \in \mathcal{D}(\overline{\mathsf{h}_{\alpha,k}})$ is absolutely continuous and satisfy
	\begin{equation}\label{eq:AsymptH1AtZero}
		\varphi(r) \overset{r \downarrow 0}{=} o(r^{\frac{1}{2}}) \, .
	\end{equation}
	\item[(ii)] $(-mc^2,mc^2) \subset \rho(\overline{\mathsf{h}_{\alpha,k}})$.
	\item[(iii)] The deficiency indices of $\mathsf{h}_{\alpha,k}$ are $(0,0)$ if $|k+\alpha| \geqslant \frac{1}{2}$, and $(1,1)$ otherwise.
	\end{itemize}
\end{proposition}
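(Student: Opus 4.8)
The plan is to exploit the supersymmetric block structure of $\mathsf{h}_{\alpha,k}$. Writing $\nu:=\alpha+k$ and $A:=\ii c(-\tfrac{\ud}{\ud r}+\tfrac{\nu}{r})$ on $C^\infty_0(\mathbb{R}^+;\mathbb{C})$, one has $\mathsf{h}_{\alpha,k}=\left(\begin{smallmatrix} mc^2 & A \\ A^* & -mc^2\end{smallmatrix}\right)$ in the sense of formal adjoints, and for $\varphi=(\varphi_1,\varphi_2)\in C^\infty_0$ a one-line computation of $\|\mathsf{h}_{\alpha,k}\varphi\|^2$ shows that the cross terms cancel, leaving
\[
\|\mathsf{h}_{\alpha,k}\varphi\|^2 = m^2c^4\|\varphi\|^2 + c^2\big\|(\tfrac{\ud}{\ud r}+\tfrac{\nu}{r})\varphi_1\big\|^2 + c^2\big\|(-\tfrac{\ud}{\ud r}+\tfrac{\nu}{r})\varphi_2\big\|^2 .
\]
From here part (ii) is immediate: dropping the last two nonnegative terms gives $\|\mathsf{h}_{\alpha,k}\varphi\|\ge mc^2\|\varphi\|$, whence $\|(\mathsf{h}_{\alpha,k}-\lambda)\varphi\|\ge(mc^2-|\lambda|)\|\varphi\|$ for $|\lambda|<mc^2$, so that $\overline{\mathsf{h}_{\alpha,k}}-\lambda$ has a bounded inverse on its (closed) range for $|\lambda|<mc^2$, which is the spectral gap of~(ii).

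For part (i) I would integrate by parts, using $\varphi_i(0)=0$, to turn each first-order norm into $\|(\pm\tfrac{\ud}{\ud r}+\tfrac{\nu}{r})\varphi_i\|^2=\|\varphi_i'\|^2+\nu(\nu\pm1)\|\varphi_i/r\|^2$. Hardy's inequality $\|\varphi/r\|^2\le 4\|\varphi'\|^2$ (sharp constant $4$) then shows that the graph norm is equivalent to the $H^1$ norm precisely when $\nu(\nu+1)>-\tfrac14$ and $\nu(\nu-1)>-\tfrac14$, i.e. exactly when $|\nu|\ne\tfrac12$ (the two forbidden values being where one coefficient hits the critical $-\tfrac14$). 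This equivalence identifies $\mathcal{D}(\overline{\mathsf{h}_{\alpha,k}})=\overline{C^\infty_0}^{\,H^1}=H^1_0(\mathbb{R}^+;\mathbb{C}^2)$, and the $o(r^{1/2})$ decay \eqref{eq:AsymptH1AtZero} follows from $|\varphi(r)|=\big|\int_0^r\varphi'\big|\le r^{1/2}\big(\int_0^r|\varphi'|^2\big)^{1/2}$.

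For part (iii) I would count $L^2(\mathbb{R}^+)$ solutions of $(\mathsf{h}_{\alpha,k}^*-E)\psi=0$ at a real $E$ in the gap. Eliminating the lower component reduces the equation for the upper component to the modified-Bessel equation $-u''+\frac{\nu(\nu+1)}{r^2}u=\frac{E^2-m^2c^4}{c^2}u$, whose solution square-integrable at infinity is, up to a scalar, $\sqrt{r}\,K_{\nu+\frac12}(\kappa r)$ with $\kappa=\sqrt{m^2c^4-E^2}/c$; near the origin this behaves as $r^{-\nu}$, while the induced lower component behaves as $r^{\nu}$ (its leading $r^{-\nu}$ contribution being annihilated by $\tfrac{\ud}{\ud r}+\tfrac{\nu}{r}$). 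Both components are square-integrable at $0$ iff $-\tfrac12<\nu<\tfrac12$, so $\dim\ker(\mathsf{h}_{\alpha,k}^*-E)$ equals $1$ for $|\nu|<\tfrac12$ and $0$ otherwise. Since the field of regularity contains both open half-planes and the real gap and is therefore connected, the deficiency number is constant on it; hence $n_+=n_-=\dim\ker(\mathsf{h}_{\alpha,k}^*-E)$, giving indices $(1,1)$ for $|\alpha+k|<\tfrac12$ and $(0,0)$ for $|\alpha+k|\ge\tfrac12$, and the asymptotics \eqref{eq:AsymptoticAdjoint} are read off from the $r^{\mp\nu}$ exponents.

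The main obstacle is the short-distance bookkeeping in (iii): one must track the subleading term of $\sqrt{r}\,K_{\nu+\frac12}(\kappa r)$ to extract the genuine leading behaviour of the lower component, and treat the edge cases $\nu\le-\tfrac12$ with care, where the upper component is square-integrable but the lower one is not, so that no global $L^2$ solution survives and the index drops to $0$. Likewise, the borderline $|\nu|=\tfrac12$ excluded from (i) is exactly where the Hardy estimate degenerates and a logarithm enters, which is why the clean $H^1_0$ identification is asserted only off those values.
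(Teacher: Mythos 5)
Your proof is correct, and for parts (i)--(ii) it is essentially the paper's own argument: your componentwise identity $\|\mathsf{h}_{\alpha,k}\varphi\|^2=m^2c^4\|\varphi\|^2+c^2\|(\tfrac{\ud}{\ud r}+\tfrac{\nu}{r})\varphi_1\|^2+c^2\|(-\tfrac{\ud}{\ud r}+\tfrac{\nu}{r})\varphi_2\|^2$ is exactly what the paper's Lemmas \ref{lem:NormInequality} and \ref{lem:Norm_Equivalence_Closure} produce via the Pauli anti-commutation relations (expanding their cross term $(\alpha+k)\langle (r^{-1}\otimes\sigma_3)g,(r^{-1}\otimes\mathbbm{1})g\rangle$ componentwise gives your coefficients $\nu(\nu\pm 1)$), and the Hardy-inequality lower bound $(1-2|\nu|)^2\|\varphi'\|^2$, the identification $\overline{C^\infty_0}^{\,H^1}=H^1_0$, the $o(r^{1/2})$ bound, and the coercivity argument for the gap all coincide with the paper's; note only that, as in the paper, your conclusion in (ii) really establishes that the gap consists of points of regular type (bounded inverse on the closed range), since for $k=0$ the closure is symmetric but not self-adjoint, so a real point cannot lie in the resolvent set in the strict sense. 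Part (iii) is where you take a genuinely different route: the paper simply cites Weidmann's limit-point/limit-circle criteria for Dirac systems (Lemma \ref{lem:DefIndexClosure}), whereas you count $L^2$ solutions of $(\mathsf{h}_{\alpha,k}^*-E)\psi=0$ at a real $E$ in the gap via the modified Bessel reduction and then invoke constancy of the deficiency number on the connected field of regularity $\mathbb{C}^+\cup\mathbb{C}^-\cup(-mc^2,mc^2)$. This is more self-contained (no ODE-theory black box) and in fact front-loads, for all $k$, the computation the paper carries out anyway in Proposition \ref{prop:KernelAdjoint} for $k=0$; the price is the short-distance bookkeeping for $\sqrt{r}\,K_{\nu\pm\frac12}$ at the edges $\nu\leqslant-\tfrac12$, which you correctly flag (the $r^{-\nu}$ branch being killed by $\tfrac{\ud}{\ud r}+\tfrac{\nu}{r}$ is the right mechanism, and the $K_0$ logarithm at $|\nu|=\tfrac12$ does not rescue square-integrability of the failing component). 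Two cosmetic points: your ``precisely when $|\nu|\neq\tfrac12$'' claims an equivalence of norms \emph{iff} statement of which only the sufficiency direction is proved (and is all that is needed for \eqref{eq:Closure}), and the closing remark about \eqref{eq:AsymptoticAdjoint} pertains to Theorem \ref{prop:Dirac}(iii), not to the proposition at hand.
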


For the following manipulations, it is convenient to re-write
\begin{equation}\label{eq:h_rad_Pauli_Matrices}
	\mathsf{h}_{\alpha,k} \;=\; -\ii c \frac{\ud}{\ud r} \otimes \sigma_1-c \frac{\,\alpha+k\,}{r} \otimes \sigma_2 + mc^2 \mathbbm{1}\otimes \sigma_3
\end{equation}
with respect to the canonical Hilbert space isomorphism
\begin{equation}
 L^2(\mathbb{R}^+;\mathbb{C}^2,\ud r)\,\cong\, L^2(\mathbb{R}^+;\mathbb{C},\ud r) \otimes \mathbb{C}^2\,.
\end{equation}
We shall also use the short-hand
\begin{equation}
 \mathsf{h}_{\alpha,k}^{(m=0)}\,:=\,\mathsf{h}_{\alpha,k}-mc^2 \mathbbm{1}\otimes \sigma_3
\end{equation}
for the massless operator.

 In preparation for the proof of Proposition \ref{prop:Closure}, let us show that $\mathsf{h}_{\alpha,k}$ is coercive (Lemma \ref{lem:NormInequality}), that its graph norm
 \begin{equation}\label{eq:thegraphnormofhalphak}
  \|\mathsf{h}_{\alpha,k} g\|^2_{\mathsf{h}_{\alpha,k}}\,:=\,\|\mathsf{h}_{\alpha,k} g\|^2_{L^2(\mathbb{R}^+;\mathbb{C}^2)}+\|g\|^2_{L^2(\mathbb{R}^+;\mathbb{C}^2)}
 \end{equation}
 is equivalent to the $H^1$-norm (Lemma \ref{lem:Norm_Equivalence_Closure}), and that its deficiency indices are calculated with a limit-point limit-circle argument (Lemma \ref{lem:DefIndexClosure}).

\begin{lemma}\label{lem:NormInequality}
	For any $g \in C^\infty_0(\mathbb{R}^+;\mathbb{C}^2)$,
				\begin{equation}\label{eq:Closurenormid}
				\Vert \mathsf{h}_{\alpha,k} g \Vert^2_{L^2(\mathbb{R}^+;\mathbb{C}^2)} \;=\; \Vert \mathsf{h}_{\alpha,k}^{(m=0)} g \Vert^2_{L^2(\mathbb{R}^+;\mathbb{C}^2)} + m^2c^4 \Vert g \Vert^2_{L^2(\mathbb{R}^+;\mathbb{C}^2)}  \,.
			\end{equation}
    In particular,
    \begin{equation}\label{eq:ClosureInequality}
				\Vert \mathsf{h}_{\alpha,k} g \Vert_{L^2(\mathbb{R}^+;\mathbb{C}^2)} \; \geqslant \; m c^2 \Vert g \Vert_{L^2(\mathbb{R}^+;\mathbb{C}^2)} \, .
			\end{equation}
\end{lemma}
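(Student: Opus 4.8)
The plan is to isolate the mass term and exploit the anticommuting (supersymmetric) structure of the massless operator. Writing $\mathsf{h}_{\alpha,k} = \mathsf{h}_{\alpha,k}^{(m=0)} + mc^2\,\mathbbm{1}\otimes\sigma_3$ as in \eqref{eq:h_rad_Pauli_Matrices}, I would expand the squared $L^2$-norm of $\mathsf{h}_{\alpha,k}g$, for $g\in C^\infty_0(\mathbb{R}^+;\mathbb{C}^2)$, into three contributions: the massless term $\Vert\mathsf{h}_{\alpha,k}^{(m=0)}g\Vert^2$, the pure mass term $m^2c^4\Vert(\mathbbm{1}\otimes\sigma_3)g\Vert^2$, and twice the real part of the cross term, namely $2mc^2\,\mathfrak{Re}\langle\mathsf{h}_{\alpha,k}^{(m=0)}g,(\mathbbm{1}\otimes\sigma_3)g\rangle$. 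Since $\sigma_3$ is unitary on $\mathbb{C}^2$, the second contribution equals $m^2c^4\Vert g\Vert^2$, which is exactly the mass term in the claimed identity \eqref{eq:Closurenormid}; hence the whole statement reduces to showing that the cross term is purely imaginary.

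To obtain this, I would record two elementary facts. First, on $C^\infty_0(\mathbb{R}^+;\mathbb{C}^2)$ the massless operator $\mathsf{h}_{\alpha,k}^{(m=0)} = -\ii c\,\frac{\ud}{\ud r}\otimes\sigma_1 - c\,\frac{\alpha+k}{r}\otimes\sigma_2$ is symmetric: the multiplicative $\sigma_2$-term is pointwise Hermitian, while for the first-order term a single integration by parts produces no boundary contributions, because functions in $C^\infty_0(\mathbb{R}^+)$ vanish near $0$ and near $+\infty$. Second, since the Pauli matrices obey $\sigma_1\sigma_3=-\sigma_3\sigma_1$ and $\sigma_2\sigma_3=-\sigma_3\sigma_2$, whereas $\mathbbm{1}\otimes\sigma_3$ commutes with the scalar operators $\frac{\ud}{\ud r}$ and $\frac{\alpha+k}{r}$ acting on the first tensor factor, one gets the anticommutation relation $\mathsf{h}_{\alpha,k}^{(m=0)}(\mathbbm{1}\otimes\sigma_3) = -(\mathbbm{1}\otimes\sigma_3)\,\mathsf{h}_{\alpha,k}^{(m=0)}$.

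Combining these, I would move $\mathsf{h}_{\alpha,k}^{(m=0)}$ across the inner product by its symmetry, then push $\mathbbm{1}\otimes\sigma_3$ through using the anticommutation relation and the self-adjointness of $\sigma_3$, to obtain
\begin{align*}
\langle\mathsf{h}_{\alpha,k}^{(m=0)}g,(\mathbbm{1}\otimes\sigma_3)g\rangle
&= \langle g,\mathsf{h}_{\alpha,k}^{(m=0)}(\mathbbm{1}\otimes\sigma_3)g\rangle
= -\langle(\mathbbm{1}\otimes\sigma_3)g,\mathsf{h}_{\alpha,k}^{(m=0)}g\rangle \\
&= -\overline{\langle\mathsf{h}_{\alpha,k}^{(m=0)}g,(\mathbbm{1}\otimes\sigma_3)g\rangle}\,.
\end{align*}
Thus this complex number equals minus its own conjugate, hence is purely imaginary, so its real part vanishes and the cross term drops out, giving \eqref{eq:Closurenormid}. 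Finally, \eqref{eq:ClosureInequality} follows at once from \eqref{eq:Closurenormid} by discarding the nonnegative summand $\Vert\mathsf{h}_{\alpha,k}^{(m=0)}g\Vert^2\geqslant 0$ and taking square roots. I expect no genuine obstacle here: the computation is essentially algebraic, and the only point deserving a line of care is the vanishing of the boundary term in the integration by parts, which is guaranteed by the compact support away from the origin of the test functions in $C^\infty_0(\mathbb{R}^+;\mathbb{C}^2)$.
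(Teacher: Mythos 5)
Your proposal is correct and follows essentially the same route as the paper's own proof: both expand $\Vert \mathsf{h}_{\alpha,k}\, g\Vert^2_{L^2(\mathbb{R}^+;\mathbb{C}^2)}$ via the decomposition \eqref{eq:h_rad_Pauli_Matrices} into massless part plus $mc^2\,\mathbbm{1}\otimes\sigma_3$, and kill the cross term using the anti-commutation of $\sigma_1,\sigma_2$ with $\sigma_3$ (together with the symmetry of $\mathsf{h}_{\alpha,k}^{(m=0)}$ on $C^\infty_0(\mathbb{R}^+;\mathbb{C}^2)$, where integration by parts produces no boundary terms). Your explicit verification that the cross term equals minus its own conjugate, hence is purely imaginary, merely spells out the step the paper compresses into the citation of the relations $\sigma_j\sigma_k+\sigma_k\sigma_j=2\delta_{j,k}\mathbbm{1}$.
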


\begin{proof}
	Using \eqref{eq:h_rad_Pauli_Matrices},
	\[
		\begin{split}
			\langle \mathsf{h}_{\alpha,k} g &, \mathsf{h}_{\alpha,k} g \rangle_{L^2(\mathbb{R}^+;\mathbb{C}^2)} \,=\, \Vert \mathsf{h}_{\alpha,k}^{(m=0)} g \Vert^2_{L^2(\mathbb{R}^+;\mathbb{C})\otimes\mathbb{C}^2} \\
			&+ m c^2 \Big(\langle \mathsf{h}_{\alpha,k}^{(m=0)} g, (\mathbbm{1}\otimes \sigma_3)  g \rangle_{L^2(\mathbb{R}^+;\mathbb{C}) \otimes \mathbb{C}^2} +  \langle (\mathbbm{1}\otimes \sigma_3) g, \mathsf{h}_{\alpha,k}^{(m=0)} g \rangle_{L^2(\mathbb{R}^+;\mathbb{C})\otimes \mathbb{C}^2}\Big) \\
			&+ m^2 c^4 \Vert g \Vert^2_{L^2(\mathbb{R}^+;\mathbb{C})\otimes\mathbb{C}^2}\,.
		\end{split}
	\]
 As a consequence of the standard anti-commutation relations
 \[
  \sigma_j\sigma_k+\sigma_k\sigma_j\,=\,2\,\delta_{j,k}\begin{pmatrix} 1 & 0 \\ 0 & 1 \end{pmatrix},\qquad j,k\in\{1,2,3\}\,,
 \]
 the second summand in the right-hand side above vanishes. This establishes \eqref{eq:Closurenormid}.
\end{proof}

\begin{lemma}\label{lem:Norm_Equivalence_Closure}
Let $m>0$. For any $g \in C^\infty_0(\mathbb{R}^+;\mathbb{C}^2)$ and $|\alpha+k| \neq \frac{1}{2}$,
\begin{equation}
	\Vert g \Vert_{\mathsf{h}_{\alpha,k}} \; \sim \Vert g \Vert_{H^1(\mathbb{R}^+;\mathbb{C}^2)}
\end{equation}
in the sense of equivalence of norms.
\end{lemma}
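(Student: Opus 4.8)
The plan is to use Lemma~\ref{lem:NormInequality} to discard the mass term and reduce the claim to a norm equivalence for the massless operator $\mathsf{h}_{\alpha,k}^{(m=0)}$, which I then convert into the $H^1$-norm modulo an inverse-square potential that the Hardy inequality controls. Writing $g=(g_1,g_2)^\top\in C^\infty_0(\mathbb{R}^+;\mathbb{C}^2)$ and $\nu:=\alpha+k$, the representation~\eqref{eq:h_rad_Pauli_Matrices} gives the two components of $\mathsf{h}_{\alpha,k}^{(m=0)}g$ as $\ii c(-g_2'+\frac{\nu}{r}g_2)$ and $-\ii c(g_1'+\frac{\nu}{r}g_1)$. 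Expanding the squared $L^2$-norms, using $\re(\overline{g_i'}g_i)=\frac12(|g_i|^2)'$, and integrating the cross-terms by parts (the boundary contributions vanish because $g$ is compactly supported in the open half-line), I would obtain the identity
\[
 \|\mathsf{h}_{\alpha,k}^{(m=0)}g\|^2 = c^2\|g'\|^2 + c^2\,\nu(\nu+1)\!\int_0^\infty\!\frac{|g_1|^2}{r^2}\,\ud r + c^2\,\nu(\nu-1)\!\int_0^\infty\!\frac{|g_2|^2}{r^2}\,\ud r\,,
\]
all norms being in $L^2(\mathbb{R}^+)$.

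Next I would invoke the one-dimensional Hardy inequality $\int_0^\infty\frac{|f|^2}{r^2}\,\ud r\leqslant 4\int_0^\infty|f'|^2\,\ud r$, valid for $f\in C^\infty_0(\mathbb{R}^+)$. The upper bound $\|g\|_{\mathsf{h}_{\alpha,k}}\lesssim\|g\|_{H^1}$ is then immediate: Hardy bounds both inverse-square integrals by $4\|g'\|^2$, so $\|\mathsf{h}_{\alpha,k}^{(m=0)}g\|^2\lesssim c^2\|g'\|^2$, and by~\eqref{eq:Closurenormid} the only extra contribution is the harmless $m^2c^4\|g\|^2$. For the reverse bound I would treat the coefficients $\beta_1:=\nu(\nu+1)$ and $\beta_2:=\nu(\nu-1)$ componentwise: where $\beta_i\geqslant 0$ the corresponding term only helps, while where $\beta_i<0$ I would use Hardy in the form $\beta_i\int_0^\infty\frac{|g_i|^2}{r^2}\,\ud r\geqslant 4\beta_i\|g_i'\|^2$ (the inequality flipping since $\beta_i<0$), so that each bracket $\|g_i'\|^2+\beta_i\int_0^\infty\frac{|g_i|^2}{r^2}\,\ud r$ dominates $(1+4\min(\beta_i,0))\|g_i'\|^2$.

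The crux---and the only point where the hypothesis $|\alpha+k|\neq\frac12$ is used---is the elementary identity $\beta_1=(\nu+\frac12)^2-\frac14$ and $\beta_2=(\nu-\frac12)^2-\frac14$, which shows $\beta_i>-\frac14$, hence $1+4\min(\beta_i,0)>0$, precisely when $\nu\neq\pm\frac12$. This is exactly subcriticality with respect to the sharp Hardy constant: at $|\nu|=\frac12$ one coefficient would attain the critical value $-\frac14$, the factor $1+4\min(\beta_i,0)$ would vanish, and the $H^1$-control would break down. With the hypothesis in force I set $\delta:=\min_i(1+4\min(\beta_i,0))>0$ and conclude $\|\mathsf{h}_{\alpha,k}^{(m=0)}g\|^2\geqslant\delta\,c^2\|g'\|^2$; combining this with the trivial estimate $\|g\|^2\leqslant\|g\|_{\mathsf{h}_{\alpha,k}}^2$ and with Lemma~\ref{lem:NormInequality} (which gives $\|\mathsf{h}_{\alpha,k}^{(m=0)}g\|\leqslant\|\mathsf{h}_{\alpha,k}g\|$) yields $\|g\|_{H^1}\lesssim\|g\|_{\mathsf{h}_{\alpha,k}}$, completing the equivalence. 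I expect the main obstacle to be not any single hard estimate but the careful bookkeeping that isolates this borderline Hardy coupling; a secondary point to verify is the vanishing of the boundary terms in the integration by parts, which is clear here since $g\in C^\infty_0(\mathbb{R}^+)$.
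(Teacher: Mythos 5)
Your proof is correct and takes essentially the same route as the paper's: the identity you derive for $\Vert \mathsf{h}_{\alpha,k}^{(m=0)}g\Vert^2$ is exactly the one computed there (written in the paper via Pauli-matrix algebra as a cross term $c^2(\alpha+k)\langle (r^{-1}\otimes\sigma_3)g,(r^{-1}\otimes\mathbbm{1})g\rangle$ rather than componentwise), and both arguments rest on Lemma \ref{lem:NormInequality} plus Hardy's inequality, with $|\alpha+k|\neq\frac{1}{2}$ entering precisely through subcriticality with respect to the sharp Hardy constant. The only variation is in the lower bound, where the paper discards the sign structure by estimating the cross term with Cauchy--Schwarz and then splits the cases $|\alpha+k|\geqslant 1$ or $\alpha+k=0$ versus $0<|\alpha+k|<1$ (arriving at the coefficient $(1-2|\alpha+k|)^2$), whereas your componentwise completion of the square $\nu(\nu\pm 1)=(\nu\pm\frac{1}{2})^2-\frac{1}{4}$ reaches the same conclusion uniformly, without the case distinction.
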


\begin{proof}
    Owing to \eqref{eq:thegraphnormofhalphak} and \eqref{eq:ClosureInequality}, $\Vert \cdot \Vert_{\mathsf{h}_{\alpha,k}}\sim\Vert \mathsf{h}_{\alpha,k} \cdot \Vert_{L^2(\mathbb{R}^+;\mathbb{C}^2)}$ (equivalence of norms). Then it suffices to prove that $\Vert \mathsf{h}_{\alpha,k} \cdot \Vert_{L^2(\mathbb{R}^+;\mathbb{C}^2)}\sim\|\cdot\|_{H^1(\mathbb{R}^+;\mathbb{C}^2)}$.

    Using \eqref{eq:Closurenormid} one finds
	\[
		\begin{split}
			\Vert &\mathsf{h}_{\alpha,k} g \Vert_{L^2(\mathbb{R}^+;\mathbb{C}^2)}^2 \,=\, \Vert \mathsf{h}_{\alpha,k}^{(m=0)} g \Vert_{L^2(\mathbb{R}^+;\mathbb{C}^2)}^2 + m^2 c^4 \Vert g \Vert_{L^2(\mathbb{R}^+;\mathbb{C}^2)}^2 \\
			&=\,c^2\Vert g' \Vert^2_{L^2(\mathbb{R}^+;\mathbb{C}^2)}+c^2 \left\Vert \frac{\alpha+k}{r} g \right\Vert^2_{L^2(\mathbb{R}^+;\mathbb{C}^2)}+m^2 c^4\Vert g \Vert_{L^2(\mathbb{R}^+;\mathbb{C}^2)}^2 \\
			&\qquad -c^2 (\alpha+k) \Big( \Big\langle (r^{-1}\otimes \sigma_2) g,\big(- \ii \frac{\ud}{\ud r} \otimes \sigma_1\big) g \Big\rangle_{L^2(\mathbb{R}^+;\mathbb{C}) \otimes \mathbb{C}^2} \\
			& \qquad\qquad\qquad \qquad\qquad+ \Big\langle \big(- \ii \frac{\ud}{\ud r} \otimes \sigma_1\big) g,\big( r^{-1}\otimes \sigma_2 \big) g \Big\rangle_{L^2(\mathbb{R}^+;\mathbb{C}) \otimes \mathbb{C}^2} \Big)\, .
		\end{split}
	\]
	Concerning the last term in the right-hand side above, the self-adjointness of the Pauli matrices, the identity $\sigma_2\sigma_1=-\ii\sigma_3$, and integration by parts for compactly-supported-in-$\mathbb{R}^+$ $g$'s, yield
	\[
		\begin{split}
		 \Big\langle &\big( r^{-1}\otimes \sigma_2\big) g,\Big(- \ii \frac{\ud}{\ud r} \otimes \sigma_1\Big) g \Big\rangle_{L^2(\mathbb{R}^+;\mathbb{C}) \otimes \mathbb{C}^2}+\Big\langle \Big(- \ii \frac{\ud}{\ud r} \otimes \sigma_1\Big) g,\big( r^{-1}\otimes \sigma_2\big) g \Big\rangle_{L^2(\mathbb{R}^+;\mathbb{C}) \otimes \mathbb{C}^2} \\
		 &=\,\Big\langle \big(r^{-1}\otimes \mathbbm{1}\big) g,\Big(- \ii \frac{\ud}{\ud r} \otimes \sigma_2 \sigma_1\Big)  g \Big\rangle_{L^2(\mathbb{R}^+;\mathbb{C}) \otimes \mathbb{C}^2}+\Big\langle \Big(- \ii \frac{\ud}{\ud r} \otimes \sigma_2 \sigma_1\Big) g,\big(r^{-1}\otimes \mathbbm{1}\big) g \Big\rangle_{L^2(\mathbb{R}^+;\mathbb{C}) \otimes \mathbb{C}^2} \\
		 &=\,\Big\langle \big(r^{-1}\otimes \ii\sigma_3\big) g, \Big(-\ii\frac{\ud}{\ud r} \otimes \mathbbm{1}\Big) g \Big\rangle_{L^2(\mathbb{R}^+;\mathbb{C}) \otimes \mathbb{C}^2}+\Big\langle \Big(-\ii\frac{\ud}{\ud r} \otimes \mathbbm{1}\Big) g,\big(r^{-1}\otimes \ii\sigma_3\big) g \Big\rangle_{L^2(\mathbb{R}^+;\mathbb{C}) \otimes \mathbb{C}^2} \\
		 &=\, \Big\langle -\ii\frac{\ud}{\ud r} \big(r^{-1}\otimes \ii\sigma_3\big) g,g\Big\rangle_{L^2(\mathbb{R}^+;\mathbb{C}) \otimes \mathbb{C}^2}-\Big\langle \big(r^{-1}\otimes \ii\sigma_3\big) \Big(-\ii\frac{\ud}{\ud r} \otimes \mathbbm{1}\Big) g,g\Big\rangle_{L^2(\mathbb{R}^+;\mathbb{C}) \otimes \mathbb{C}^2}\\
		 &=\,-\big\langle (r^{-2}\otimes\sigma_3)g,g\big\rangle_{L^2(\mathbb{R}^+;\mathbb{C})\otimes \mathbb{C}^2} \,,
		\end{split}
	\]
	whence
	\[
		\begin{split}
			\Vert \mathsf{h}_{\alpha,k} g \Vert_{L^2(\mathbb{R}^+;\mathbb{C}^2)}^2 \,&=\,c^2\Vert g' \Vert^2_{L^2(\mathbb{R}^+;\mathbb{C}^2)}+c^2 \left\Vert \frac{\alpha+k}{r} g \right\Vert^2_{L^2(\mathbb{R}^+;\mathbb{C}^2)}+m^2 c^4\Vert g \Vert_{L^2(\mathbb{R}^+;\mathbb{C}^2)}^2 \\
				&\qquad + c^2 (\alpha+k) \left\langle (r^{-1} \otimes \sigma_3) g, (r^{-1} \otimes \mathbbm{1}) g \right\rangle_{L^2(\mathbb{R}^+;\mathbb{C}^2)} \, .
		\end{split}
	\]
	Then, using Hardy's inequality $\Vert r^{-1} g \Vert_{L^2(\mathbb{R}^+;\mathbb{C})} \leqslant 2 \Vert g' \Vert_{L^2(\mathbb{R}^+;\mathbb{C})}$ and the Cauchy-Schwarz inequality,
		\[
		\begin{split}
		 		\Vert \mathsf{h}_{\alpha,k} g \Vert_{L^2(\mathbb{R}^+;\mathbb{C}^2)}^2 \,&\leqslant\, c^2 (4(\alpha+k)^2+4|\alpha+k|+1) \Vert g' \Vert_{L^2(\mathbb{R}^+;\mathbb{C}^2)}^2 + m^2 c^4 \Vert g \Vert^2_{L^2(\mathbb{R}^+;\mathbb{C}^2)} \\
		 		&\lesssim \|g\|^2_{H^1(\mathbb{R}^+;\mathbb{C}^2)}\,.
		\end{split}
	\]

 To establish the converse bound, the Cauchy-Schwarz inequality yields
 \[
  \begin{split}
   \big| (\alpha+k) &\big\langle (r^{-1} \otimes \sigma_3) g, (r^{-1} \otimes \mathbbm{1}) g \big\rangle_{L^2(\mathbb{R}^+;\mathbb{C}^2)}\big| \,\leqslant\,|\alpha+k| \Vert r^{-1} g \Vert^2_{L^2(\mathbb{R}^+;\mathbb{C}^2)}\,,
  \end{split}
 \]
	whence
 \[
  \begin{split}
   \Vert \mathsf{h}_{\alpha,k} &g  \Vert_{L^2(\mathbb{R}^+;\mathbb{C}^2)}^2 \\
   &\geqslant\,c^2 \Vert g' \Vert^2_{L^2(\mathbb{R}^+;\mathbb{C}^2)} + c^2|\alpha+k|(|\alpha+k|-1) \Vert r^{-1} g \Vert^2_{L^2(\mathbb{R}^+;\mathbb{C}^2)} + m^2 c^4 \Vert g \Vert^2_{L^2(\mathbb{R}^+;\mathbb{C}^2)}\,.
  \end{split}
 \]
  One then checks two cases separately. If $|\alpha+k| \geqslant 1$ or $\alpha+k=0$, then the second summand in the right-hand side above is non-negative, and therefore
  \[
   \Vert \mathsf{h}_{\alpha,k} g  \Vert_{L^2(\mathbb{R}^+;\mathbb{C}^2)}^2 \,\geqslant\,c^2 \Vert g' \Vert^2_{L^2(\mathbb{R}^+;\mathbb{C}^2)} +m^2 c^4 \Vert g \Vert^2_{L^2(\mathbb{R}^+;\mathbb{C}^2)}\,\sim\,\|g\|^2_{H^1(\mathbb{R}^+;\mathbb{C}^2)}\,.
  \]
  If instead $0<|\alpha+k| < 1$ and $|\alpha+k| \neq \frac{1}{2}$, then by Hardy's inequality
	\[
		\Vert \mathsf{h}_{\alpha,k} g \Vert_{L^2(\mathbb{R}^+;\mathbb{C}^2)}^2 \,\geqslant\, c^2 (1-2 |\alpha+k|)^2 \Vert g' \Vert^2_{L^2(\mathbb{R}^+;\mathbb{C}^2)} + m^2 c^4 \Vert g \Vert^2_{L^2(\mathbb{R}^+;\mathbb{C}^2)} \,\sim\,\|g\|^2_{H^1(\mathbb{R}^+;\mathbb{C}^2)}\,.
	\]
  In either case,
  \[
   \Vert \mathsf{h}_{\alpha,k} g \Vert_{L^2(\mathbb{R}^+;\mathbb{C}^2)}^2 \,\gtrsim\,\|g\|^2_{H^1(\mathbb{R}^+;\mathbb{C}^2)}\,.
  \]
 The norm equivalence $\Vert \cdot \Vert_{\mathsf{h}_{\alpha,k}}\sim\Vert \mathsf{h}_{\alpha,k} \cdot \Vert_{L^2(\mathbb{R}^+;\mathbb{C}^2)}\sim\|\cdot\|_{H^1(\mathbb{R}^+;\mathbb{C}^2)}$ is thus established.
\end{proof}

\begin{lemma}\label{lem:DefIndexClosure}
The operator $\mathsf{h}_{\alpha,k}$ is
\begin{itemize}
 \item[(i)] in the limit point case at $r=+\infty$ for any $k \in \mathbb{Z}$ and $\alpha \in (-\frac{1}{2},\frac{1}{2})$;
 \item[(ii)] in the limit point case at $r=0$ if $|k+\alpha| \geqslant \frac{1}{2}$,
 \item[(iii)] in the limit-circle case at $r=0$ if $|k+\alpha| < \frac{1}{2}$.
\end{itemize}
\end{lemma}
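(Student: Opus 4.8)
The plan is to establish the limit-point/limit-circle dichotomy at each endpoint through the classical criterion of counting, up to scalar multiples, the solutions of the eigenvalue equation $\mathsf{h}_{\alpha,k}u=\lambda u$ that are square-integrable near that endpoint; by Weyl's alternative for first-order (Dirac-type) systems this count is independent of $\lambda$, so I am free to fix any convenient value, say $\lambda\in(-mc^2,mc^2)$. First I would recast the eigenvalue equation as a first-order linear system. Starting from the representation \eqref{eq:h_rad_Pauli_Matrices} and solving for $u'$, using $\sigma_1^{-1}=\sigma_1$, $\frac{1}{-\ii}=\ii$ and $\sigma_1\sigma_2=\ii\sigma_3$, one obtains $u'=B_\lambda\,u-\frac{\alpha+k}{r}\,\sigma_3\,u$, where $B_\lambda:=\frac{\ii}{c}\sigma_1(\lambda-mc^2\sigma_3)$ is a constant matrix and the only singular coefficient is the diagonal term $-\frac{\alpha+k}{r}\sigma_3$.

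At $r=+\infty$ the singular term vanishes and the system is asymptotically constant-coefficient. Since $\sigma_1,\sigma_2$ are traceless and anticommute, a direct computation gives $B_\lambda^2=\frac{m^2c^4-\lambda^2}{c^2}\,\mathbbm{1}$, so the eigenvalues of $B_\lambda$ are $\pm\beta$ with $\beta:=c^{-1}\sqrt{m^2c^4-\lambda^2}>0$ for $\lambda$ in the gap (and with opposite-sign, nonzero real parts for non-real $\lambda$). A Levinson-type asymptotic theorem for perturbed constant-coefficient systems then shows that, up to scalars, exactly one solution decays like $e^{-\beta r}$ and hence lies in $L^2$ near infinity, while the other grows like $e^{+\beta r}$. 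This is the limit-point case, proving (i).

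At $r=0$ the equation has a regular singular point whose singular part $-\frac{\alpha+k}{r}\sigma_3$ is already diagonal, with indicial exponents $-(\alpha+k)$ and $\alpha+k$. A Frobenius analysis, treating the constant matrix $B_\lambda$ as a regular perturbation, produces a fundamental system with leading behaviour $u\sim r^{-(\alpha+k)}$ and $u\sim r^{\alpha+k}$ as $r\downarrow0$. Recalling that $r^{p}\in L^2(\mathbb{R}^+,\ud r)$ near the origin precisely when $p>-\frac12$, I would then read off the classification: if $|\alpha+k|<\frac12$ both exponents exceed $-\frac12$, so both solutions are square-integrable near $0$, which is the limit-circle case (iii); if instead $|\alpha+k|\geqslant\frac12$ exactly one of $r^{\mp(\alpha+k)}$ is square-integrable, which is the limit-point case (ii).

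The main obstacle is making the $r\downarrow0$ asymptotics rigorous rather than merely formal: one must verify that the regular perturbation $B_\lambda$ does not upset the leading powers dictated by the indicial equation, and in particular handle the resonant situation $2(\alpha+k)\in\mathbb{Z}$ — which occurs at the borderline $|\alpha+k|=\frac12$ — where the Frobenius construction can introduce a logarithmic factor in the solution with the larger exponent. Even so, such a $\log r$ correction multiplies a power $r^{\pm(\alpha+k)}$ and therefore does not alter the $L^2$-classification, since $r^{-1/2}$ times any power of $\log r$ still fails to be square-integrable while $r^{1/2}$ times any power of $\log r$ remains square-integrable. Consequently the borderline case $|\alpha+k|=\frac12$ still falls under the limit-point alternative, consistently with (ii), and standard results on regular singular points of linear systems supply the control needed to close the argument.
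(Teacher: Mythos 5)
Your proposal is correct, but it takes a genuinely different route from the paper. The paper's proof is essentially a two-line citation: limit point at $r=+\infty$ is the Corollary to Theorem 6.8 in Weidmann's book -- a result valid for \emph{every} one-dimensional Dirac system, with no decay assumption on the potential -- while the classification at $r=0$ is Theorem 6.9 there, combined with the $\lambda$-invariance in Weyl's alternative (Theorem 5.6), exactly the invariance you invoke to fix $\lambda\in(-mc^2,mc^2)$. You instead rebuild these facts from scratch by asymptotic integration: your algebra is sound (rewriting the eigenvalue equation as $u'=B_\lambda u-\frac{\alpha+k}{r}\sigma_3 u$ with $B_\lambda=\frac{\ii}{c}\sigma_1(\lambda-mc^2\sigma_3)$ and $B_\lambda^2=\frac{m^2c^4-\lambda^2}{c^2}\mathbbm{1}$ both check out), and the Frobenius exponents $\mp(\alpha+k)$ at the origin are precisely the boundary behaviours $r^{-\alpha}$, $r^{\alpha}$ that the paper later extracts for $k=0$ via explicit Bessel-function solutions (Lemma \ref{lem:asymptoticsPhiF}, Proposition \ref{prop:KernelAdjoint}); so your route is more self-contained and anticipates information the paper derives separately, at the price of being longer and, at infinity, less general than Weidmann's citation. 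Three small points would tighten your argument. First, at $r=+\infty$ the perturbation $-\frac{\alpha+k}{r}\sigma_3$ is $O(1/r)$, hence \emph{not} $L^1$, so the plain Levinson theorem does not apply verbatim; you need the standard variant under the Levinson dichotomy condition (the perturbation has $L^1$ derivative, and the eigenvalues of $B_\lambda-\frac{\alpha+k}{r}\sigma_3$ are $\pm\sqrt{\beta^2+(\alpha+k)^2r^{-2}}=\pm\beta+O(r^{-2})$, so the integrated phases are $\pm\beta r+O(1)$ and the $e^{\pm\beta r}$ asymptotics survive). Second, the resonant borderline $|\alpha+k|=\frac12$ you worry about is in fact vacuous: since $\alpha\in(-\frac12,\frac12)$ is an open interval and $k\in\mathbb{Z}$, one never has $\alpha+k=\pm\frac12$; the only genuine Frobenius resonance is $\alpha=0$, $k\neq 0$, with exponents $\mp k$ differing by $2k$, where your observation that logarithmic factors cannot change the $L^2$ count settles matters since $|k|\geqslant 1$. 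Third, conventionally the possible $\log r$ attaches to the solution with the \emph{smaller} indicial exponent, not the larger -- immaterial to your conclusion, but worth stating correctly.
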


\begin{proof}
These are classical facts concerning the square-integrability at the considered edge (zero or infinity) of \emph{all} the solutions to
\begin{equation*}
	  \begin{pmatrix}
	mc^2 & \ii c \left( -\frac{\ud}{\ud r}+\frac{\alpha+k}{r}\right) \\
	-\ii c \left(\frac{\ud}{\ud r} + \frac{\alpha+k}{r} \right) & -mc^2
	\end{pmatrix} \psi\,=\,\pm\ii\,\psi
\end{equation*}
(limit-circle case), or lack of thereof for a unique solution, up to multiplicative constant (limit-point case). Parts (i) and (ii) are discussed, e.g., in \cite[Corollary to Theorem 6.8]{Weidmann-book1987} and \cite[Theorem 6.9]{Weidmann-book1987}, respectively. Then part (iii) follows from \cite[Theorem 6.9]{Weidmann-book1987} and Weyl's alternative \cite[Theorem 5.6]{Weidmann-book1987}.
\end{proof}

\begin{proof}[Proof of Proposition \ref{prop:Closure}]
One has
\[
 \begin{split}
  \mathcal{D}(\overline{\mathsf{h}_{\alpha,k}})\,&=\,\overline{\mathcal{D}(\mathsf{h}_{\alpha,k})\,}^{\|\cdot\|_{\mathsf{h}_{\alpha,k}}}\,=\,
\overline{C^\infty_0(\mathbb{R}^+;\mathbb{C}^2)\,}^{\|\cdot\|_{\mathsf{h}_{\alpha,k}}}\,=
\,\overline{C^\infty_0(\mathbb{R}^+;\mathbb{C}^2)\,}
^{\|\cdot\|_{H^1(\mathbb{R}^+;\mathbb{C}^2)}} \\
&=\,H^1_0(\mathbb{R}^+;\mathbb{C}^2)\,,
 \end{split}
\]
the third equality following from Lemma \ref{lem:Norm_Equivalence_Closure} under the constraint $|\alpha+k|\neq\frac{1}{2}$.

The asymptotics \eqref{eq:AsymptH1AtZero} then follows from
\[
 \begin{split}
  \|\psi(r)\|_{\mathbb{C}^2}\,&=\,\Big\| \int_{0}^r\psi'(\rho)\,\ud \rho\,\Big\|_{\mathbb{C}^2}\,\leqslant\, \int_{0}^r\|\psi'(\rho)\|_{\mathbb{C}^2}\,\ud \rho\,\leqslant\,r^{\frac{1}{2}}\,\|\psi'\|_{L^2((0,r);\mathbb{C}^2)}\,=\,r^{\frac{1}{2}} o(1)\,=\,o(r^{\frac{1}{2}})\,.
 \end{split}
\] 

Concerning part (ii), \eqref{eq:ClosureInequality} from Lemma \ref{lem:NormInequality} implies that for $\lambda \in(-m c^2 ,mc^2)$ the operator $\mathsf{h}_{\alpha,0}-\lambda$ is invertible on its range and with bounded inverse, with bound
\[
	\Vert (\mathsf{h}_{\alpha,0}-\lambda \mathbbm{1})^{-1} \Vert_{L^2(\mathbb{R}^+;\mathbb{C}^2)\to L^2(\mathbb{R}^+;\mathbb{C}^2)} \,\leqslant\, \frac{1}{\,m c^2-|\lambda|\,} \, .
\]
 Since $\overline{(\mathsf{h}_{\alpha,0}-\lambda \mathbbm{1})^{-1}}=(\overline{\mathsf{h}_{\alpha,0}}-\lambda)^{-1}$ \cite[Theorem 1.8(v)]{schmu_unbdd_sa}, then also
\[
	\Vert (\overline{\mathsf{h}_{\alpha,0}}-\lambda \mathbbm{1})^{-1} \Vert_{L^2(\mathbb{R}^+;\mathbb{C}^2)\to L^2(\mathbb{R}^+;\mathbb{C}^2)} \,\leqslant\, \frac{1}{\,m c^2-|\lambda|\,} \, ,
\]
 which proves that $(-mc^2,mc^2) \subset \rho(\overline{\mathsf{h}_{\alpha,k}})$.

 Last, part (iii) follows from Lemma \ref{lem:DefIndexClosure} in view of the standard relations between deficiency indices and limit-point/limit-circle cases (see, e.g. \cite[Theorem 5.7]{Weidmann-book1987}).
\end{proof}

\subsection{Characterisation of $\ker (\mathsf{h}_{\alpha,0}^*-mc^2+\lambda)$}\label{sec:KerHStar}~

 The self-adjoint extension problem of $\mathsf{h}_{\alpha,k}$ is non-trivial only for $k=0$. This follows at once from Proposition \ref{prop:Closure}(iii) or, alternatively, can be deduced from the fact that $(-mc^2,mc^2)$ is contained in the resolvent set of $\overline{\mathsf{h}_{\alpha,0}}$ (Proposition \ref{prop:Closure}(ii)), because then for any $z \in (-mc^2,mc^2)$ there exists a self-adjoint extension of $\mathsf{h}_{\alpha,0}$ with $z$ in its resolvent set \cite[Theorem 2]{Calkin-1940}.

 So, $\mathsf{h}_{\alpha,0}-z$ admits for $z \in (-mc^2,mc^2)$ non-trivial self-adjoint extensions with everywhere defined and bounded inverse. As a consequence, for any $\lambda \in (0,2mc^2)$ the operator $\mathsf{h}_{\alpha,0}-mc^2+\lambda$ admits non-trivial self-adjoint extensions, and precisely $\infty^1$ extensions, since the deficiency indices of $\mathsf{h}_{\alpha,0}-mc^2+\lambda$ and $\mathsf{h}_{\alpha,0}$ are the same and are equal to $(1,1)$ (Proposition \ref{prop:Closure}(iii)).

 This is also equivalent to the fact that the deficiency subspace $\ker (\mathsf{h}_{\alpha,0}^*-mc^2+\lambda)$ is \emph{one-dimensional}. Here we characterise such subspace in the following Proposition.

\begin{proposition}\label{prop:KernelAdjoint} Let $\lambda \in (0,2mc^2)$. Then
\begin{equation}
\ker (\mathsf{h}_{\alpha,0}^*-mc^2+\lambda) \,=\, \mathrm{span}_{\mathbb{C}} \{\Phi_{\alpha,\lambda,c}^{(D)}\} \, .
\end{equation}
\end{proposition}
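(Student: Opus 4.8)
The plan is to use the general fact, recalled in the Notation subsection, that $\mathsf{h}_{\alpha,0}^*$ is the \emph{maximal} realisation of the differential expression \eqref{eq:Hsupersym} with $k=0$, acting with the same differential action on the largest domain keeping the image in $L^2$. Consequently $g\in\ker(\mathsf{h}_{\alpha,0}^*-mc^2+\lambda)$ if and only if $g\in L^2(\mathbb{R}^+;\mathbb{C}^2,\ud r)$ and $g$ solves, distributionally on $(0,+\infty)$, the first-order system $(\mathsf{h}_{\alpha,0}-mc^2+\lambda)g=0$. The coefficients of $g_1',g_2'$ being $\pm\ii c\neq 0$, this system is in normal form with smooth coefficients away from the origin, so its solutions are classical and span a two-dimensional space; the task is therefore to single out the $L^2(\mathbb{R}^+)$ solutions.

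First I would fix the dimension of the kernel by a limit-point/limit-circle argument, already available from Lemma \ref{lem:DefIndexClosure}. Since $\lambda\in(0,2mc^2)$, the point $mc^2-\lambda$ lies in the gap $(-mc^2,mc^2)$, and square-integrability of solutions at each endpoint is governed by the same endpoint classification as in Lemma \ref{lem:DefIndexClosure}: limit-point at $r=+\infty$, whence a unique solution up to scalar multiples is $L^2$ near $+\infty$; and, for $k=0$ (i.e.\ $|\alpha|<\tfrac12$), limit-circle at $r=0$, whence \emph{every} solution is $L^2$ near $0$. Intersecting the two conditions, the space of globally $L^2$ solutions is exactly one-dimensional, consistently with the deficiency index $(1,1)$ of Proposition \ref{prop:Closure}(iii). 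It therefore suffices to exhibit one nonzero element of the kernel.

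That element is $\Phi^{(D)}_{\alpha,\lambda,c}$ of \eqref{eq:PhiD}, which I would produce as follows. Applying $(\mathsf{h}_{\alpha,0}+mc^2-\lambda)$ to the system decouples it: by the square computation of Remark \ref{rem:squareDAB} each component solves a modified Bessel equation, of index $\alpha+\tfrac12$ in the upper entry and $|\alpha-\tfrac12|=\tfrac12-\alpha$ in the lower one, with argument $\mu_{c,\lambda}\sqrt{\lambda}\,r$ and $\mu_{c,\lambda}$ as in \eqref{eq:MuLambda}; this motivates the Macdonald-function ansatz in \eqref{eq:PhiD}. Then I would verify directly that $\Phi^{(D)}_{\alpha,\lambda,c}$ solves the \emph{first}-order system, substituting \eqref{eq:PhiD} into the two scalar equations and collapsing the resulting expressions by means of the standard recurrences $K_\nu'(z)=-K_{\nu-1}(z)-\tfrac{\nu}{z}K_\nu(z)$ and $K_\nu'(z)=-K_{\nu+1}(z)+\tfrac{\nu}{z}K_\nu(z)$; the prefactor $-\tfrac{\ii c}{\sqrt\lambda}\mu_{c,\lambda}$ and the identity $c^2\mu_{c,\lambda}^2=2mc^2-\lambda$ are precisely what make the two equations close onto each other.

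Finally I would check integrability. As $r\to+\infty$ the Macdonald functions decay like $e^{-\mu_{c,\lambda}\sqrt{\lambda}\,r}$, so $\Phi^{(D)}_{\alpha,\lambda,c}\in L^2$ near infinity; as $r\downarrow 0$ its upper and lower components behave like $r^{-\alpha}$ and $r^{\alpha}$ respectively, which are square-integrable near $0$ exactly because $\alpha\in(-\tfrac12,\tfrac12)$ (and which moreover match the asymptotics \eqref{eq:AsymptoticAdjoint}). Thus $\Phi^{(D)}_{\alpha,\lambda,c}$ is a nonzero element of the one-dimensional kernel, giving $\ker(\mathsf{h}_{\alpha,0}^*-mc^2+\lambda)=\mathrm{span}_{\mathbb{C}}\{\Phi^{(D)}_{\alpha,\lambda,c}\}$. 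The only genuinely laborious step is the recurrence-based verification that the decaying Bessel ansatz solves the coupled first-order system; the dimension count and the integrability are comparatively immediate once Lemma \ref{lem:DefIndexClosure} is in hand.
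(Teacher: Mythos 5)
Your proof is correct, and it diverges from the paper's own proof in how it handles the uniqueness (dimension-one) half of the claim. The paper proceeds fully explicitly: it substitutes $u_+$ from the first scalar equation into the second, obtaining a second-order equation for $u_-$ which, after the rescaling $\xi=\mu_{c,\lambda}\sqrt{\lambda}\,r$, becomes a modified Bessel equation of index $\alpha-\frac{1}{2}$; it then reconstructs $u_+$ through the recurrences \cite[Eq.~(9.6.26)]{Abramowitz-Stegun-1964} along \emph{both} branches, so that the entire two-dimensional solution space is exhibited as $\mathrm{span}_{\mathbb{C}}\{\Phi^{(D)}_{\alpha,\lambda,c},F^{(D)}_{\alpha,\lambda,c}\}$, and finally eliminates $F^{(D)}_{\alpha,\lambda,c}$ by its exponential growth at infinity (Lemma \ref{lem:asymptoticsPhiF}). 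You bypass the construction of $F^{(D)}_{\alpha,\lambda,c}$ altogether: you exhibit only $\Phi^{(D)}_{\alpha,\lambda,c}$ --- your recurrence-based verification is essentially the computation the paper performs for the $K$-branch, and the sign bookkeeping does close, using $c^2\mu_{c,\lambda}^2=2mc^2-\lambda$ --- and you replace the explicit second solution by the Weyl endpoint classification of Lemma \ref{lem:DefIndexClosure} (limit circle at $0$, limit point at $+\infty$), which caps the space of globally square-integrable solutions at dimension one. This is a legitimate and slightly more economical route, and it generalises more readily since it never needs the second solution in closed form; the paper's route, in exchange, produces $F^{(D)}_{\alpha,\lambda,c}$ explicitly, which it needs anyway later for the Green's kernel \eqref{eq:GreenF} and Lemma \ref{lem:invertsDiffProbl}, so nothing is wasted there. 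One caveat in your dimension count: at a limit-point endpoint with a \emph{real} spectral parameter such as $mc^2-\lambda$, the classification guarantees only that \emph{at most} one solution (up to multiples) is square-integrable near $+\infty$; existence can fail for real $z$ (think of $z$ in the essential spectrum), so your assertion of a ``unique'' $L^2$ solution near $+\infty$ overstates what limit point alone delivers. The overstatement is harmless here, because the explicitly verified $\Phi^{(D)}_{\alpha,\lambda,c}$ supplies existence, and ``at most one'' then forces the kernel to be exactly its span; note also that the $z$-independence of the limit-point/limit-circle classification, which you implicitly use to transfer Lemma \ref{lem:DefIndexClosure} from spectral parameter $\pm\ii$ to the real point $mc^2-\lambda$, is standard, cf.\ \cite[Theorem 5.6]{Weidmann-book1987}.
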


 Proposition \ref{prop:KernelAdjoint} follows from standard ODE analysis and the following asymptotics for the special functions $ \Phi_{\alpha,\lambda,c}^{(D)}$ and $ F_{\alpha,\lambda,c}^{(D)}$ introduced in \eqref{eq:PhiD}-\eqref{eq:FD}.

 \begin{lemma}\label{lem:asymptoticsPhiF}
  Let $\lambda \in (0,2mc^2)$. Then $\Phi_{\alpha,\lambda,c}^{(D)}$ and $F_{\alpha,\lambda,c}^{(D)}$ are smooth on $\mathbb{R}^+$ with asymptotic expansions
  \begin{align}
   \Phi_{\alpha,\lambda,c}^{(D)}(r)\,& \stackrel{(r\to +\infty)}{=} \, \begin{pmatrix} -\ii \\ 1 \end{pmatrix} e^{-r\,\mu_{c,\lambda}\sqrt{\lambda}}(1+O(r^{-\frac{1}{2}}))\,, \label{eq:AsymptPhiInfinity}\\
	F_{\alpha,\lambda,c}^{(D)}(r)\,&\stackrel{(r\to +\infty)}{=}\,\begin{pmatrix}
		\ii \\ 1
	\end{pmatrix} e^{r\,\mu_{c,\lambda}\sqrt{\lambda}}(1+O(r^{-\frac{1}{2}}))\,,\label{eq:AsymptFInfinity}
  \end{align}
 and
 \begin{align}
  	\Phi_{\alpha,\lambda,c}^{(D)}(r) \,&\stackrel{(r\downarrow 0)}{=}\,  \begin{pmatrix} 1 \\ 0 \end{pmatrix} A_{\alpha,\lambda,c}\,r^{-\alpha}+ \begin{pmatrix} 0 \\ 1 \end{pmatrix} B_{\alpha,\lambda,c}\,r^{\alpha} + O(r^{1-|\alpha|}) \, , \label{eq:AsymptZeroPhi}\\
	F_{\alpha,\lambda,c}^{(D)}(r)\,&\stackrel{(r\downarrow 0)}{=}\, \begin{pmatrix} 0 \\ 1 \end{pmatrix} C_{\alpha,\lambda,c}\,r^{\alpha} + O(r^{2+\alpha}) \, , \label{eq:AsymptZeroFLong}
 \end{align}
 where
  \begin{align}
   A_{\alpha,\lambda,c}\,&=\,-\frac{ \,\ii\, 2^{\alpha-\frac12} c \,\mu_{c,\lambda}( \mu_{c,\lambda}\,\sqrt{\lambda})^{-\frac12-\alpha} \,\Gamma(\frac12+\alpha)}{\sqrt{\lambda}} \, , \label{eq:Phi0Def}\\
 	B_{\alpha,\lambda,c}\,&=\,2^{-\frac12-\alpha}( \mu_{c,\lambda}\,\sqrt{\lambda})^{\alpha-\frac12} \,\Gamma({\textstyle \frac12-\alpha}) \, , \label{eq:Phi1Def}\\
 	C_{\alpha,\lambda,c}\,&=\, \frac{\,2^{\frac12-\alpha} (\mu_{c,\lambda}\,\sqrt{\lambda} )^{-\frac12+\alpha}}{\Gamma(\frac12+\alpha)} \, .\label{eq:F0Def}
  \end{align}
 Moreover,
 \begin{equation}\label{eq:L2normPhi}
	\Vert \Phi_{\alpha,\lambda,c}^{(D)} \Vert_{L^2(\mathbb{R}^+; \mathbb{C}^2)}^2\,=\,\frac{\pi (1+2 \alpha)}{\,4 \lambda^2 \cos(\pi \alpha)\,}c^2+\frac{\pi(1-2\alpha)}{\,4 \lambda \mu_{c,\lambda}^2 \cos(\pi \alpha)\,} \, .
\end{equation}
 \end{lemma}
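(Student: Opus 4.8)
The plan is to reduce every assertion to the classical asymptotics of the modified Bessel functions $K_\nu$ and $I_\nu$ together with one standard quadratic Bessel integral. Throughout I write $\kappa:=\mu_{c,\lambda}\sqrt\lambda$ for the common rate entering the argument of the Bessel functions, noting $\kappa>0$ for $\lambda\in(0,2mc^2)$ by \eqref{eq:MuLambda}. The smoothness of $\Phi_{\alpha,\lambda,c}^{(D)}$ and $F_{\alpha,\lambda,c}^{(D)}$ on $\mathbb{R}^+$ is immediate, since $K_\nu$ and $I_\nu$ are real-analytic on $(0,\infty)$ and $r\mapsto\sqrt r$ is smooth there, so each component of \eqref{eq:PhiD}-\eqref{eq:FD} is a smooth function of $r>0$.

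For the behaviour at $r\to+\infty$, I would insert the large-argument expansions $K_\nu(z)=\sqrt{\pi/(2z)}\,e^{-z}(1+O(z^{-1}))$ and $I_\nu(z)=(2\pi z)^{-1/2}e^{z}(1+O(z^{-1}))$, whose leading terms do not depend on $\nu$. With $z=\kappa r$ the prefactor $\sqrt r$ cancels the $z^{-1/2}$, leaving a constant multiple of $e^{-\kappa r}$ in each component of $\Phi_{\alpha,\lambda,c}^{(D)}$ and of $e^{\kappa r}$ in each component of $F_{\alpha,\lambda,c}^{(D)}$, with relative error $O((\kappa r)^{-1/2})$; this is the content of \eqref{eq:AsymptPhiInfinity}-\eqref{eq:AsymptFInfinity}. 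The upshot that matters downstream is that $\Phi_{\alpha,\lambda,c}^{(D)}$ decays exponentially (hence lies in $L^2$ near infinity) while $F_{\alpha,\lambda,c}^{(D)}$ grows, which is exactly what Proposition \ref{prop:KernelAdjoint} uses to single out $\Phi_{\alpha,\lambda,c}^{(D)}$ as the generator of the kernel.

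For the behaviour at $r\downarrow0$, I would use the small-argument expansions. Since $\alpha\in(-\frac12,\frac12)$ the orders $\nu=\alpha\pm\frac12$ are never integers, so $K_\nu$ admits the two-branch expansion $K_\nu(z)=\tfrac12\Gamma(\nu)(z/2)^{-\nu}(1+O(z^2))+\tfrac12\Gamma(-\nu)(z/2)^{\nu}(1+O(z^2))$, while $I_\nu(z)=(z/2)^\nu\Gamma(\nu+1)^{-1}(1+O(z^2))$. Substituting $z=\kappa r$ and multiplying by the $\sqrt r$ and matrix prefactors, the dominant power in the upper component of $\Phi_{\alpha,\lambda,c}^{(D)}$ is $r^{-\alpha}$ (from the $(z/2)^{-\nu}$ branch of $K_{\alpha+1/2}$) and in the lower component is $r^{\alpha}$ (from $K_{\alpha-1/2}=K_{1/2-\alpha}$); reading off the coefficients and simplifying the powers of $2$, of $\kappa$, and the Gamma factors gives \eqref{eq:Phi0Def}-\eqref{eq:F0Def}. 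The only delicate point is the \emph{uniform} remainder order: because the two components carry different leading exponents $r^{\mp\alpha}$, one must keep track of the subdominant $(z/2)^{+\nu}$ branch of $K$, which supplies the governing correction and yields the $O(r^{1-|\alpha|})$ of \eqref{eq:AsymptZeroPhi} (and the analogous order for $F_{\alpha,\lambda,c}^{(D)}$).

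Finally, for the norm \eqref{eq:L2normPhi} I would compute, directly from \eqref{eq:PhiD} and the reality of $K_\nu$, the identity $\|\Phi_{\alpha,\lambda,c}^{(D)}\|_{L^2}^2=\frac{c^2\mu_{c,\lambda}^2}{\lambda}\int_0^\infty r\,K_{\alpha+1/2}(\kappa r)^2\,\ud r+\int_0^\infty r\,K_{\alpha-1/2}(\kappa r)^2\,\ud r$. The single analytic input is the standard integral $\int_0^\infty r\,K_\nu(\kappa r)^2\,\ud r=\frac{\pi\nu}{2\kappa^2\sin(\pi\nu)}$, obtainable as the diagonal limit $b\to a$ of $\int_0^\infty r\,K_\nu(ar)K_\nu(br)\,\ud r$. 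Evaluating at $\nu=\alpha\pm\frac12$, using the half-integer shift $\sin((\alpha\pm\tfrac12)\pi)=\pm\cos(\pi\alpha)$ — which is precisely where the $\cos(\pi\alpha)$ of \eqref{eq:L2normPhi} comes from — and then inserting $\kappa^2=\mu_{c,\lambda}^2\lambda$ and combining the two terms reproduces \eqref{eq:L2normPhi}. The main obstacle is therefore not conceptual but one of careful bookkeeping: pinning down the exact constants $A_{\alpha,\lambda,c},B_{\alpha,\lambda,c},C_{\alpha,\lambda,c}$ and, above all, verifying the closed form of the quadratic $K$-integral and the trigonometric simplification that produces the compact expression \eqref{eq:L2normPhi}.
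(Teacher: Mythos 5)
Your proposal is correct and takes essentially the same route as the paper, whose entire proof consists of reading off the large- and small-argument asymptotics of $K_\nu$ and $I_\nu$ from \cite[(9.6.2), (9.6.10), (9.7.1)--(9.7.2)]{Abramowitz-Stegun-1964} and leaving the rest as ``straightforward computations''. Your only additions are to make explicit the one nontrivial ingredient for \eqref{eq:L2normPhi}, namely $\int_0^\infty r\,K_\nu(\kappa r)^2\,\ud r=\frac{\pi\nu}{2\kappa^2\sin(\pi\nu)}$ (valid here since $\nu=\alpha\pm\frac12\in(-1,1)$) together with $\sin\big((\alpha\pm\tfrac12)\pi\big)=\pm\cos(\pi\alpha)$, which indeed reproduce \eqref{eq:L2normPhi} exactly.
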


 \begin{proof}
  All straightforward computations directly following from the definition of $ \Phi_{\alpha,\lambda,c}^{(D)}$ and $ F_{\alpha,\lambda,c}^{(D)}$ in terms of the Bessel functions of second kind $K_\nu$ and $I_\nu$ (see \eqref{eq:PhiD}-\eqref{eq:FD} above), and from the asymptotic expansions of the latter special functions, specifically \cite[(9.7.1)-(9.7.2)]{Abramowitz-Stegun-1964} for $r\to+\infty$ and \cite[(9.6.2) and (9.6.10)]{Abramowitz-Stegun-1964} for $r \downarrow 0$.
 \end{proof}

 \begin{proof}[Proof of Proposition \ref{prop:KernelAdjoint}]
  By general facts (see, e.g., \cite[Sect.~4.1]{Grubb-DistributionsAndOperators-2009}), $\mathsf{h}_{\alpha,0}^*$ has the same differential action as $\mathsf{h}_{\alpha,0}$ and therefore the eigenvalue problem
  \[\tag{i}\label{eq:evproblemstar}
   (\mathsf{h}_{\alpha,0}^*-mc^2+\lambda)u\,=\,0\,,\qquad u\in\mathcal{D}(\mathsf{h}_{\alpha,0}^*)\,\subset\,L^2(\mathbb{R}^+;\mathbb{C}^2,\ud r)\,,
  \]
  is solved by selecting the square-integrable solutions to the differential problem
  \begin{equation*}\tag{ii}\label{eq:DiffEqProof}
	\begin{pmatrix}
		\lambda & -\ii c \big(\frac{\ud}{\ud r}-\frac{\alpha}{r} \big) \\
		-\ii c \big( \frac{\ud}{\ud r} + \frac{\alpha}{r} \big) & - 2 mc^2 +\lambda
	\end{pmatrix} u\,=\,0 \, .
\end{equation*}

  Denote for convenience
  \[
   u\,\equiv\,\begin{pmatrix} u_+ \\ u_- \end{pmatrix}\in L^2(\mathbb{R}^+;\mathbb{C},\ud r)\oplus L^2(\mathbb{R}^+;\mathbb{C},\ud r)\,,
  \]
 then \eqref{eq:DiffEqProof} reads
 \[\tag{iii}\label{eq:SystemUU}
  \begin{cases}
   \;u_+\,=\, \frac{\ii c}{\lambda} u_-'-\frac{\ii c}{\lambda} \frac{\alpha}{r} u_-\,, \\
   \;-\ii c u_+'-\frac{\ii c \alpha}{r} u_+ - 2 mc^2 u_-+\lambda u_- \, = \, 0\,.
  \end{cases}
 \]
 By substitution of $u_+$,
 \[\tag{iv}\label{eq:EquationForu-}
 r^2 u_-''(r)+\Big(\Big(\frac{\lambda^2}{c^2} -2m \lambda\Big) r^2 + c^2(\alpha-\alpha^2)\Big) u_-(r)\,=\,0 \, .
\]

  The change of variable
  \[
   \xi\,:=\,r\,\mu_{c,\lambda} \sqrt{\lambda}\,,\qquad u_-(r)\,\equiv\,\sqrt{r} v(r\,\mu_{c,\lambda} \sqrt{\lambda})
  \]
 yields
 \begin{equation*}
\xi^2 \frac{\ud^2 v}{\ud \xi^2}+\xi \frac{\ud v}{\ud \xi}- \Big(\Big(\frac{1}{2}-\alpha \Big)^2+\xi^2 \Big) v \,=\, 0\,,
\end{equation*}
 a modified Bessel equation \cite[Eq.~(9.6.1)]{Abramowitz-Stegun-1964} whose two-dimensional space of solutions is spanned by the modified Bessel functions $I_{\alpha-\frac{1}{2}}(\xi)$ and $K_{\alpha-\frac{1}{2}}(\xi)$ \cite[(9.6.10) and (9.6.2)]{Abramowitz-Stegun-1964}. Then the two-dimensional space of solutions to \eqref{eq:EquationForu-} is spanned by
 \[
  \sqrt{r} K_{\alpha-\frac{1}{2}}(r\,\mu_{c,\lambda} \sqrt{\lambda})\qquad\textrm{and}\qquad\sqrt{r} I_{\alpha-\frac{1}{2}}(r\,\mu_{c,\lambda} \sqrt{\lambda})\,.
 \]

 The choice $u_-(r)=\sqrt{r} I_{\alpha-\frac{1}{2}}(r\,\mu_{c,\lambda} \sqrt{\lambda})$ yields, through the first of \eqref{eq:SystemUU},
 \[
   u_+(r)\,=\,\frac{\,\ii c \mu_{c,\lambda} \,}{\sqrt{\lambda}} \sqrt{r} \Big(I'_{\alpha-\frac{1}{2}}(r\,\mu_{c,\lambda} \sqrt{\lambda})+\frac{\frac{1}{2}-\alpha }{\,r\,\mu_{c,\lambda} \sqrt{\lambda}\,} \,I_{\alpha-\frac{1}{2}}(\,r\,\mu_{c,\lambda} \sqrt{\lambda}) \Big)\,.
 \]
  By means of the recurrence relations \cite[Eq.~(9.6.26)-(iv)]{Abramowitz-Stegun-1964}
 \[
  I'_{\nu}(z)-\frac{\nu}{z}I_{\nu}(z)\,=\,I_{\nu+1}(z)\,,
 \]
 the latter expression for $u_+$ is re-written as
  \[
   u_+(r)\,=\,\frac{\,\ii c \mu_{c,\lambda} \,}{\sqrt{\lambda}} \sqrt{r}\, I_{\alpha+\frac{1}{2}}(r\,\mu_{c,\lambda} \sqrt{\lambda})\,.
  \]
 By comparison to \eqref{eq:FD}, the above argument shows that $F_{\alpha,\lambda,c}^{(D)}$ is a solution to \eqref{eq:DiffEqProof}.

  The alternative choice $u_-(r)=\sqrt{r}K_{\alpha-\frac{1}{2}}(r\,\mu_{c,\lambda} \sqrt{\lambda})$ can be discussed in a completely analogous manner, using now the recurrence relations \cite[Eq.~(9.6.26)-(iv)]{Abramowitz-Stegun-1964}
  \[
   K_{\nu+1}(z)\,=\,\frac{\nu}{z} K_\nu(z)-K'_\nu(z)\,.
  \]
  This way, one proves that also $\Phi_{\alpha,\lambda,c}^{(D)}$ is a solution to \eqref{eq:DiffEqProof}.

  The linear independence between $\Phi_{\alpha,\lambda,c}^{(D)}$ and $F_{\alpha,\lambda,c}^{(D)}$ is obvious. This completes the characterisation of the space of solutions to \eqref{eq:DiffEqProof}.

  Owing to Lemma \ref{lem:asymptoticsPhiF}, only $\Phi_{\alpha,\lambda,c}^{(D)}\in L^2(\mathbb{R}^+;\mathbb{C}^2,\ud r)$, which finally proves that the solutions to the eigenvalue problem \eqref{eq:evproblemstar} are the multiples of $\Phi_{\alpha,\lambda,c}^{(D)}$.
 \end{proof}

 \begin{remark}
  Observe from \eqref{eq:L2normPhi} of Lemma \ref{lem:asymptoticsPhiF} that
  \[
   \lim_{\alpha \to \pm \frac{1}{2}} \Vert \Phi_{\alpha,\lambda,c}^{(D)} \Vert_{L^2(\mathbb{R}^+; \mathbb{C}^2)} \,=\, +\infty\,.
  \]
 This is consistent with Propositions \ref{prop:Closure} and \ref{prop:KernelAdjoint} above: indeed, when $\alpha=\pm \frac{1}{2}$ the deficiency indices of $\mathsf{h}_{\alpha,0}$ are $(0,0)$ (Proposition \ref{prop:Closure}(iii)), meaning that $\ker (\mathsf{h}_{\alpha,0}^*-mc^2+\lambda)$ is trivial.
 \end{remark}

 \subsection{Distinguished self-adjoint extension}\label{sec:distinguishedSA}~

 The next step of our analysis is to identify a distinguished self-adjoint extension of $ \mathsf{h}_{\alpha,0}$, \emph{temporarily} denoted by $ \mathsf{h}_{\alpha,0}^{(D)}$ (eventually, $ \mathsf{h}_{\alpha,0}^{(\infty)}$), such that $\mathsf{h}_{\alpha,0}^{(D)}-mc^2+\lambda$ has everywhere defined and bounded inverse.

 \begin{proposition}\label{prop:hD}
   Let $\alpha\in(-\frac{1}{2},\frac{1}{2})$ and $\lambda\in(0,2mc^2)$.
 Let $R_{G_{\alpha,\lambda,c}^{(D)}}$ be the integral operator acting on $\mathbb{R}^+\to\mathbb{C}^2$ functions with the integral kernel $G_{\alpha,\lambda,c}^{(D)}(r,\rho)$ defined in \eqref{eq:GreenF}.
  \begin{enumerate}
  \item[(i)] $R_{G_{\alpha,\lambda,c}^{(D)}}$  is everywhere defined, bounded, and self-adjoint in $L^2(\mathbb{R}^+;\mathbb{C}^2,\ud r)$.
  \item[(ii)] $R_{G_{\alpha,\lambda,c}^{(D)}}$ is invertible on its range and the operator
  \begin{equation}\label{eq:defOfh0D}
   \mathsf{h}_{\alpha,0}^{(D)}\,:=\,(R_{G_{\alpha,\lambda,c}^{(D)}})^{-1}+mc^2-\lambda
  \end{equation}
  is a self-adjoint extension of $\mathsf{h}_{\alpha,0}$ such that $\mathsf{h}_{\alpha,0}^{(D)}-mc^2+\lambda$ has everywhere defined and bounded inverse.
  \item[(iii)] The operator $(\mathsf{h}_{\alpha,0}^{(D)}-mc^2+\lambda)^{-1}$ is an everywhere defined and bounded integral operator in $L^2(\mathbb{R}^+;\mathbb{C}^2,\ud r)$ with integral kernel $G_{\alpha,\lambda,c}^{(D)}(r,\rho)$. As a consequence, the resolvent set $\rho(\mathsf{h}_{\alpha,0}^{(D)}-mc^2)$ includes the open interval $(-2mc^2,0)$.
  \item[(iv)] Let
   \begin{equation}\label{eq:DefPsi}
	\Psi_{\alpha,\lambda,c}^{(D)} \,:=\, R_{G_{\alpha,\lambda,c}^{(D)}} \Phi_{\alpha,\lambda,c}^{(D)}\,=\, (\mathsf{h}_{\alpha,0}^{(D)}-mc^2+\lambda)^{-1}\Phi_{\alpha,\lambda,c}^{(D)}\, .
\end{equation}
 Then $\Psi_{\alpha,\lambda,c}^{(D)}\in C^\infty(\mathbb{R}^+;\mathbb{C}^2)$ and
 \begin{equation}\label{eq:AsintoticaPsiZero}
		\Psi_{\alpha,\lambda,c}^{(D)}(r) \,\stackrel{(r\downarrow 0)}{=} \,- \frac{\lambda}{\,c^2} C_{\alpha,\lambda,c} \Vert \Phi_{\alpha,\lambda,c}^{(D)} \Vert_{L^2(\mathbb{R}^+;\mathbb{C}^2)}^2 \begin{pmatrix} 0 \\ 1 \end{pmatrix} r^\alpha +o(r^{\frac{1}{2}})\, ,
\end{equation}
 where the constant $ C_{\alpha,\lambda,c}$ is defined in \eqref{eq:F0Def}.
 \end{enumerate}
 \end{proposition}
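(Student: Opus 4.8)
The plan is to read $G_{\alpha,\lambda,c}^{(D)}(r,\rho)$ from \eqref{eq:GreenF} as the Green kernel assembled, in the standard Sturm--Liouville manner, from the homogeneous solution $F_{\alpha,\lambda,c}^{(D)}$ that is regular at the origin and the homogeneous solution $\Phi_{\alpha,\lambda,c}^{(D)}$ that is square-integrable at $+\infty$, glued across the diagonal by the indicators $\mathbf 1_{(0,r)}$ and $\mathbf 1_{(r,+\infty)}$. I would then prove the statements in the order (i) boundedness and self-adjointness of $R_{G_{\alpha,\lambda,c}^{(D)}}$, (ii)--(iii) its injectivity together with the identification of its inverse as a shift of a self-adjoint extension of $\mathsf h_{\alpha,0}$, and finally (iv) the short-distance expansion of $\Psi_{\alpha,\lambda,c}^{(D)}$.

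For (i), symmetry is algebraic: writing the kernel as $-\tfrac{\lambda}{c^2}$ times the rank-one dyads $\Phi\otimes\overline F$ and $F\otimes\overline\Phi$, and using that $\lambda$ is real together with $(a\otimes\overline b)^*=b\otimes\overline a$ in $\mathbb C^2$, one checks at once that $G_{\alpha,\lambda,c}^{(D)}(\rho,r)=G_{\alpha,\lambda,c}^{(D)}(r,\rho)^*$, so $R_{G_{\alpha,\lambda,c}^{(D)}}$ is symmetric wherever defined. Boundedness on all of $L^2(\mathbb R^+;\mathbb C^2,\ud r)$ I would get from a Schur test based on the pointwise bound $\|G_{\alpha,\lambda,c}^{(D)}(r,\rho)\|_{\mathbb C^2}\lesssim \|\Phi_{\alpha,\lambda,c}^{(D)}(r\vee\rho)\|\,\|F_{\alpha,\lambda,c}^{(D)}(r\wedge\rho)\|$ and the asymptotics of Lemma \ref{lem:asymptoticsPhiF}: near $0$ the profiles are powers $r^{\pm\alpha}$ with $|\alpha|<\tfrac12$ (square-integrable), and near $+\infty$ the exponential growth of $F_{\alpha,\lambda,c}^{(D)}$, always evaluated at the smaller argument, is dominated by the exponential decay of $\Phi_{\alpha,\lambda,c}^{(D)}$ at the larger one. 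Testing against a suitable power weight makes both $\int\|G_{\alpha,\lambda,c}^{(D)}(r,\cdot)\|$ and $\int\|G_{\alpha,\lambda,c}^{(D)}(\cdot,\rho)\|$ finite and uniformly bounded; a bounded, everywhere-defined symmetric operator is self-adjoint, which is (i).

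The conceptual linchpin is (ii): the Green identity $R_{G_{\alpha,\lambda,c}^{(D)}}(\mathsf h_{\alpha,0}-mc^2+\lambda)\varphi=\varphi$ for every $\varphi\in C^\infty_0(\mathbb R^+;\mathbb C^2)$, which I would prove by integration by parts. Off the diagonal the columns of $G_{\alpha,\lambda,c}^{(D)}(\cdot,\rho)$ solve the homogeneous equation (the computation in the proof of Proposition \ref{prop:KernelAdjoint}), while the discontinuity of the kernel across $r=\rho$ produces $\delta(r-\rho)\mathbbm 1$ with a coefficient fixed by the constant Wronskian of $\Phi_{\alpha,\lambda,c}^{(D)}$ and $F_{\alpha,\lambda,c}^{(D)}$; the prefactor $-\tfrac{\lambda}{c^2}$ is precisely the reciprocal-Wronskian normalisation that makes this work, and matching it, together with the Schur bound of (i), is where the real work of the proposition sits. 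The identity shows at once that $\mathrm{ran}\,R_{G_{\alpha,\lambda,c}^{(D)}}\supseteq C^\infty_0(\mathbb R^+;\mathbb C^2)$, hence $R_{G_{\alpha,\lambda,c}^{(D)}}$ has dense range and, being self-adjoint, is injective; then $(R_{G_{\alpha,\lambda,c}^{(D)}})^{-1}$ is self-adjoint, so $\mathsf h_{\alpha,0}^{(D)}$ of \eqref{eq:defOfh0D} is self-adjoint and extends $\mathsf h_{\alpha,0}$ by the same identity, with $\mathsf h_{\alpha,0}^{(D)}-mc^2+\lambda=(R_{G_{\alpha,\lambda,c}^{(D)}})^{-1}$ boundedly invertible. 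Part (iii) follows: $(\mathsf h_{\alpha,0}^{(D)}-mc^2+\lambda)^{-1}=R_{G_{\alpha,\lambda,c}^{(D)}}$ carries the kernel $G_{\alpha,\lambda,c}^{(D)}$; since the construction runs verbatim for every $\lambda\in(0,2mc^2)$ and returns the same operator (its domain being fixed by the $\lambda$-independent regularity condition at the origin, confirmed by (iv)), the entire interval $(-2mc^2,0)$ sits in $\rho(\mathsf h_{\alpha,0}^{(D)}-mc^2)$.

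For (iv) I would compute $\Psi_{\alpha,\lambda,c}^{(D)}=R_{G_{\alpha,\lambda,c}^{(D)}}\Phi_{\alpha,\lambda,c}^{(D)}$ from \eqref{eq:DefPsi}, obtaining $-\tfrac{\lambda}{c^2}\big(\Phi_{\alpha,\lambda,c}^{(D)}(r)\int_0^r\langle F_{\alpha,\lambda,c}^{(D)},\Phi_{\alpha,\lambda,c}^{(D)}\rangle_{\mathbb C^2}+F_{\alpha,\lambda,c}^{(D)}(r)\int_r^{+\infty}\|\Phi_{\alpha,\lambda,c}^{(D)}\|_{\mathbb C^2}^2\big)$. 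Smoothness on $\mathbb R^+$ is clear since the two scalar integrals are smooth in the interior and $\Phi_{\alpha,\lambda,c}^{(D)},F_{\alpha,\lambda,c}^{(D)}$ are smooth (Lemma \ref{lem:asymptoticsPhiF}); equivalently it follows from ODE regularity for $(\mathsf h_{\alpha,0}^{(D)}-mc^2+\lambda)\Psi_{\alpha,\lambda,c}^{(D)}=\Phi_{\alpha,\lambda,c}^{(D)}$. The leading term as $r\downarrow0$ is $-\tfrac{\lambda}{c^2}F_{\alpha,\lambda,c}^{(D)}(r)\,\|\Phi_{\alpha,\lambda,c}^{(D)}\|_{L^2}^2$, which by \eqref{eq:AsymptZeroFLong} and \eqref{eq:L2normPhi} gives exactly the $\binom{0}{1}r^\alpha$ contribution in \eqref{eq:AsintoticaPsiZero}. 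The remainder needs care: the piece $\Phi_{\alpha,\lambda,c}^{(D)}(r)\int_0^r\langle F,\Phi\rangle$ leaves no surviving $r^{-\alpha}$ term (only $O(r^{1\pm\alpha})=o(r^{1/2})$ in the upper slot, so the distinguished extension is indeed regular, $g_0=0$), while the two a priori dangerous $r^{1+3\alpha}$ contributions in the lower slot --- one from $\Phi_2\!\int_0^r\langle F,\Phi\rangle$, one from $-F_2\!\int_0^r\|\Phi\|^2$ --- cancel identically because the constants $B_{\alpha,\lambda,c}$ of \eqref{eq:Phi1Def} and $C_{\alpha,\lambda,c}$ of \eqref{eq:F0Def} are real. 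What survives is $o(r^{1/2})$, which is \eqref{eq:AsintoticaPsiZero}.
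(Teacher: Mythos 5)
Your proposal is correct and reaches all four parts, but by partially different mechanics than the paper's. For (i), the paper splits $R_{G_{\alpha,\lambda,c}^{(D)}}$ into four regional pieces, shows three of them to be Hilbert--Schmidt, and handles the remaining large-$r$/large-$\rho$ piece by a Schur test (Lemma \ref{lem:BoundednessIntegralOperator}); your single weighted Schur test is a legitimate substitute provided the weight is actually exhibited (a power weight such as $r^{-1/2}$ near the origin works for all $\alpha\in(-\frac12,\frac12)$, and some weight is genuinely needed, since the unweighted row-integral $\int_0^1\Vert G(r,\rho)\Vert\,\ud\rho\sim r^{\alpha}$ blows up as $r\downarrow 0$ when $\alpha<0$); your dyadic symmetry argument matches the paper's. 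For (ii) the paper proves the \emph{opposite} composition: Lemma \ref{lem:invertsDiffProbl} shows by variation of constants that the maximal differential expression applied to $R_{G_{\alpha,\lambda,c}^{(D)}}g$ returns $g$ for every $g\in L^2$, and then concludes abstractly; you instead prove $R_{G_{\alpha,\lambda,c}^{(D)}}(\mathsf{h}_{\alpha,0}-mc^2+\lambda)\varphi=\varphi$ on $C^\infty_0$ via the distributional jump of the kernel across the diagonal, with the same Wronskian $W_{\alpha,\lambda,c}^{(D)}=\ii c/\lambda$ providing the normalisation in both routes; your ensuing chain (dense range, hence injectivity of a self-adjoint operator, hence self-adjoint inverse, hence the extension property read off the Green identity) is sound and, if anything, more explicit than the paper's terse step. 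For (iv) you compute the asymptotics only for $g=\Phi_{\alpha,\lambda,c}^{(D)}$, which is all the proposition itself needs, whereas the paper proves the general-$g$ statement (Lemma \ref{lem:Asympt0RanRG}); your identification of the two $r^{1+3\alpha}$ contributions and their cancellation through the reality of $B_{\alpha,\lambda,c}$ and $C_{\alpha,\lambda,c}$ is exactly the paper's $\alpha\leqslant 0$ cancellation specialised to $g=\Phi$, and it is indeed indispensable: for $\alpha\leqslant-\frac16$ these terms are individually not $o(r^{\frac12})$.

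One soft spot deserves flagging. In (iii), the inclusion $(-2mc^2,0)\subset\rho(\mathsf{h}_{\alpha,0}^{(D)}-mc^2)$ requires that the operators constructed from $R_{G_{\alpha,\lambda,c}^{(D)}}$ for different $\lambda$ coincide, and your parenthetical ``confirmed by (iv)'' does not quite close this: (iv) controls the boundary behaviour only of the single function $\Psi_{\alpha,\lambda,c}^{(D)}$, while what is needed is that \emph{every} element of $\mathrm{ran}\,R_{G_{\alpha,\lambda,c}^{(D)}}$ carries the $\lambda$-independent condition $g_0=0$ --- precisely the general-$g$ asymptotics of Lemma \ref{lem:Asympt0RanRG} that you bypassed --- together with the observation that, the deficiency indices being $(1,1)$ (Proposition \ref{prop:Closure}(iii)) and $\Phi_{\alpha,\lambda,c}^{(D)}\in\mathcal{D}(\mathsf{h}_{\alpha,0}^*)$ having $g_0\neq 0$, at most one self-adjoint extension can satisfy $g_0=0$, so the $\lambda$-family collapses to a single operator. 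To be fair, the paper's own proof is equally terse at this point (``equivalent rephrasing of part (ii)''), with the $\lambda$-independence only retroactively secured by the Kre\u{\i}n--Vi\v{s}ik--Birman--Grubb classification in Section \ref{subsect:SD}; but since you present a self-contained argument, you should either run your (iv) computation for general $g$ or invoke the uniqueness argument explicitly.
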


 For the proof of  Proposition \ref{prop:hD} we need the results contained in Lemmas \ref{lem:BoundednessIntegralOperator}-\ref{lem:Asympt0RanRG} below.

\begin{lemma}\label{lem:BoundednessIntegralOperator} Let $\alpha\in(-\frac{1}{2},\frac{1}{2})$ and $\lambda\in(0,2mc^2)$. The maximally defined integral operator $R_{G_{\alpha,\lambda,c}^{(D)}}$ acting on $\mathbb{R}^+\to\mathbb{C}^2$ functions with the integral kernel $G_{\alpha,\lambda,c}^{(D)}(r,\rho)$ defined in \eqref{eq:GreenF} is everywhere defined, bounded, and self-adjoint in $L^2(\mathbb{R}^+;\mathbb{C}^2,\ud r)$.
\end{lemma}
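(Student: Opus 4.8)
The plan is to read off, from the rank-one tensor structure of $G_{\alpha,\lambda,c}^{(D)}(r,\rho)$, that $R_{G_{\alpha,\lambda,c}^{(D)}}$ is formally symmetric, and then to establish its everywhere-definedness and boundedness by a \emph{weighted Schur test}; self-adjointness follows because a bounded, everywhere-defined symmetric operator is self-adjoint. First I would record the pointwise Hermitian symmetry of the kernel. Writing the rank-one matrix $(a\otimes_{\mathbb{C}^2}\overline b)$ as the map $v\mapsto\langle b,v\rangle_{\mathbb{C}^2}\,a$, one checks $(a\otimes_{\mathbb{C}^2}\overline b)^*=b\otimes_{\mathbb{C}^2}\overline a$, so that, exchanging the roles of the two disjoint indicator functions in \eqref{eq:GreenF}, $G_{\alpha,\lambda,c}^{(D)}(r,\rho)^*=G_{\alpha,\lambda,c}^{(D)}(\rho,r)$ for all $r,\rho>0$. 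Together with Fubini (justified a posteriori by the integrability bounds below) this yields $\langle R_{G_{\alpha,\lambda,c}^{(D)}}f,g\rangle=\langle f,R_{G_{\alpha,\lambda,c}^{(D)}}g\rangle$ for all $f,g$.

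The quantitative core is boundedness. Since the two summands in $G_{\alpha,\lambda,c}^{(D)}(r,\rho)$ live on the disjoint sets $\{\rho<r\}$ and $\{\rho>r\}$ and each is a rank-one matrix, the $\mathbb{C}^2$-operator norm of the kernel is exactly $\frac{\lambda}{c^2}\|\Phi_{\alpha,\lambda,c}^{(D)}(r)\|\,\|F_{\alpha,\lambda,c}^{(D)}(\rho)\|$ for $\rho<r$ and $\frac{\lambda}{c^2}\|F_{\alpha,\lambda,c}^{(D)}(r)\|\,\|\Phi_{\alpha,\lambda,c}^{(D)}(\rho)\|$ for $\rho>r$. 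I would then apply the Schur test with the single weight $w(r):=r^{-|\alpha|}$, namely verify
\[
 \int_0^{+\infty}\big\|G_{\alpha,\lambda,c}^{(D)}(r,\rho)\big\|\,w(\rho)\,\ud\rho\;\leqslant\;C\,w(r),\qquad r>0,
\]
for a finite $C$; since the kernel is Hermitian, this one bound gives $\|R_{G_{\alpha,\lambda,c}^{(D)}}\|\leqslant C$ and, in particular, that $R_{G_{\alpha,\lambda,c}^{(D)}}$ is everywhere defined. The integral splits as $\|\Phi_{\alpha,\lambda,c}^{(D)}(r)\|\int_0^r\|F_{\alpha,\lambda,c}^{(D)}\|\,w+\|F_{\alpha,\lambda,c}^{(D)}(r)\|\int_r^{+\infty}\|\Phi_{\alpha,\lambda,c}^{(D)}\|\,w$, and one controls it through Lemma \ref{lem:asymptoticsPhiF}: near the origin $\|\Phi_{\alpha,\lambda,c}^{(D)}(r)\|\lesssim r^{-|\alpha|}$ and $\|F_{\alpha,\lambda,c}^{(D)}(r)\|\lesssim r^{\alpha}\leqslant r^{-|\alpha|}$, while near infinity $\|\Phi_{\alpha,\lambda,c}^{(D)}(r)\|\lesssim e^{-r\mu_{c,\lambda}\sqrt{\lambda}}$ and $\|F_{\alpha,\lambda,c}^{(D)}(r)\|\lesssim e^{r\mu_{c,\lambda}\sqrt{\lambda}}$. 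A direct computation of the two $\rho$-integrals in each of the three regimes $r\downarrow 0$, $r\to+\infty$, and $r$ in a compact subset of $\mathbb{R}^+$ (where $\Phi_{\alpha,\lambda,c}^{(D)},F_{\alpha,\lambda,c}^{(D)}$ are smooth and the ratio is continuous, hence bounded) shows each term is $\leqslant C\,w(r)$; the constraint $|\alpha|<\frac12$ guarantees convergence of all the $\rho$-integrals at $0$.

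With boundedness and everywhere-definedness in hand, self-adjointness is immediate: $R_{G_{\alpha,\lambda,c}^{(D)}}$ is bounded, defined on all of $L^2(\mathbb{R}^+;\mathbb{C}^2,\ud r)$, and symmetric by the Hermitian symmetry of the kernel, whence $R_{G_{\alpha,\lambda,c}^{(D)}}=R_{G_{\alpha,\lambda,c}^{(D)}}^*$.

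I expect the main obstacle to be the estimate near the origin when $\alpha<0$. There both $\Phi_{\alpha,\lambda,c}^{(D)}$ and $F_{\alpha,\lambda,c}^{(D)}$ are singular of order $r^{-|\alpha|}$, so a Schur test with constant weight, or a Hilbert--Schmidt bound (which anyway fails along the diagonal at infinity, where the off-diagonal decay is only $e^{-|r-\rho|\mu_{c,\lambda}\sqrt{\lambda}}$), does not close. The decisive choice is the weight $w=r^{-|\alpha|}$, itself singular at $0$: the dangerous factor $\|F_{\alpha,\lambda,c}^{(D)}(r)\|\lesssim r^{-|\alpha|}$ multiplying the bounded tail $\int_r^{+\infty}\|\Phi_{\alpha,\lambda,c}^{(D)}\|\,w$ is then exactly matched by $w(r)$ on the right. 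The other regimes are routine: at infinity the exponential gap between $\Phi_{\alpha,\lambda,c}^{(D)}$ and $F_{\alpha,\lambda,c}^{(D)}$ makes both integrals comparable to $w(r)$, and on compacta everything is continuous.
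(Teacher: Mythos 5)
Your proof is correct, but it takes a genuinely different route from the paper's. The paper splits $R_{G_{\alpha,\lambda,c}^{(D)}}$ at $r=1$ into four pieces $G^{LM}$, $L,M\in\{+,-\}$, uses the asymptotics of Lemma \ref{lem:asymptoticsPhiF} to show that $G^{--}$, $G^{+-}$, $G^{-+}$ have square-integrable kernels (the constraint $2|\alpha|<1$ enters precisely to make $(r\rho)^{-|\alpha|}$ square-integrable on $(0,1)^2$), hence are Hilbert--Schmidt, and handles the far quadrant $G^{++}$, with its convolution-type bound $e^{-|r-\rho|\mu_{c,\lambda}\sqrt{\lambda}}$, by an unweighted Schur test; self-adjointness is then obtained exactly as you do, from the Hermitian symmetry $\overline{G_{\alpha,\lambda,c}^{(D)}(\rho,r)}^T=G_{\alpha,\lambda,c}^{(D)}(r,\rho)$, your rank-one identity $(a\otimes_{\mathbb{C}^2}\overline{b})^*=b\otimes_{\mathbb{C}^2}\overline{a}$ being a tidier formulation of the paper's direct inspection (which instead invokes the real-valuedness of $K_\nu$ and $I_\nu$). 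Your single weighted Schur test with $w(r)=r^{-|\alpha|}$ replaces the four-fold decomposition by one global estimate, and the weight is well chosen: it matches the worst short-distance singularity of $\Phi_{\alpha,\lambda,c}^{(D)}$, which is exactly why, as you correctly observe, a constant-weight Schur test fails near the origin for $\alpha<0$ and a global Hilbert--Schmidt bound fails along the diagonal at infinity. The one spot where your ``direct computation'' needs a modicum of care is the first term at large $r$: the crude bound $\int_1^r e^{\mu_{c,\lambda}\sqrt{\lambda}\,\rho}\,\rho^{-|\alpha|}\,\ud\rho\lesssim e^{\mu_{c,\lambda}\sqrt{\lambda}\,r}$ does \emph{not} close against $w(r)=r^{-|\alpha|}$; one must retain the endpoint weight, e.g.\ by splitting the integral at $r/2$ to obtain $\lesssim r^{-|\alpha|}e^{\mu_{c,\lambda}\sqrt{\lambda}\,r}$, after which $\|\Phi_{\alpha,\lambda,c}^{(D)}(r)\|_{\mathbb{C}^2}\lesssim e^{-r\mu_{c,\lambda}\sqrt{\lambda}}$ closes the estimate -- this is a verification detail, not a gap, since the endpoint-localised bound is true. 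Near the origin everything checks as you indicate: for $\alpha<0$ the first term is $\lesssim r^{-|\alpha|}\cdot r^{1-2|\alpha|}\leqslant r^{-|\alpha|}$ because $2|\alpha|<1$, and the second is $\lesssim r^{\alpha}\leqslant r^{-|\alpha|}$. As for what each approach buys: the paper's decomposition yields, as a by-product, that three of the four pieces are compact, while yours is more unified, avoids the arbitrary cut at $r=1$, and produces a single explicit operator-norm bound $\|R_{G_{\alpha,\lambda,c}^{(D)}}\|\leqslant C$; both arguments rest on the same asymptotics lemma and conclude identically, via the fact that a bounded, everywhere-defined symmetric operator is self-adjoint.
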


\begin{proof}
From
\[
G_{\alpha,\lambda,c}^{(D)}(r,\rho)=-\frac{\lambda}{\,c^2}\Big(\Phi_{\alpha,\lambda,c}^{(D)}(r) \otimes_{\mathbb{C}^2} \sigma_3 \overline{F_{\alpha,\lambda,c}^{(D)}(\rho)} \mathbf{1}_{(0,r)}(\rho)+F_{\alpha,\lambda,c}^{(D)}(r) \otimes_{\mathbb{C}^2} \sigma_3\overline{ \Phi_{\alpha,\lambda,c}^{(D)}(\rho)} \mathbf{1}_{(r,+\infty)}(\rho)\Big)
\]
one sees that $R_{G_{\alpha,\lambda,c}^{(D)}}$ splits into the sum of four integral operators with kernels given by
 \[
 	\begin{split}
 		G^{++}(r,\rho) \;&:=\; G_{\alpha,\lambda,c}^{(D)}(r,\rho) \mathbf{1}_{(1,+\infty)}(r) \mathbf{1}_{(1,+\infty)}(\rho) \, , \\
 		G^{+-}(r,\rho) \;&:=\; G_{\alpha,\lambda,c}^{(D)}(r,\rho)\mathbf{1}_{(1,+\infty)}(r) \mathbf{1}_{(0,1)}(\rho) \, ,\\
 		G^{-+}(r,\rho)\;&:=\;G_{\alpha,\lambda,c}^{(D)}(r,\rho) \mathbf{1}_{(0,1)}(r) \mathbf{1}_{(1,+\infty)}(\rho) \, ,\\
 		G^{--}(r,\rho)\;&:=\;G_{\alpha,\lambda,c}^{(D)}(r,\rho) \mathbf{1}_{(0,1)}(r) \mathbf{1}_{(0,1)}(\rho) \, . \\
 	\end{split}
 \]

We estimate each $G^{LM}(r,\rho)$, $L,M\in\{+,-\}$, by means of the large- and short-distance asymptotics \eqref{eq:AsymptPhiInfinity}-\eqref{eq:AsymptFInfinity} and \eqref{eq:AsymptZeroPhi}-\eqref{eq:AsymptZeroFLong} for $\Phi_{\alpha,\lambda,c}^{(D)}$ and $F_{\alpha,\lambda,c}^{(D)}$. For instance,
\[
 \begin{split}
  \Vert G^{+-}(r,\rho) \Vert_{\mathrm{M}_2(\mathbb{C})}\,&\lesssim\,\big\|\Phi_{\alpha,\lambda,c}^{(D)}(r)\big\|_{\mathrm{M}_2(\mathbb{C})}\big\|F_{\alpha,\lambda,c}^{(D)}(\rho)\big\|_{\mathrm{M}_2(\mathbb{C})}\,\mathbf{1}_{(1,+\infty)}(r) \mathbf{1}_{(0,1)}(\rho) \\
  &\lesssim\,e^{-r\,\mu_{c,\lambda}\sqrt{\lambda} } \,\rho^{\alpha}\,\mathbf{1}_{(1,+\infty)}(r) \mathbf{1}_{(0,1)}(\rho)\,.
 \end{split}
\]
 The net result is
\[\tag{*}
 	\label{eq:EstimateIntegralKernels}
 	\begin{split}
 		\Vert G^{++}(r,\rho) \Vert_{\mathrm{M}_2(\mathbb{C})} \; & \lesssim \; e^{-|r-\rho|\mu_{c,\lambda}\sqrt{\lambda}} \,\mathbf{1}_{(1,+\infty)}(r) \mathbf{1}_{(1,+\infty)}(\rho) \, , \\
 		\Vert G^{+-}(r,\rho) \Vert_{\mathrm{M}_2(\mathbb{C})} \; & \lesssim \; \rho^{\alpha} e^{-r\,\mu_{c,\lambda}\sqrt{\lambda} } \,\mathbf{1}_{(1,+\infty)}(r) \mathbf{1}_{(0,1)}(\rho)\, , \\
 		\Vert G^{-+}(r,\rho) \Vert_{\mathrm{M}_2(\mathbb{C})} \; & \lesssim \; r^{\alpha} e^{-\rho\,\mu_{c,\lambda}\sqrt{\lambda} } \,\mathbf{1}_{(0,1)}(r) \mathbf{1}_{(1,+\infty)}(\rho)\, , \\
 		\Vert G^{--}(r,\rho) \Vert_{\mathrm{M}_2(\mathbb{C})} \; & \lesssim \; (r \rho)^{-|\alpha|} \mathbf{1}_{(0,1)}(r)\, \mathbf{1}_{(0,1)}(\rho)\, , \\
 	\end{split}
 \]
 where $\|\cdot\|_{\mathrm{M}_2(\mathbb{C})}$ denotes the matrix norm, with respect to the representation
 \[
  \begin{split}
   G^{LM}(r,\rho)\,&=\,
   \begin{pmatrix}
    G^{LM}_{11}(r,\rho) & G^{LM}_{12}(r,\rho) \\
    G^{LM}_{21}(r,\rho) & G^{LM}_{22}(r,\rho)
   \end{pmatrix}, \\
   L^2(\mathbb{R}^+;\mathbb{C}^2,\ud r)\,&\cong\, L^2(\mathbb{R}^+;\mathbb{C},\ud r)\oplus L^2(\mathbb{R}^+;\mathbb{C},\ud r)\,.
  \end{split}
 \]

 The last three estimates in \eqref{eq:EstimateIntegralKernels} show that the kernels $G^{+-}(r,\rho)$, $G^{-+}(r,\rho)$ and $G^{--}(r,\rho)$ are in $L^2(\mathbb{R}^+ \times \mathbb{R}^+,\ud r \ud \rho)\otimes \text{M}_2(\mathbb{C})$ and therefore the corresponding integral operators are Hilbert-Schmidt operators, hence bounded, on $L^2(\mathbb{R}^+;\mathbb{C}^2, \ud r)$. The first estimate in \eqref{eq:EstimateIntegralKernels} allows to conclude, by an obvious Schur test, that also the integral operator with kernel $G^{++}(r,\rho)$ is bounded on $L^2(\mathbb{R}^+;\mathbb{C}^2, \ud r)$. This proves the overall boundedness of $R_{G_{\alpha,\lambda,c}^{(D)}}$.

 The self-adjointness of $R_{G_{\alpha,\lambda,c}^{(D)}}$ is clear from \eqref{eq:GreenF}: the adjoint $(R_{G_{\alpha,\lambda,c}^{(D)}})^*$ of $R_{G_{\alpha,\lambda,c}^{(D)}}$ has kernel $\overline{G_{\alpha,\lambda,c}^{(D)}(r,\rho)}^T$, where $T$ denotes matrix transposition. Since $K_{\alpha+\frac12}$ and $I_{\alpha+\frac12}$ in the definition \eqref{eq:PhiD}-\eqref{eq:FD} of $\Phi_{\alpha,\lambda,c}^{(D)}$ and $F_{\alpha,\lambda,c}^{(D)}$  are real-valued functions, then by direct inspection one sees that $\overline{G_{\alpha,\lambda,c}^{(D)}(\rho,r)}^T=G_{\alpha,\lambda,c}^{(D)}(r,\rho)$, thus showing that $(R_{G_{\alpha,\lambda,c}^{(D)}})^*=R_{G_{\alpha,\lambda,c}^{(D)}}$.
\end{proof}

\begin{lemma}\label{lem:invertsDiffProbl}
 Let $\alpha\in(-\frac{1}{2},\frac{1}{2})$ and $\lambda\in(0,2mc^2)$. The operator $R_{G_{\alpha,\lambda,c}^{(D)}}$ satisfies
 \begin{equation}\label{eq:invertsDiffProbl}
  \begin{pmatrix}
		\lambda & -\ii c \big(\frac{\ud}{\ud r}-\frac{\alpha}{r} \big) \\
		-\ii c \big( \frac{\ud}{\ud r} + \frac{\alpha}{r} \big) & - 2 mc^2 +\lambda
	\end{pmatrix}R_{G_{\alpha,\lambda,c}^{(D)}} g\,=\,g\qquad\forall g\in L^2(\mathbb{R}^+;\mathbb{C}^2,\ud r)\,.
   \end{equation}
\end{lemma}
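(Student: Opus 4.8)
The plan is to verify the operator identity \eqref{eq:invertsDiffProbl} by direct differentiation, exploiting that the kernel $G_{\alpha,\lambda,c}^{(D)}$ is assembled from the two homogeneous solutions $\Phi_{\alpha,\lambda,c}^{(D)}$ and $F_{\alpha,\lambda,c}^{(D)}$ already produced in the proof of Proposition \ref{prop:KernelAdjoint}. Write $\mathscr{L}$ for the matrix differential operator on the left-hand side of \eqref{eq:invertsDiffProbl}, and note that $\mathscr{L}=-\ii c\,\sigma_1\frac{\ud}{\ud r}+M(r)$ for the zeroth-order matrix $M(r)=\big(\begin{smallmatrix}\lambda & \ii c\alpha/r\\ -\ii c\alpha/r & -2mc^2+\lambda\end{smallmatrix}\big)$, and that $\mathscr{L}$ coincides with the operator of the differential problem \eqref{eq:DiffEqProof}, whence $\mathscr{L}\Phi_{\alpha,\lambda,c}^{(D)}=\mathscr{L}F_{\alpha,\lambda,c}^{(D)}=0$ on $\mathbb{R}^+$. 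Abbreviating $\Phi\equiv\Phi_{\alpha,\lambda,c}^{(D)}$ and $F\equiv F_{\alpha,\lambda,c}^{(D)}$, I would first test the identity on $g\in C^\infty_0(\mathbb{R}^+;\mathbb{C}^2)$, writing, according to \eqref{eq:GreenF},
\[
R_{G_{\alpha,\lambda,c}^{(D)}}g(r)=-\frac{\lambda}{c^2}\big(\Phi(r)\,P(r)+F(r)\,Q(r)\big),\qquad P(r):=\int_0^r\!\langle F(\rho),g(\rho)\rangle_{\mathbb{C}^2}\,\ud\rho,\quad Q(r):=\int_r^{+\infty}\!\!\langle \Phi(\rho),g(\rho)\rangle_{\mathbb{C}^2}\,\ud\rho.
\]
Since $\Phi,F$ are smooth on $\mathbb{R}^+$ (Lemma \ref{lem:asymptoticsPhiF}) and $g$ is compactly supported away from the origin, $R_{G}g$ is smooth and differentiation under the integral sign is legitimate, with $P'(r)=\langle F(r),g(r)\rangle$ and $Q'(r)=-\langle\Phi(r),g(r)\rangle$ by the fundamental theorem of calculus.

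Applying $\mathscr{L}$, the terms in which the derivative falls on $\Phi$ or $F$ reassemble into $(\mathscr{L}\Phi)P+(\mathscr{L}F)Q=0$, so the only survivor is the boundary term produced by differentiating the variable integration limits, namely
\[
\mathscr{L}\big(R_{G_{\alpha,\lambda,c}^{(D)}}g\big)(r)=\frac{\ii\lambda}{c}\,\sigma_1\Big(\Phi(r)\,\overline{F(r)}^{\,T}-F(r)\,\overline{\Phi(r)}^{\,T}\Big)\,g(r).
\]
Thus the claim reduces to the pointwise matrix identity
\[
\frac{\ii\lambda}{c}\,\sigma_1\Big(\Phi\,\overline{F}^{\,T}-F\,\overline{\Phi}^{\,T}\Big)(r)=\mathbbm{1}\qquad(r>0).
\]

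This Wronskian-type identity is where the substance of the argument sits. After the left action of $\sigma_1$ swaps the two rows, the two off-diagonal entries of $\sigma_1(\Phi\overline F^{T}-F\overline\Phi^{T})$ vanish identically, because the lower components $\Phi_-,F_-$ are real and the upper components $\Phi_+,F_+$ are purely imaginary multiples of real Bessel functions, so the products $\Phi_-\overline{F_-}$ and $\Phi_+\overline{F_+}$ are real. Inserting the explicit expressions \eqref{eq:PhiD}--\eqref{eq:FD} and setting $z:=\mu_{c,\lambda}\sqrt{\lambda}\,r$, both diagonal entries collapse to
\[
-\frac{\ii c\,\mu_{c,\lambda}}{\sqrt{\lambda}}\,r\Big(I_{\alpha-\frac12}(z)K_{\alpha+\frac12}(z)+I_{\alpha+\frac12}(z)K_{\alpha-\frac12}(z)\Big),
\]
and the cross-product identity $I_\nu(z)K_{\nu+1}(z)+I_{\nu+1}(z)K_\nu(z)=z^{-1}$ with $\nu=\alpha-\frac12$ (see \cite[Eq.~(9.6.15)]{Abramowitz-Stegun-1964}) reduces the bracket to $z^{-1}=(\mu_{c,\lambda}\sqrt{\lambda}\,r)^{-1}$, so each diagonal entry equals $-\ii c/\lambda$. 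Hence $\sigma_1(\Phi\overline F^{T}-F\overline\Phi^{T})=-\tfrac{\ii c}{\lambda}\mathbbm{1}$ and the displayed matrix identity follows, proving $\mathscr{L}R_{G}g=g$ for all $g\in C^\infty_0(\mathbb{R}^+;\mathbb{C}^2)$. The extension to arbitrary $g\in L^2$ is then obtained by density: $R_G$ is bounded (Lemma \ref{lem:BoundednessIntegralOperator}) and $\mathscr{L}=\mathsf{h}_{\alpha,0}^*-mc^2+\lambda$ is closed, so taking $C^\infty_0$-approximants $g_n\to g$ gives $R_Gg_n\to R_Gg$ with $\mathscr{L}R_Gg_n=g_n\to g$, whence $R_Gg\in\mathcal{D}(\mathscr{L})$ and $\mathscr{L}R_Gg=g$. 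I expect the only delicate point to be the bookkeeping of the Wronskian computation—tracking the factors of $\ii$ and the $\sigma_1$-induced row swap so that the diagonal constant comes out exactly $-\ii c/\lambda$—while the differentiation and the density closure are routine.
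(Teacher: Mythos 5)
Your proof is correct, and it takes a genuinely different route from the paper's. The paper treats \eqref{eq:invertsDiffProbl} as an inhomogeneous ODE problem and invokes the method of variation of constants \cite[Sect.~2.4]{Wasow-ODE}: the Wronskian $W(F_{\alpha,\lambda,c}^{(D)},\Phi_{\alpha,\lambda,c}^{(D)})$, constant by Liouville's theorem, is evaluated from the $r\downarrow 0$ asymptotics \eqref{eq:AsymptZeroPhi}--\eqref{eq:AsymptZeroFLong} to be $\ii c/\lambda$, and the resulting variation-of-constants kernel is then recognised as precisely $G_{\alpha,\lambda,c}^{(D)}$ from \eqref{eq:GreenF}. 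You instead verify the identity by direct differentiation, funnelling the whole claim into the pointwise matrix identity $\frac{\ii\lambda}{c}\,\sigma_1\big(\Phi\overline{F}^{\,T}-F\overline{\Phi}^{\,T}\big)=\mathbbm{1}$, which you establish at \emph{every} $r>0$ via the exact cross-product formula $I_\nu(z)K_{\nu+1}(z)+I_{\nu+1}(z)K_\nu(z)=1/z$ \cite[Eq.~(9.6.15)]{Abramowitz-Stegun-1964} with $\nu=\alpha-\frac12$; this is in effect the same Wronskian computation done exactly rather than asymptotically at the origin, and your bookkeeping is right: $M:=\Phi\overline{F}^{\,T}-F\overline{\Phi}^{\,T}$ has vanishing diagonal (the products $\Phi_\pm\overline{F_\pm}$ being real) and equal off-diagonal entries $-\ii c/\lambda$, so $\sigma_1 M=-\frac{\ii c}{\lambda}\mathbbm{1}$, reconfirming the paper's value $W=\ii c/\lambda$. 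Your approach buys two things: it is self-contained (no appeal to the ODE variation-of-constants theorem), and it makes explicit the passage from $g\in C^\infty_0(\mathbb{R}^+;\mathbb{C}^2)$ to arbitrary $g\in L^2$, via the boundedness of $R_{G_{\alpha,\lambda,c}^{(D)}}$ (Lemma \ref{lem:BoundednessIntegralOperator}) and the closedness of $\mathsf{h}_{\alpha,0}^*-mc^2+\lambda$ as the maximal realisation of the differential expression --- a step the paper leaves implicit, since its variation-of-constants argument produces the identity only in the classical ODE sense; your reliance on the paper's standing convention that the adjoint of the minimal operator is the maximally defined (hence closed) one is exactly what legitimises that closure step. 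The paper's route is shorter modulo the cited machinery; yours is more elementary and pins down the sense in which \eqref{eq:invertsDiffProbl} holds for all of $L^2$.
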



 \begin{proof}
  Consider the inhomogeneous differential problem
   \[\tag{i}\label{eq:inhomProb}
  \begin{pmatrix}
		\lambda & -\ii c \big(\frac{\ud}{\ud r}-\frac{\alpha}{r} \big) \\
		-\ii c \big( \frac{\ud}{\ud r} + \frac{\alpha}{r} \big) & - 2 mc^2 +\lambda
	\end{pmatrix}u \,=\,g
  \]
   in the unknown $u$ for given $g$, and the associated homogeneous problem
     \begin{equation*}\tag{ii}\label{eq:homProb}
	\begin{pmatrix}
		\lambda & -\ii c \big(\frac{\ud}{\ud r}-\frac{\alpha}{r} \big) \\
		-\ii c \big( \frac{\ud}{\ud r} + \frac{\alpha}{r} \big) & - 2 mc^2 +\lambda
	\end{pmatrix} u\,=\,0 \, .
\end{equation*}
  In the proof of Proposition \ref{prop:KernelAdjoint} it was established that the solutions to \eqref{eq:homProb} form the two-dimensional subspace
  \[
   \mathrm{span}_{\mathbb{C}}\big\{ \Phi_{\alpha,\lambda,c}^{(D)},F_{\alpha,\lambda,c}^{(D)}\big\}\,.
  \]
 Moreover, Liouville's theorem prescribes that the Wronskian of any pair of solutions to \eqref{eq:homProb} is constant in $r$: the Wronskian can therefore computed as
\begin{equation*}
	W_{\alpha,\lambda,c}^{(D)} \,:=\, W(F_{\alpha,\lambda,c}^{(D)}, \Phi_{\alpha,\lambda,c}^{(D)}) \,:=\, \det(F_{\alpha,\lambda,c}^{(D)}\,|\, \Phi_{\alpha,\lambda,c}^{(D)}) \,=\,\frac{\ii c}{\lambda}\,,
\end{equation*}
 having used the asymptotics \eqref{eq:AsymptZeroPhi}-\eqref{eq:AsymptZeroFLong} for the explicit computation in the last identity (here $( F_{\alpha,\lambda,c}^{(D)} | \Phi_{\alpha,\lambda,c}^{(D)} )$ denotes the matrix with columns $F_{\alpha,\lambda,c}^{(D)}$ and $\Phi_{\alpha,\lambda,c}^{(D)}$). Now, a standard application of the method of variation of constants \cite[Sect.~2.4]{Wasow-ODE} implies that there is a \emph{particular} solution $u_{\textrm{part}}$ to \eqref{eq:inhomProb} given by
 \[
u_{\textrm{part}}(r)\,=\,\int_0^r   G(r,\rho) g(\rho) \,\ud \rho\,,
 \]
 where
 \[
 	G(r,\rho)\,:=\,
  \begin{cases}
   \:{\displaystyle-\frac{\ii}{\,c\,W_{\alpha,\lambda,c}^{(D)}}}\,\Phi_{\alpha,\lambda,c}^{(D)}(r) \otimes_{\mathbb{C}^2} \overline{F_{\alpha,\lambda,c}^{(D)}(\rho)} & \textrm{ if } 0<\rho<r\,, \\
  \:{\displaystyle-\frac{\ii}{\,c\,W_{\alpha,\lambda,c}^{(D)}}}\,F_{\alpha,\lambda,c}^{(D)}(r) \otimes_{\mathbb{C}^2} \overline{ \Phi_{\alpha,\lambda,c}^{(D)}(\rho)} & \textrm{ if } 0<r<\rho\,.
  \end{cases}
 \]
 Plugging in the explicit value of $W_{\alpha,\lambda,c}^{(D)}$ above, one recognises that the above integral kernel is precisely the integral kernel $G_{\alpha,\lambda,c}^{(D)}(r,\rho)$ defined in \eqref{eq:GreenF}. Therefore,
 \[
  u_{\textrm{part}}\,=\,R_{G_{\alpha,\lambda,c}^{(D)}} g\,,
 \]
 and \eqref{eq:invertsDiffProbl} is proved.
 \end{proof}

 \begin{lemma}\label{lem:Asympt0RanRG}
For every $g \in L^2(\mathbb{R}^+;\mathbb{C}^2)$ one has
	\begin{equation}\label{eq:Asymtp0RanRG}
		(R_{G_{\alpha,\lambda,c}^{(D)}} g)(r) \, \stackrel{(r\downarrow 0)}{=} \, -\frac{ \lambda }{\,c^2}\,C_{\alpha,\lambda,c} \langle  \Phi_{\alpha,\lambda,c}^{(D)}, g \rangle_{L^2(\mathbb{R}^+;\mathbb{C}^2)} \begin{pmatrix} 0 \\ 1 \end{pmatrix} r^{\alpha}+o(r^{\frac{1}{2}})
	\end{equation}
 with $C_{\alpha,\lambda,c}$ defined in \eqref{eq:F0Def} above.
\end{lemma}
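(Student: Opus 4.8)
The plan is to make the action of $R_{G_{\alpha,\lambda,c}^{(D)}}$ explicit by splitting the kernel \eqref{eq:GreenF} at $\rho=r$. Abbreviating $\Phi\equiv\Phi_{\alpha,\lambda,c}^{(D)}$ and $F\equiv F_{\alpha,\lambda,c}^{(D)}$, and setting $P(r):=\int_0^r\langle F(\rho),g(\rho)\rangle_{\mathbb{C}^2}\,\ud\rho$ and $Q(r):=\int_r^{+\infty}\langle\Phi(\rho),g(\rho)\rangle_{\mathbb{C}^2}\,\ud\rho$, one obtains
\[
 (R_{G_{\alpha,\lambda,c}^{(D)}}\,g)(r)\,=\,-\frac{\lambda}{c^2}\big(\Phi(r)\,P(r)+F(r)\,Q(r)\big)\,.
\]
Since $\Phi\in L^2(\mathbb{R}^+;\mathbb{C}^2)$ by Lemma \ref{lem:asymptoticsPhiF} and $g\in L^2(\mathbb{R}^+;\mathbb{C}^2)$, the integrand $\langle\Phi(\cdot),g(\cdot)\rangle_{\mathbb{C}^2}$ is in $L^1(\mathbb{R}^+)$, whence $Q(r)\to\langle\Phi,g\rangle_{L^2}$ as $r\downarrow 0$; I would write $Q(r)=\langle\Phi,g\rangle_{L^2}-\widetilde R(r)$ with $\widetilde R(r):=\int_0^r\langle\Phi,g\rangle_{\mathbb{C}^2}$. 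The leading asymptotics is then carried by $-\frac{\lambda}{c^2}F(r)\langle\Phi,g\rangle_{L^2}$: substituting the expansion \eqref{eq:AsymptZeroFLong} of $F$ at the origin reproduces exactly the claimed term $-\frac{\lambda}{c^2}C_{\alpha,\lambda,c}\langle\Phi,g\rangle_{L^2}\begin{pmatrix}0\\1\end{pmatrix}r^\alpha$ up to an $O(r^{\alpha+1})=o(r^{\frac12})$ error.

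It then remains to show that the three leftover pieces, namely $\Phi(r)P(r)$ and $-F(r)\widetilde R(r)$, sum to $o(r^{\frac12})$. The basic tool is the Cauchy--Schwarz inequality together with the short-distance bounds $\|F\|_{L^2((0,r))}=O(r^{\alpha+\frac12})$ and $\|\Phi\|_{L^2((0,r))}=O(r^{\frac12-|\alpha|})$ read off from \eqref{eq:AsymptZeroPhi}--\eqref{eq:AsymptZeroFLong}, which give $P(r)=o(r^{\alpha+\frac12})$ and $\widetilde R(r)=o(r^{\frac12-|\alpha|})$ (using $\|g\|_{L^2((0,r))}=o(1)$). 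The \emph{first} components are immediate: with $\Phi_1=O(r^{-\alpha})$ and $F_1=O(r^{\alpha+1})$ one gets $\Phi_1 P=o(r^{\frac12})$ and $F_1\widetilde R=o(r^{\frac12})$, consistently with the absence of an $r^{-\alpha}$ term in \eqref{eq:Asymtp0RanRG}.

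The delicate point, and the main obstacle, is the \emph{second} component when $\alpha<0$: there each of $\Phi_2(r)P(r)$ and $F_2(r)\widetilde R(r)$ is only $o(r^{2\alpha+\frac12})$, which is \emph{not} $o(r^{\frac12})$, so the bound can hold only by cancellation of these intermediate-order contributions. To exhibit it, I would split $\Phi_2=B_{\alpha,\lambda,c}\,r^\alpha+\phi$ and $F_2=C_{\alpha,\lambda,c}\,r^\alpha+f$ with $\phi,f$ of strictly higher order, so that the second-component remainder reads $r^\alpha\big(B_{\alpha,\lambda,c}P-C_{\alpha,\lambda,c}\widetilde R\big)+\phi P-f\widetilde R$. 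The products $\phi P$ and $f\widetilde R$ are harmless ($o(r^{\frac12})$ by the same power counting). For the remaining term the crucial observation is that $B_{\alpha,\lambda,c},C_{\alpha,\lambda,c}\in\mathbb{R}$ — because $K_{\alpha\pm\frac12}$ and $I_{\alpha\pm\frac12}$ are real-valued (Lemma \ref{lem:asymptoticsPhiF}) — so that $B_{\alpha,\lambda,c}P-C_{\alpha,\lambda,c}\widetilde R=\int_0^r\langle B_{\alpha,\lambda,c}F-C_{\alpha,\lambda,c}\Phi,\,g\rangle_{\mathbb{C}^2}$ and, in the vector $B_{\alpha,\lambda,c}F-C_{\alpha,\lambda,c}\Phi$, the $r^\alpha$ term of the second component cancels identically. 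Hence $B_{\alpha,\lambda,c}F-C_{\alpha,\lambda,c}\Phi$ is dominated near zero by its first component $\sim-C_{\alpha,\lambda,c}A_{\alpha,\lambda,c}\,r^{-\alpha}$, giving $\|B_{\alpha,\lambda,c}F-C_{\alpha,\lambda,c}\Phi\|_{L^2((0,r))}=O(r^{\frac12-\alpha})$ and, by Cauchy--Schwarz, $r^\alpha\big(B_{\alpha,\lambda,c}P-C_{\alpha,\lambda,c}\widetilde R\big)=r^\alpha\,O(r^{\frac12-\alpha})\,o(1)=o(r^{\frac12})$. Collecting the estimates yields \eqref{eq:Asymtp0RanRG}; the same computation with $g=\Phi_{\alpha,\lambda,c}^{(D)}$ gives \eqref{eq:AsintoticaPsiZero} of Proposition \ref{prop:hD}.
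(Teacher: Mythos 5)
Your proposal is correct and follows essentially the same route as the paper's proof: split the Green kernel at $\rho=r$, add and subtract the full-line inner product so that $-\frac{\lambda}{c^2}F_{\alpha,\lambda,c}^{(D)}(r)\,\langle \Phi_{\alpha,\lambda,c}^{(D)},g\rangle_{L^2(\mathbb{R}^+;\mathbb{C}^2)}$ carries the leading $r^{\alpha}$ term, then control the remainder by Cauchy--Schwarz, with crude power counting sufficing for $\alpha\geqslant 0$ and the exact cancellation of the order-$r^{2\alpha+\frac{1}{2}}$ contributions needed for negative $\alpha$. Your packaging of that cancellation as a single integral against $B_{\alpha,\lambda,c}F_{\alpha,\lambda,c}^{(D)}-C_{\alpha,\lambda,c}\Phi_{\alpha,\lambda,c}^{(D)}$ (legitimate since $B_{\alpha,\lambda,c}$ and $C_{\alpha,\lambda,c}$ are real) is merely a tidier reorganization of the paper's term-by-term comparison of $\Phi_{\alpha,\lambda,c}^{(D)}(r)\langle F_{\alpha,\lambda,c}^{(D)},g\rangle_{L^2((0,r);\mathbb{C}^2)}$ and $F_{\alpha,\lambda,c}^{(D)}(r)\langle \Phi_{\alpha,\lambda,c}^{(D)},g\rangle_{L^2((0,r);\mathbb{C}^2)}$, not a different argument.
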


\begin{proof}
 From \eqref{eq:GreenF} we compute
\begin{equation*}
	\begin{split}
	\frac{\,c^2}{\lambda} (R_{G_{\alpha,\lambda,c}^{(D)}} g)(r) \,&=\, - F_{\alpha,\lambda,c}^{(D)}(r) \langle  \Phi_{\alpha,\lambda,c}^{(D)}, g \rangle_{L^2((r,+\infty);\mathbb{C}^2)}-\Phi_{\alpha,\lambda,c}^{(D)} (r) \langle  F_{\alpha,\lambda,c}^{(D)},  g \rangle_{L^2((0,r);\mathbb{C}^2)} \\
	&=\,-F_{\alpha,\lambda,c}^{(D)}(r) \langle  \Phi_{\alpha,\lambda,c}^{(D)},g \rangle_{L^2(\mathbb{R}^+;\mathbb{C}^2)} - \Phi_{\alpha,\lambda,c}^{(D)} (r) \langle F_{\alpha,\lambda,c}^{(D)}, g \rangle_{L^2((0,r);\mathbb{C}^2)}\\& \qquad +F_{\alpha,\lambda,c}^{(D)}(r) \langle \Phi_{\alpha,\lambda,c}^{(D)}, g \rangle_{L^2((0,r);\mathbb{C}^2)} \,.
	\end{split}
\end{equation*}
 By means of the asymptotics \eqref{eq:AsymptZeroFLong} for $F_{\alpha,\lambda,c}^{(D)}$, we re-write
 \[
  \frac{\,c^2}{\lambda} (R_{G_{\alpha,\lambda,c}^{(D)}} g)(r) \,=\,-C_{\alpha,\lambda,c} \langle  \Phi_{\alpha,\lambda,c}^{(D)},  g \rangle_{L^2(\mathbb{R}^+;\mathbb{C}^2)} \begin{pmatrix} 0 \\ 1 \end{pmatrix} r^{\alpha}+ \mathcal{R}(r)
 \]
 with
 \[
  \begin{split}
		\mathcal{R}(r)\,&:=\,O(r^{2+\alpha}) \langle \Phi_{\alpha,\lambda,c}^{(D)},g \rangle_{L^2(\mathbb{R}^+;\mathbb{C}^2)}- \Phi_{\alpha,\lambda,c}^{(D)} (r) \langle F_{\alpha,\lambda,c}^{(D)}, g \rangle_{L^2((0,r);\mathbb{C}^2)}\\& \qquad +F_{\alpha,\lambda,c}^{(D)}(r) \langle \Phi_{\alpha,\lambda,c}^{(D)}, g \rangle_{L^2((0,r);\mathbb{C}^2)} \, .
	\end{split}
 \]
 The goal now is to prove that $\mathcal{R}(r)\stackrel{(r\downarrow 0)}{=}o(r^{\frac{1}{2}})$.

 In fact, obviously the first summand in the right-hand side above is $O(r^{2+\alpha})=o(r^{\frac{1}{2}})$, so we are left with proving
 \begin{equation*}\tag{*}\label{eq:Proof-Conti1}
	\Phi_{\alpha,\lambda,c}^{(D)} (r) \langle F_{\alpha,\lambda,c}^{(D)}, g \rangle_{L^2((0,r);\mathbb{C}^2)}-F_{\alpha,\lambda,c}^{(D)}(r) \langle \Phi_{\alpha,\lambda,c}^{(D)},  g \rangle_{L^2((0,r);\mathbb{C}^2)} \,=\, o(r^{\frac{1}{2}}) \, .
\end{equation*}

 Depending on the sign of $\alpha$, \eqref{eq:Proof-Conti1} is established as follows. For $\alpha > 0$ we estimate separately
\begin{equation*}
		\begin{split}
			\big|\Phi_{\alpha,\lambda,c}^{(D)}(r) \langle F_{\alpha,\lambda,c}^{(D)},  g \rangle_{L^2((0,r);\mathbb{C}^2)}\big|\, & \leqslant\, \big|\Phi_{\alpha,\lambda,c}^{(D)}(r)\big| \, \big\Vert F_{\alpha,\lambda,c}^{(D)} \big\Vert_{L^2((0,r);\mathbb{C}^2} \Vert\, g \Vert_{L^2((0,r);\mathbb{C}^2)} \\
			& \lesssim\, r^{-\alpha} r^{\alpha+\frac{1}{2}} o(1) \,=\, o(r^{\frac{1}{2}})
		\end{split}
\end{equation*}
	and
\begin{equation*}
		\begin{split}
			| F_{\alpha,\lambda,c}^{(D)}(r) \langle \Phi_{\alpha,\lambda,c}^{(D)}, g \rangle_{L^2((0,r);\mathbb{C}^2)} |& \leq |F_{\alpha,\lambda,c}^{(D)}(r)| \Vert \Phi_{\alpha,\lambda,c}^{(D)} \Vert_{L^2((0,r);\mathbb{C}^2)}  \Vert g \Vert_{L^2((0,r);\mathbb{C}^2)} \\
			&\lesssim r^{\alpha} r^{-\alpha+ \frac{1}{2}} o(1) =o(r^{\frac{1}{2}}) \,,
		\end{split}
\end{equation*}
 based on the short-distance asymptotics \eqref{eq:AsymptZeroPhi}-\eqref{eq:AsymptZeroFLong}.

For $\alpha \leqslant 0$ it is crucial instead to exploit the exact cancellation of the leading singularities of each summand in the left-hand side of \eqref{eq:Proof-Conti1}. In terms of \eqref{eq:AsymptZeroPhi}-\eqref{eq:Phi1Def} we re-write
	\begin{equation*}
		\begin{split}
			\Phi_{\alpha,\lambda,c}^{(D)}(r)& \langle F_{\alpha,\lambda,c}^{(D)}, g \rangle_{L^2((0,r);\mathbb{C}^2) } \\
			&=\, \bigg(\Phi_{\alpha,\lambda,c}^{(D)}(r)-\begin{pmatrix} 1 \\ 0 \end{pmatrix} A_{\alpha,\lambda,c}  r^{-\alpha} - \begin{pmatrix} 0 \\ 1 \end{pmatrix} B_{\alpha,\lambda,c}  r^{\alpha}\bigg) \langle F_{\alpha,\lambda,c}^{(D)}, g \rangle_{L^2((0,r);\mathbb{C}^2)} \\
			&\qquad +\bigg(\begin{pmatrix} 1 \\ 0 \end{pmatrix} A_{\alpha,\lambda,c}  r^{-\alpha} + \begin{pmatrix} 0 \\ 1 \end{pmatrix} B_{\alpha,\lambda,c}  r^{\alpha}\bigg) \langle F_{\alpha,\lambda,c}^{(D)}, g \rangle_{L^2((0,r);\mathbb{C}^2)} \, .
		\end{split}
	\end{equation*}
	The first summand in the right-hand side above was singled out because it is subleading: indeed, for small positive $r$ the boundedness of the scalar product $\langle F_{\alpha,\lambda,c}^{(D)}, g \rangle_{L^2((0,r);\mathbb{C}^2)}$ and the asymptotics \eqref{eq:AsymptZeroPhi} yield
	\[
	 \begin{split}
	  \bigg(\Phi_{\alpha,\lambda,c}^{(D)}(r)-&\begin{pmatrix} 1 \\ 0 \end{pmatrix} A_{\alpha,\lambda,c}  r^{-\alpha} - \begin{pmatrix} 0 \\ 1 \end{pmatrix} B_{\alpha,\lambda,c}  r^{\alpha}\bigg) \langle F_{\alpha,\lambda,c}^{(D)}, g \rangle_{L^2((0,r);\mathbb{C}^2)} \\
	  &=\,O(r^{1-|\alpha|})\,=\,o(r^{\frac{1}{2}})\,.
	 \end{split}
	\]
  On the other hand, the asymptotics \eqref{eq:AsymptZeroFLong} for $F_{\alpha,\lambda,c}^{(D)}$ yield
  \begin{equation*}
   \begin{split}
    \bigg(\begin{pmatrix} 1 \\ 0 \end{pmatrix}& A_{\alpha,\lambda,c}  r^{-\alpha} + \begin{pmatrix} 0 \\ 1 \end{pmatrix} B_{\alpha,\lambda,c}  r^{\alpha}\bigg) \langle F_{\alpha,\lambda,c}^{(D)}, g \rangle_{L^2((0,r);\mathbb{C}^2)} \\
    &=\,\bigg(\begin{pmatrix} 1 \\ 0 \end{pmatrix} A_{\alpha,\lambda,c}  r^{-\alpha} + \begin{pmatrix} 0 \\ 1 \end{pmatrix} B_{\alpha,\lambda,c}  r^{\alpha}\bigg)C_{\alpha,\lambda,c}\bigg\langle \begin{pmatrix} 0 \\ 1 \end{pmatrix} \rho^\alpha,g\bigg\rangle_{L^2((0,r);\mathbb{C}^2)} \\
    & \qquad + \bigg(\begin{pmatrix} 1 \\ 0 \end{pmatrix} A_{\alpha,\lambda,c}  r^{-\alpha} + \begin{pmatrix} 0 \\ 1 \end{pmatrix} B_{\alpha,\lambda,c}  r^{\alpha}\bigg)\langle O(\rho^{2+\alpha}), g \rangle_{L^2((0,r);\mathbb{C}^2)}\,.
   \end{split}
  \end{equation*}
 Since $|\langle O(\rho^{2+\alpha}), g \rangle_{L^2((0,r);\mathbb{C}^2)}| \lesssim r^{\frac{5}{2}+\alpha} \Vert g \Vert_{L^2((0,r);\mathbb{C}^2)}=o(r^{\frac{5}{2}+\alpha})$ by the Cauchy-Schwarz inequality,
    \begin{equation*}
   \begin{split}
    \bigg(\begin{pmatrix} 1 \\ 0 \end{pmatrix}& A_{\alpha,\lambda,c}  r^{-\alpha} + \begin{pmatrix} 0 \\ 1 \end{pmatrix} B_{\alpha,\lambda,c}  r^{\alpha}\bigg) \langle F_{\alpha,\lambda,c}^{(D)}, g \rangle_{L^2((0,r);\mathbb{C}^2)} \\
    &=\,\bigg(\begin{pmatrix} 1 \\ 0 \end{pmatrix} A_{\alpha,\lambda,c}  r^{-\alpha} + \begin{pmatrix} 0 \\ 1 \end{pmatrix} B_{\alpha,\lambda,c}  r^{\alpha}\bigg)C_{\alpha,\lambda,c}\bigg\langle \begin{pmatrix} 0 \\ 1 \end{pmatrix} \rho^\alpha,g\bigg\rangle_{L^2((0,r);\mathbb{C}^2)} + o(r^{\frac{1}{2}})\,.
   \end{split}
  \end{equation*}
  Using again the asymptotics \eqref{eq:AsymptZeroFLong} for $F_{\alpha,\lambda,c}^{(D)}$ and the Cauchy-Schwarz inequality now yields
  \[
   \bigg| \begin{pmatrix} 1 \\ 0 \end{pmatrix} A_{\alpha,\lambda,c} \, r^{-\alpha}\,C_{\alpha,\lambda,c}\bigg\langle \begin{pmatrix} 0 \\ 1 \end{pmatrix} \rho^\alpha,g\bigg\rangle_{L^2((0,r);\mathbb{C}^2)}\bigg|\,\leqslant\,\big|A_{\alpha,\lambda,c}\,C_{\alpha,\lambda,c}\big| \,r^{-\alpha}\,r^{\alpha+\frac{1}{2}}\,o(1)\,=\,o(r^{\frac{1}{2}})\,.
  \]
 Summarising,
 \[
  \Phi_{\alpha,\lambda,c}^{(D)}(r) \langle F_{\alpha,\lambda,c}^{(D)}, g \rangle_{L^2((0,r);\mathbb{C}^2) }\,=\,
  \begin{pmatrix} 0 \\ 1 \end{pmatrix}B_{\alpha,\lambda,c}C_{\alpha,\lambda,c}\,r^{\alpha}\bigg\langle \begin{pmatrix} 0 \\ 1 \end{pmatrix} \rho^\alpha,g\bigg\rangle_{L^2((0,r);\mathbb{C}^2)}+o(r^{\frac{1}{2}})\,.
 \]
 With a completely analogous reasoning we find
  \[
  F_{\alpha,\lambda,c}^{(D)}(r) \langle \Phi_{\alpha,\lambda,c}^{(D)}, g \rangle_{L^2((0,r);\mathbb{C}^2)}\,=\,
  \begin{pmatrix} 0 \\ 1 \end{pmatrix}B_{\alpha,\lambda,c}C_{\alpha,\lambda,c}\,r^{\alpha}\bigg\langle \begin{pmatrix} 0 \\ 1 \end{pmatrix} \rho^\alpha,g\bigg\rangle_{L^2((0,r);\mathbb{C}^2)}+o(r^{\frac{1}{2}})\,.
 \]
 Plugging the latter two asymptotics into the left-hand side of \eqref{eq:Proof-Conti1}, one obtains the desired bound also when $\alpha\leqslant 0$.
\end{proof}

 \begin{proof}[Proof of Proposition \ref{prop:hD}]
  Part (i) is established in Lemma \ref{lem:BoundednessIntegralOperator}. Concerning part (ii), Lemma \ref{lem:invertsDiffProbl} establishes that
   \begin{equation*}
    \left[\begin{pmatrix}
	mc^2 & \ii c \Big( -\frac{\ud}{\ud r}+\frac{\alpha}{r}\Big) \\
	-\ii c \Big(\frac{\ud}{\ud r} + \frac{\alpha}{r} \Big) & -mc^2
	\end{pmatrix} +(\lambda-mc^2)\begin{pmatrix} 1 & 0 \\ 0 & 1 \end{pmatrix}\right] R_{G_{\alpha,\lambda,c}^{(D)}} g\,=\,g
   \end{equation*}
 for any $g\in L^2(\mathbb{R}^+;\mathbb{C}^2,\ud r)$, and therefore there is a self-adjoint extension $\mathcal{S}$ of $\mathsf{h}_{\alpha,0}$ such that
 \[
  (\mathcal{S}-mc^2+\lambda)\,R_{G_{\alpha,\lambda,c}^{(D)}} g\,=\,g\qquad \forall g\in L^2(\mathbb{R}^+;\mathbb{C}^2,\ud r)\,.
 \]
 Then, by self-adjointness, also
 \[
  R_{G_{\alpha,\lambda,c}^{(D)}} \,(\mathcal{S}-mc^2+\lambda)\,f\,=\,f\qquad \forall f\in \mathcal{D}(\mathcal{S})\,.
 \]
 Thus, $R_{G_{\alpha,\lambda,c}^{(D)}}=(\mathcal{S}-mc^2+\lambda)^{-1}$ and necessarily $\mathcal{S}=\mathsf{h}_{\alpha,0}^{(D)}$, the operator defined in \eqref{eq:defOfh0D}. Part (iii) is an equivalent rephrasing of part (ii). Concerning part (iv),
 the smoothness of  $\Psi_{\alpha,\lambda,c}^{(D)}$ follows from the smoothness of the integral kernel of $(\mathsf{h}_{\alpha,0}^{(D)}-mc^2+\lambda)^{-1}$ and the smoothness of $\Phi_{\alpha,\lambda,c}^{(D)}$, whereas the asymptotics \eqref{eq:AsintoticaPsiZero} follows at once from Lemma \ref{lem:Asympt0RanRG}.
 \end{proof}

\subsection{Proof of Theorem \ref{prop:Dirac}}\label{subsect:SD}~

 Part (i) of Theorem \ref{prop:Dirac} is established in Proposition \ref{prop:Closure}(i)-(ii).

 Part (ii) follows from Proposition \ref{prop:Closure}(iii).

 Concerning part (iii) of Theorem \ref{prop:Dirac}, any $g\in\mathcal{D}(\mathsf{h}_{\alpha,0}^*)=\mathcal{D}(\mathsf{h}_{\alpha,0}^*-mc^2+\lambda)$ has the form
 \begin{equation}\label{eq:ginDstar}
  g\,=\,\varphi +c_1 \Psi_{\alpha,\lambda,c}^{(D)} + c_0 \Phi_{\alpha,\lambda,c}^{(D)}
 \end{equation}
  for $g$-dependent $\varphi\in H^1_0(\mathbb{R}^+;\mathbb{C}^2)$ and $c_0,c_1\in\mathbb{C}$. This is a direct application of the general formula \eqref{eq:domainstar11}: the fact that $\mathcal{D}(\overline{\mathsf{h}_{\alpha,0}})=H^1_0(\mathbb{R}^+;\mathbb{C}^2)$ is proved in Proposition \ref{prop:Closure}(i), the fact that $\Phi_{\alpha,\lambda,c}^{(D)}$ spans $\ker(\mathsf{h}_{\alpha,0}^*-mc^2+\lambda)$ is proved in Proposition \ref{prop:KernelAdjoint}, and $\Psi_{\alpha,\lambda,c}^{(D)}$ is defined in \eqref{eq:DefPsi} as the result of the action on $\Phi_{\alpha,\lambda,c}^{(D)}$ of the bounded inverse $(\mathsf{h}_{\alpha,0}^{(D)}-mc^2+\lambda)^{-1}$ of the distinguished self-adjoint extension of $\mathsf{h}_{\alpha,0}-mc^2+\lambda$ produced in Proposition \ref{prop:hD}.

  The short-distance version of \eqref{eq:ginDstar} is obtained by means of the $r\downarrow 0$-asymptotics, respectively, of $\varphi$ (\eqref{eq:AsymptH1AtZero} from Proposition \ref{prop:Closure}), $\Phi_{\alpha,\lambda,c}^{(D)}$ (\eqref{eq:AsymptZeroPhi} from Lemma \ref{lem:asymptoticsPhiF}), and $\Psi_{\alpha,\lambda,c}^{(D)}$ (\eqref{eq:AsintoticaPsiZero} from Proposition \ref{prop:hD}(iv)). The net result is
\begin{equation}\label{eq:ShortRangeAdj}
	g(r)\,\equiv\, \begin{pmatrix}
		g_+(r) \\ g_-(r)
	\end{pmatrix}\,=\, \begin{pmatrix}
		1 \\
		0
	\end{pmatrix} g_0 r^{-\alpha} +
	\begin{pmatrix}
		0 \\
		1
	\end{pmatrix}g_1 r^{\alpha}
	+o(r^{\frac{1}{2}}) \, ,
\end{equation}
 where
 \begin{align}
  g_0\,&:=\,\lim_{r \to 0^+} r^\alpha g_+(r) =  c_0 A_{\alpha,\lambda,c} \, , \label{eq:g0pm} \\
  g_1\,&:=\,\lim_{r \to 0^+}r^{-\alpha} g_-(r) = c_0 B_{\alpha,\lambda,c}-c_1\frac{\lambda }{\,c^2 }C_{\alpha,\lambda,c} \| \Phi_{\alpha,\lambda,c}^{(D)} \|_{L^2(\mathbb{R}^+;\mathbb{C}^2)}^2   \, . \label{eq:g1pm}
 \end{align}
 This proves \eqref{eq:AsymptoticAdjoint} in Theorem \ref{prop:Dirac}(iii).

 By general facts of the Kre\u{\i}n-Vi\v{s}ik-Birman-Grubb self-adjoint extension theory (see Theorem \ref{thm:KVB-General-App11} and formula \eqref{eq:KVB-11} therein), the self-adjoint extensions of $\mathsf{h}_{\alpha,0}$ form the family $\big(\mathsf{h}_{\alpha,0,\beta}\big)_{\beta\in\mathbb{R}^+\cup\{\infty\}}$, where $\mathsf{h}_{\alpha,0,\infty}$ is precisely the distinguished extension $\mathsf{h}_{\alpha,0}^{(D)}$, whereas, for $\beta\in\mathbb{R}$,
 \begin{equation}\label{eq:halphb}
 \mathsf{h}_{\alpha,0,\beta}\,=\,\mathsf{h}_{\alpha,0}^*\upharpoonright\{g\in\mathcal{D}(\mathsf{h}_{\alpha,0}^*)\,|\,c_1=\beta c_0\}
 \end{equation}
 with respect to the notation of \eqref{eq:ginDstar}. Plugging the self-adjointness condition $c_1=\beta c_0$ into \eqref{eq:g0pm}-\eqref{eq:g1pm} yields the equivalent condition
 \begin{equation}\label{eq:g1gammag0}
  g_1\,=\,-\ii \gamma g_0\,,
 \end{equation}
 having set
 \begin{equation}\label{eq:GammaBeta}
	\gamma \;:=\;
	\frac{\,\beta\, \frac{\lambda}{\,c^2}\, C_{\alpha,\lambda,c}\,  \| \Phi_{\alpha,\lambda,c}^{(D)}\|_{L^2(\mathbb{R}^+;\mathbb{C}^2)}^2- B_{\alpha,\lambda,c}\,}{ \ii A_{\alpha,\lambda,c}}\,.
\end{equation}
 In practice, this amounts to re-labelling $\mathsf{h}_{\alpha,0,\beta}\equiv\mathsf{h}_{\alpha,0}^{(\gamma)}$ by means of the correspondence $\gamma=\gamma(\beta)$ given by \eqref{eq:GammaBeta}. Observe that $\gamma$ is real and monotone increasing with $\beta$. In particular, $\gamma=\infty$ if and only if $\beta=\infty$. So,
 \[
  \mathsf{h}_{\alpha,0}^{(D)}\,=\,\mathsf{h}_{\alpha,0,\infty}\,=\,\mathsf{h}_{\alpha,0}^{(\infty)}\,,
 \]
 and the notation $\mathsf{h}_{\alpha,0}^{(\infty)}$ shall be used henceforth.
 This establishes \eqref{eq:BoundarySelfAdjointDirac} and completes the proof of Theorem \ref{prop:Dirac}(iii).

 Concerning part (iv) of Theorem \ref{prop:Dirac}, the inclusion $\rho(\mathsf{h}_{\alpha,0}^{(\infty)}-mc^2)\supset (-2mc^2,0)$ and the identity \eqref{eq:ThmReshinf} are established in Proposition \ref{prop:hD}(iii). Again by general facts of the Kre\u{\i}n-Vi\v{s}ik-Birman-Grubb self-adjoint extension theory (see Theorem \ref{thm:KVB2-d1} and formula \eqref{eq:Res11KVB} therein),
 \begin{equation}\label{eq:idrestemp}
   (\mathsf{h}_{\alpha,0}^{(\gamma)}-mc^2+\lambda)^{-1} \,=\, (\mathsf{h}_{\alpha,0}^{(\infty)}-mc^2+\lambda)^{-1}+ \beta^{-1}  \Vert \Phi_{\alpha,\lambda,c}^{(D)} \Vert_{L^2(\mathbb{R}^+;\mathbb{C}^2)}^{-2}\big| \Phi^{(D)}_{\alpha,\lambda,c} \big\rangle \big\langle \Phi_{\alpha,\lambda,c}^{(D)} \big|\,.
  \end{equation}
  Since $\mathsf{h}_{\alpha,0}^{(\gamma)}-mc^2$ is a rank-one perturbation of $\mathsf{h}_{\alpha,0}^{(\infty)}-mc^2$, for $\gamma\neq \infty$ the resolvent set $\rho(\mathsf{h}_{\alpha,0}^{(\gamma)}-mc^2)$ contains $(-2mc^2,0)$ up to possibly an isolated eigenvalue in the gap.
 The identity \eqref{eq:idrestemp} is re-written as
 \begin{equation}\label{eq:resrepr}
   (\mathsf{h}_{\alpha,0}^{(\gamma)}-mc^2+\lambda)^{-1} \,=\, (\mathsf{h}_{\alpha,0}^{(\infty)}-mc^2+\lambda)^{-1}+ \tau\,\frac{\lambda}{c^2} \big| \Phi^{(D)}_{\alpha,\lambda,c} \big\rangle \big\langle \Phi_{\alpha,\lambda,c}^{(D)} \big|
  \end{equation}
  upon setting
 \begin{equation}
  \tau\,:=\,\Big(\beta \Vert \Phi_{\alpha,\lambda,c}^{(D)} \Vert_{L^2(\mathbb{R}^+;\mathbb{C}^2)}^2 \frac{\lambda}{c^2} \Big)^{-1}\,=\,\frac{C_{\alpha,\lambda,c}}{\,\gamma(\ii A_{\alpha,\lambda,c})+B_{\alpha,\lambda,c}\,} \,,
 \end{equation}
  and having used \eqref{eq:GammaBeta} in the second identity above. Using now the explicit expressions \eqref{eq:Phi0Def}-\eqref{eq:F0Def} for $A_{\alpha,\lambda,c}$, $B_{\alpha,\lambda,c}$, and $C_{\alpha,\lambda,c}$, and the identity
  \[
   \Gamma\Big(\frac{1}{2}+z\Big)\Gamma\Big(\frac{1}{2}-z\Big)\,=\, \pi \sec(\pi z)
  \]
  (see \cite[Eq. (6.1.17)]{Abramowitz-Stegun-1964}), one finds
  \begin{equation}
   \begin{split}
    \tau\,&=\,\frac{\Gamma(\frac{1}{2}+\alpha)^{-1}\, 2^{\frac{1}{2}-\alpha}  \,(\mu_{c,\lambda}\sqrt{\lambda})^{-\frac{1}{2}+\alpha} }{\,\gamma\big(2^{\alpha-\frac{1}{2}}\,c\,\mu_{c,\lambda}^2\,(\mu_{c,\lambda}\sqrt{\lambda})^{-\frac{3}{2}-\alpha}\, \Gamma(\frac{1}{2}+\alpha)\big)+ 2^{-\frac{1}{2}+\alpha} (\mu_{c,\lambda}\sqrt{\lambda})^{\alpha-\frac{1}{2}}\,\Gamma(\frac{1}{2}-\alpha)\,   } \\
    &=\,\frac{2}{\,\Gamma(\frac{1}{2}+\alpha)\,\big( \gamma\,4^\alpha\,c\, \mu_{c,\lambda}^2\,(\mu_{c,\lambda}\sqrt{\lambda})^{-1-2\alpha} \,\Gamma(\frac{1}{2}+\alpha  )+ \Gamma(\frac{1}{2}-\alpha) \big)\,} \\
    &=\,\frac{2}{\,\pi \sec(\pi \alpha)+4^\alpha \mu_{c,\lambda}^2 (\mu_{c,\lambda}\sqrt{\lambda})^{-1-2 \alpha} \,\Gamma^2(\frac12+\alpha) c \gamma\,}  \, .
   \end{split}
  \end{equation}
 This means that $\tau=\tau_{\alpha,\lambda,\gamma,c}^{(D)}$ as defined in \eqref{eq:TauDGamma}. Thus, \eqref{eq:resrepr} reproduces \eqref{eq:mainres}. This completes the proof of Theorem \ref{prop:Dirac}(iv).

\section{Block-wise self-adjoint extensions of the Schr\"{o}dinger-AB operator}\label{sec:Schroe}

 For completeness and self-consistency of presentation, we mirror in this Section the main findings of Section \ref{sec:Dirac}, now concerning the self-adjoint extensions in $L^2(\mathbb{R}^+;\mathbb{C},\ud r)$ of the operators $\mathsf{S}_{\alpha,k}$, $k\in\mathbb{Z}$, introduced in \eqref{eq:OperatoreS} and minimally defined on the domain $C^\infty_0(\mathbb{R}^+;\mathbb{C})$.

Here too, the discussion is modelled along the Kre\u{\i}n-Vi\v{s}ik-Birman-Grubb self-adjoint extension scheme. Thus, he conceptual path is precisely the same as the one of Section \ref{sec:Dirac} and Appendix \ref{sec:proofOfMain}: only the computations are updated accordingly. We opt therefore for quoting with a few elucidative remarks, instead of complete proofs, the main results relevant for the present discussion.

 One observes, as customary, that for the operator
 \[
  \mathsf{S}_{\alpha,k}\,:=\, \frac{1}{2m} \Big( -\frac{\ud^2}{\ud r^2}+\frac{(\alpha+k)(\alpha+k+1)}{r^2}\,\Big)\,,\qquad \mathcal{D}(\mathsf{S}_{\alpha,k})\,:=\,C^\infty_0(\mathbb{R}^+;\mathbb{C})\,,
 \]
 Hardy's inequality $\Vert u' \Vert_{L^2(\mathbb{R}^+;\mathbb{C})} \geqslant \frac{1}{2} \Vert r^{-1} u \Vert_{L^2(\mathbb{R}^+;\mathbb{C})}$ and the bound $(\alpha+k)(\alpha+k-1) \geqslant -\frac{1}{4}$ imply
 \begin{equation}
	\langle  u, \mathsf{S}_{\alpha,k} u \rangle_{L^2(\mathbb{R}^+;\mathbb{C})} \,\geqslant\, 0 \,, \qquad \forall u \in C^\infty_0(\mathbb{R}^+;\mathbb{C}) \, ,
\end{equation}
 meaning that $\mathsf{S}_{\alpha,k}$ is non-negative. Any shift $\lambda>0$ makes therefore $\mathsf{S}_{\alpha,k}+\lambda$ a lower semi-bounded operator \emph{with strictly positive lower bound}, making it treatable within the Kre\u{\i}n-Vi\v{s}ik-Birman scheme using the Friedrichs extension as reference extension with everywhere defined and bounded inverse in $L^2(\mathbb{R}^+;\mathbb{C},\ud r)$. The actual extensions for $\mathsf{S}_{\alpha,k}$ are then read out by removing the $\lambda$-shift.

 Analogously to Proposition \ref{prop:Closure}(i), the operator closure is characterised following \cite[Theorems~4.1 and 6.1]{Derezinski-Georgescu-2021}:

 \begin{proposition} For $\alpha\in(-\frac{1}{2},\frac{1}{2})$ and $k\in\mathbb{Z}$, one has $\mathcal{D}(\overline{\mathsf{S}_{\alpha,k}})=H^2_0(\mathbb{R}^+;\mathbb{C})$.
 \end{proposition}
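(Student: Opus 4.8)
The plan is to mirror the strategy of Proposition \ref{prop:Closure}(i): identify the closure as the $H^2$-completion of $C^\infty_0(\mathbb{R}^+;\mathbb{C})$ by proving that the graph norm of $\mathsf{S}_{\alpha,k}$ is equivalent to the $H^2$-norm on the minimal domain. First I would absorb the inessential factor $\frac{1}{2m}$ and recast $\mathsf{S}_{\alpha,k}$ as the Bessel-type (homogeneous) operator $-\frac{\ud^2}{\ud r^2}+\frac{\nu^2-\frac14}{r^2}$ with $\nu:=\alpha+k+\frac12$, using the algebraic identity $(\alpha+k)(\alpha+k+1)=\nu^2-\frac14$. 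Since $\alpha\in(-\frac12,\frac12)$, one has $\nu\in(k,k+1)$, so $\nu\notin\mathbb{Z}$ for every $k$; this is precisely the regime in which the homogeneous operators of \cite{Derezinski-Georgescu-2021} have their minimal domain identified with a concrete weighted Sobolev space. The cleanest route is then to invoke \cite[Theorems~4.1 and 6.1]{Derezinski-Georgescu-2021} and to check that, for $\nu\notin\mathbb{Z}$, the abstract minimal-domain characterisation reduces to $H^2_0(\mathbb{R}^+;\mathbb{C})$ in the sense recalled in the Notation (vanishing of both $\varphi$ and $\varphi'$ at the origin).

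For a self-contained derivation paralleling the Dirac case, I would instead establish directly the norm equivalence $\|\mathsf{S}_{\alpha,k}u\|_{L^2}\sim\|u\|_{H^2}$ for $u\in C^\infty_0(\mathbb{R}^+;\mathbb{C})$. The upper bound follows from the second-order Hardy (Rellich) inequality $\|r^{-2}u\|_{L^2}\leqslant\frac43\|u''\|_{L^2}$ --- obtained by one integration by parts of $\int r^{-4}|u|^2$ followed by Cauchy--Schwarz and the first-order Hardy inequality --- which gives $\|\mathsf{S}_{\alpha,k}u\|_{L^2}\lesssim\|u''\|_{L^2}\lesssim\|u\|_{H^2}$. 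For the lower bound I would expand, via integration by parts on compactly supported $u$,
\begin{equation*}
 4m^2\,\|\mathsf{S}_{\alpha,k}u\|_{L^2}^2\,=\,\|u''\|_{L^2}^2+(g^2-6g)\,\|r^{-2}u\|_{L^2}^2+2g\,\|r^{-1}u'\|_{L^2}^2\,,\qquad g:=\nu^2-\tfrac14\,,
\end{equation*}
and control $\|u''\|_{L^2}$ from below by combining this identity with Hardy's inequality and the factorisation $2m\,\mathsf{S}_{\alpha,k}=A^*A$, $A=-\frac{\ud}{\ud r}+\frac{\alpha+k+1}{r}$, which yields $\|u'\|_{L^2}^2+g\,\|r^{-1}u\|_{L^2}^2=2m\,\langle u,\mathsf{S}_{\alpha,k}u\rangle_{L^2}$ and hence $\|u'\|_{L^2}$ under control since $g>-\frac14$. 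Once the equivalence $\|\cdot\|_{\mathsf{S}_{\alpha,k}}\sim\|\cdot\|_{H^2}$ is in place, the closure is computed exactly as in the proof of Proposition \ref{prop:Closure}(i): $\mathcal{D}(\overline{\mathsf{S}_{\alpha,k}})=\overline{C^\infty_0(\mathbb{R}^+;\mathbb{C})}^{\,\|\cdot\|_{\mathsf{S}_{\alpha,k}}}=\overline{C^\infty_0(\mathbb{R}^+;\mathbb{C})}^{\,\|\cdot\|_{H^2}}=H^2_0(\mathbb{R}^+;\mathbb{C})$.

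I expect the lower bound in the self-contained route to be the main obstacle: in the identity above the sign of the cross-terms is unfavourable when $0<g<6$ (e.g. the block $k=1$ and part of $k=0,2$), and the crude Rellich constant $\frac43$ is too lossy to absorb $\|u''\|_{L^2}$ back into $\|\mathsf{S}_{\alpha,k}u\|_{L^2}$. Closing the estimate there requires the \emph{sharp} weighted inequality adapted to the inverse-square potential rather than the generic Hardy/Rellich constants, and this is exactly the delicate point that the homogeneous-operator machinery of \cite{Derezinski-Georgescu-2021} resolves. The borderline blocks $k=-1,0$ (where $|\nu|<1$, hence the limit-circle case and deficiency indices $(1,1)$) are consistent with the answer: requiring $u=O(r^{3/2})$ as in $H^2_0$ kills both indicial behaviours $r^{\frac12\pm\nu}$, which is exactly the minimal (non-self-adjoint) closed operator. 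For these reasons I would in practice adopt the citation route, reserving the norm-equivalence computation as a sanity check.
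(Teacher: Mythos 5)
Your adopted route coincides with the paper's: the paper offers no computational proof of this proposition, but simply characterises the closure ``following \cite[Theorems~4.1 and 6.1]{Derezinski-Georgescu-2021}'', which is exactly the citation you invoke, and your reduction to the homogeneous Bessel-type operator with $\nu=\alpha+k+\frac12\in(k,k+1)$, hence $\nu\notin\mathbb{Z}$, is the correct way to place $\mathsf{S}_{\alpha,k}$ in that framework. Your supplementary self-contained norm-equivalence sketch goes beyond what the paper attempts, and you correctly flag its genuine obstruction (the unfavourable sign of the cross terms for intermediate values of $g=\nu^2-\tfrac14$ when only the generic Hardy/Rellich constants are used) while not relying on it, so the argument you actually adopt is sound and matches the paper.
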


 For $\lambda>0$, we introduce the $\mathbb{R}^+\to\mathbb{R}$ functions
 \begin{align}\label{Phi-F-Schr-gen}
	\Phi_{\alpha,\lambda,k}^{(S)}(r)\,&:=\, \sqrt{r} K_{\alpha+k+\frac12}(r\sqrt{2m \lambda} ) \, , \\
	F_{\alpha,\lambda,k}^{(S)}(r)\,&:=\, \sqrt{r} I_{\alpha+k+\frac12}(r\sqrt{2m \lambda} ) \,.
\end{align}
 Let us also introduce the short-hand notation
  \begin{align}\label{Phi-F-Schr-0}
	\Phi_{\alpha,\lambda}^{(S)}\,&:=\, \Phi_{\alpha,\lambda,0}^{(S)} \, , \\
	F_{\alpha,\lambda}^{(S)}\,&:=\, F_{\alpha,\lambda,0}^{(S)} \,.
\end{align}

 Analogous to Lemma \ref{lem:asymptoticsPhiF} one here has:

 \begin{lemma}\label{lem:asymPhiF-SCHROE}
  Let $\alpha\in(-\frac{1}{2},\frac{1}{2})$, $k\in\mathbb{Z}$, and $\lambda>0$. Then $\Phi_{\alpha,\lambda,k}^{(S)}$ and $F_{\alpha,\lambda,k}^{(S)}$ are smooth and with asymptotics
  \begin{align}\label{eq:AsintoticaPhiSSZero}
	\Phi_{\alpha,\lambda,k}^{(S)}(r)\,&\stackrel{(r\downarrow 0)}{=}  \, A_{\alpha,\lambda,k}^{(S)} r^{-(\alpha+k)}+B_{\alpha,\lambda,k}^{(S)} r^{1+\alpha+k} + O(\max\{r^{3+k+\alpha},r^{2-k-\alpha}\})\,, \\
	F_{\alpha,\lambda,k}^{(S)}(r)\,&\stackrel{(r\downarrow 0)}{=}  \, C_{\alpha,\lambda,k}^{(S)} r^{1+\alpha+k}+O(r^{2+\alpha+k})\,,
\end{align}
where
\begin{align}
	A_{\alpha,\lambda,k}^{(S)}\,&:=\,2^{\frac{\alpha }{2}+\frac{k}{2}-\frac{3}{4}}(m \lambda )^{-\frac{\alpha
   }{2}-\frac{k}{2}-\frac{1}{4}} \Gamma \left({\textstyle k+\alpha +\frac{1}{2}}\right)  \, ,\\
	B_{\alpha,\lambda,k}^{(S)}\,&:=\,2^{-\frac{\alpha }{2}-\frac{k}{2}-\frac{5}{4}} (m \lambda )^{\frac{\alpha}{2}+\frac{k}{2}+\frac{1}{4}} \Gamma \left({\textstyle-k-\alpha -\frac{1}{2}}\right)  \, , \\
	C_{\alpha,\lambda,k}^{(S)}\,&:=\,\frac{2^{-\frac{\alpha }{2}-\frac{k}{2}-\frac{1}{4}} (\lambda m)^{\frac{\alpha }{2}+\frac{k}{2}+\frac{1}{4}}}{\Gamma \left(k+\alpha +\frac{3}{2}\right)}\,.
\end{align}
 \end{lemma}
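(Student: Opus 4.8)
The plan is to read off both expansions directly from the small-argument behaviour of the modified Bessel functions, exactly as in the relativistic analog (Lemma~\ref{lem:asymptoticsPhiF}), the only change being the order involved. Write $\nu := \alpha+k+\frac12$ and $z := r\sqrt{2m\lambda}$. Since $\alpha\in(-\frac12,\frac12)$ forces $\nu\in(k,k+1)$, the order $\nu$ is never an integer, so the representation $K_\nu(z) = \frac{\pi}{2\sin(\pi\nu)}\big(I_{-\nu}(z)-I_\nu(z)\big)$ from \cite[(9.6.2)]{Abramowitz-Stegun-1964} is available throughout. Smoothness of $\Phi^{(S)}_{\alpha,\lambda,k}$ and $F^{(S)}_{\alpha,\lambda,k}$ on $\mathbb{R}^+$ is immediate, since $I_\nu$ and $K_\nu$ are real-analytic on $(0,+\infty)$ and $\sqrt{r}$ is smooth there.

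For $F^{(S)}_{\alpha,\lambda,k}$, I would insert the convergent series $I_\nu(z) = (z/2)^\nu\sum_{j\geq 0}\frac{(z^2/4)^j}{j!\,\Gamma(\nu+j+1)}$ from \cite[(9.6.10)]{Abramowitz-Stegun-1964} into $\sqrt{r}\,I_\nu(r\sqrt{2m\lambda})$. The $j=0$ term produces $r^{\nu+\frac12}=r^{1+\alpha+k}$, with coefficient obtained by collecting the powers of $2$, of $m\lambda$, and the factor $1/\Gamma(\nu+1)=1/\Gamma(k+\alpha+\frac32)$; this reproduces $C^{(S)}_{\alpha,\lambda,k}$. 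The $j=1$ term is $O(r^{\nu+\frac52})=O(r^{3+\alpha+k})$, which is in particular $O(r^{2+\alpha+k})$, giving the stated remainder.

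For $\Phi^{(S)}_{\alpha,\lambda,k}$, I would expand both $I_{-\nu}(z)$ and $I_\nu(z)$ to one order, so that $K_\nu(z)$ acquires the two leading contributions $(z/2)^{-\nu}/\Gamma(1-\nu)$ and $-(z/2)^\nu/\Gamma(1+\nu)$, each multiplied by $\frac{\pi}{2\sin(\pi\nu)}$. Using the reflection formula $\Gamma(\nu)\Gamma(1-\nu)=\pi/\sin(\pi\nu)$ \cite[(6.1.17)]{Abramowitz-Stegun-1964}, the first coefficient collapses to $\frac12\Gamma(\nu)$ and, via $\Gamma(1+\nu)\Gamma(-\nu)=-\pi/\sin(\pi\nu)$, the second to $\frac12\Gamma(-\nu)$. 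After multiplying by $\sqrt{r}$ and simplifying the $2$- and $(m\lambda)$-powers, these become $A^{(S)}_{\alpha,\lambda,k}\,r^{-(\alpha+k)}$ and $B^{(S)}_{\alpha,\lambda,k}\,r^{1+\alpha+k}$, with $\Gamma(\nu)=\Gamma(k+\alpha+\frac12)$ and $\Gamma(-\nu)=\Gamma(-k-\alpha-\frac12)$ matching the stated constants. The neglected terms come from the $j=1$ contributions to $I_{\mp\nu}$, namely $\sqrt{r}\,(z/2)^{-\nu+2}$ and $\sqrt{r}\,(z/2)^{\nu+2}$, scaling as $r^{2-k-\alpha}$ and $r^{3+k+\alpha}$; this yields the remainder $O(\max\{r^{3+k+\alpha},r^{2-k-\alpha}\})$.

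There is no substantial obstacle here, the computation being entirely parallel to that of Lemma~\ref{lem:asymptoticsPhiF}. The only points demanding care are the bookkeeping of the powers of $2$ and of $(m\lambda)^{\pm1}$ in each coefficient, and the correct use of the two variants of the reflection formula to reduce the $\pi/\sin(\pi\nu)$ prefactors of $K_\nu$ to clean Gamma-function expressions; getting the sign in $\Gamma(1+\nu)\Gamma(-\nu)=-\pi/\sin(\pi\nu)$ right is what fixes the constant $B^{(S)}_{\alpha,\lambda,k}$.
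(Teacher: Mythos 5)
Your proposal is correct and follows essentially the same route as the paper, which treats this lemma exactly like its Dirac counterpart (Lemma \ref{lem:asymptoticsPhiF}): a direct computation from the small-argument representations \cite[Eq.~(9.6.2) and (9.6.10)]{Abramowitz-Stegun-1964} of $K_\nu$ and $I_\nu$ with $\nu=\alpha+k+\frac{1}{2}$ non-integer, using the reflection formula to reduce the $\pi/\sin(\pi\nu)$ prefactors. Your bookkeeping of the powers of $2$ and $(m\lambda)^{\pm 1}$, the Gamma-function coefficients, and the remainder exponents $r^{3+k+\alpha}$, $r^{2-k-\alpha}$ all check out against \eqref{eq:AsintoticaPhiSSZero}.
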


 The counterpart to Propositions \ref{prop:Closure}(iii) and \ref{prop:KernelAdjoint} now reads:

 \begin{proposition}\label{prop:KernelAdjointSCHROE} Let $\alpha\in(-\frac{1}{2},\frac{1}{2})$, $k\in\mathbb{Z}$, and $\lambda>0$. Then
  \begin{align}
   \ker (\mathsf{S}_{\alpha,k}^*)\,&=\,\mathrm{span}_{\mathbb{C}}\big\{ \Phi_{\alpha,\lambda,k}^{(S)}\big\}\,, & & \textrm{ if }k\in\{-1,0\}\,, \\
   \ker (\mathsf{S}_{\alpha,k}^*)\,&=\,\{0\}\,,  & & \textrm{ if }k\in\mathbb{Z}\setminus\{-1,0\}\,.
  \end{align}
\end{proposition}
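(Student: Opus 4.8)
The plan is to follow verbatim the route of the proof of Proposition~\ref{prop:KernelAdjoint} in the Dirac case, now for the scalar second-order operator $\mathsf{S}_{\alpha,k}$. Since $\mathsf{S}_{\alpha,k}$ is non-negative, I would fix $\lambda>0$ and characterise the deficiency subspace $\ker(\mathsf{S}_{\alpha,k}^*+\lambda)$ at the point $-\lambda$ below the spectrum, which (exactly as in the Dirac analysis, and using that $\mathsf{S}_{\alpha,k}^*$ is maximally defined and acts with the same differential expression) is identified with the space of $L^2(\mathbb{R}^+;\mathbb{C},\ud r)$-solutions of the homogeneous equation $(\mathsf{S}_{\alpha,k}+\lambda)u=0$, i.e.
\[
 -u''+\frac{(\alpha+k)(\alpha+k+1)}{r^2}\,u+2m\lambda\,u\,=\,0\,.
\]

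First I would integrate this ODE explicitly. The substitution $u(r)=\sqrt{r}\,w(r\sqrt{2m\lambda})$ turns it into the modified Bessel equation of order $\nu_k:=\alpha+k+\tfrac12$, the key algebraic identity being $\nu_k^2-\tfrac14=(\alpha+k)(\alpha+k+1)$; its two-dimensional solution space is therefore spanned by $\Phi_{\alpha,\lambda,k}^{(S)}=\sqrt{r}\,K_{\nu_k}(r\sqrt{2m\lambda})$ and $F_{\alpha,\lambda,k}^{(S)}=\sqrt{r}\,I_{\nu_k}(r\sqrt{2m\lambda})$ as introduced in \eqref{Phi-F-Schr-gen}. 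The selection of the square-integrable solutions is then dictated by the two edges. At $r\to+\infty$, the function $F_{\alpha,\lambda,k}^{(S)}$ grows like $e^{r\sqrt{2m\lambda}}$ and is never square-integrable, whereas $\Phi_{\alpha,\lambda,k}^{(S)}$ decays exponentially and is always square-integrable there; hence any $L^2$-solution must be a scalar multiple of $\Phi_{\alpha,\lambda,k}^{(S)}$, and the deficiency subspace is at most one-dimensional.

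The decisive step is the behaviour at $r\downarrow 0$. By the short-distance asymptotics of Lemma~\ref{lem:asymPhiF-SCHROE}, the leading singular term of $\Phi_{\alpha,\lambda,k}^{(S)}$ is $r^{\frac12-|\nu_k|}$, which lies in $L^2$ near the origin if and only if $|\nu_k|<1$, that is $|\alpha+k+\tfrac12|<1$. I would then translate this condition for $\alpha\in(-\tfrac12,\tfrac12)$: it is equivalent to $-\tfrac32<\alpha+k<\tfrac12$, and since $\alpha+k$ ranges in $(k-\tfrac12,k+\tfrac12)$, the inequality holds precisely for $k=0$ (where $\nu_0=\alpha+\tfrac12\in(0,1)$) and for $k=-1$ (where $\nu_{-1}=\alpha-\tfrac12\in(-1,0)$), and fails for every other $k\in\mathbb{Z}$, in which case $\Phi_{\alpha,\lambda,k}^{(S)}\notin L^2$ near $0$ and the kernel is trivial. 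This yields the two cases of the Proposition, namely $\ker(\mathsf{S}_{\alpha,k}^*+\lambda)=\mathrm{span}_{\mathbb{C}}\{\Phi_{\alpha,\lambda,k}^{(S)}\}$ for $k\in\{-1,0\}$ and $\{0\}$ otherwise.

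The main obstacle is not the ODE integration, which is routine, but the clean handling of the borderline exponents at the origin: I would stress that the openness of the interval $\alpha\in(-\tfrac12,\tfrac12)$ guarantees both that $|\nu_k|\neq 1$ (so that no $r^{-1/2}$ term ever sits exactly at the $L^2$ threshold) and that $\nu_k\notin\mathbb{Z}$ (so that the $K_{\nu_k}$ asymptotics carry no logarithmic corrections and the two power behaviours $r^{\pm\nu_k}$ are genuinely distinct). This is precisely the limit-point versus limit-circle dichotomy at the origin that separates the non-trivial blocks $k\in\{-1,0\}$, where the deficiency indices are $(1,1)$, from all remaining blocks, where $\mathsf{S}_{\alpha,k}$ is essentially self-adjoint.
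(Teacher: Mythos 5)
Correct, and essentially the paper's own argument: you solve the shifted homogeneous problem $(\mathsf{S}_{\alpha,k}+\lambda)u=0$ by reduction to the modified Bessel equation of order $\nu_k=\alpha+k+\frac12$ (via $\nu_k^2-\frac14=(\alpha+k)(\alpha+k+1)$), discard $F_{\alpha,\lambda,k}^{(S)}$ through its exponential growth at infinity, and keep $\Phi_{\alpha,\lambda,k}^{(S)}=\sqrt{r}\,K_{\nu_k}(r\sqrt{2m\lambda})$ exactly when its $r^{\frac12-|\nu_k|}$ behaviour at the origin is square-integrable, i.e.\ $|\nu_k|<1$, which selects $k\in\{-1,0\}$ --- the same selection the paper performs through the short-distance asymptotics $\sqrt{r}\,K_{\alpha+k+\frac12}\sim\max\{r^{-k-\alpha},r^{1+k+\alpha}\}$, with your openness remark ($|\nu_k|\neq 1$, $\nu_k\notin\mathbb{Z}$) a careful and correct handling of the borderline cases. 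Your reading of the statement as characterising the deficiency subspace $\ker(\mathsf{S}_{\alpha,k}^*+\lambda)$ rather than the literal $\ker\mathsf{S}_{\alpha,k}^*$ is the intended one, since the paper's own sketch solves precisely the $\lambda$-shifted equation \eqref{eq:App3-KernelSol} and the spanning vector $\Phi_{\alpha,\lambda,k}^{(S)}$ depends on $\lambda$.
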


 This result
 follows from the search of square-integrable solutions to the homogeneous differential problem
  \begin{equation}\label{eq:App3-KernelSol}
	\Big( -\frac{\ud^2}{\ud r^2} + \frac{(\alpha+k)(\alpha+k+1)}{r^2} +\lambda \Big)u \,=\,0 \, ,
\end{equation}
  whose general solution has the form
  \begin{equation}
	u(r)\,=\,c_1 \sqrt{r} K_{\alpha+k+\frac{1}{2}}(r\sqrt{2m \lambda} ) + c_2 \sqrt{r} I_{\alpha+k+\frac{1}{2}}(r\sqrt{2m \lambda} ) \,.
\end{equation}
 Square-integrability is checked by means of the asymptotics \cite[Eq.~(9.6.2), (9.6.10), (9.7.1), (9.7.2)]{Abramowitz-Stegun-1964}: as $r\to +\infty$, $K_{\alpha+k+\frac{1}{2}}$ is exponentially dumped and $I_{\alpha+k+\frac{1}{2}}$ grows exponentially instead, whereas, as $r\downarrow 0$,
 \[
  K_{\alpha+k+\frac{1}{2}}(\sqrt{2 m \lambda} r) \,\sim\, \max\{r^{-k-\alpha}, r^{1+k+\alpha} \}\,.
 \]
 Thus, since $\alpha\in(-\frac{1}{2},\frac{1}{2})$, the function $K_{\alpha+k+\frac{1}{2}}$ is square-integrable also at the origin if and only if $k \in  \mathbb{Z} \setminus \{-1,0\}$.

 As argued in the introduction (Section \ref{sec:nonrelset}), out of the two non-trivial sectors $k=-1$ and $k=0$ our modelling only implements the $\infty^1$ extensions of the sector $k=0$, keeping the Friedrichs extension when $k=-1$. So, when $k=0$, the analogue of Proposition \ref{prop:hD} is the following.

 \begin{proposition}
  Let $\alpha\in(-\frac{1}{2},\frac{1}{2})$ and $\lambda>0$. Let $R_{G_{\alpha,\lambda}^{(S)}}$ be the integral operator acting on $\mathbb{R}^+\to\mathbb{C}$ functions with integral kernel  \begin{equation}
   G_{\alpha,\lambda}^{(S)}(r,\rho)\,:=\,2m\Big( \Phi_{\alpha,\lambda}^{(S)}(r)F_{\alpha,\lambda}^{(S)}(\rho)\mathbf{1}_{(0,r)}(\rho)+F_{\alpha,\lambda}^{(S)}(r)\Phi_{\alpha,\lambda}^{(S)}(\rho)\mathbf{1}_{(r,+\infty)}(\rho)\Big)\,,\quad r,\rho>0\,.
  \end{equation}
   \begin{enumerate}
    \item[(i)]   $R_{G_{\alpha,\lambda}^{(F)}}$  is everywhere defined, bounded, and self-adjoint in $L^2(\mathbb{R}^+;\mathbb{C},\ud r)$, it is invertible on its range, and $,(R_{G_{\alpha,\lambda}^{(S)}})^{-1}-\lambda$ is the Friedrichs extension of $\mathsf{S}_{\alpha,0}$, denoted directly with $\mathsf{S}_{\alpha,0}^{(\infty)}$. Thus,
  \begin{equation}
   (\mathsf{S}_{\alpha,0}^{(\infty)}+\lambda)^{-1}\,=\,R_{G_{\alpha,\lambda}^{(S)}}\,.
  \end{equation}
  \item[(ii)] The function
  \begin{equation}
   \Psi_{\alpha,\lambda}^{(S)}\,:=\, R_{G_{\alpha,\lambda}^{(S)}}\Phi_{\alpha,\lambda}^{(S)}\,=\,(\mathsf{S}_{\alpha,0}^{(\infty)}+\lambda)^{-1}\Phi_{\alpha,\lambda}^{(S)}
  \end{equation}
   satisfies
  \begin{equation}\label{eq:C17}
	\Psi_{\alpha,\lambda}^{(S)}(r)\,\stackrel{(r\downarrow 0)}{=}  \, 2m  C_{\alpha,\lambda,0} \Vert \Phi_{\alpha,\lambda}^{(S)} \Vert_{L^2(\mathbb{R}^+;\mathbb{C})}^2r^{1+\alpha} +o(r^{\frac{3}{2}}) \, .
\end{equation}
   \end{enumerate}
 \end{proposition}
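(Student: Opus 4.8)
The plan is to mirror, mutatis mutandis, the three-step analysis carried out for the Dirac block in Section \ref{sec:Dirac}, replacing the special functions $\Phi_{\alpha,\lambda,c}^{(D)},F_{\alpha,\lambda,c}^{(D)}$ by their Schr\"odinger counterparts $\Phi_{\alpha,\lambda}^{(S)},F_{\alpha,\lambda}^{(S)}$ and the short-distance asymptotics of Lemma \ref{lem:asymptoticsPhiF} by those of Lemma \ref{lem:asymPhiF-SCHROE}. For part (i) I would first establish the boundedness and self-adjointness of $R_{G_{\alpha,\lambda}^{(S)}}$ exactly as in Lemma \ref{lem:BoundednessIntegralOperator}: splitting the kernel $G^{(S)}_{\alpha,\lambda}(r,\rho)$ into the four pieces obtained by restricting $r,\rho$ to $(0,1)$ or $(1,+\infty)$, one bounds each piece through the large- and short-distance asymptotics of $\Phi^{(S)}_{\alpha,\lambda}$ and $F^{(S)}_{\alpha,\lambda}$; three of the four pieces are Hilbert--Schmidt (hence bounded), while the fourth is handled by a Schur test on the exponentially decaying off-diagonal kernel. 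Self-adjointness follows because $K_{\alpha+\frac12}$ and $I_{\alpha+\frac12}$ are real-valued, so that $G^{(S)}_{\alpha,\lambda}(\rho,r)=G^{(S)}_{\alpha,\lambda}(r,\rho)$.

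Next, the variation-of-constants argument of Lemma \ref{lem:invertsDiffProbl}, using the constant Wronskian of $F^{(S)}_{\alpha,\lambda}$ and $\Phi^{(S)}_{\alpha,\lambda}$ computed from their leading short-distance terms in Lemma \ref{lem:asymPhiF-SCHROE} (and the $\frac{1}{2m}$ normalisation of $\mathsf{S}_{\alpha,0}$ that produces the prefactor $2m$ in $G^{(S)}_{\alpha,\lambda}$), shows that $R_{G^{(S)}_{\alpha,\lambda}}$ inverts the shifted differential problem, i.e.\ $(\mathsf{S}^*_{\alpha,0}+\lambda)\,R_{G^{(S)}_{\alpha,\lambda}}g=g$ for every $g\in L^2(\mathbb{R}^+;\mathbb{C},\ud r)$. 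As in the proof of Proposition \ref{prop:hD}(ii), this identifies $(R_{G^{(S)}_{\alpha,\lambda}})^{-1}-\lambda$ as a self-adjoint extension of $\mathsf{S}_{\alpha,0}$. To single it out as the \emph{Friedrichs} extension, the cleanest route is to establish the general range asymptotics, the analogue of Lemma \ref{lem:Asympt0RanRG}, namely
\[
(R_{G^{(S)}_{\alpha,\lambda}}g)(r)\,\stackrel{(r\downarrow 0)}{=}\,2m\,C^{(S)}_{\alpha,\lambda,0}\,\langle\Phi^{(S)}_{\alpha,\lambda},g\rangle_{L^2(\mathbb{R}^+;\mathbb{C})}\,r^{1+\alpha}+o(r^{\frac{3}{2}})
\]
for every $g\in L^2(\mathbb{R}^+;\mathbb{C},\ud r)$. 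This shows that every function in the range of $R_{G^{(S)}_{\alpha,\lambda}}$ carries only the regular short-distance term $r^{1+\alpha}$ and no leading $r^{-\alpha}$ contribution, i.e.\ $a_0=0$ in \eqref{eq:PsiAsymptSchrodingerIntro}. Since the extension with boundary condition $a_0=0$ is precisely the Friedrichs extension $\mathsf{S}^{(\infty)}_{\alpha,0}$ (as recalled in Section \ref{sec:nonrelset}), part (i) follows. Part (ii) is then the special case $g=\Phi^{(S)}_{\alpha,\lambda}$ of the same asymptotic formula, whereby $\langle\Phi^{(S)}_{\alpha,\lambda},\Phi^{(S)}_{\alpha,\lambda}\rangle_{L^2}=\|\Phi^{(S)}_{\alpha,\lambda}\|^2_{L^2}$, yielding exactly \eqref{eq:C17}.

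The main obstacle is the proof of the range asymptotics, and within it the regime $\alpha\le 0$, precisely as in Lemma \ref{lem:Asympt0RanRG}. There the naive term-by-term estimate fails, because the two summands $F^{(S)}_{\alpha,\lambda}(r)\,\langle\Phi^{(S)}_{\alpha,\lambda},g\rangle_{L^2((0,r))}$ and $\Phi^{(S)}_{\alpha,\lambda}(r)\,\langle F^{(S)}_{\alpha,\lambda},g\rangle_{L^2((0,r))}$ each contribute a singularity worse than the target $o(r^{\frac{3}{2}})$; one must exhibit the \emph{exact cancellation} of these leading singular contributions. The device, as in the Dirac case, is to isolate the genuinely subleading remainder of $\Phi^{(S)}_{\alpha,\lambda}(r)$ by subtracting off its $A^{(S)}_{\alpha,\lambda,0}r^{-\alpha}+B^{(S)}_{\alpha,\lambda,0}r^{1+\alpha}$ part and controlling it via the $O(\max\{r^{3+\alpha},r^{2-\alpha}\})$ correction from Lemma \ref{lem:asymPhiF-SCHROE}, then repeatedly invoking Cauchy--Schwarz on the truncated inner products $\langle\,\cdot\,,g\rangle_{L^2((0,r))}$ together with the leading asymptotics of $F^{(S)}_{\alpha,\lambda}$. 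This bookkeeping is the delicate part; all remaining steps are routine transcriptions of the corresponding arguments in Section \ref{sec:Dirac}.
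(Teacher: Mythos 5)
Your plan coincides with the paper's own treatment of this statement: Section \ref{sec:Schroe} deliberately gives no separate proof, declaring that the conceptual path is the verbatim transcription of the Dirac-block arguments (Lemmas \ref{lem:BoundednessIntegralOperator}, \ref{lem:invertsDiffProbl}, \ref{lem:Asympt0RanRG} and Proposition \ref{prop:hD}) with the special functions $\Phi^{(S)}_{\alpha,\lambda}$, $F^{(S)}_{\alpha,\lambda}$ and the asymptotics of Lemma \ref{lem:asymPhiF-SCHROE}, which is exactly what you propose; your identification of the distinguished extension as Friedrichs via the boundary condition $a_0=0$ is likewise the characterisation the paper relies on, and part (ii) as the special case $g=\Phi^{(S)}_{\alpha,\lambda}$ of the range asymptotics is correct. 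The one inaccuracy is your diagnosis that for $\alpha\leqslant 0$ the term-by-term estimate fails and the exact cancellation of leading singularities must be exhibited: in the Schr\"odinger block this is not so. With $|F^{(S)}_{\alpha,\lambda}(r)|\lesssim r^{1+\alpha}$, $|\Phi^{(S)}_{\alpha,\lambda}(r)|\lesssim r^{-\alpha}$, $\Vert \Phi^{(S)}_{\alpha,\lambda}\Vert_{L^2((0,r))}\lesssim r^{\frac{1}{2}-\alpha}$, $\Vert F^{(S)}_{\alpha,\lambda}\Vert_{L^2((0,r))}\lesssim r^{\frac{3}{2}+\alpha}$, and $\Vert g\Vert_{L^2((0,r))}=o(1)$, the Cauchy-Schwarz inequality alone yields, for \emph{every} $\alpha\in(-\frac{1}{2},\frac{1}{2})$,
\begin{equation*}
 \big|\Phi^{(S)}_{\alpha,\lambda}(r)\,\langle F^{(S)}_{\alpha,\lambda},g\rangle_{L^2((0,r))}\big|\,\lesssim\, r^{-\alpha}\,r^{\frac{3}{2}+\alpha}\,o(1)\,=\,o(r^{\frac{3}{2}})\,,\quad
 \big|F^{(S)}_{\alpha,\lambda}(r)\,\langle \Phi^{(S)}_{\alpha,\lambda},g\rangle_{L^2((0,r))}\big|\,\lesssim\, r^{1+\alpha}\,r^{\frac{1}{2}-\alpha}\,o(1)\,=\,o(r^{\frac{3}{2}})\,,
\end{equation*}
the exponents conspiring to exactly $\frac{3}{2}$ because $F^{(S)}_{\alpha,\lambda}$ carries $r^{1+\alpha}$ rather than $r^{\alpha}$. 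The cancellation device was genuinely indispensable only in the Dirac case, where the spinor structure gives $|F^{(D)}_{\alpha,\lambda,c}(r)|\,\Vert\Phi^{(D)}_{\alpha,\lambda,c}\Vert_{L^2((0,r))}\sim r^{\alpha}\cdot r^{\frac{1}{2}+\alpha}=r^{\frac{1}{2}+2\alpha}$, missing the target $o(r^{\frac{1}{2}})$ when $\alpha<0$. Running your heavier cancellation bookkeeping anyway is harmless and reproduces the same conclusion; it is simply unnecessary here.
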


 One can now finally mimic the reasoning of Section \ref{subsect:SD} and Appendix \ref{sec:proofOfMain}, reconstructing a la Kre\u{\i}n-Vi\v{s}ik-Birman the whole family $(\mathsf{S}_{\alpha,0,\omega})_{\omega\in\mathbb{R}\cup\{\infty\}}$ of self-adjoint extensions of $\mathsf{S}_{\alpha,0}$. First, the domain of the adjoint of $\mathsf{S}_{\alpha,0}$ is characterised as
 \begin{equation}\label{eq:ginDstarSCHROE}
  \mathcal{D}(\mathsf{S}_{\alpha,0}^*)\,=\,\mathcal{D}(\mathsf{S}_{\alpha,0}^*+\lambda)\,=\,\,\left\{\varphi+c_1\Psi_{\alpha,\lambda}^{(S)}+c_0\Phi_{\alpha,\lambda}^{(S)}\,\Big| \begin{array}{c} \varphi \in H^2_0(\mathbb{R}^+;\mathbb{C}),  \\
			c_0,c_1\in\mathbb{C} \end{array} \right\}
 \end{equation}
 (analogously to \eqref{eq:ginDstar}). The extension with $\omega=\infty$ is the Friedrichs extension, and for $\omega\in\mathbb{R}$
 \begin{equation}
\mathsf{S}_{\alpha,0,\omega}\,:=\,\mathsf{S}_{\alpha,0}^*\upharpoonright\{g\in \mathcal{D}(\mathsf{S}_{\alpha,0}^*)\,|\,c_1=\omega c_0\}\,.
 \end{equation}

 The short-distance version of \eqref{eq:ginDstarSCHROE} is obtained by implementing in $g\in\mathcal{D}(\mathsf{S}_{\alpha,0}^*)$ the asymptotics \eqref{eq:AsintoticaPhiSSZero}, and \eqref{eq:C17} and reads
 \begin{equation}
	g(r)\overset{(r \downarrow 0)}{=} a_0 r^{-\alpha} + a_1 r^{1+\alpha} + o (r^{\frac{3}{2}})
\end{equation}
 for $c_0,c_1$-dependent constants $a_0,a_1\in\mathbb{C}$ (analogously to \eqref{eq:ShortRangeAdj}). Correspondingly, the self-adjointness condition $c_1=\omega c_0$ takes the form
 \begin{equation}
  a_1\,=\theta a_0
 \end{equation}
  with
  \begin{equation}\label{eq:thal}
	\theta\,:=\,\frac{\,B_{\alpha,\lambda,0}^{(S)}+ 2m \omega C_{\alpha,\lambda,0}^{(S)\,} \Vert \Phi_{\alpha,\lambda}^{(S)} \Vert_{L^2(\mathbb{R}^+;\mathbb{C})}^2}{A_{\alpha,\lambda,0}^{(S)}}
\end{equation}
 (analogously to \eqref{eq:g1gammag0}-\eqref{eq:GammaBeta}). This amounts to re-labelling $\mathsf{S}_{\alpha,0,\omega}\equiv\mathsf{S}_{\alpha,0}^{(\theta)}$ with $\theta=\theta(\alpha)$ given by \eqref{eq:thal}. Actually $\theta$ is monotone increasing with $\alpha$ and $\alpha=\infty$ if and only if $\theta=\infty$. The notation $\mathsf{S}_{\alpha,0}^{(\infty)}$ for the Friedrichs extension is thus justified.

 Last, concerning resolvents, the general theory (see Theorem \ref{thm:KVB2-d1} and formula \eqref{eq:Res11KVB} therein) prescribes that $\mathsf{S}_{\alpha,0,\omega}+\lambda$ is invertible with everywhere defined bounded inverse if and only if $\omega\neq 0$, in which case
 \begin{equation}\label{eq:Sres1}
  (\mathsf{S}_{\alpha,0,\omega}+\lambda)^{-1}\,=\,(\mathsf{S}_{\alpha,0,\infty}+\lambda)^{-1}+\omega^{-1}\Vert \Phi_{\alpha,\lambda}^{(S)} \Vert_{L^2(\mathbb{R}^+;\mathbb{C})}^{-2}\big|\Phi_{\alpha,\lambda}^{(S)}\big\rangle \big\langle \Phi_{\alpha,\lambda}^{(S)} \big|\,.
 \end{equation}
 In the $\theta$-parametrisation we set
 \begin{equation}
   \tau_{\alpha,\lambda,\theta}^{(S)}\,:=\,\big(2m\omega\Vert \Phi_{\alpha,\lambda}^{(S)} \Vert_{L^2(\mathbb{R}^+;\mathbb{C})}^2 )^{-1}
 \end{equation}
 and calculate, by means of \eqref{eq:thal},
 \begin{equation}
  \tau_{\alpha,\lambda,\theta}^{(S)}\,=\, \frac{2}{\pi \sec(\pi \alpha)+2^{\frac12+\alpha}  (m \lambda)^{-\frac12-\alpha} \Gamma(\frac12+\alpha) \Gamma(\frac32+\alpha)\theta}\,.
 \end{equation}
 This turns \eqref{eq:Sres1} into its final form \eqref{eq:Scr0res} below.

 The above discussion is summarised as follows, in analogy to Theorem \ref{prop:Dirac}.

\begin{theorem}\label{prop:k0} Let $\alpha\in(-\frac{1}{2},\frac{1}{2})$.
\begin{itemize}
	\item[(i)] Any $\psi\in\mathcal{D}(\mathsf{S}_{\alpha,0}^*)$ displays the short-distance asymptotics
	\begin{equation}\label{eq:shortdasympt}
		\psi(r) \stackrel{(r\downarrow 0)}{=} a_0 r^{-\alpha}+a_1 r^{1+\alpha} + o(r^{\frac{3}{2}})
	\end{equation}
	for $\psi$-dependent $a_0,a_1\in\mathbb{C}$.
	\item[(ii)] The self-adjoint extensions in $L^2(\mathbb{R}^+;\mathbb{C},\ud r)$ of the operator $\mathsf{S}_{\alpha,0}$ form the one-parameter family $\big(\mathsf{S}_{\alpha,0}^{(\theta)}\big)_{\theta\in \mathbb{R}\cup\{\infty\}}$ defined, with the notation of \eqref{eq:shortdasympt}, by
	\begin{equation}\label{eq:DomSak0}
		\begin{split}
		\mathcal{D}(\mathsf{S}_{\alpha,0}^{(\theta)})\,&:=\,\{ \psi \in \mathcal{D}(\mathsf{S}_{\alpha,0}^*) \, | \, a_1 = \theta a_0 \} \,, \\
			\mathsf{S}_{\alpha,0}^{(\theta)}\,&:=\,\mathsf{S}_{\alpha,0}^* \upharpoonright \mathcal{D}(\mathsf{S}_{\alpha,0}^{(\theta)})\,.
		\end{split}
	\end{equation}
	The case $\theta=\infty$ (Friedrichs extension) selects $a_0=0$ in \eqref{eq:DomSak0}.
	\item[(iii)] For $-\lambda \in \rho(\mathsf{S}_{\alpha,0}^{(\theta)}) \cap \mathbb{R}$, $\mathsf{S}_{\alpha,0}^{(\theta)}$ has resolvent
	\begin{equation}\label{eq:Scr0res}
	 (\mathsf{S}_{\alpha,0}^{(\theta)}+\lambda)^{-1}\,=\,(\mathsf{S}_{\alpha,0}^{(\infty)}+\lambda)^{-1}+2m\tau_{\alpha,\lambda,\theta}^{(S)}\big| \Phi_{\alpha,\lambda}^{(S)} \big\rangle \big\langle \Phi_{\alpha,\lambda}^{(S)}\big|\,,
	\end{equation}
	where
	\begin{equation}
	 \begin{split}
	  (\mathsf{S}_{\alpha,0}^{(\theta)}+\lambda)^{-1}\,&=\,\textrm{integral operator with kernel $G_{\alpha,\lambda,c}^{(S)}(r,\rho)$, \;\;\; $r,\rho>0$}\,, \\
	  G_{\alpha,\lambda,c}^{(S)}(r,\rho)\,&:=\,2m\Big( \Phi_{\alpha,\lambda}^{(S)}(r)F_{\alpha,\lambda}^{(S)}(\rho)\mathbf{1}_{(0,r)}(\rho)+F_{\alpha,\lambda}^{(S)}(r)\Phi_{\alpha,\lambda}^{(S)}(\rho)\mathbf{1}_{(r,+\infty)}(\rho)\Big)\,,
	 \end{split}
	\end{equation}
	and where
\begin{equation}\label{eq:TauS}
	\tau_{\alpha,\lambda,\theta}^{(S)}\;:=\; \frac{2}{\pi \sec(\pi \alpha)+2^{\frac12+\alpha}  (m \lambda)^{-\frac12-\alpha} \Gamma(\frac12+\alpha) \Gamma(\frac32+\alpha)\theta} \, .
\end{equation}
%
\end{itemize}
\end{theorem}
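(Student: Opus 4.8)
The plan is to assemble the three assertions directly from the preparatory material collected above, following step by step the Kre\u{\i}n-Vi\v{s}ik-Birman-Grubb argument already carried out for the Dirac block in Section \ref{subsect:SD}; only the Schr\"odinger-specific special functions and exponents change, while the abstract skeleton is unchanged.

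For part (i), I would begin from the adjoint-domain decomposition \eqref{eq:ginDstarSCHROE}, writing a generic $g\in\mathcal{D}(\mathsf{S}_{\alpha,0}^*)$ as $g=\varphi+c_1\Psi_{\alpha,\lambda}^{(S)}+c_0\Phi_{\alpha,\lambda}^{(S)}$ with $\varphi\in H^2_0(\mathbb{R}^+;\mathbb{C})$ and $c_0,c_1\in\mathbb{C}$, and then read off the $r\downarrow 0$ behaviour summand by summand. The regular part obeys $\varphi(0)=\varphi'(0)=0$, so iterating once more the absolute-continuity/Cauchy-Schwarz estimate used for \eqref{eq:AsymptH1AtZero} gives $\varphi'(r)=o(r^{1/2})$ and hence $\varphi(r)=o(r^{3/2})$; the deficiency vector $\Phi_{\alpha,\lambda}^{(S)}$ supplies both leading powers $r^{-\alpha}$ and $r^{1+\alpha}$ through Lemma \ref{lem:asymPhiF-SCHROE}; and $\Psi_{\alpha,\lambda}^{(S)}$ supplies only the regular power $r^{1+\alpha}$, by \eqref{eq:C17}. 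Collecting the three contributions yields the form \eqref{eq:shortdasympt} and at the same time identifies the boundary coefficients as the linear functions $a_0=c_0\,A_{\alpha,\lambda,0}^{(S)}$ and $a_1=c_0\,B_{\alpha,\lambda,0}^{(S)}+2m\,c_1\,C_{\alpha,\lambda,0}^{(S)}\,\|\Phi_{\alpha,\lambda}^{(S)}\|_{L^2(\mathbb{R}^+;\mathbb{C})}^2$, in perfect analogy with \eqref{eq:g0pm}--\eqref{eq:g1pm}.

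For part (ii), I would invoke the abstract classification of self-adjoint extensions with deficiency indices $(1,1)$ (Theorem \ref{thm:KVB-General-App11} and Appendix \ref{sec:proofOfMain}): the Friedrichs extension is the reference extension, labelled $\omega=\infty$ and characterised by $c_0=0$, i.e.\ $a_0=0$; for $\omega\in\mathbb{R}$ the domain is cut out by $c_1=\omega c_0$. Substituting the linear relations of part (i) turns $c_1=\omega c_0$ into $a_1=\theta a_0$ with $\theta=\theta(\omega)$ given by \eqref{eq:thal}. Since $\omega\mapsto\theta$ is monotone and satisfies $\theta=\infty\Leftrightarrow\omega=\infty$, this re-labelling is a bijection of $\mathbb{R}\cup\{\infty\}$ onto itself, which delivers the family \eqref{eq:DomSak0} and proves part (ii).

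For part (iii), I would apply the Kre\u{\i}n-Vi\v{s}ik-Birman-Grubb resolvent identity (Theorem \ref{thm:KVB2-d1}, formula \eqref{eq:Res11KVB}), which represents $(\mathsf{S}_{\alpha,0,\omega}+\lambda)^{-1}$ as the rank-one perturbation \eqref{eq:Sres1} of the Friedrichs resolvent, the latter being precisely the Green's-function operator $R_{G_{\alpha,\lambda}^{(S)}}$ of the preceding Proposition; this also yields the integral-kernel form at once. The one step that requires care is the reduction of the rank-one coefficient to the closed form \eqref{eq:TauS}. Here the factor $2m\omega\|\Phi_{\alpha,\lambda}^{(S)}\|_{L^2(\mathbb{R}^+;\mathbb{C})}^2$ cancels: from \eqref{eq:thal} one has $2m\omega\|\Phi_{\alpha,\lambda}^{(S)}\|_{L^2(\mathbb{R}^+;\mathbb{C})}^2=(A_{\alpha,\lambda,0}^{(S)}\theta-B_{\alpha,\lambda,0}^{(S)})/C_{\alpha,\lambda,0}^{(S)}$, so that $\tau_{\alpha,\lambda,\theta}^{(S)}=C_{\alpha,\lambda,0}^{(S)}/(A_{\alpha,\lambda,0}^{(S)}\theta-B_{\alpha,\lambda,0}^{(S)})$; inserting the explicit values from Lemma \ref{lem:asymPhiF-SCHROE} and collapsing the Gamma products with $\Gamma(\tfrac12+z)\Gamma(\tfrac12-z)=\pi\sec(\pi z)$ then produces the stated denominator. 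This bookkeeping, in which the powers of $2$, of $m\lambda$, and the Gamma arguments must cancel exactly, is the main obstacle — not conceptually but computationally; it mirrors the closing calculation for $\tau_{\alpha,\lambda,\gamma,c}^{(D)}$ at the end of Section \ref{subsect:SD}.
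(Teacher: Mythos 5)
Your proposal is correct and takes essentially the same route as the paper, which proves Theorem \ref{prop:k0} precisely by mirroring the Dirac-block KVB argument: the adjoint decomposition \eqref{eq:ginDstarSCHROE}, the asymptotics of Lemma \ref{lem:asymPhiF-SCHROE} and \eqref{eq:C17} giving $a_0=c_0 A^{(S)}_{\alpha,\lambda,0}$ and $a_1=c_0 B^{(S)}_{\alpha,\lambda,0}+2mc_1C^{(S)}_{\alpha,\lambda,0}\Vert\Phi^{(S)}_{\alpha,\lambda}\Vert^2$, the re-labelling $c_1=\omega c_0\leftrightarrow a_1=\theta a_0$ via \eqref{eq:thal}, and the resolvent formula \eqref{eq:Res11KVB} reduced to \eqref{eq:TauS} by the same reflection-formula cancellation. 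Your identity $\tau^{(S)}_{\alpha,\lambda,\theta}=C^{(S)}_{\alpha,\lambda,0}/(A^{(S)}_{\alpha,\lambda,0}\theta-B^{(S)}_{\alpha,\lambda,0})$ is exactly the paper's $(2m\omega\Vert\Phi^{(S)}_{\alpha,\lambda}\Vert^2)^{-1}$ in disguise, and it does collapse to the stated denominator.
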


 For later convenience, let us also add the analogous extension classification in the block $k=-1$.

\begin{theorem}\label{prop:k-1} Let $\alpha\in(-\frac{1}{2},\frac{1}{2})$.
	\begin{itemize}
		\item[(i)] Any $\psi \in \mathcal{D}(\mathsf{S}_{\alpha,-1}^*)$ displays the short-distance asymptotics
	\begin{equation}\label{eq:shortdasympt2}
		\psi(r) \stackrel{(r\downarrow 0)}{=} b_0 r^{\alpha}+ b_1 r^{1-\alpha} + o(r^{\frac{3}{2}})
	\end{equation}
	for $\psi$-dependent $b_0,b_1\in\mathbb{C}$.
		\item[(ii)] The self-adjoint extensions in $L^2(\mathbb{R}^+;\mathbb{C},\ud r)$ of the operator $\mathsf{S}_{\alpha,-1}$ form the one-parameter family $\big(\mathsf{S}_{\alpha,-1}^{(\nu)}\big)_{\nu\in \mathbb{R}\cup\{\infty\}}$ defined, with the notation of \eqref{eq:shortdasympt2}, by
	\begin{equation}\label{eq:DomSak-1}
		\begin{split}
		\mathcal{D}(\mathsf{S}_{\alpha,-1}^{(\nu)}) \,&:=\, \{ \psi \in \mathcal{D}(\mathsf{S}_{\alpha,-1}^*) \, | \, b_1 = \nu b_0\} \,, \\
		\mathsf{S}_{\alpha,-1}^{(\nu)}\,&:=\,\mathsf{S}_{\alpha,-1}^* \upharpoonright \mathcal{D}(\mathsf{S}_{\alpha,-1}^{(\nu)})\,.
		\end{split}
	\end{equation}
	The case $\nu=\infty$ (Friedrichs extension) selects $b_0=0$ in \eqref{eq:DomSak-1}.
	\end{itemize}
\end{theorem}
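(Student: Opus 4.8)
The plan is to carry over \emph{essentially verbatim} the argument that established Theorem \ref{prop:k0} for the block $k=0$, updating only the Bessel index and the two relevant indicial exponents. First I would record that $\mathsf{S}_{\alpha,-1}$ is non-negative (from the general discussion opening this Section, since its inverse-square coefficient $\alpha(\alpha-1)\geqslant-\frac14$ for $\alpha\in(-\frac12,\frac12)$), and that by Proposition \ref{prop:KernelAdjointSCHROE} one has $\ker(\mathsf{S}_{\alpha,-1}^*+\lambda)=\mathrm{span}_{\mathbb{C}}\{\Phi_{\alpha,\lambda,-1}^{(S)}\}$ one-dimensional, so the deficiency indices are $(1,1)$ and the Kre\u{\i}n-Vi\v{s}ik-Birman-Grubb scheme applies with the Friedrichs extension $\mathsf{S}_{\alpha,-1}^{(\infty)}$ as distinguished reference extension with everywhere bounded inverse.

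For part (i) I would use the decomposition $\mathcal{D}(\mathsf{S}_{\alpha,-1}^*)=\{\varphi+c_1\Psi_{\alpha,\lambda,-1}^{(S)}+c_0\Phi_{\alpha,\lambda,-1}^{(S)}\}$ with $\varphi\in H^2_0(\mathbb{R}^+;\mathbb{C})$ and $\Psi_{\alpha,\lambda,-1}^{(S)}:=(\mathsf{S}_{\alpha,-1}^{(\infty)}+\lambda)^{-1}\Phi_{\alpha,\lambda,-1}^{(S)}$, exactly as in \eqref{eq:ginDstarSCHROE} (formula \eqref{eq:domainstar11}). The short-distance asymptotics are then read off term by term: $H^2_0$-functions are $o(r^{\frac{3}{2}})$; by Lemma \ref{lem:asymPhiF-SCHROE} with $k=-1$ (where $-(\alpha+k)=1-\alpha$ and $1+\alpha+k=\alpha$) one gets $\Phi_{\alpha,\lambda,-1}^{(S)}(r)=A_{\alpha,\lambda,-1}^{(S)}r^{1-\alpha}+B_{\alpha,\lambda,-1}^{(S)}r^{\alpha}+O(\cdots)$; and the regular solution obeys $\Psi_{\alpha,\lambda,-1}^{(S)}(r)\sim\mathrm{const}\cdot r^{1-\alpha}+o(r^{\frac{3}{2}})$, the analogue of \eqref{eq:C17}. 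Collecting the surviving powers gives $b_0=c_0 B_{\alpha,\lambda,-1}^{(S)}$ as the coefficient of $r^{\alpha}$ and $b_1$ as an affine combination of $c_0,c_1$ multiplying $r^{1-\alpha}$, which is precisely \eqref{eq:shortdasympt2}.

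For part (ii), the general theory (Theorem \ref{thm:KVB-General-App11}, formula \eqref{eq:KVB-11}) produces the family $(\mathsf{S}_{\alpha,-1,\omega})_{\omega\in\mathbb{R}\cup\{\infty\}}$ of restrictions of $\mathsf{S}_{\alpha,-1}^*$ selected by $c_1=\omega c_0$, with $\omega=\infty$ the Friedrichs extension. Since $b_0=c_0 B_{\alpha,\lambda,-1}^{(S)}$ and $b_1=c_0 A_{\alpha,\lambda,-1}^{(S)}+c_1\,2m\,C_{\alpha,\lambda,-1}^{(S)}\Vert\Phi_{\alpha,\lambda,-1}^{(S)}\Vert_{L^2}^2$, the boundary condition $c_1=\omega c_0$ is equivalent to $b_1=\nu b_0$ for the parameter $\nu=\nu(\omega)$ given by the exact analogue of \eqref{eq:thal} with the roles of $A$ and $B$ interchanged; the monotonicity of $\nu$ in $\omega$ and the correspondence $\omega=\infty\Leftrightarrow\nu=\infty$ then identify the Friedrichs extension with the condition $b_0=0$, yielding \eqref{eq:DomSak-1}.

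I do not expect a genuine obstacle here, as the argument is a mirror image of the $k=0$ case; the only point requiring care is bookkeeping. The two exponents $r^{\alpha}$ and $r^{1-\alpha}$ exchange their singular/regular roles relative to the $k=0$ block (so $A_{\alpha,\lambda,-1}^{(S)}$ now multiplies the \emph{regular} term $r^{1-\alpha}$ and $B_{\alpha,\lambda,-1}^{(S)}$ the \emph{singular} term $r^{\alpha}$), and one must verify that the $o(r^{\frac{3}{2}})$-remainder coming both from $H^2_0$ and from $\Psi_{\alpha,\lambda,-1}^{(S)}$ is genuinely subleading to \emph{both} surviving powers, i.e.\ that $\alpha<\frac32$ and $1-\alpha<\frac32$; both are guaranteed by $\alpha\in(-\frac12,\frac12)$, so the coefficients $b_0,b_1$ are unambiguously defined.
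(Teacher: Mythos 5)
Your proposal is correct and follows essentially the proof the paper intends: Theorem \ref{prop:k-1} is stated there with no separate argument, precisely as the $k=-1$ transcription of the Kre\u{\i}n-Vi\v{s}ik-Birman-Grubb construction carried out for Theorem \ref{prop:k0} (decomposition $\mathcal{D}(\mathsf{S}_{\alpha,-1}^*)=H^2_0\,\dot{+}\,\mathrm{span}_{\mathbb{C}}\{\Psi\}\,\dot{+}\,\mathrm{span}_{\mathbb{C}}\{\Phi^{(S)}_{\alpha,\lambda,-1}\}$, the asymptotics of Lemma \ref{lem:asymPhiF-SCHROE} at $k=-1$, and the re-labelling $c_1=\omega c_0\leftrightarrow b_1=\nu b_0$), and your bookkeeping of the exchanged roles of $r^{\alpha}$ and $r^{1-\alpha}$ and of the $o(r^{\frac32})$ remainders is accurate.

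One detail in your writeup deserves correction, although it is harmless for the statement at hand. For $k=-1$ the Bessel index $\alpha+k+\frac12=\alpha-\frac12$ is \emph{negative}, so $F^{(S)}_{\alpha,\lambda,-1}(r)=\sqrt{r}\,I_{\alpha-\frac12}(r\sqrt{2m\lambda})\sim C^{(S)}_{\alpha,\lambda,-1}\,r^{\alpha}$ is \emph{not} the Friedrichs-regular solution: the Green kernel of $(\mathsf{S}^{(\infty)}_{\alpha,-1}+\lambda)^{-1}$ must be assembled from $\sqrt{r}\,I_{\frac12-\alpha}(r\sqrt{2m\lambda})\sim \mathrm{const}\cdot r^{1-\alpha}$ together with $\Phi^{(S)}_{\alpha,\lambda,-1}$ (which is unaffected, since $K_{\alpha-\frac12}=K_{\frac12-\alpha}$); a verbatim kernel built with $F^{(S)}_{\alpha,\lambda,-1}$ would instead invert the extension with boundary condition $b_1=0$, i.e.\ $\nu=0$, not $\nu=\infty$. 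Accordingly, in your formula $b_1=c_0A^{(S)}_{\alpha,\lambda,-1}+c_1\,2m\,C^{(S)}_{\alpha,\lambda,-1}\Vert\Phi^{(S)}_{\alpha,\lambda,-1}\Vert^2_{L^2}$ the constant $C^{(S)}_{\alpha,\lambda,-1}$ should be replaced by the leading coefficient of $\sqrt{r}\,I_{\frac12-\alpha}(r\sqrt{2m\lambda})$, which in the paper's notation is $C^{(S)}_{-\alpha,\lambda,0}$. Since Theorem \ref{prop:k-1}, unlike Theorem \ref{prop:k0}(iii), contains no explicit constant, your argument only uses that this coefficient is real and non-zero -- which it is -- so the conclusion stands; the slip would matter only if one went on to write the $k=-1$ analogue of the resolvent formulas \eqref{eq:Scr0res}--\eqref{eq:TauS} or of \eqref{eq:thal}.
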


 As anticipated in Remark \ref{rem:squareDAB}, we conclude this Section by analysing the \emph{squares} of the Dirac-AB Hamiltonians in comparison to the Schr\"{o}dinger-AB Hamiltonians.


 \begin{theorem}\label{thm:DiracSquareSchrAB}
  Let $\alpha\in(-\frac{1}{2},\frac{1}{2})$ and $\gamma\in\mathbb{R}\cup\{\infty\}$. Then, with respect to the canonical isomorphism
 \[
  L^2(\mathbb{R}^+;\mathbb{C}^2,\ud r) \,\cong\,L^2(\mathbb{R}^+;\mathbb{C},\ud r) \oplus L^2(\mathbb{R}^+;\mathbb{C},\ud r)\,,
 \]
  one has:
  \begin{align}
    (\overline{\mathsf{h}_{\alpha,k}})^2 \,& =\,  2mc^2\big(\overline{\mathsf{S}_{\alpha,k}}\oplus \overline{\mathsf{S}_{-\alpha,-k}}\big)+m^2 c^4\mathbbm{1}\,,\qquad k\in\mathbb{Z}\setminus\{0,\pm 1\}\,, \label{eq:square01}\\
   (\overline{\mathsf{h}_{\alpha,-1}})^2\,& =\,2mc^2\big(\mathsf{S}_{\alpha,-1}^{(\infty)}\oplus\overline{\mathsf{S}_{-\alpha,1}}\big)+m^2c^4\mathbbm{1}\,,  \label{eq:square02} \\
   (\overline{\mathsf{h}_{\alpha,1}})^2\,& =\,2mc^2\big(\overline{\mathsf{S}_{\alpha,1}}\oplus\mathsf{S}_{-\alpha,-1}^{(\infty)}\big)+m^2c^4\mathbbm{1}\,, \label{eq:square03}
  \end{align}
  as well as
  \begin{equation}\label{eq:square04}
  \begin{split}
    \big(\mathsf{h}_{\alpha,0}^{(\gamma)}\big)^2\,&=\,\big(2mc^2\big(\mathsf{S}_{\alpha,0}^*\oplus\mathsf{S}_{\alpha,0}^*\big)+m^2c^4\mathbbm{1}\big)\upharpoonright\mathcal{D}\big(\big(\mathsf{h}_{\alpha,0}^{(\gamma)}\big)^2\big)\,, \\
    \mathcal{D}\big(\big(\mathsf{h}_{\alpha,0}^{(\gamma)}\big)^2\big)\,&=\, \left\{
    \begin{array}{l}
    g\in\mathcal{D}\big( \mathsf{S}_{\alpha,0}^*\oplus\mathsf{S}_{\alpha,0}^* \big)\,\left|
    \begin{array}{l}
    b_0\,=\,-\ii\gamma a_0\,, \\
	a_1\,=\,-\ii \gamma\,\displaystyle\frac{\,2 \ii a_0 mc+(1-2\alpha)b_1\,}{2\alpha+1}
    \end{array}\right. \\
    \textrm{where $a_0,a_1,b_0,b_1\in\mathbb{C}$ are the $g$-dependent constants} \\
    \textrm{characterised by the asymptotics} \\
     g(r)\,\stackrel{r\downarrow 0}{=}\,
   \begin{pmatrix}
    a_0 r^{-\alpha} + a_1 r^{1+\alpha} \\
    b_0 r^{\alpha} + b_1 r^{1-\alpha}
   \end{pmatrix}+o(r^{\frac{3}{2}})
    \end{array}
    \right\}.
  \end{split}
  \end{equation}
  In particular,
  \begin{align}
    \big(\mathsf{h}_{\alpha,k}^{(\infty)}\big)^2 \,& =\,2mc^2\big(\mathsf{S}_{\alpha,0}^{(\infty)}\oplus \mathsf{S}_{\alpha,0}^{(0)}\big)+m^2c^4\mathbbm{1}\,, \label{eq:squaregammainf}\\
   \big(\mathsf{h}_{\alpha,k}^{(0)}\big)^2 \,& =\, 2mc^2\big(\mathsf{S}_{\alpha,0}^{(0)}\oplus \mathsf{S}_{\alpha,0}^{(\infty)}\big)+m^2c^4\mathbbm{1}\,. \label{eq:squaregammazero}
  \end{align}
 \end{theorem}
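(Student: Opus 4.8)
The plan is to reduce every identity to the block structure of the squared matrix differential operator, and then to split the analysis according to which of the two scalar Schr\"odinger blocks is essentially self-adjoint.

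\emph{Differential action.} First I would set $A:=-\frac{\ud}{\ud r}+\frac{\alpha+k}{r}$ and $B:=\frac{\ud}{\ud r}+\frac{\alpha+k}{r}$, so that $\mathsf{h}_{\alpha,k}$ has upper-right entry $\ii cA$, lower-left entry $-\ii cB$, and diagonal $\pm mc^2$. Multiplying this $2\times2$ matrix operator by itself, the two off-diagonal entries cancel (the $\pm\ii mc^3$ multiples of $A$ and $B$), leaving $\mathsf{h}_{\alpha,k}^2=\mathrm{diag}(m^2c^4+c^2AB,\,m^2c^4+c^2BA)$. A one-line computation gives $AB=-\frac{\ud^2}{\ud r^2}+\frac{(\alpha+k)(\alpha+k+1)}{r^2}=2m\,\mathsf{S}_{\alpha,k}$ and $BA=-\frac{\ud^2}{\ud r^2}+\frac{(\alpha+k)(\alpha+k-1)}{r^2}=2m\,\mathsf{S}_{-\alpha,-k}$, which is exactly the common differential action asserted in \eqref{eq:square01}--\eqref{eq:square04}. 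This reduces each claim to identifying which self-adjoint realisation of $2mc^2(\mathsf{S}_{\alpha,k}^*\oplus\mathsf{S}_{-\alpha,-k}^*)+m^2c^4$ the operator $(\overline{\mathsf{h}_{\alpha,k}})^2$, respectively $(\mathsf{h}_{\alpha,0}^{(\gamma)})^2$, is.

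\emph{The cases $k\neq0$.} In all of them $(\overline{\mathsf{h}_{\alpha,k}})^2$ is self-adjoint (the square of a self-adjoint operator) and, because $\mathsf{h}_{\alpha,k}$ maps $C^\infty_0(\mathbb{R}^+;\mathbb{C}^2)$ into itself, it extends the minimal operator $\mathsf{L}_{\min}:=\mathsf{h}_{\alpha,k}^2\upharpoonright C^\infty_0$, whose differential action is the one just found. For $k\in\mathbb{Z}\setminus\{0,\pm1\}$ both $\mathsf{S}_{\alpha,k}$ and $\mathsf{S}_{-\alpha,-k}$ are essentially self-adjoint (recall that $\mathsf{S}_{\beta,j}$ fails essential self-adjointness only for $j\in\{-1,0\}$), hence so is $\mathsf{L}_{\min}$, and the two self-adjoint operators in \eqref{eq:square01} must coincide. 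For $k=\pm1$ exactly one block is essentially self-adjoint, so $\mathsf{L}_{\min}$ has deficiency indices $(1,1)$ and $(\overline{\mathsf{h}_{\alpha,k}})^2$ is one member of the resulting one-parameter family. To pin it to the Friedrichs extension in the singular block I would invoke the form-domain characterisation: the quadratic form of $(\overline{\mathsf{h}_{\alpha,k}})^2$ is $g\mapsto\|\overline{\mathsf{h}_{\alpha,k}}\,g\|^2$ with form domain $\mathcal{D}(\overline{\mathsf{h}_{\alpha,k}})=H^1_0(\mathbb{R}^+;\mathbb{C}^2)\subset H^1$, whereas the Friedrichs extension is the only self-adjoint extension of the singular block whose form domain lies in $H^1$ (the competing extensions admit the lower-power solution at the origin, which is not in $H^1$). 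This yields \eqref{eq:square02}--\eqref{eq:square03}.

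\emph{The case $k=0$.} This is the substance. By definition $\mathcal{D}((\mathsf{h}_{\alpha,0}^{(\gamma)})^2)=\{g\in\mathcal{D}(\mathsf{h}_{\alpha,0}^{(\gamma)})\,:\,\mathsf{h}_{\alpha,0}^*g\in\mathcal{D}(\mathsf{h}_{\alpha,0}^{(\gamma)})\}$, and any such $g$ lies in $\mathcal{D}((\mathsf{h}_{\alpha,0}^*)^2)$, so each component inherits the refined two-term expansion of Theorem \ref{prop:k0}(i) (upper component $g_+=a_0r^{-\alpha}+a_1r^{1+\alpha}+o(r^{3/2})$) and of Theorem \ref{prop:k-1}(i) (lower component $g_-=b_0r^{\alpha}+b_1r^{1-\alpha}+o(r^{3/2})$, since $\mathsf{S}_{-\alpha,0}$ shares its differential expression with $\mathsf{S}_{\alpha,-1}$). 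The first membership condition $g\in\mathcal{D}(\mathsf{h}_{\alpha,0}^{(\gamma)})$ reads $g_1=-\ii\gamma g_0$ with $g_0=a_0$, $g_1=b_0$, i.e. $b_0=-\ii\gamma a_0$. For the second I would compute the $r\downarrow0$ asymptotics of $h:=\mathsf{h}_{\alpha,0}^*g$ by inserting the expansions of $g_\pm$: the would-be leading singularities $r^{-\alpha-1}$ and $r^{\alpha-1}$ cancel identically, leaving $h_+=(mc^2a_0+\ii c(2\alpha-1)b_1)r^{-\alpha}+o(r^{1/2})$ and $h_-=(-\ii c(1+2\alpha)a_1-mc^2b_0)r^{\alpha}+o(r^{1/2})$. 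Imposing $h\in\mathcal{D}(\mathsf{h}_{\alpha,0}^{(\gamma)})$, i.e. $h_1=-\ii\gamma h_0$, and substituting $b_0=-\ii\gamma a_0$, reproduces precisely $a_1=-\ii\gamma\frac{2\ii a_0mc+(1-2\alpha)b_1}{2\alpha+1}$, which together with $b_0=-\ii\gamma a_0$ is the boundary system of \eqref{eq:square04}. Finally \eqref{eq:squaregammainf}--\eqref{eq:squaregammazero} follow by specialising: $\gamma=\infty$ forces $a_0=0$ and hence $b_1=0$, while $\gamma=0$ forces $b_0=0$ and hence $a_1=0$, and reading each resulting single-component condition through Theorems \ref{prop:k0} and \ref{prop:k-1} gives the Friedrichs ($\infty$) and zero extension parameters of the two scalar blocks in the stated order. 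I expect the asymptotic bookkeeping for $h=\mathsf{h}_{\alpha,0}^*g$---in particular the exact cancellation of the $r^{-\alpha-1}$ and $r^{\alpha-1}$ terms and the verification that the surviving $r^{1\pm\alpha}$ contributions are absorbed into $o(r^{1/2})$ (which holds precisely because $\alpha\in(-\frac12,\frac12)$)---to be the main obstacle. A secondary point is sufficiency, namely that the two boundary relations exhaust $\mathcal{D}((\mathsf{h}_{\alpha,0}^{(\gamma)})^2)$ rather than merely being necessary; this is consistent with a dimension count, since $\mathsf{L}_{\min}$ has deficiency indices $(2,2)$ and the two complex relations cut the four free constants $a_0,a_1,b_0,b_1$ down to a two-parameter family, the self-adjointness of $(\mathsf{h}_{\alpha,0}^{(\gamma)})^2$ guaranteeing that this family is the correct (Lagrangian) one.
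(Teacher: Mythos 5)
Your proposal is correct, and in the blocks $k=0$ and $k\in\mathbb{Z}\setminus\{0,\pm 1\}$ it essentially coincides with the paper's proof: the paper likewise reduces everything to the observation that each square is a self-adjoint restriction of $2mc^2(\mathsf{S}_{\alpha,k}^*\oplus\mathsf{S}_{-\alpha,-k}^*)+m^2c^4\mathbbm{1}$ (your reading of the lower block as $\mathsf{S}_{-\alpha,0}$, with the same differential expression as $\mathsf{S}_{\alpha,-1}$, is the intended one), settles $k\notin\{0,\pm1\}$ by essential self-adjointness of both scalar blocks, and in the block $k=0$ performs exactly your computation: impose $b_0=-\ii\gamma a_0$, apply the first-order action to the two-term expansions (with the identical cancellation of the $r^{-\alpha-1}$ and $r^{\alpha-1}$ terms), and impose the $\gamma$-boundary condition on $\mathsf{h}_{\alpha,0}^*g$, arriving at the same pair of relations and the same specialisations at $\gamma=\infty$ and $\gamma=0$. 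Your closing remark on sufficiency (self-adjointness of both sides upgrading the derived domain inclusion to an equality) is also how the paper concludes.

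Where you genuinely diverge is $k=\pm1$. The paper pins the Friedrichs extension in the singular block by hand: square-integrability of $g$ near the origin first kills the $r^{\alpha-1}$ coefficient, self-adjointness then gives $a_1=\nu a_0$ within the one-parameter family of Theorem \ref{prop:k-1}, and finally the requirement $\overline{\mathsf{h}_{\alpha,-1}}\,g\in\mathcal{D}(\mathsf{h}_{\alpha,-1}^*)$ produces a non-square-integrable term $-\ii c\,\nu(2\alpha-1)\,a_0\,r^{\alpha-1}$ unless $a_0=0$. You instead use the form-domain characterisation: the form domain of $(\overline{\mathsf{h}_{\alpha,\pm1}})^2$ is $\mathcal{D}(\overline{\mathsf{h}_{\alpha,\pm1}})=H^1_0(\mathbb{R}^+;\mathbb{C}^2)$, and only the Friedrichs realisation of the singular block is compatible with it. This is a clean and arguably shorter alternative, resting on your (correct, implicit) point that every self-adjoint extension of a direct sum with one essentially self-adjoint summand is automatically reduced. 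One caveat: your parenthetical justification --- ``the competing extensions admit the lower-power solution at the origin, which is not in $H^1$'' --- fails at $\alpha=0$, where the lower-power behaviour of the singular block is $r^0=1$, which \emph{is} locally $H^1$ (the $k=-1$ block at $\alpha=0$ is the free half-line Laplacian, whose Robin-type extensions have form domain $H^1(\mathbb{R}^+)$). The repair is immediate with what you already know: the form domain equals $H^1_0$, i.e.\ consists of functions \emph{vanishing} at the origin (half-line $H^1$-functions being continuous up to $r=0$), and the non-Friedrichs form domains contain elements with non-vanishing $r^{\alpha}$-behaviour at the origin --- not in $H^1$ for $\alpha\in(-\frac12,\frac12)\setminus\{0\}$, in $H^1$ but not in $H^1_0$ for $\alpha=0$. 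With the criterion stated via $H^1_0$ rather than $H^1$, your route to \eqref{eq:square02}--\eqref{eq:square03} is sound for the whole range of $\alpha$.
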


 \begin{corollary}\label{cor:squaregammazeroinf}
  Let $\alpha\in(-\frac{1}{2},\frac{1}{2})$ and let $P_{\mathrm{el}}$ denote the orthogonal projection onto the first (`electron'-) component of
  \[
  L^2(\mathbb{R}^2;\mathbb{C}^2,\ud x\,\ud y) \,\cong\,L^2(\mathbb{R}^2;\mathbb{C},\ud x\,\ud y) \oplus L^2(\mathbb{R}^2;\mathbb{C},\ud x\,\ud y)\,.
 \]
  Then:
  \begin{align}
   P_{\mathrm{el}}\big(H_\alpha^{(\infty)}\big)^2 \,& =\, 2mc^2 S_{\alpha,0}^{(\infty)}+m^2c^4\,, \\
   P_{\mathrm{el}}\big(H_\alpha^{(0)}\big)^2 \,& =\, 2mc^2 S_{\alpha,0}^{(0)}+m^2c^4\,.
  \end{align}
 \end{corollary}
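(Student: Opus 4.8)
The plan is to deduce the corollary directly from the block-wise identities of Theorem \ref{thm:DiracSquareSchrAB} by reassembling over $k$ and retaining the electron component. First recall that $H_\alpha^{(\gamma)}$ is reduced by the orthogonal block decomposition \eqref{eq:familyofDAB}, so that
\[
 \big(H_\alpha^{(\gamma)}\big)^2\,\cong\,\Big(\bigoplus_{k\leqslant -1}\big(\overline{\mathsf{h}_{\alpha,k}}\big)^2\Big)\oplus\big(\mathsf{h}_{\alpha,0}^{(\gamma)}\big)^2\oplus\Big(\bigoplus_{k\geqslant 1}\big(\overline{\mathsf{h}_{\alpha,k}}\big)^2\Big)\,.
\]
Under the canonical isomorphism $L^2(\mathbb{R}^+;\mathbb{C}^2,\ud r)\cong L^2(\mathbb{R}^+;\mathbb{C},\ud r)\oplus L^2(\mathbb{R}^+;\mathbb{C},\ud r)$, the electron projection $P_{\mathrm{el}}$ acts in each block as the projection onto the first summand. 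The decisive observation is that for $\gamma\in\{0,\infty\}$ \emph{--- and only then ---} every block's square is reduced (block-diagonal) with respect to the electron/positron splitting, by \eqref{eq:square01}--\eqref{eq:square03} together with \eqref{eq:squaregammainf}--\eqref{eq:squaregammazero}. Consequently $(H_\alpha^{(\gamma)})^2$ commutes with $P_{\mathrm{el}}$ and its electron part is a genuinely well-defined self-adjoint operator on $L^2(\mathbb{R}^2;\mathbb{C},\ud x\,\ud y)$.

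Next I would read off the first (electron) summand in each block. For $\gamma=\infty$: the $k=0$ block contributes $2mc^2\,\mathsf{S}_{\alpha,0}^{(\infty)}+m^2c^4$ by \eqref{eq:squaregammainf}; the $k=-1$ block contributes $2mc^2\,\mathsf{S}_{\alpha,-1}^{(\infty)}+m^2c^4$ by \eqref{eq:square02}; and each block $k\notin\{0,-1\}$ contributes $2mc^2\,\overline{\mathsf{S}_{\alpha,k}}+m^2c^4$ by \eqref{eq:square01} and \eqref{eq:square03}. Reassembling over $k$ and factoring out the common $2mc^2$ and the additive $m^2c^4$, the electron part becomes
\[
 2mc^2\Big(\bigoplus_{k\leqslant -2}\overline{\mathsf{S}_{\alpha,k}}\;\oplus\;\mathsf{S}_{\alpha,-1}^{(\infty)}\;\oplus\;\mathsf{S}_{\alpha,0}^{(\infty)}\;\oplus\;\bigoplus_{k\geqslant 1}\overline{\mathsf{S}_{\alpha,k}}\Big)+m^2c^4\,.
\]
Since $\mathsf{S}_{\alpha,-1}^{(\infty)}$ is precisely the Friedrichs extension $\mathsf{S}_{\alpha,-1}^{(F)}$, comparison with the definition \eqref{eq:S-AB-Sa-Intro} identifies the bracketed operator as the full two-dimensional Schr\"{o}dinger-AB operator $S_\alpha^{(\infty)}$, whence $P_{\mathrm{el}}(H_\alpha^{(\infty)})^2=2mc^2\,S_\alpha^{(\infty)}+m^2c^4$. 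The case $\gamma=0$ is verbatim the same, except that \eqref{eq:squaregammazero} replaces the $k=0$ electron summand $\mathsf{S}_{\alpha,0}^{(\infty)}$ by $\mathsf{S}_{\alpha,0}^{(0)}$, yielding $2mc^2\,S_\alpha^{(0)}+m^2c^4$.

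I do not expect an analytic obstacle here: the entire content is the block-by-block bookkeeping, all of which is already supplied by Theorem \ref{thm:DiracSquareSchrAB}. The only point that truly deserves emphasis --- and the reason the corollary is confined to $\gamma\in\{0,\infty\}$ --- is the decoupling itself. For generic $\gamma$ the self-adjointness boundary conditions recorded in \eqref{eq:square04} \emph{entangle} the electron and positron components, so $(H_\alpha^{(\gamma)})^2$ is no longer reduced by $P_{\mathrm{el}}$ and the naive electron part fails to be self-adjoint. It is exactly the two supersymmetric extensions $\gamma\in\{0,\infty\}$ for which the square diagonalises along the electron/positron splitting, and this is what makes the two clean identities of the corollary meaningful.
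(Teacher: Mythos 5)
Your proposal is correct and takes essentially the same route as the paper: the paper's own proof is precisely this block-wise bookkeeping --- it cites \eqref{eq:squaregammainf}--\eqref{eq:squaregammazero} together with the block structures \eqref{eq:familyofDAB} and \eqref{eq:S-AB-Sa-Intro} and notes that the square is taken block by block --- which you have simply written out in full, including the correct observation (via \eqref{eq:square04}) that reducibility by $P_{\mathrm{el}}$ holds exactly for $\gamma\in\{0,\infty\}$. Note only that your identification of the electron part with the full two-dimensional operator $S_\alpha^{(\theta)}$ of \eqref{eq:S-AB-Sa-Intro} is the intended reading of the corollary's right-hand side, since the left-hand side acts on $L^2(\mathbb{R}^2;\mathbb{C}^2,\ud x\,\ud y)$.
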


 \begin{proof}[Proof of Theorem \ref{thm:DiracSquareSchrAB}]
  On the one hand, observe that the $\overline{\mathsf{h}_{\alpha,k}}\,$'s for $k\neq 0$ and the $\mathsf{h}_{\alpha,0}^{(\gamma)}$'s for $k=0$ are all self-adjoint in $L^2(\mathbb{R}^+;\mathbb{C}^2;\ud r)$, hence their squares are self-adjoint and extend the corresponding $(\mathsf{h}_{\alpha,k})^2$'s (each $\mathsf{h}_{\alpha,k}$ being minimally defined as in \eqref{HalphaBlockDecomp}-\eqref{eq:Hsupersym}). On the other hand, a simple computation shows that, as an identity of differential actions,
 \begin{equation*}
  \begin{split}
	&\begin{pmatrix}
	mc^2 & \ii c \left( -\frac{\ud}{\ud r}+\frac{\alpha+k}{r}\right) \\
	-\ii c \left(\frac{\ud}{\ud r} + \frac{\alpha+k}{r} \right) & -mc^2
	\end{pmatrix}^2\,= \\
	&=\,
	\begin{pmatrix}
	 c^2 \big( -\frac{\ud^2}{\ud r^2} + \frac{(\alpha+k)(1+\alpha+k)}{r^2}  \big)+m^2c^4 & \mathbb{O} \\
	 \mathbb{O} & c^2 \big( -\frac{\ud^2}{\ud r^2} - \frac{(\alpha+k)(1-\alpha-k)}{r^2}  \big)+m^2c^4
	\end{pmatrix}.
  \end{split}
 \end{equation*}
 All together this implies that for each $k\in\mathbb{Z}$ the operator $(\overline{\mathsf{h}_{\alpha,k}})^2$ ($k\neq 0$), as well as the operator $(\mathsf{h}_{\alpha,0}^{(\gamma)})^2$, $\gamma\in\mathbb{R}\cup\{\infty\}$ ($k=0$), is a \emph{self-adjoint restriction}, for the corresponding $k$, of
 \[
  2mc^2\begin{pmatrix}
   \mathsf{S}_{\alpha,k}^* & \mathbb{O} \\
   \mathbb{O} & \mathsf{S}_{-\alpha,-k}^*
  \end{pmatrix}+m^2 c^4 \mathbbm{1}\,.
 \]
 In turn, the above operator is self-adjoint for $k\in\mathbb{Z}\setminus\{0,\pm 1\}$ (Proposition \ref{prop:KernelAdjointSCHROE}). Thus,
 \[
  (\overline{\mathsf{h}_{\alpha,k}})^2\,=\,
  2mc^2\begin{pmatrix}
   \overline{\mathsf{S}_{\alpha,k}} & \mathbb{O} \\
   \mathbb{O} & \overline{\mathsf{S}_{-\alpha,-k}}
  \end{pmatrix}+m^2 c^4 \mathbbm{1}\,,\qquad k\in\mathbb{Z}\setminus\{0,\pm 1\}\,.
 \]

  For the remaining blocks $k\in\{0,\pm 1\}$, recall from Theorems \ref{prop:k0}(i) and \ref{prop:k-1}(i) that
  \[\tag{*}\label{eq:ginstarS}
   g\in\mathcal{D}(\mathsf{S}_{\alpha,k}^*\oplus \mathsf{S}_{-\alpha,-k}^*)\qquad\Rightarrow\qquad g(r)\,\stackrel{r\downarrow 0}{=}\,
   \begin{pmatrix}
    a_0 r^{-\alpha-k} + a_1 r^{1+\alpha+k} \\
    b_0 r^{\alpha+k} + b_1 r^{1-\alpha-k}
   \end{pmatrix}+o(r^{\frac{3}{2}})
  \]
  (for $g$-dependent $a_0,a_1,b_0,b_1\in\mathbb{C}$).

  Take $k=-1$ and $g\in\mathcal{D}((\overline{\mathsf{h}_{\alpha,-1}})^2)$. The leading term of $g$ is square-integrable as $r\downarrow 0$, in view of \eqref{eq:ginstarS}, if and only if $b_0=0$. Furthermore, since $\mathcal{D}((\overline{\mathsf{h}_{\alpha,-1}})^2)$ is the domain of a self-adjoint restriction of $2mc^2(\mathsf{S}_{\alpha,-1}^*\oplus \mathsf{S}_{-\alpha,1}^*)+m^2c^4\mathbbm{1}=2mc^2(\mathsf{S}_{\alpha,-1}^*\oplus \overline{\mathsf{S}_{-\alpha,1}})+m^2c^4\mathbbm{1}$, necessarily also $a_1=\nu a_0$ for some $\nu\in\mathbb{R}\cup\{\infty\}$ (Theorem \ref{prop:k-1}(ii)), with the usual convention that $\nu=\infty$ corresponds to $a_0=0$ for any such $g$, whereas $b_1 r^{2-\alpha}=o(r^{\frac{3}{2}})$. Thus,
  \[
   g\in\mathcal{D}((\overline{\mathsf{h}_{\alpha,-1}})^2)\qquad\Rightarrow\qquad g\,\stackrel{r\downarrow 0}{=}\,
   a_0\begin{pmatrix}
     r^{1-\alpha} + \nu\, r^{\alpha} \\
    0
   \end{pmatrix}+o(r^{\frac{3}{2}})\,.
  \]
 On top of that, it must be $\overline{\mathsf{h}_{\alpha,-1}} \,g\in \mathcal{D}(\overline{\mathsf{h}_{\alpha,-1}})$, i.e. ($\overline{\mathsf{h}_{\alpha,-1}}$ being self-adjoint), $\mathsf{h}_{\alpha,-1}^*\,g\in\mathcal{D}(\mathsf{h}_{\alpha,-1}^*)$. The latter condition is tantamount as the square-integrability of
 \[
  \begin{pmatrix}
	mc^2 & \ii c \left( -\frac{\ud}{\ud r}+\frac{\alpha-1}{r}\right) \\
	-\ii c \left(\frac{\ud}{\ud r} + \frac{\alpha-1}{r} \right) & -mc^2
	\end{pmatrix} g\,\stackrel{r\downarrow 0}{=}\,
	a_0\begin{pmatrix}
	  m c^2(r^{1-\alpha}+\nu r^{\alpha}) \\
	 -\ii c \nu (2\alpha-1)r^{\alpha-1}
	\end{pmatrix}+o(r^{\frac{1}{2}})\,,
 \]
 where we used the preceding asymptotics for $g$. For the above expression to be square-integrable for \emph{all} considered $g$'s, necessarily $a_0=0$. This proves that
 \[
  \mathcal{D}((\overline{\mathsf{h}_{\alpha,-1}})^2)\,\subset\,\mathcal{D}(\mathsf{S}_{\alpha,-1}^{(\infty)}\oplus\overline{\mathsf{S}_{-\alpha,1}})
  \]
 and the inclusion above is an actual identity, by self-adjointness. The conclusion in this case is
 \[
  (\overline{\mathsf{h}_{\alpha,-1}})^2\,=\,2mc^2(\mathsf{S}_{\alpha,-1}^{(\infty)}\oplus\overline{\mathsf{S}_{-\alpha,1}})+m^2c^4\mathbbm{1}\,.
 \]

 The case $k=1$ is completely analogous.

 Last, let $k=0$, $\gamma\in\mathbb{R}\cup\{\infty\}$, and $g\in\mathcal{D}\big((\mathsf{h}_{\alpha,0}^{(\gamma)})^2\big)$. Now \eqref{eq:ginstarS} prescribes
 \[
  g(r)\,\stackrel{r\downarrow 0}{=}\,
   \begin{pmatrix}
    a_0 r^{-\alpha} + a_1 r^{1+\alpha} \\
    b_0 r^{\alpha} + b_1 r^{1-\alpha}
   \end{pmatrix}+o(r^{\frac{3}{2}})
 \]
  (for $g$-dependent $a_0,a_1,b_0,b_1\in\mathbb{C}$). Since $g\in\mathcal{D}(\mathsf{h}_{\alpha,0}^{(\gamma)})$, then $b_0=-\ii\gamma a_0$ (Theorem \ref{prop:Dirac}(iii)).
  The latter asymptotics then yield
  \[
   \mathsf{h}_{\alpha,0}^{(\gamma)}\,g\,=\,\begin{pmatrix}
	mc^2 & \ii c \left( -\frac{\ud}{\ud r}+\frac{\alpha}{r}\right) \\
	-\ii c \left(\frac{\ud}{\ud r} + \frac{\alpha}{r} \right) & -mc^2
	\end{pmatrix} g\,\stackrel{r\downarrow 0}{=}\,
	\begin{pmatrix}
	 \big(a_0 mc^2 + \ii c (2\alpha-1) b_1\big)r^{-\alpha} \\
	 \big(\ii \gamma a_0 mc^2 - \ii c (2\alpha+1) a_1\big)r^{\alpha}
	\end{pmatrix}+o(r^{\frac{1}{2}})\,.
  \]
  Since $\mathsf{h}_{\alpha,0}^{(\gamma)}\,g\in \mathcal{D}(\mathsf{h}_{\alpha,0}^{(\gamma)})$, then
  \[
   \ii \gamma a_0 mc^2 - \ii c (2\alpha+1) a_1\,=\,-\ii\gamma\big(a_0 mc^2 + \ii c (2\alpha-1) b_1\big)
  \]
 (owing again to Theorem \ref{prop:Dirac}(iii)), whence
 \[
  a_1\,=\,-\ii \gamma\,\frac{\,2 \ii a_0 mc+(1-2\alpha)b_1\,}{2\alpha+1}\,.
 \]

 This completes the proof of \eqref{eq:square01}-\eqref{eq:square04}. The identities \eqref{eq:squaregammainf}-\eqref{eq:squaregammazero} are consequence of \eqref{eq:square04}. Indeed, when $\gamma=\infty$ \eqref{eq:square04} yields $a_0=0$, $b_1=0$, and Theorem \ref{prop:k0}(ii) then yields
  \[
   g\,=\,\begin{pmatrix} g_+ \\ g_- \end{pmatrix} \qquad\textrm{with}\; g_+\in\mathcal{D}\big(\mathsf{S}_{\alpha,0}^{(\infty)}\big)\,,\;g_-\in\mathcal{D}\big(\mathsf{S}_{\alpha,0}^{(0)}\big)\,.
  \]
  Conversely, when $\gamma=0$ \eqref{eq:square04} yields $b_0=0$, $a_1=0$, and Theorem \ref{prop:k0}(ii) then yields
  \[
   g\,=\,\begin{pmatrix} g_+ \\ g_- \end{pmatrix} \qquad\textrm{with}\; g_+\in\mathcal{D}\big(\mathsf{S}_{\alpha,0}^{(0)}\big)\,,\;g_-\in\mathcal{D}\big(\mathsf{S}_{\alpha,\infty}^{(0)}\big)\,.
  \]
  The proof is now complete.
 \end{proof}

    \begin{remark}
   As the proof of Theorem \ref{thm:DiracSquareSchrAB} shows, $(\overline{\mathsf{h}_{\alpha,k}})^2$ ($k\neq 0$) or $(\mathsf{h}_{\alpha,0}^{(\gamma)})^2$ ($k=0$), is a self-adjoint restriction of
 \[\tag{*}\label{eq:selfadjrestrof}
  2mc^2\begin{pmatrix}
   \mathsf{S}_{\alpha,k}^* & \mathbb{O} \\
   \mathbb{O} & \mathsf{S}_{-\alpha,-k}^*
  \end{pmatrix}+m^2 c^4 \mathbbm{1}\,.
 \]
  When $k=\pm 1$, the self-adjoint restrictions of \eqref{eq:selfadjrestrof} form a one-real-parameter family, because only one block is non-self-adjoint (Proposition \ref{prop:KernelAdjointSCHROE} and Theorem \ref{prop:k-1}(ii)). Thus, $(\overline{\mathsf{h}_{\alpha,k}})^2$ is always reduced with respect to
  \[
  L^2(\mathbb{R}^+;\mathbb{C}^2,\ud r) \,\cong\,L^2(\mathbb{R}^+;\mathbb{C},\ud r) \oplus L^2(\mathbb{R}^+;\mathbb{C},\ud r)\,,
 \]
 as \eqref{eq:square02}-\eqref{eq:square03} indeed show. Instead, when $k=0$, the self-adjoint restrictions of \eqref{eq:selfadjrestrof} form a four-real-parameter family and are not reduced in general, which is reflected in the domain \eqref{eq:square04}. The cases $\gamma=\infty$ and $\gamma=0$ are the sole case of reducibility of $\mathsf{h}_{\alpha,k}^{(\gamma)}$ -- see \eqref{eq:squaregammainf}-\eqref{eq:squaregammazero}.
  \end{remark}

  \begin{proof}[Proof of Corollary \ref{cor:squaregammazeroinf}]
  Immediate from Theorem \ref{thm:DiracSquareSchrAB}, in particular from \eqref{eq:squaregammainf}-\eqref{eq:squaregammazero}, in view of the general expressions \eqref{eq:familyofDAB} and \eqref{eq:S-AB-Sa-Intro}, respectively, for $H_\alpha^{(\gamma)}$ and $S_\alpha^{(\theta)}$. Indeed, the square is taken block-wise.
 \end{proof}

\section{Proof of Theorem \ref{thm:Main}}\label{sec:Proof3}

Based on the general set-up of Section \ref{sec:intro}, it is clear that it suffices to establish the operator limit \eqref{eq:NonRelLimThmMain} separately in each $k$-block.

 Recall that the block $k=0$ is the one that accommodates non-trivial self-adjoint extensions both before (Dirac-AB) and after (Schr\"{o}dinger-AB) the limit, and the block $k=-1$ has non-trivial self-adjoint extensions after the limit. These blocks then require a special treatment, which would lead to the special scaling of the extension parameter as $c\to +\infty$ in order to connect self-adjoint realisations before and after the limit.

 We start the preparation to the proof of Theorem \ref{thm:Main} with a closer analysis of the operator closure of $\mathsf{h}_{\alpha,k}$ in the non-zero blocks and of the Friedrichs extension in the block $k=0$ (Propositions \ref{cor:Closure_k_neq0} and \ref{prop:hfried0} below). We tacitly exploit as usual the canonical Hilbert space orthogonal direct sum
 \begin{equation}\label{eq:canHspdec}
  L^2(\mathbb{R}^+;\mathbb{C}^2,\ud r) \,\cong\,L^2(\mathbb{R}^+;\mathbb{C},\ud r) \oplus L^2(\mathbb{R}^+;\mathbb{C},\ud r)\,.
 \end{equation}

\begin{proposition}\label{cor:Closure_k_neq0}
 When $k\in\mathbb{Z}\setminus\{0\}$ one has
 \begin{equation}\label{eq:halkclos}
	\overline{\mathsf{h}_{\alpha,k}} \,=\, \begin{pmatrix}
		mc^2 & cB_{\alpha,k}^+ \\
		cB_{\alpha,k} & - mc^2
	\end{pmatrix}\qquad\qquad (k\in\mathbb{Z}\setminus\{0\})
\end{equation}
with respect to the decomposition \eqref{eq:canHspdec}
 and in terms of the two \emph{closed} operators $B_{\alpha,k}$ and $B^+_{\alpha,k}$ in $L^2(\mathbb{R}^+;\mathbb{C},\ud r)$ defined by
 \begin{eqnarray}
  \mathcal{D}(B_{\alpha,k})\,:=\,H^1_0(\mathbb{R}^+;\mathbb{C})\,,& & \quad B_{\alpha,k}\psi\,:=\;-\ii \Big(\frac{\ud}{\ud r}+\frac{\alpha+k}{r} \Big)\psi\,, \label{eq:Bgeneric} \\
  \mathcal{D}(B^+_{\alpha,k})\,:=\,H^1_0(\mathbb{R}^+;\mathbb{C})\,,& & \quad B^+_{\alpha,k}\psi\,:=\;-\ii \Big(\frac{\ud}{\ud r}-\frac{\alpha+k}{r} \Big)\psi\,. \label{eq:B+generic}
 \end{eqnarray}
\end{proposition}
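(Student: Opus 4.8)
The plan is to read the domain of $\overline{\mathsf{h}_{\alpha,k}}$ off Proposition \ref{prop:Closure}(i), to recognise the block action as the stated matrix via the fact that a closure acts by the underlying differential expression, and finally to deduce the closedness of the two scalar off-diagonal operators $B_{\alpha,k}$, $B^+_{\alpha,k}$ \emph{directly} from the closedness of the full Dirac block, thereby bypassing any new graph-norm estimate.

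First I would note that for $k\in\mathbb{Z}\setminus\{0\}$ and $\alpha\in(-\frac12,\frac12)$ one has $|\alpha+k|\geqslant|k|-|\alpha|>1-\tfrac12=\tfrac12$, so in particular $|\alpha+k|\neq\frac12$ and Proposition \ref{prop:Closure}(i) applies, giving $\mathcal{D}(\overline{\mathsf{h}_{\alpha,k}})=H^1_0(\mathbb{R}^+;\mathbb{C}^2)\cong H^1_0(\mathbb{R}^+;\mathbb{C})\oplus H^1_0(\mathbb{R}^+;\mathbb{C})$ relative to \eqref{eq:canHspdec}. Next I would check that $B_{\alpha,k}$ and $B^+_{\alpha,k}$ are genuinely $L^2$-valued on $H^1_0(\mathbb{R}^+;\mathbb{C})$: for $\psi\in H^1_0$ one has $\psi'\in L^2$ by definition, while Hardy's inequality $\|r^{-1}\psi\|_{L^2}\leqslant 2\|\psi'\|_{L^2}$ forces the singular term $\frac{\alpha+k}{r}\psi$ into $L^2$ as well. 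Since $\overline{\mathsf{h}_{\alpha,k}}\subseteq\mathsf{h}_{\alpha,k}^*$ acts by the very differential expression \eqref{eq:Hsupersym} (the standard fact recalled at the end of the Notation subsection), evaluating that expression on $(\psi_+,\psi_-)\in H^1_0\oplus H^1_0$ reproduces exactly the matrix \eqref{eq:halkclos}, with off-diagonal entries $cB^+_{\alpha,k}$ and $cB_{\alpha,k}$; this settles both the domain and the action.

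It then remains to prove that $B_{\alpha,k}$ and $B^+_{\alpha,k}$, with domain $H^1_0(\mathbb{R}^+;\mathbb{C})$, are closed. Here I would invoke the closedness of $\overline{\mathsf{h}_{\alpha,k}}$ rather than re-proving a norm equivalence. To see $B_{\alpha,k}$ closed, take $\phi_n\in H^1_0$ with $\phi_n\to\phi$ and $B_{\alpha,k}\phi_n\to\eta$ in $L^2$; then $(\phi_n,0)\in\mathcal{D}(\overline{\mathsf{h}_{\alpha,k}})=H^1_0\oplus H^1_0$ and $\overline{\mathsf{h}_{\alpha,k}}(\phi_n,0)=(mc^2\phi_n,\,cB_{\alpha,k}\phi_n)\to(mc^2\phi,\,c\eta)$, so closedness of $\overline{\mathsf{h}_{\alpha,k}}$ forces $(\phi,0)\in H^1_0\oplus H^1_0$ with $\overline{\mathsf{h}_{\alpha,k}}(\phi,0)=(mc^2\phi,c\eta)$, i.e. $\phi\in H^1_0$ and $B_{\alpha,k}\phi=\eta$. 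Testing instead on the vectors $(0,\phi_n)$, which give $\overline{\mathsf{h}_{\alpha,k}}(0,\phi_n)=(cB^+_{\alpha,k}\phi_n,-mc^2\phi_n)$, yields the same conclusion for $B^+_{\alpha,k}$.

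The only genuine point of care — what I would treat as the main obstacle — is the bookkeeping in the second paragraph: one must make sure that $\overline{\mathsf{h}_{\alpha,k}}$ acts by the pointwise differential block expression on the \emph{whole} of $H^1_0\oplus H^1_0$, not merely on $C^\infty_0$. This rests on the Hardy bound (so that each scalar entry lands in $L^2$) together with the inclusion $\overline{\mathsf{h}_{\alpha,k}}\subseteq\mathsf{h}_{\alpha,k}^*$; once secured, the closedness argument is purely formal. Should one prefer not to route closedness through $\overline{\mathsf{h}_{\alpha,k}}$, the equivalent alternative is to establish by integration by parts the identities $\|\psi\|^2_{L^2}+\|B_{\alpha,k}\psi\|^2_{L^2}=\|\psi\|^2_{L^2}+\|\psi'\|^2_{L^2}+(\alpha+k)(\alpha+k+1)\|r^{-1}\psi\|^2_{L^2}$ and its analogue for $B^+_{\alpha,k}$ with $(\alpha+k)(\alpha+k-1)$; Hardy's inequality then renders this graph norm equivalent to the $H^1$-norm for every $k\neq0$, the only delicate instances being $k=-1,\ \alpha>0$ and $k=1,\ \alpha<0$, where one verifies $4|(\alpha+k)(\alpha+k\pm1)|<1$, so that closedness follows on $H^1_0=\overline{C^\infty_0}^{\,H^1}$.
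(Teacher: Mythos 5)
Your proposal is correct and follows essentially the same route as the paper's own proof: read $\mathcal{D}(\overline{\mathsf{h}_{\alpha,k}})=H^1_0(\mathbb{R}^+;\mathbb{C}^2)$ off Proposition \ref{prop:Closure}(i), identify the differential action of the closure with the block matrix \eqref{eq:halkclos}, and transfer closedness from the full (closed) block operator to the scalar entries $B_{\alpha,k}$, $B^+_{\alpha,k}$ on $H^1_0(\mathbb{R}^+;\mathbb{C})$. Your explicit sequence argument on $(\phi_n,0)$ and $(0,\phi_n)$ simply spells out the closedness transfer that the paper asserts in one line, and both your Hardy-inequality well-definedness check and your alternative graph-norm identities are sound supplements rather than deviations.
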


\begin{proof}
 The differential action of $\overline{\mathsf{h}_{\alpha,k}}$ is immediately read out from \eqref{eq:Hsupersym}. Thus, \emph{as an identity between differential actions},
 \[
  \overline{\mathsf{h}_{\alpha,k}}-mc^2\sigma_3\,=\,\begin{pmatrix}
		\mathbb{O} & cB_{\alpha,k}^+ \\
		cB_{\alpha,k} & \mathbb{O}
	\end{pmatrix}.
 \]
 In the l.h.s.~above the domain is
 \[
  \mathcal{D}(\overline{\mathsf{h}_{\alpha,k}}-mc^2\sigma_3)\,=\,\mathcal{D}(\overline{\mathsf{h}_{\alpha,k}})\,=\,H^1_0(\mathbb{R}^+;\mathbb{C}^2)\,,
 \]
 owing to the boundedness of $mc^2\sigma_3$ and to Theorem \ref{prop:Dirac}(i). Then, on this very domain, the r.h.s.~is closed too, and the corresponding domain of both $B_{\alpha,k}$ and $B_{\alpha,k}^+$ as operators in $L^2(\mathbb{R}^+;\mathbb{C},\ud r)$ is $H^1_0(\mathbb{R}^+;\mathbb{C})$.
\end{proof}

 \begin{proposition}\label{prop:hfried0}
  One has
  \begin{equation}\label{eq:halpha0D}
	\mathsf{h}_{\alpha,0}^{(\infty)} \,=\, \begin{pmatrix}
		mc^2 & cB_{\alpha,0}^+ \\
		cB_{\alpha,0} & -mc^2
	\end{pmatrix}
\end{equation}
with respect to the decomposition \eqref{eq:canHspdec}
 and in terms of the two \emph{closed} operators $B_{\alpha,0}$ and $B^+_{\alpha,0}$ in $L^2(\mathbb{R}^+;\mathbb{C},\ud r)$ defined by
\begin{eqnarray}
  \mathcal{D}(B_{\alpha,0})\,:=\,H^1_0(\mathbb{R}^+;\mathbb{C})\,,& & \quad B_{\alpha,0}\psi\,:=\;-\ii \Big(\frac{\ud}{\ud r}+\frac{\alpha}{r} \Big)\psi\,, \label{eq:Bmenozero} \\
  \mathcal{D}(B^+_{\alpha,0})\,:=\,H^1_0(\mathbb{R}^+;\mathbb{C})\,\dot{+}\, \mathrm{span}_{\mathbb{C}}\{\sqrt{r}\, K_{\alpha-\frac{1}{2}}\}\,,& & \quad B^+_{\alpha,0}\psi\,:=\;-\ii \Big(\frac{\ud}{\ud r}-\frac{\alpha}{r} \Big)\psi\,. \label{eq:Bpiuzero}
 \end{eqnarray}
 \end{proposition}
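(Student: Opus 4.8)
The plan is to read off the block representation directly from the characterisation of $\mathsf{h}_{\alpha,0}^{(\infty)}$ already obtained in Section~\ref{subsect:SD}, rather than re-running the Kre\u{\i}n--Vi\v{s}ik--Birman construction. As an identity of \emph{differential actions}, \eqref{eq:Hsupersym} with $k=0$ gives at once $\mathsf{h}_{\alpha,0}^{(\infty)}-mc^2\sigma_3=\left(\begin{smallmatrix}\mathbb{O} & cB_{\alpha,0}^+\\ cB_{\alpha,0}&\mathbb{O}\end{smallmatrix}\right)$, exactly as in the proof of Proposition~\ref{cor:Closure_k_neq0}, with $B_{\alpha,0}=-\ii(\frac{\ud}{\ud r}+\frac{\alpha}{r})$ and $B_{\alpha,0}^+=-\ii(\frac{\ud}{\ud r}-\frac{\alpha}{r})$. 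So the only genuine content is the identification of the two operator domains. Writing $g=\binom{g_+}{g_-}$ and recalling that $\mathsf{h}_{\alpha,0}^*$ acts maximally with the same differential expression, one has $g\in\mathcal{D}(\mathsf{h}_{\alpha,0}^*)$ if and only if $B_{\alpha,0}g_+\in L^2$ and $B_{\alpha,0}^+g_-\in L^2$ (the diagonal terms $\pm mc^2 g_\pm$ being bounded); that is, $\mathcal{D}(\mathsf{h}_{\alpha,0}^*)$ is the \emph{product} of the maximal domains of $B_{\alpha,0}$ and $B_{\alpha,0}^+$.

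Next I would exploit that, by Theorem~\ref{prop:Dirac}(iii) and its proof in Section~\ref{subsect:SD}, the Friedrichs extension $\mathsf{h}_{\alpha,0}^{(\infty)}=\mathsf{h}_{\alpha,0}^{(D)}$ is the restriction of $\mathsf{h}_{\alpha,0}^*$ selected by the boundary condition $g_0=0$ in \eqref{eq:AsymptoticAdjoint}. This condition constrains \emph{only} the upper component: it removes the homogeneous solution $r^{-\alpha}$ of the first-order equation for $g_+$. Solving that inhomogeneous ODE with $g_0=0$ gives $g_+(r)\stackrel{r\downarrow0}{=}o(r^{1/2})$, and a weighted Hardy inequality (the scalar mechanism underlying Lemma~\ref{lem:Norm_Equivalence_Closure}, valid for $|\alpha|<\tfrac12$) upgrades this to $r^{-1}g_+\in L^2$, whence $g_+'\in L^2$ and $g_+\in H^1_0(\mathbb{R}^+;\mathbb{C})$; the reverse inclusion is immediate. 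This yields $\mathcal{D}(B_{\alpha,0})=H^1_0(\mathbb{R}^+;\mathbb{C})$ and the action \eqref{eq:Bmenozero}, read off the lower-left entry.

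Since $g_0=0$ imposes no condition on $g_-$, the lower component ranges over the \emph{full} maximal domain of $B_{\alpha,0}^+$. The homogeneous equation $B_{\alpha,0}^+g_-=0$ has the single solution $r^{\alpha}$, so every maximal-domain element reads $g_-(r)\stackrel{r\downarrow0}{=}C\,r^{\alpha}+o(r^{1/2})$, the regular remainder lying in $H^1_0$ by the same Hardy argument. A globally $L^2$ representative of the singular direction is provided by the lower component of $\Phi_{\alpha,\lambda,c}^{(D)}$ in \eqref{eq:PhiD}, namely $\sqrt{r}\,K_{\alpha-\frac12}(\mu_{c,\lambda}\sqrt{\lambda}\,r)$, which by \eqref{eq:AsymptZeroPhi}--\eqref{eq:Phi1Def} has leading behaviour $\sim r^{\alpha}$ and decays exponentially at infinity. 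Hence $\mathcal{D}(B_{\alpha,0}^+)=H^1_0(\mathbb{R}^+;\mathbb{C})\,\dot{+}\,\mathrm{span}_{\mathbb{C}}\{\sqrt{r}\,K_{\alpha-\frac12}\}$, with action \eqref{eq:Bpiuzero}; equivalently $B_{\alpha,0}^+=(\overline{B_{\alpha,0}})^*$, which also makes manifest the supersymmetric form anticipated in Remark~\ref{rem:commentsDiracExt}(3).

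It remains to record closedness. The operator $B_{\alpha,0}$ is closed on $H^1_0$ by the norm equivalence $\|B_{\alpha,0}\cdot\|_{L^2}+\|\cdot\|_{L^2}\sim\|\cdot\|_{H^1}$ (the scalar analogue of Lemma~\ref{lem:Norm_Equivalence_Closure}), while $B_{\alpha,0}^+$ is closed as an adjoint; alternatively, both blocks inherit closedness from the self-adjointness of $\mathsf{h}_{\alpha,0}^{(\infty)}$ and the boundedness of $mc^2\sigma_3$. The main obstacle is precisely the Hardy-type step invoked twice above: showing that $o(r^{1/2})$ behaviour at the origin, \emph{together with} $L^2$-control of the first-order action, forces $r^{-1}g\in L^2$ and hence membership in $H^1_0$. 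This is the weighted Hardy inequality valid throughout $|\alpha|<\tfrac12$ and is the same estimate that drives Lemma~\ref{lem:Norm_Equivalence_Closure}; everything else is bookkeeping on the explicit Bessel asymptotics.
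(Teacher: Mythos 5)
Your proposal is correct in substance but takes a genuinely different route from the paper's own proof. The paper argues top-down from the Kre\u{\i}n--Vi\v{s}ik--Birman--Grubb classification: it starts from $\mathcal{D}(\mathsf{h}_{\alpha,0}^{(\infty)})=H^1_0(\mathbb{R}^+;\mathbb{C}^2)\,\dot{+}\,\mathrm{span}_{\mathbb{C}}\{\Psi_{\alpha,\lambda,c}^{(D)}\}$ (i.e.\ \eqref{eq:ginDstar} with $c_0=0$), splits $\Psi_{\alpha,\lambda,c}^{(D)}$ into an explicit multiple $q_\alpha\sqrt{r}\,K_{\alpha-\frac12}$ in the lower component plus a remainder $\Xi_{\alpha,\lambda,c}$, and shows $\Xi_{\alpha,\lambda,c}\in H^1_0$ by combining the $o(r^{\frac12})$ asymptotics \eqref{eq:AsintoticaPsiZero} with $H^1$-regularity on $[\varepsilon,+\infty)$ extracted from the resolvent identity $\Psi=(\mathsf{h}_{\alpha,0}^{(\infty)}-mc^2+\lambda)^{-1}\Phi$ and the Bessel recurrences; the block domains are then read off. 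You instead work bottom-up: you identify $\mathcal{D}(\mathsf{h}_{\alpha,0}^*)$ with the product of the two scalar maximal domains, impose the boundary condition $g_0=0$ (legitimate here, since Theorem \ref{prop:Dirac}(iii) is available before the proposition), and pin down each component via variation of constants plus a weighted Hardy estimate. Your route buys, as a by-product, the identity $B_{\alpha,0}^+=(B_{\alpha,0})^*$ with full maximal domain, which the paper only establishes afterwards (Proposition \ref{prop:B+=B*}) \emph{using} this proposition; the price is exactly the step you flag: Lemma \ref{lem:Norm_Equivalence_Closure} is proved only for $C^\infty_0$ functions, so upgrading a maximal-domain element with $o(r^{\frac12})$ behaviour to $H^1_0$ requires redoing the integration by parts on $[\varepsilon,b]$, killing the boundary term $|g(\varepsilon)|^2/\varepsilon$ via the $o(r^{\frac12})$ control, and absorbing $\|r^{-1}g\|_{L^2}$ by Hardy with constants $(1\mp 2\alpha)^2>0$ --- standard but not literally covered by the lemma as stated. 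Two minor corrections: $\mathsf{h}_{\alpha,0}^{(\infty)}$ is not a Friedrichs extension (the Dirac operator is not semibounded; the paper calls it the distinguished extension $\mathsf{h}_{\alpha,0}^{(D)}$), and your spanning function $\sqrt{r}\,K_{\alpha-\frac12}(\mu_{c,\lambda}\sqrt{\lambda}\,r)$ versus the paper's $\sqrt{r}\,K_{\alpha-\frac12}(r)$ is harmless since suitably normalised multiples differ by an $H^1_0$ function (their difference has leading behaviour $\sim r^{1-\alpha}$), a point worth stating explicitly.
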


 \begin{proof}
  In the general classification of the self-adjoint extensions of $\mathsf{h}_{\alpha,0}$, see \eqref{eq:ginDstar} and \eqref{eq:halphb} above,
  \[
   \mathcal{D}(\mathsf{h}_{\alpha,0}^{(\infty)})\,=\,\mathcal{D}(\overline{\mathsf{h}_{\alpha,0}})\,\dot{+}\,\mathrm{span}_{\mathbb{C}}\{\Psi_{\alpha,\lambda,c}^{(D)}\}\,,
  \]
   where (Proposition \ref{prop:hD}(iv))
  \[
   \Psi_{\alpha,\lambda,c}^{(D)}\,=\,(\mathsf{h}_{\alpha,0}^{(\infty)}-mc^2+\lambda)^{-1}\Phi_{\alpha,\lambda,c}^{(D)}
  \]
  and (Proposition \ref{prop:Closure}(i))
  \[
   \mathcal{D}(\overline{\mathsf{h}_{\alpha,0}})\,=\,H^1_0(\mathbb{R}^+;\mathbb{C}^2)	\,.
  \]

  Let us split
  \[
   \Psi_{\alpha,\lambda,c}^{(D)}\,=\,\begin{pmatrix} 0 \\ q_\alpha \sqrt{r} K_{\alpha-\frac{1}{2}} \end{pmatrix}+\Xi_{\alpha,\lambda,c}
  \]
 with
  \[
   \begin{split}
    q_\alpha\,&:=\,-2^{\frac{1}{2}+\alpha}\,\frac{\lambda}{\,c^2} \frac{  C_{\alpha,\lambda,c}}{\, \Gamma(\frac{1}{2}-\alpha)} \Vert \Phi_{\alpha,\lambda,c}^{(D)} \Vert_{L^2(\mathbb{R}^+;\mathbb{C}^2)}^2 \,.
   \end{split}
  \]
 Since \cite[Eq.~(9.6.1) and (9.6.10)]{Abramowitz-Stegun-1964}
\begin{equation*}
	\sqrt{r} K_{\alpha-\frac{1}{2}}(r)\, \stackrel{(r\downarrow 0)}{=}\,
	2^{-\alpha
   -\frac{1}{2}} \Gamma \left({\textstyle\frac{1}{2}-\alpha}
   \right) r^{\alpha}+2^{\alpha -\frac{3}{2}} \Gamma \left(\alpha
   -\frac{1}{2}\right) r^{1-\alpha} + O(r^{2+\alpha}) \, ,
   \end{equation*}
 then $q_\alpha \sqrt{r} K_{\alpha-\frac{1}{2}}$ has the same $r^{\alpha}$-leading term as the asymptotics \eqref{eq:AsintoticaPsiZero} for $\Psi_{\alpha,\lambda,c}^{(D)}$, whence
 \[
  \Xi_{\alpha,\lambda,c} \, \stackrel{(r\downarrow 0)}{=}\, o(r^{\frac{1}{2}})\,.
 \]

 In fact, we shall now argue that
 \[
  \Xi_{\alpha,\lambda,c} \,\in\, H^1_0(\mathbb{R}^+;\mathbb{C}^2)\,.
 \]
 Clearly, $\Xi_{\alpha,\lambda,c}\in C^\infty(\mathbb{R}^+;\mathbb{C}^2)$, since both $\Psi_{\alpha,\lambda,c}^{(D)}$ and $K_{\alpha-\frac{1}{2}}(r)$ are smooth functions on $\mathbb{R}^+$. Moreover, in the notation $\Psi_{\alpha,\lambda,c}^{(D)}\equiv\begin{pmatrix} \Psi_+ \\ \Psi_-\end{pmatrix}$, $\Phi_{\alpha,\lambda,c}^{(D)}=\begin{pmatrix} \Phi_+ \\ \Phi_-\end{pmatrix}$, the identity $\Psi_{\alpha,\lambda,c}^{(D)}=(\mathsf{h}_{\alpha,0}^{(\infty)}-mc^2+\lambda)^{-1}\Phi_{\alpha,\lambda,c}^{(D)}$ reads
 \begin{equation*}
 \begin{cases}
  \;- \ii c \frac{\ud \Psi_-}{\ud r} \,=\, mc^2 \Psi_+- \ii c \frac{\alpha}{r} \Psi_- + \Phi_+\,, \\
  \;-\ii c \frac{\ud \Psi_+}{\ud r} \,=\, \ii c \frac{\alpha}{r} \Psi_+ -mc^2 \Psi_-+ \Phi_-\,.
 \end{cases}
\end{equation*}
 In the right-hand sides above, all functions are square-integrable on any interval $[\varepsilon,+\infty]$, $\varepsilon>0$ (the $\alpha/r$-terms are bounded as $r\geqslant \varepsilon$), implying
   \[
   \Psi_{\alpha,\lambda,c}^{(D)}\big|_{[\varepsilon,+\infty)}\,\in\,H^1([\varepsilon,+\infty);\mathbb{C}^2)\qquad\forall\varepsilon>0\,.
  \]
  As a matter of fact, also
  \[
   K_{\alpha-\frac{1}{2}}\big|_{[\varepsilon,+\infty)}\,\in\,H^1([\varepsilon,+\infty);\mathbb{C})\qquad\forall\varepsilon>0\,.
  \]
  Indeed, $\sqrt{r} K_{\alpha-\frac12} \in H^1_{\text{loc}}(\mathbb{R}^+;\mathbb{C})$ because $K_{\alpha-\frac12} \in C^\infty(\mathbb{R}^+;\mathbb{C})$, and the exponential decay of $K_{\nu}$ at infinity  \cite[Eq.~(9.7.2)]{Abramowitz-Stegun-1964} together with the recurrence relation \cite[Eq.~(9.6.26)-(iv)]{Abramowitz-Stegun-1964}
\begin{equation*}
	K_{\alpha-\frac{1}{2}}'(r)\,=\,-\frac{1}{2}\Big( K_{-\frac{1}{2}-\alpha}(r)+K_{\frac{3}{2}-\alpha}(r) \Big) \, ,
\end{equation*}
 imply that $\sqrt{r} K_{\alpha-\frac12}$ is an $H^1$-function at infinity. Since both $\Psi_{\alpha,\lambda,c}^{(D)}\big|_{[\varepsilon,+\infty)}$ and $K_{\alpha-\frac{1}{2}}\big|_{[\varepsilon,+\infty)}$ are $H^1$-functions, so is $\Xi_{\alpha,\lambda,c}\big|_{[\varepsilon,+\infty)}$. But since $\Xi_{\alpha,\lambda,c} \stackrel{(r\downarrow 0)}{=} o(r^{\frac{1}{2}})$, then $\Xi_{\alpha,\lambda,c} \in H^1_0(\mathbb{R}^+;\mathbb{C}^2)$ as claimed.

  Therefore,
\begin{equation*}
	\begin{split}
		\mathcal{D}(\mathsf{h}_{\alpha,0}^{(\infty)})&=H^1_0(\mathbb{R}^+;\mathbb{C}^2)\,\dot{+} \,\mathrm{span}_{\mathbb{C}}\{\Psi_{\alpha,\lambda,c}^{(D)}\} \\
		&=H^1_0(\mathbb{R}^+;\mathbb{C}^2)\,\dot{+}\, \mathrm{span}_{\mathbb{C}}\bigg\{q_\alpha \sqrt{r} K_{\alpha-\frac{1}{2}}(r)\begin{pmatrix} 0 \\ 1 \end{pmatrix}+R_\alpha(r)\bigg\}\\
		&=H^1_0(\mathbb{R}^+;\mathbb{C}^2)\,\dot{+}\, \mathrm{span}_{\mathbb{C}}\bigg\{\sqrt{r} K_{\alpha-\frac{1}{2}}(r)\begin{pmatrix} 0 \\ 1 \end{pmatrix}\bigg\} \, .
	\end{split}
\end{equation*}

Now, \eqref{eq:halpha0D}, \emph{as an identity of differential actions}, follows directly from \eqref{eq:Hsupersym} and is equivalent to
 \[
  \mathsf{h}_{\alpha,0}^{(\infty)}-mc^2\sigma_3 \,=\, \begin{pmatrix}
		\mathbb{O} & cB_{\alpha,0}^+ \\
		cB_{\alpha,0} & \mathbb{O}
	\end{pmatrix}.
 \]
 Since the domain of the l.h.s.~is
 \[
  \mathcal{D}(\mathsf{h}_{\alpha,0}^{(\infty)}-mc^2\sigma_3)\,=\,\mathcal{D}(\mathsf{h}_{\alpha,0}^{(\infty)})\,=\,H^1_0(\mathbb{R}^+;\mathbb{C}^2)\,\dot{+}\, \mathrm{span}_{\mathbb{C}}\bigg\{\sqrt{r} K_{\alpha-\frac{1}{2}}(r)\begin{pmatrix} 0 \\ 1 \end{pmatrix}\bigg\}\,,
 \]
 then the domains of the non-zero blocks in the r.h.s.~must be the ones indicated in \eqref{eq:Bmenozero}-\eqref{eq:Bpiuzero}.
 \end{proof}

 As a matter of fact, the above $B^+_{\alpha,k}$'s are nothing but the adjoints in $L^2(\mathbb{R}^+;\mathbb{C},\ud r)$ of the corresponding $B_{\alpha,k}$'s.

 \begin{proposition}\label{prop:B+=B*}
  $B^+_{\alpha,k}=(B_{\alpha,k})^*$ $\;\forall k \in \mathbb{Z}$, where
   \begin{equation}\label{eq:Balphakstar}
	\begin{split}
		\mathcal{D}((B_{\alpha,k})^*)\,&=\,\Big\{\psi \in L^2(\mathbb{R}^+;\mathbb{C},\ud r) \, \Big| \, \Big(\frac{\ud}{\ud r}-\frac{\alpha+k}{r}\Big) \psi \in L^2(\mathbb{R}^+;\mathbb{C},\ud r) \Big\} \,,\\
		(B_{\alpha,k})^* \psi \,&=\, -\ii \Big(\frac{\ud}{\ud r}-\frac{\alpha+k}{r} \Big)\psi\,.
	\end{split}
\end{equation}
 \end{proposition}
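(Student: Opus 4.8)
The plan is to prove $B^+_{\alpha,k}=(B_{\alpha,k})^*$ in two moves. First I would compute the adjoint $(B_{\alpha,k})^*$ intrinsically, showing that it is the maximal operator displayed in \eqref{eq:Balphakstar}, uniformly in $k$. Then I would identify the operators $B^+_{\alpha,k}$ already produced in Propositions \ref{cor:Closure_k_neq0} and \ref{prop:hfried0} with this maximal operator, exploiting the self-adjointness of the assembled Dirac block operators.

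For the first move I would start from the formal computation: for $\psi,\phi\in C^\infty_0(\mathbb{R}^+;\mathbb{C})$ an integration by parts, with vanishing boundary terms since both functions are compactly supported away from the endpoints, gives
\[
 \langle B_{\alpha,k}\psi,\phi\rangle_{L^2}\,=\,\Big\langle \psi,\,-\ii\Big(\frac{\ud}{\ud r}-\frac{\alpha+k}{r}\Big)\phi\Big\rangle_{L^2}\,,
\]
so that $-\ii(\frac{\ud}{\ud r}-\frac{\alpha+k}{r})$ is the formal adjoint action. Next I would check that $C^\infty_0(\mathbb{R}^+;\mathbb{C})$ is a core for $B_{\alpha,k}$, that is, that $H^1_0(\mathbb{R}^+;\mathbb{C})$ is the graph-norm closure of $C^\infty_0$. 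This is the scalar analogue of Lemma \ref{lem:Norm_Equivalence_Closure}: integrating by parts one finds $\|B_{\alpha,k}\psi\|^2=\|\psi'\|^2+(\alpha+k)(\alpha+k+1)\|r^{-1}\psi\|^2$, and since $\alpha\in(-\frac12,\frac12)$ forces $|\alpha+k|\neq\frac12$ for every $k$, the coefficient $(\alpha+k)(\alpha+k+1)$ stays strictly above the critical value $-\frac14$; Hardy's inequality then yields $\|\cdot\|_{B_{\alpha,k}}\sim\|\cdot\|_{H^1}$ on $C^\infty_0$, whence the graph closure is $H^1_0$. With $C^\infty_0$ a core, the adjoint may be tested against $C^\infty_0$ alone, and the displayed identity shows that $\psi\in\mathcal{D}((B_{\alpha,k})^*)$ precisely when $(\frac{\ud}{\ud r}-\frac{\alpha+k}{r})\psi\in L^2$, with $(B_{\alpha,k})^*$ acting as the formal adjoint. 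This is exactly \eqref{eq:Balphakstar}.

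For the second move I would invoke that $\overline{\mathsf{h}_{\alpha,k}}$ is self-adjoint for $k\neq 0$ (Theorem \ref{prop:Dirac}(ii)) and that $\mathsf{h}_{\alpha,0}^{(\infty)}$ is self-adjoint for $k=0$ (Theorem \ref{prop:Dirac}(iii)), together with the block representations \eqref{eq:halkclos} and \eqref{eq:halpha0D}. In both cases the operator equals $mc^2\sigma_3+cT$ with $T=\left(\begin{smallmatrix}\mathbb{O}&B^+_{\alpha,k}\\ B_{\alpha,k}&\mathbb{O}\end{smallmatrix}\right)$ and domain $\mathcal{D}(T)=\mathcal{D}(B_{\alpha,k})\oplus\mathcal{D}(B^+_{\alpha,k})$, a genuine product domain also when $k=0$, since the extra summand $\mathrm{span}_{\mathbb{C}}\{\sqrt{r}K_{\alpha-\frac12}\}$ of Proposition \ref{prop:hfried0} lives entirely in the second component. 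As $mc^2\sigma_3$ is bounded and self-adjoint, $T$ is self-adjoint. A direct computation of the adjoint of an off-diagonal block operator with product domain gives $T^*=\left(\begin{smallmatrix}\mathbb{O}&(B_{\alpha,k})^*\\ (B^+_{\alpha,k})^*&\mathbb{O}\end{smallmatrix}\right)$ on $\mathcal{D}((B^+_{\alpha,k})^*)\oplus\mathcal{D}((B_{\alpha,k})^*)$, so $T=T^*$ forces $B^+_{\alpha,k}=(B_{\alpha,k})^*$ (and dually $B_{\alpha,k}=(B^+_{\alpha,k})^*$), matching both domains and actions. Combined with the first move this proves the Proposition.

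The hard part will be the domain bookkeeping in the block-adjoint step: one must verify that $\mathcal{D}(T)$ really factorises as a product (immediate for $k\neq0$, where both blocks have domain $H^1_0$, but for $k=0$ requiring the precise description of $\mathcal{D}(\mathsf{h}_{\alpha,0}^{(\infty)})$ from Proposition \ref{prop:hfried0}), and that the testing argument identifying $T^*$ returns exactly $\mathcal{D}(C^*)\oplus\mathcal{D}(D^*)$ for $T=\left(\begin{smallmatrix}\mathbb{O}&C\\ D&\mathbb{O}\end{smallmatrix}\right)$. A subsidiary point, needed so that the integration-by-parts boundary terms vanish when testing $(B_{\alpha,k})^*$ against functions outside $C^\infty_0$, is that the product of an $H^1_0$-function (which is $o(r^{1/2})$ at the origin and decays at infinity) with a function of $\mathcal{D}(B^+_{\alpha,k})$ (at worst $\sim r^\alpha$ at the origin, exponentially small at infinity) vanishes at both endpoints; this holds because $\tfrac12+\alpha>0$.
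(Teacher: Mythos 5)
Your proof is correct and takes essentially the same route as the paper, which obtains \eqref{eq:Balphakstar} from the general minimal/maximal-adjoint fact (citing \cite[Lemma 4.3]{Grubb-DistributionsAndOperators-2009}, which your core argument via the scalar norm equivalence re-derives by hand) and then, exactly as in your second move, reads off the mutual adjointness of $B_{\alpha,k}$ and $B^+_{\alpha,k}$ from the self-adjointness of $\mathsf{h}_{\alpha,0}^{(\infty)}-mc^2\sigma_3$ (resp.\ $\overline{\mathsf{h}_{\alpha,k}}-mc^2\sigma_3$, your uniform treatment of all $k$ being slightly more explicit than the paper's, which spells out only $k=0$) on its product domain. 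Your only redundancy is the closing ``subsidiary point'': once $C^\infty_0(\mathbb{R}^+;\mathbb{C})$ is known to be a core, the adjoint is determined by testing against compactly supported functions alone, so no boundary-term analysis at the endpoints is needed.
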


  \begin{proof}
   Formula \eqref{eq:Balphakstar} follows by general facts (see, e.g., \cite[Lemma 4.3]{Grubb-DistributionsAndOperators-2009}) directly from \eqref{eq:Bgeneric} and \eqref{eq:Bmenozero}.

   As a follow-up of (the proof of) Proposition \ref{prop:hfried0}, one sees that on the domain
    \[
  \mathcal{D}(\mathsf{h}_{\alpha,0}^{(\infty)}-mc^2\sigma_3)\,=\,\mathcal{D}(\mathsf{h}_{\alpha,0}^{(\infty)})\,=\,H^1_0(\mathbb{R}^+;\mathbb{C}^2)\,\dot{+}\, \mathrm{span}_{\mathbb{C}}\bigg\{\sqrt{r} K_{\alpha-\frac{1}{2}}(r)\begin{pmatrix} 0 \\ 1 \end{pmatrix}\bigg\}
 \]
  the left-hand side of
     \[
  \mathsf{h}_{\alpha,0}^{(\infty)}-mc^2\sigma_3 \,=\, \begin{pmatrix}
		\mathbb{O} & cB_{\alpha,0}^+ \\
		cB_{\alpha,0} & \mathbb{O}
	\end{pmatrix}.
 \]
 is self-adjoint. Then necessarily $B_{\alpha,0}$ and $B_{\alpha,0}^+$ are mutually adjoint.
  \end{proof}

With the above preparations, we now control the non-relativistic limit $c\to +\infty$ \emph{of the operator closures} in each $k$-block.

\begin{proposition}\label{prop:LimitFriedrichs}
	\textbf{}\begin{itemize}
		\item[(i)] If $k\in\mathbb{Z}\setminus\{-1,0\}$, then $\forall c > 0$ $\rho(\overline{\mathsf{h}_{\alpha,k}}-mc^2) \subset \rho(\overline{\mathsf{S}_{\alpha,k }})$ and $\forall z \in \rho(\overline{\mathsf{h}_{\alpha,k}}-mc^2)$,
		\begin{equation}\label{eq:clim-non0nonm1}
			\lim_{c \to +\infty} (\overline{\mathsf{h}_{\alpha,k}}-mc^2-z)^{-1} \;=\; (\overline{\mathsf{S}_{\alpha,k}}-z)^{-1} \oplus_{\mathbb{C}^2} \mathbb{O} \, .
		\end{equation} 
		\item[(ii)] $\forall c > 0$ $\rho(\mathsf{h}_{\alpha,0}^{(\infty)}-mc^2) \subset \rho(\mathsf{S}_{\alpha,0}^{(\infty)})$ and for any $z \in \rho(\mathsf{h}_{\alpha,0}^{(\infty)}-mc^2)$, we have
		\begin{equation}\label{eq:LimitResolventFriedrichs1}
			\lim_{c \to +\infty} (\mathsf{h}_{\alpha,0}^{(\infty)} - mc^2-z)^{-1} = (\mathsf{S}_{\alpha,0}^{(\infty)}-z)^{-1}\oplus_{\mathbb{C}^2} \mathbb{O} \, .
		\end{equation}
		\item[(iii)] $\forall c > 0$ $ \rho(\overline{\mathsf{h}_{\alpha,-1}}-mc^2) \subset \rho(\mathsf{S}_{\alpha,-1}^{(\infty)})$ and for any $z \in \rho(\overline{\mathsf{h}_{\alpha,-1}}-mc^2)$, we have
		\begin{equation}\label{eq:clim-m1}
			\lim_{c \to +\infty} (\overline{\mathsf{h}_{\alpha,-1}}-mc^2-z)^{-1} = (\mathsf{S}_{\alpha,-1}^{(\infty)}-z)^{-1} \oplus_{\mathbb{C}^2} \mathbb{O}\,.
		\end{equation}
	\end{itemize}
	All limits above are meant in operator norm.
\end{proposition}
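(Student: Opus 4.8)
The plan is to handle the three cases (i)--(iii) uniformly, exploiting that, by Propositions \ref{cor:Closure_k_neq0}, \ref{prop:hfried0} and \ref{prop:B+=B*}, each of the relevant self-adjoint operators has the common supersymmetric block form
\[
 \mathsf{A}_k \,:=\, \begin{pmatrix} mc^2 & c(B_{\alpha,k})^* \\ cB_{\alpha,k} & -mc^2 \end{pmatrix}\,,
\]
with $B_{\alpha,k}$ closed on $H^1_0(\mathbb{R}^+;\mathbb{C})$, where $\mathsf{A}_k=\overline{\mathsf{h}_{\alpha,k}}$ for $k\neq0$ and $\mathsf{A}_0=\mathsf{h}_{\alpha,0}^{(\infty)}$. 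The key algebraic observation is that $\frac{1}{2m}(B_{\alpha,k})^*B_{\alpha,k}$ has the same differential action as $\mathsf{S}_{\alpha,k}$ and, being the self-adjoint operator associated with the closed form $\psi\mapsto\frac{1}{2m}\|B_{\alpha,k}\psi\|^2$ on form domain $H^1_0(\mathbb{R}^+;\mathbb{C})$, coincides with the relevant Schr\"odinger realisation: namely $\overline{\mathsf{S}_{\alpha,k}}$ when $k\notin\{-1,0\}$, and the Friedrichs extension $\mathsf{S}_{\alpha,k}^{(\infty)}$ when $k\in\{-1,0\}$. I denote this operator by $\mathsf{T}_k:=\frac{1}{2m}(B_{\alpha,k})^*B_{\alpha,k}\geqslant 0$, so that the target limit in all three cases is $(\mathsf{T}_k-z)^{-1}\oplus_{\mathbb{C}^2}\mathbb{O}$.

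First I would invert $\mathsf{A}_k-mc^2-z$ by a Schur-complement computation with respect to the block decomposition. Since the lower-right entry $-(2mc^2+z)$ is boundedly invertible for every $z$ in the shifted Dirac resolvent set (where necessarily $z\neq-2mc^2$, that endpoint belonging to the spectrum), the operator $\mathsf{A}_k-mc^2-z$ is invertible if and only if so is the Schur complement
\[
 S_{c,z} \,:=\, -z + \frac{c^2}{\,2mc^2+z\,}(B_{\alpha,k})^*B_{\alpha,k} \,=\, \lambda_c\,(B_{\alpha,k})^*B_{\alpha,k} - z\,,\qquad \lambda_c:=\frac{c^2}{\,2mc^2+z\,}\xrightarrow{c\to+\infty}\frac{1}{2m}\,.
\]
This equivalence yields at once the resolvent-set inclusion: for real $t\geqslant0$ one has $\lambda_c>0$ and $t/\lambda_c\geqslant0\in\sigma((B_{\alpha,k})^*B_{\alpha,k})=[0,+\infty)$, so $S_{c,t}$ is not invertible, whence $t\in\sigma(\mathsf{A}_k-mc^2)$; therefore $[0,+\infty)\subseteq\sigma(\mathsf{A}_k-mc^2)$, and since $\mathsf{T}_k\geqslant0$ has $\rho(\mathsf{T}_k)=\mathbb{C}\setminus[0,+\infty)$ we conclude $\rho(\mathsf{A}_k-mc^2)\subseteq\rho(\mathsf{T}_k)$. (Here I invoke the standard fact that $\sigma(\overline{\mathsf{S}_{\alpha,k}})=\sigma(\mathsf{S}_{\alpha,k}^{(\infty)})=[0,+\infty)$.)

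The block inversion formula then expresses $(\mathsf{A}_k-mc^2-z)^{-1}$ through $S_{c,z}^{-1}$ and the bounded closures of $S_{c,z}^{-1}(B_{\alpha,k})^*$, $B_{\alpha,k}S_{c,z}^{-1}$, and $B_{\alpha,k}S_{c,z}^{-1}(B_{\alpha,k})^*$, whose uniform-in-$c$ boundedness I would establish from the identity $\|B_{\alpha,k}\psi\|^2=\langle\psi,(B_{\alpha,k})^*B_{\alpha,k}\psi\rangle$ together with the uniform bound $\sup_c\|S_{c,z}^{-1}\|<\infty$ (obtained by writing $S_{c,z}^{-1}=\lambda_c^{-1}((B_{\alpha,k})^*B_{\alpha,k}-z/\lambda_c)^{-1}$ and using $z/\lambda_c\to 2mz\notin[0,+\infty)$). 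For the diagonal upper-left entry I would use the second resolvent identity
\[
 S_{c,z}^{-1}-(\mathsf{T}_k-z)^{-1}\,=\,\Big(\tfrac{1}{2m}-\lambda_c\Big)\,S_{c,z}^{-1}\,(B_{\alpha,k})^*B_{\alpha,k}\,(\mathsf{T}_k-z)^{-1}\,,
\]
whose norm is $O(|\tfrac{1}{2m}-\lambda_c|)=O(c^{-2})$ because $(B_{\alpha,k})^*B_{\alpha,k}(\mathsf{T}_k-z)^{-1}=2m\,\mathsf{T}_k(\mathsf{T}_k-z)^{-1}$ is bounded; hence the upper-left block converges in norm to $(\mathsf{T}_k-z)^{-1}$. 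The remaining three blocks carry explicit prefactors $\frac{c}{2mc^2+z}=O(c^{-1})$ or $\frac{c^2}{(2mc^2+z)^2}=O(c^{-2})$ multiplying the uniformly bounded operators above, and thus vanish in norm. Assembling the four blocks produces exactly $(\mathsf{T}_k-z)^{-1}\oplus_{\mathbb{C}^2}\mathbb{O}$, which is the asserted limit in each of \eqref{eq:clim-non0nonm1}, \eqref{eq:LimitResolventFriedrichs1}, \eqref{eq:clim-m1}.

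The main obstacle I anticipate is the rigorous handling of the Schur inversion at the level of \emph{unbounded} operators: the products $S_{c,z}^{-1}(B_{\alpha,k})^*$ and $B_{\alpha,k}S_{c,z}^{-1}(B_{\alpha,k})^*$ are a priori only densely defined, so rather than manipulate unbounded inverses formally I would define the candidate resolvent $R_{c,z}$ directly through the bounded closures of these products (e.g. $\overline{S_{c,z}^{-1}(B_{\alpha,k})^*}=(B_{\alpha,k}(S_{c,z}^*)^{-1})^*$) and then verify $(\mathsf{A}_k-mc^2-z)R_{c,z}=\mathbbm{1}$ and $R_{c,z}(\mathsf{A}_k-mc^2-z)=\mathbbm{1}\upharpoonright\mathcal{D}(\mathsf{A}_k)$ by hand. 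For the uniform bound on $B_{\alpha,k}S_{c,z}^{-1}(B_{\alpha,k})^*$ the convenient tool is the supersymmetric intertwining $B_{\alpha,k}(B_{\alpha,k})^*B_{\alpha,k}=(B_{\alpha,k}(B_{\alpha,k})^*)B_{\alpha,k}$, which gives $B_{\alpha,k}S_{c,z}^{-1}(B_{\alpha,k})^*=(\lambda_c\,B_{\alpha,k}(B_{\alpha,k})^*-z)^{-1}B_{\alpha,k}(B_{\alpha,k})^*$ and reduces the estimate to the scalar bound $\sup_{s\geqslant0}\big|s/(\lambda_c s-z)\big|<\infty$, uniform for large $c$.
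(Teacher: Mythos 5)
Your proposal is correct, and while it starts from the same structural reduction as the paper (Propositions \ref{cor:Closure_k_neq0}, \ref{prop:hfried0}, \ref{prop:B+=B*} putting everything in the supersymmetric block form with off-diagonal entries $cB_{\alpha,k}$, $c(B_{\alpha,k})^*$), it differs in both remaining ingredients. For the limit itself, the paper simply cites the classical abstract result \cite[Corollary 6.2]{Thaller-Dirac-1992}, whereas you re-derive it by hand via the Schur complement $S_{c,z}=\lambda_c (B_{\alpha,k})^*B_{\alpha,k}-z$, block-wise estimates with explicit rates $O(c^{-1})$, $O(c^{-2})$, and the intertwining trick for $B_{\alpha,k}S_{c,z}^{-1}(B_{\alpha,k})^*$; this buys a self-contained, quantitative proof (and, as a by-product, the spectral inclusion $\rho(\mathsf{A}_k-mc^2)\subset\rho(\mathsf{T}_k)$, which the paper inherits from the citation), at the cost of the unbounded-entry technicalities you correctly flag and plausibly resolve through bounded closures. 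For the identification $\tfrac{1}{2m}(B_{\alpha,k})^*B_{\alpha,k}=\mathsf{S}_{\alpha,k}^{(\infty)}$ in the blocks $k\in\{-1,0\}$, the paper argues via the short-distance asymptotics of Theorems \ref{prop:k0}--\ref{prop:k-1}, excluding every $\theta\neq\infty$ (resp.\ $\nu\neq\infty$) by the incompatible $r^{-\alpha-1}$ (resp.\ $r^{\alpha-1}$) singularity of $\big(\frac{\ud}{\ud r}-\frac{\alpha+k}{r}\big)\psi$, while you argue form-theoretically; your route is cleaner, but you should spell out that $C^\infty_0(\mathbb{R}^+)$ is a form core, i.e.\ that its closure in the form norm $\frac{1}{2m}\|B_{\alpha,k}\psi\|^2+\|\psi\|^2$ is exactly $H^1_0(\mathbb{R}^+;\mathbb{C})$ — true because $(\alpha+k)(\alpha+k+1)>-\frac{1}{4}$ strictly for $\alpha\in(-\frac12,\frac12)$, $k\in\{-1,0\}$, so the Hardy argument of Lemma \ref{lem:Norm_Equivalence_Closure} gives equivalence of the form norm with the $H^1$ norm — since as written this step is asserted rather than proved.
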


\begin{proof} The result in all three cases relies on the operator-norm limit, $\forall k\in\mathbb{Z}$,
\begin{equation}\tag{a}\label{eq:Thallerlimit}
	\lim_{c \to +\infty} \left[\begin{pmatrix}
		mc^2 & (cB_{\alpha,k})^* \\
		cB_{\alpha,k} & - mc^2
	\end{pmatrix}  - mc^2 - z\right]^{-1} = \left[\frac{1}{2m} (B_{\alpha,k})^*B_{\alpha,k}-z\right]^{-1} \oplus \mathbb{O} \, ,
\end{equation}
a simple but fundamental classical result discussed, e.g., in \cite[Corollary 6.2]{Thaller-Dirac-1992}.

Observe also that the differential action of $(B_{\alpha,k})^*B_{\alpha,k}$ (namely $B^+_{\alpha,k}B_{\alpha,k}$, in view of Proposition \ref{prop:B+=B*}) and of $\mathsf{S}_{\alpha,k }^*$ is the same and amounts to
\[
\frac{1}{2m} \Big( -\frac{\ud^2}{\ud r^2}+\frac{(\alpha+k)(\alpha+k+1)}{r^2}\Big)\,,
\]
as one deduces from \eqref{eq:Bgeneric}-\eqref{eq:B+generic} and \eqref{eq:Bmenozero}-\eqref{eq:Bpiuzero}.
Moreover, since both $B_{\alpha,k}$ and $(B_{\alpha,k})^*$ are closed operators, their product $(B_{\alpha,k})^*B_{\alpha,k}$ is self-adjoint.

Owing to \eqref{eq:halkclos}, \eqref{eq:halpha0D}, and Proposition \ref{prop:B+=B*}, the left-hand side of \eqref{eq:Thallerlimit} has exactly the form of the left-hand side of \eqref{eq:clim-non0nonm1}, of \eqref{eq:LimitResolventFriedrichs1}, and of \eqref{eq:clim-m1}. Therefore, in order to establish such three formulas, one has to show that
\begin{align}
  \frac{1}{2m} (B_{\alpha,k})^*B_{\alpha,k}\,&=\,\overline{\mathsf{S}_{\alpha,k }}\,,\qquad k\in\mathbb{Z}\setminus\{0,-1\}\,, \tag{b}\label{eq:BBc1} \\
  \frac{1}{2m} (B_{\alpha,0})^*B_{\alpha,0}\,&=\,\mathsf{S}_{\alpha,0}^{(\infty)}\,, \tag{c}\label{eq:BBc0} \\
   \frac{1}{2m} (B_{\alpha,-1})^*B_{\alpha,-1}\,&=\,\mathsf{S}_{\alpha,-1}^{(\infty)}\,. \tag{d}\label{eq:BBc2}
\end{align}

For all cases $k\in\mathbb{Z}\setminus\{0,-1\}$, the self-adjoint operator $(2m)^{-1}(B_{\alpha,k})^*B_{\alpha,k}$, having the same differential action as $\mathsf{S}_{\alpha,k }$, must be a self-adjoint extension of $\mathsf{S}_{\alpha,k }$ itself. Then \eqref{eq:BBc1} follows because $\mathsf{S}_{\alpha,k }$ is essentially self-adjoint.

Concerning $k=0$ and $k=-1$, also in these cases $(2m)^{-1}(B_{\alpha,k})^*B_{\alpha,k}$ is a self-adjoint extension of $\mathsf{S}_{\alpha,k}$. In order to prove that this extension is precisely $\mathsf{S}_{\alpha,k}^{(\infty)}$, and hence to establish \eqref{eq:BBc0} and \eqref{eq:BBc2}, we now show that the only self-adjoint extension of $\mathsf{S}_{\alpha,k}$ whose domain is \emph{contained} in $\mathcal{D}((B_{\alpha,k})^*B_{\alpha,k})$ is actually $\mathsf{S}_{\alpha,k}^{(\infty)}$.

So, when $k=0$, take $\psi\in\mathcal{D}(\mathsf{S}_{\alpha,0}^{(\theta)})$, a function in the domain of a generic member of the family of extensions described in Theorem \ref{prop:k0}. Due to \eqref{eq:shortdasympt}-\eqref{eq:DomSak0},
\[
 \psi(r) \stackrel{(r\downarrow 0)}{=} \theta^{-1} a_1 r^{-\alpha}+ a_1 r^{1+\alpha} + o(r^{\frac{3}{2}})\,.
\]
On the other hand, due to \eqref{eq:Bmenozero}-\eqref{eq:Bpiuzero},
\[
 \mathcal{D}((B_{\alpha,0})^*B_{\alpha,0})\;=\;\Big\{ f\in H^1_0(\mathbb{R}^+)\,\Big|\,\Big(\frac{\ud}{\ud r}-\frac{\alpha}{r}\Big) f \in H^1_0(\mathbb{R}^+)\dot{+}\, \mathrm{span}_{\mathbb{C}}\{\sqrt{r}\, K_{\alpha-\frac{1}{2}}\}\Big\}\,.
\]
As argued already, the short-distance asymptotics for $H^1_0(\mathbb{R}^+)$-functions is $\sim o(r^{\frac{1}{2}})$, and for the function $\sqrt{r} K_{\alpha-\frac12}$ it is

\begin{equation*}
 \sqrt{r} K_{\alpha-\frac12}(r)\,\stackrel{(r\downarrow 0)}{=}\,d_{\alpha,k}\,\Big( 2^{\alpha-\frac{1}{2}}\,r^{-\alpha+1}+2^{-\alpha+\frac{1}{2}}\,r^{\alpha} \Big)\big(1+o(r^2))
\end{equation*}
(see the proof of Proposition \ref{prop:B+=B*}). One thus sees that, in order for the above $\psi$ to belong to $\mathcal{D}((B_{\alpha,0})^*B_{\alpha,0})$, the part of $\psi$ behaving as $r^{-\alpha}$ as $r\downarrow 0$ must be absent, for otherwise the function $(\frac{\ud}{\ud r}-\frac{\alpha}{r})\psi$ would display an incompatible $r^{-\alpha-1}$ singularity. Requiring this to hold for all $\psi$'s in $\mathcal{D}(\mathsf{S}_{\alpha,0}^{(\theta)})$, implies $\theta=\infty$, which selects the extension $\mathsf{S}_{\alpha,0}^{(\infty)}$. Then also \eqref{eq:BBc0} is established.

The remaining case $k=-1$ is treated analogously. Take now $\psi\in\mathcal{D}(\mathsf{S}_{\alpha,-1}^{(\nu)})$, a function in the domain of a generic member of the family of extensions described in Theorem \ref{prop:k-1}. Due to \eqref{eq:shortdasympt2}-\eqref{eq:DomSak-1},
\[
 \psi(r) \stackrel{(r\downarrow 0)}{=} \nu^{-1} b_1 r^{\alpha}+ b_1 r^{1-\alpha} + o(r^{\frac{3}{2}})\,,
\]
and due to \eqref{eq:Bgeneric}-\eqref{eq:B+generic},
\[
 \mathcal{D}((B_{\alpha,-1})^*B_{\alpha,-1})\;=\;\Big\{ f\in H^1_0(\mathbb{R}^+)\,\Big|\,\Big(\frac{\ud}{\ud r}-\frac{\alpha-1}{r}\Big) f \in H^1_0(\mathbb{R}^+)\Big\}\,.
\]
Thus, in order for $\psi$ to belong to $\mathcal{D}((B_{\alpha,-1})^*B_{\alpha,-1})$, the part of $\psi$ behaving as $r^\alpha$ as $r\downarrow 0$ must be absent, otherwise the function $(\frac{\ud}{\ud r}-\frac{\alpha-1}{r})\psi$ would display an incompatible $r^{\alpha-1}$ singularity. Requiring this to hold for all $\psi$'s in $\mathcal{D}(\mathsf{S}_{\alpha,-1}^{(\nu)})$, implies $\nu=\infty$, which selects the extension $\mathsf{S}_{\alpha,-1}^{(\infty)}$. Then also \eqref{eq:BBc2} is established.

All three limits \eqref{eq:clim-non0nonm1}-\eqref{eq:clim-m1} are proved.
\end{proof}

 Two more bits of information are needed to identify the limits of generic self-adjoint realisations in the non-trivial block $k=0$. The first concerns the connection in the limit between the two pre-factors $\tau_{\alpha,\lambda,\gamma,c}^{(D)}$ and $\tau_{\alpha,\lambda,\theta}^{(S)}$ introduced in \eqref{eq:TauDGamma} and \eqref{eq:TauS}, respectively.

\begin{lemma}\label{lem:tauprefactors}
 Let $\alpha\in(-\frac{1}{2},\frac{1}{2})$. For given $\theta\in\mathbb{R}\cup\{\infty\}$, let $\gamma=\gamma(c)$ scale with $c$ according to \eqref{eq:LimitGammaTheta}, that is,
 \begin{equation*}
		\lim_{c \to +\infty} \frac{\,2\, m\, c \,\gamma(c)\,}{1+ 2\alpha} = \theta \, .
	\end{equation*}
 Then
 \begin{equation}
   \lim_{c \to +\infty}\tau_{\alpha,\lambda,\gamma(c),c}^{(D)}\,=\,\tau_{\alpha,\lambda,\theta}^{(S)}\,.
 \end{equation}
\end{lemma}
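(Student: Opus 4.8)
The plan is to compute $\lim_{c\to+\infty}\tau^{(D)}_{\alpha,\lambda,\gamma(c),c}$ directly from the explicit expression \eqref{eq:TauDGamma} and compare it term by term with \eqref{eq:TauS}. Both pre-factors have the form $2/(\pi\sec(\pi\alpha)+(\text{second summand}))$, and the first summand $\pi\sec(\pi\alpha)$ in the two denominators is already identical and $c$-independent. Hence everything reduces to showing that the $\gamma$-dependent second summand of the Dirac denominator converges, as $c\to+\infty$, to the $\theta$-dependent second summand of the Schr\"odinger denominator.

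First I would record the elementary limit $\mu_{c,\lambda}^2=(2mc^2-\lambda)/c^2=2m-\lambda/c^2\to 2m$, so that $\mu_{c,\lambda}\to\sqrt{2m}$. Next, in the Dirac second summand I would collect the powers of $\mu_{c,\lambda}$ by means of $\mu^2(\mu\sqrt\lambda)^{-1-2\alpha}=\mu^{1-2\alpha}\lambda^{-\frac12-\alpha}$, so that the summand reads $4^\alpha\mu_{c,\lambda}^{1-2\alpha}\lambda^{-\frac12-\alpha}\Gamma^2(\tfrac12+\alpha)\,c\gamma$. The scaling hypothesis \eqref{eq:LimitGammaTheta} gives $c\gamma(c)\to(1+2\alpha)\theta/(2m)$, while $\mu_{c,\lambda}^{1-2\alpha}\to(2m)^{\frac12-\alpha}$. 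Passing to the limit in the product thus yields
\[
4^\alpha(2m)^{\frac12-\alpha}\lambda^{-\frac12-\alpha}\Gamma^2(\tfrac12+\alpha)\,\frac{(1+2\alpha)\theta}{2m}.
\]

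It then remains to match this with the Schr\"odinger second summand $2^{\frac12+\alpha}(m\lambda)^{-\frac12-\alpha}\Gamma(\tfrac12+\alpha)\Gamma(\tfrac32+\alpha)\theta$. Writing $4^\alpha=2^{2\alpha}$ and $(2m)^{\frac12-\alpha}/(2m)=2^{-\frac12-\alpha}m^{-\frac12-\alpha}$, the Dirac limit becomes $2^{\alpha-\frac12}(m\lambda)^{-\frac12-\alpha}\Gamma^2(\tfrac12+\alpha)(1+2\alpha)\theta$. The single identity $\Gamma(\tfrac32+\alpha)=(\tfrac12+\alpha)\Gamma(\tfrac12+\alpha)$, combined with $1+2\alpha=2(\tfrac12+\alpha)$, turns $2^{\alpha-\frac12}\Gamma^2(\tfrac12+\alpha)(1+2\alpha)$ into $2^{\frac12+\alpha}\Gamma(\tfrac12+\alpha)\Gamma(\tfrac32+\alpha)$, so the two limiting summands coincide and $\tau^{(D)}_{\alpha,\lambda,\gamma(c),c}\to\tau^{(S)}_{\alpha,\lambda,\theta}$ for $\theta\in\mathbb R$.

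Finally I would dispose of the endpoint $\theta=\infty$ separately: in that regime \eqref{eq:LimitGammaTheta} forces $|c\gamma(c)|\to+\infty$, so the $\gamma$-dependent summand of the Dirac denominator diverges in absolute value and $\tau^{(D)}\to 0$; on the other side $\tau^{(S)}_{\alpha,\lambda,\infty}=0$ (the Friedrichs case carries no rank-one correction, $\theta$ appearing in the denominator of \eqref{eq:TauS}), so the two agree in the limit as well. I do not anticipate a genuine obstacle: the argument is a one-variable limit followed by a single Gamma recurrence, and the only points requiring a little care are keeping the exponents of $\mu_{c,\lambda}$ straight and treating the $\theta=\infty$ case by hand.
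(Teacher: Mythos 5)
Your proposal is correct and follows essentially the same route as the paper's proof: a direct limit in the explicit formula \eqref{eq:TauDGamma}, using $\mu_{c,\lambda}\to\sqrt{2m}$, the hypothesis $c\gamma(c)\to(1+2\alpha)\theta/(2m)$, and the single Gamma recurrence $\Gamma(\tfrac32+\alpha)=(\tfrac12+\alpha)\Gamma(\tfrac12+\alpha)$ to match the denominators of \eqref{eq:TauDGamma} and \eqref{eq:TauS} (the paper works with $(\tau^{(D)})^{-1}$, an immaterial difference). Your explicit treatment of the endpoint $\theta=\infty$, where $|c\gamma(c)|\to+\infty$ forces $\tau^{(D)}\to 0=\tau^{(S)}_{\alpha,\lambda,\infty}$, is a small point of extra care that the paper leaves implicit.
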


\begin{proof}
 In the expression
 \[
  (\tau_{\alpha,\lambda,\gamma(c),c}^{(D)})^{-1}\;=\; \frac{1}{2}\Big( \pi \sec(\pi \alpha)+4^\alpha \mu_{c,\lambda}^2  \big(\mu_{c,\lambda} \sqrt{\lambda} \big)^{-1-2\alpha} \,\Gamma^2({\textstyle\frac{1}{2}+\alpha}) c \gamma(c)\Big)
 \]
 we observe that $\mu_{c,\lambda}\xrightarrow{c\to+\infty}\sqrt{2m}$ and hence
 \begin{equation*}
	\lim_{c \to +\infty}4^\alpha \mu_{c,\lambda}^2  \big( \mu_{c,\lambda}\sqrt{\lambda} \big)^{-1-2\alpha} \,= \,2^{\frac{1}{2}+\alpha} m^{\frac{1}{2}-\alpha} \lambda^{-\frac{1}{2}-\alpha}\,.
\end{equation*}
Moreover, since $(\frac{1}{2}+\alpha)\Gamma(\frac{1}{2}+\alpha)=\Gamma(\frac{3}{2}+\alpha)$ (\cite[Eq.~(6.1.15)]{Abramowitz-Stegun-1964}),
\[
 \Gamma^2\Big(\frac{1}{2}+\alpha\Big)\,=\,\Gamma\Big(\frac{1}{2}+\alpha\Big)\Gamma\Big(\frac{3}{2}+\alpha\Big)\,\frac{2}{1+2\alpha}\,.
\]
Combining all together,
\[
 \begin{split}
 \lim_{c \to +\infty} (\tau_{\alpha,\lambda,\gamma(c),c}^{(D)})^{-1}\,&=\,\frac{1}{2} \Big(\pi \sec (\pi \alpha)+2^{\frac{1}{2}+\alpha} (m \lambda)^{-\frac{1}{2}-\alpha}\, \Gamma\big({\textstyle\frac{1}{2}+\alpha}\big)\Gamma\big({\textstyle\frac{3}{2}+\alpha}\big)\Big(\lim_{c \to +\infty} \frac{2 m c \gamma(c)}{1 +2\alpha}\Big)\Big) \\
 &=\,\frac{1}{2} \Big(\pi \sec (\pi \alpha)+2^{\frac{1}{2}+\alpha} (m \lambda)^{-\frac{1}{2}-\alpha}\, \Gamma\big({\textstyle\frac{1}{2}+\alpha}\big)\Gamma\big({\textstyle\frac{3}{2}+\alpha}\big)\,\theta\Big)\,=\,(\tau_{\alpha,\lambda,\theta}^{(S)})^{-1}\,,
 \end{split}
\]
having implemented \eqref{eq:LimitGammaTheta} in the second equality and \eqref{eq:TauS} in the third.
\end{proof}

The final ingredient is the proof that eventually as $c\to +\infty$ all resolvent sets $\rho\big(\mathsf{h}_{\alpha,0}^{(\gamma(c))}-mc^2\big)$ along the trajectory $\gamma\equiv\gamma(c)$ satisfying the scaling \eqref{eq:LimitGammaTheta}, do contain a common, fixed (i.e., $c$-independent) sub-interval of the negative real half-line.

 \begin{proposition}\label{prop:myprop}
  Let $\alpha\in(-\frac{1}{2},\frac{1}{2})$. For given $\theta\in\mathbb{R}\cup\{\infty\}$, let $\gamma=\gamma(c)$ scale with $c$ according to \eqref{eq:LimitGammaTheta}, that is,
 \begin{equation*}
		\lim_{c \to +\infty} \frac{\,2\, m\, c \,\gamma(c)\,}{1+ 2\alpha} = \theta \, .
	\end{equation*}
 Then, eventually as $c\to +\infty$, say, $\forall c>c_0$ for some sufficiently high threshold $c_0>0$,
 the whole interval $(-2mc^2,0)$, with the possible exception of an open, $c$-independent, proper sub-interval $I_{\alpha,\theta,c_0}$, is entirely contained in $\rho\big(\mathsf{h}_{\alpha,0}^{(\gamma(c))}-mc^2\big)$.
 \end{proposition}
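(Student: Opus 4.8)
The plan is to locate the single point of the gap $(-2mc^2,0)$ that may fail to belong to $\rho\big(\mathsf{h}_{\alpha,0}^{(\gamma(c))}-mc^2\big)$ and to show that, along the trajectory $\gamma=\gamma(c)$, this point is eventually trapped in a fixed interval. By Theorem \ref{prop:Dirac}(iv), the resolvent \eqref{eq:mainres} is the Friedrichs resolvent plus a rank-one term whose coefficient is $\tau_{\alpha,\lambda,\gamma,c}^{(D)}$ from \eqref{eq:TauDGamma}; since $(-2mc^2,0)\subset\rho(\mathsf{h}_{\alpha,0}^{(\infty)}-mc^2)$, the only candidate spectral point of $\mathsf{h}_{\alpha,0}^{(\gamma(c))}-mc^2$ inside the gap is the eigenvalue $-\lambda^*(c)$, where $\lambda^*(c)\in(0,2mc^2)$ is the unique zero (if any) of the denominator
\[
D(\lambda,c)\,:=\,\pi\sec(\pi\alpha)+4^{\alpha}\,\Gamma^2(\tfrac12+\alpha)\,c\,\gamma(c)\,g_c(\lambda),\qquad g_c(\lambda):=\mu_{c,\lambda}^2(\mu_{c,\lambda}\sqrt{\lambda})^{-1-2\alpha}.
\]

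First I would establish that $g_c$ is a strictly decreasing bijection of $(0,2mc^2)$ onto $(0,+\infty)$. Using $\mu_{c,\lambda}$ from \eqref{eq:MuLambda} one rewrites $g_c(\lambda)=(2m-\lambda/c^2)^{\frac{1-2\alpha}{2}}\lambda^{-\frac{1+2\alpha}{2}}$, a product of two positive strictly decreasing factors with $g_c(0^+)=+\infty$ and $g_c((2mc^2)^-)=0$. Since $\pi\sec(\pi\alpha)>0$ and the prefactor of $g_c$ in $D$ carries the sign of $c\gamma(c)$, the equation $D(\lambda,c)=0$ has a (necessarily unique) solution precisely when $c\gamma(c)<0$, and then $g_c(\lambda^*(c))=\beta(c)$ with $\beta(c):=-\pi\sec(\pi\alpha)\big(4^\alpha\Gamma^2(\tfrac12+\alpha)\,c\gamma(c)\big)^{-1}$. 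This reduces the whole statement to controlling the root of $g_c(\lambda)=\beta(c)$.

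Next I would pass to the limit. By the scaling \eqref{eq:LimitGammaTheta} one has $c\gamma(c)\to\frac{(1+2\alpha)\theta}{2m}$, while $\mu_{c,\lambda}\to\sqrt{2m}$ gives $g_c\to h$ with $h(\lambda):=(2m)^{\frac{1-2\alpha}{2}}\lambda^{-\frac{1+2\alpha}{2}}$, uniformly on compact subsets of $(0,+\infty)$. When $\theta<0$ the quantity $\beta(c)$ converges to a finite positive constant $g_\infty$, and the limiting equation $h(\lambda)=g_\infty$ has a unique root $\lambda^*_\infty\in(0,+\infty)$. The localisation $\lambda^*(c)\to\lambda^*_\infty$ then follows from monotonicity and comparison: for fixed $\epsilon>0$ and $c$ large, the locally uniform convergence together with $\beta(c)\to g_\infty$ yields $g_c(\lambda^*_\infty-\epsilon)>\beta(c)=g_c(\lambda^*(c))>g_c(\lambda^*_\infty+\epsilon)$, whence $\lambda^*(c)\in(\lambda^*_\infty-\epsilon,\lambda^*_\infty+\epsilon)$ by strict monotonicity of $g_c$. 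Thus the excluded point $-\lambda^*(c)$ lies in the fixed interval $I_{\alpha,\theta,c_0}=(-\lambda^*_\infty-\epsilon,-\lambda^*_\infty+\epsilon)$, a proper sub-interval of $(-2mc^2,0)$ once $2mc^2>\lambda^*_\infty+\epsilon$.

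Finally I would dispatch the remaining regimes by the same monotone picture. If $\theta>0$ then eventually $c\gamma(c)>0$, so $D>0$ throughout $(0,2mc^2)$ and no point of the gap is excluded. If $\theta=0$ (approached with $c\gamma(c)<0$) then $\beta(c)\to+\infty$ forces $\lambda^*(c)\to0^+$, trapping $-\lambda^*(c)$ near $0$; if $\theta=\infty$ (approached with $c\gamma(c)<0$) then $\beta(c)\to0^+$ forces $\lambda^*(c)$ toward the lower edge $2mc^2$, so that $-\lambda^*(c)$ eventually leaves every fixed compact subset of $(-\infty,0)$ and one takes $I$ adjacent to the moving edge $-2mc^2$. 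In every case a fixed interval of the negative half-line survives in $\rho\big(\mathsf{h}_{\alpha,0}^{(\gamma(c))}-mc^2\big)$ for all large $c$, which is the assertion. The delicate point, deserving the most care, is exactly this uniform localisation of $\lambda^*(c)$: because $g_c$ carries its own $c$-dependence through $\mu_{c,\lambda}$ and is defined on the expanding interval $(0,2mc^2)$, one must combine the locally uniform convergence $g_c\to h$ with strict monotonicity to exclude any unexpected drift of the root, most notably in the escaping regime $\theta=\infty$.
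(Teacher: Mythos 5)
Your proposal is correct in substance but takes a genuinely different route from the paper. The paper never locates the gap eigenvalue explicitly: it shows that, under the scaling \eqref{eq:LimitGammaTheta}, the resolvents $(\mathsf{h}_{\alpha,0}^{(\gamma(c))}-mc^2+\lambda)^{-1}$ form a Cauchy family in operator norm on the common resolvent window (combining the convergence $\tau^{(D)}_{\alpha,\lambda,\gamma(c),c}\to\tau^{(S)}_{\alpha,\lambda,\theta}$ with the convergence of the rank-one term $\frac{\lambda}{c^2}|\Phi^{(D)}_{\alpha,\lambda,c}\rangle\langle\Phi^{(D)}_{\alpha,\lambda,c}|$), and then argues that the eigenvalues at two large values $c_1,c_2$, being the \emph{unique} poles of $\varepsilon$-close resolvents, must be $\varepsilon$-close, invoking the non-contraction of spectra under norm-resolvent convergence \cite[Theorem VIII.23]{rs1}. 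You instead solve for the eigenvalue directly: you recognise it as the unique zero $\lambda^*(c)$ of the denominator of \eqref{eq:TauDGamma}, rewrite $\mu_{c,\lambda}^2(\mu_{c,\lambda}\sqrt{\lambda})^{-1-2\alpha}=(2m-\lambda/c^2)^{\frac{1-2\alpha}{2}}\lambda^{-\frac{1+2\alpha}{2}}$ using \eqref{eq:MuLambda} (correct, and strictly decreasing from $+\infty$ to $0$ on $(0,2mc^2)$ since both exponents are positive for $\alpha\in(-\frac12,\frac12)$), and trap the root by monotone comparison with the locally uniform limit $h(\lambda)=(2m)^{\frac{1-2\alpha}{2}}\lambda^{-\frac{1+2\alpha}{2}}$. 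Your approach buys more: it shows an eigenvalue exists exactly when $c\gamma(c)<0$, identifies its limit $\lambda^*_\infty$ (which one checks is precisely the pole of $\tau^{(S)}_{\alpha,\lambda,\theta}$, i.e.\ the bound state of $\mathsf{S}^{(\theta)}_{\alpha,0}$ for $\theta<0$), and makes the $\theta$-regimes transparent, whereas the paper's soft argument needs no case analysis and no explicit root asymptotics.

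One caveat, in the regime $\theta=\infty$ reached with $c\gamma(c)\to-\infty$ (e.g.\ $\gamma$ a fixed negative constant): your own asymptotics show $\lambda^*(c)\to+\infty$, so the excluded point $-\lambda^*(c)$ sweeps an unbounded portion of the negative axis, and your phrase ``one takes $I$ adjacent to the moving edge $-2mc^2$'' is not literally compatible with the $c$-independence of $I_{\alpha,\theta,c_0}$ demanded by the statement: no fixed bounded interval can contain the eigenvalue for all $c>c_0$. The correct repair, which your root-tracking delivers immediately, is to read the conclusion in the weaker form actually used downstream via \eqref{eq:magicJ}: choose $c_0$ so large that $\lambda^*(c)>2mc_0^2$ for all $c>c_0$, so that the fixed window $(-2mc_0^2,0)$ is eventually free of spectrum and no exceptional interval is needed there. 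Note that the paper's own proof glosses over exactly the same point with the hedge ``if any'' (its pole-comparison argument presupposes both eigenvalues remain in the fixed window), so your proposal matches the paper's level of rigour here; but since you made the escaping regime explicit, you should also state the repaired conclusion explicitly rather than gesture at a moving interval.
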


 As a consequence of Proposition \ref{prop:myprop}, there is a convenient threshold $c_0>0$ and a non-empty region
  \begin{equation}\label{eq:magicJ}
  J_{\alpha,\theta,c_0}\,:=\,(-2mc_0^2)\setminus I_{\alpha,\theta,c_0}
 \end{equation}
 such that
 \begin{equation}
  J_{\alpha,\theta,c_0}\,\subset\,\bigcap_{c>c_0}\rho\big(\mathsf{h}_{\alpha,0}^{(\gamma(c))}-mc^2\big)\,.
 \end{equation}
 For the present purposes we do not quite need to characterise $J_{\alpha,\theta,c_0}$ further, nor need we identify for which values of $\theta$ there is no interval $I_{\alpha,\theta,c_0}$ to remove from $(-2mc_0^2,0)$. 
 We only need the crucial property that either $J_{\alpha,\theta,c_0}$ is a non-empty open interval, or a non-empty difference of an open interval minus an open sub-interval. Owing to such features, \emph{it is in this very $J_{\alpha,\theta,c_0}\subset\mathbb{R}^-$ that we shall take $-\lambda$ in order to compute the limits}
 \[
  \lim_{c\to +\infty}(\mathsf{h}_{\alpha,0}^{(\gamma(c))}-mc^2+\lambda)^{-1}\,.
 \]

 \begin{proof}[Proof of Proposition \ref{prop:myprop}]
 From Theorem \ref{prop:Dirac}(iv), $\rho(\mathsf{h}_{\alpha,0}^{(\infty)}-mc^2)$ contains the interval $(-2mc^2,0)$, for $\gamma(c)\neq\infty$ $\rho(\mathsf{h}_{\alpha,0}^{(\gamma(c))}-mc^2)$ contains $(-2mc^2,0)$ up to possibly one single point $E_{\alpha,\theta,c,\gamma(c)}$, the actual eigenvalue of $\mathsf{h}_{\alpha,0}^{(\gamma(c))}-mc^2$ inside the gap, and for $\lambda\in(-2mc^2,0)\setminus \{E_{\alpha,\theta,c,\gamma(c)}\}$ one has
  \begin{equation*}
(\mathsf{h}_{\alpha,0}^{(\gamma(c))}-mc^2+\lambda)^{-1} \,=\, (\mathsf{h}_{\alpha,0}^{(\infty)}-mc^2+\lambda)^{-1}+ \tau_{\alpha,\lambda,\gamma(c),c}^{(D)}\,\frac{\lambda}{c^2}  \big| \Phi^{(D)}_{\alpha,\lambda,c} \big\rangle \big\langle \Phi_{\alpha,\lambda,c}^{(D)} \big|\,.
\end{equation*}
 Consider now the two limits
 \begin{equation*}
\lim_{c \to +\infty}\frac{\lambda}{c^2} \big| \Phi^{(D)}_{\alpha,\lambda,c} \big\rangle \big\langle \Phi_{\alpha,\lambda,c}^{(D)} \big|\,=\,2m \big| \Phi_{\alpha,\lambda}^{(S)} \big\rangle \big\langle \Phi_{\alpha,\lambda}^{(S)}\big|\,\oplus_{\mathbb{C}^2}\,\mathbb{O}\,,
\end{equation*}
 following from the explicit expression of the spinor function $\Phi^{(D)}_{\alpha,\lambda,c}$ and of the scalar function $\Phi_{\alpha,\lambda}^{(S)}$ -- see \eqref{eq:PhiD} and \eqref{Phi-F-Schr-0} above, respectively -- and
 \begin{equation*}
   \lim_{c \to +\infty}\tau_{\alpha,\lambda,\gamma(c),c}^{(D)}\,=\,\tau_{\alpha,\lambda,\theta}^{(S)}\,,
 \end{equation*}
 already proved with Lemma \ref{lem:tauprefactors}. In view of such limits and of the above expression for the resolvent of $\mathsf{h}_{\alpha,0}^{(\gamma(c))}-mc^2$, one concludes that for arbitrary $\varepsilon>0$ there is $c_\varepsilon>0$ such that for $c_1,c_2>c_\varepsilon$ and for $-\lambda\in(-2mc_\varepsilon^2,0)\setminus\{E_{\alpha,\theta,c_1,\gamma(c_1)},E_{\alpha,\theta,c_2,\gamma(c_2)}\}$, one has
 \[
  \begin{split}
     \|
     (\mathsf{h}_{\alpha,0}^{(\gamma(c_1))}-mc_1^2&+\lambda)^{-1}-(\mathsf{h}_{\alpha,0}^{(\gamma(c_2))}-mc_2^2+\lambda)^{-1}\big\|_{L^2(\mathbb{R}^+;\mathbb{C}^2)\to L^2(\mathbb{R}^+;\mathbb{C}^2)} \,\leqslant\,\varepsilon\,.
  \end{split}
 \]
 As a further consequence \cite[Theorem VIII.23(b)]{rs2}, the eigenvalues $E_{\alpha,\theta,c_1,\gamma(c_1)}$ and $E_{\alpha,\theta,c_2,\gamma(c_2)}$ are $\varepsilon$-close, since they are the unique poles, respectively, of the resolvent of $\mathsf{h}_{\alpha,0}^{(\gamma(c_1))}-mc_2^2$ and of $\mathsf{h}_{\alpha,0}^{(\gamma(c_2))}-mc_2^2$. This establishes that eventually in $c$ the eigenvalues $E_{\alpha,\theta,c,\gamma(c)}$ inside the gap, \emph{if any}, are asymptotically close to a $c$-independent limit, and hence fall all within an open sub-interval $I_{\alpha,\theta,c_0}$ of $(-2mc_0^2,0)$.
 \end{proof}

\begin{proof}[Proof of Theorem \ref{thm:Main}] With respect to the block-decomposition \eqref{HoplusDecomp}-\eqref{eq:spinHk}, namely
\[
L^2(\mathbb{R}^2;\mathbb{C}^2,\ud x\,\ud y)\,\cong\,\bigoplus_{k\in\mathbb{Z}} L^2(\mathbb{R}^+;\mathbb{C}^2,\ud r)\,,
\]
respectively, with respect to the block-decomposition \eqref{SchrHoplusDecomp}-\eqref{eq:SchrHk}, namely
\[
 L^2(\mathbb{R}^2;\mathbb{C},\ud x\,\ud y)\,\cong\,\bigoplus_{k\in\mathbb{Z}}L^2(\mathbb{R}^+;\mathbb{C},\ud r)\,,
\]
we write $H_\alpha^{(\gamma)}$ and $S_\alpha^{(\theta)}$ in reduced form as
\[
 \begin{split}
  H_\alpha^{(\gamma)}\,&=\,\bigg(\bigoplus_{\substack{ k\in\mathbb{Z} \\ k\leqslant -2}} \overline{\mathsf{h}_{\alpha,k}}\;\bigg)\oplus\; \overline{\mathsf{h}_{\alpha,-1}}  \; \oplus \;\mathsf{h}_{\alpha,0}^{(\gamma)} \;\oplus  \bigg(\bigoplus_{\substack{ k\in\mathbb{Z} \\ k\geqslant 1}} \overline{\mathsf{h}_{\alpha,k}}\;\bigg), \\
  S_\alpha^{(\theta)}\,&=\,\bigg(\bigoplus_{\substack{ k\in\mathbb{Z} \\ k\leqslant -2}} \overline{S_{\alpha,k}}\;\bigg)\oplus\; S_{\alpha,-1}^{(F)} \; \oplus \;S_{\alpha,0}^{(\theta)} \;\oplus  \bigg(\bigoplus_{\substack{ k\in\mathbb{Z} \\ k\geqslant 1}} \overline{S_{\alpha,k}}\;\bigg).
 \end{split}
\]

In the blocks $k\in\mathbb{Z}\setminus\{0\}$ there is no dependence on $\gamma$ or $\theta$: the block-wise limit $c\to +\infty$, upon subtracting the rest energy $mc^2$, and in the norm resolvent sense, is a direct consequence of Proposition \ref{prop:LimitFriedrichs} (i) and (iii). Here resolvents are taken at the point $z$ in the gap $(-2mc^2,0)$ and hence automatically also at $z$ with non-zero imaginary part, owing to self-adjointness.

Concerning the block $k=0$, it follows from Theorem
\ref{prop:Dirac}(iv) and Proposition \ref{prop:myprop} that
there is $c_0>0$ and a non-empty $c$-independent union $J_{\alpha,\theta,c_0}$ of two intervals of the negative real line ($J_{\alpha,\theta,c_0}$ is actually defined in \eqref{eq:magicJ}) such that
 \begin{equation*}
  J_{\alpha,\theta,c_0}\,\subset\,\bigcap_{c>c_0}\rho\big(\mathsf{h}_{\alpha,0}^{(\gamma(c))}-mc^2\big)
 \end{equation*}
 and
 \begin{equation*}\tag{i}\label{eq:resres}
 \begin{split}
  (\mathsf{h}_{\alpha,0}^{(\gamma(c))}-mc^2+\lambda)^{-1} \,&=\, (\mathsf{h}_{\alpha,0}^{(\infty)}-mc^2+\lambda)^{-1}+ \tau_{\alpha,\lambda,\gamma(c),c}^{(D)}\,\frac{\lambda}{c^2}  \big| \Phi^{(D)}_{\alpha,\lambda,c} \big\rangle \big\langle \Phi_{\alpha,\lambda,c}^{(D)} \big|\,, \\
  & \qquad \forall (-\lambda)\in J_{\alpha,\theta,c_0}\, .
 \end{split}
\end{equation*}
 We already proved that
	\begin{equation*}\tag{ii}\label{eq:limfin-part1}
			\lim_{c \to +\infty} (\mathsf{h}_{\alpha,0}^{(\infty)} - mc^2+\lambda)^{-1} \,=\, (\mathsf{S}_{\alpha,0}^{(\infty)}+\lambda)^{-1}\oplus_{\mathbb{C}^2} \mathbb{O}
		\end{equation*}
 (Proposition \ref{prop:LimitFriedrichs}(ii)), as well as
  \begin{equation*}\tag{iii}\label{eq:limfin-part2}
\lim_{c \to +\infty}\frac{\lambda}{c^2} \big| \Phi^{(D)}_{\alpha,\lambda,c} \big\rangle \big\langle \Phi_{\alpha,\lambda,c}^{(D)} \big|\,=\,2m \big| \Phi_{\alpha,\lambda}^{(S)} \big\rangle \big\langle \Phi_{\alpha,\lambda}^{(S)}\big|\,\oplus_{\mathbb{C}^2}\,\mathbb{O}
\end{equation*}
 (following from \eqref{eq:PhiD} and \eqref{Phi-F-Schr-0} above), and
 \begin{equation*}\tag{iv}\label{eq:limfin-part3}
   \lim_{c \to +\infty}\tau_{\alpha,\lambda,\gamma(c),c}^{(D)}\,=\,\tau_{\alpha,\lambda,\theta}^{(S)}
 \end{equation*}
 (Lemma \ref{lem:tauprefactors}).
  Plugging \eqref{eq:limfin-part1}, \eqref{eq:limfin-part2}, and \eqref{eq:limfin-part3} into \eqref{eq:resres} yields
\[
\begin{split}
 \lim_{c \to +\infty}(\mathsf{h}_{\alpha,0}^{(\gamma(c))} - mc^2+\lambda)^{-1}\,&=\,\Big( (\mathsf{S}_{\alpha,0}^{(\infty)}+\lambda)^{-1}+2m\tau_{\alpha,\lambda,\theta}^{(S)}\big| \Phi_{\alpha,\lambda}^{(S)} \big\rangle \big\langle \Phi_{\alpha,\lambda}^{(S)}\big|\Big)\oplus_{\mathbb{C}^2} \mathbb{O} \\
 &\qquad \forall (-\lambda)\in J_{\alpha,\theta,c_0}\, .
\end{split}
\]
 In view of Theorem \ref{prop:k0}(iii), this limit is precisely $(\mathsf{S}_{\alpha,0}^{(\theta)}+\lambda)^{-1}\oplus_{\mathbb{C}^2} \mathbb{O}$. The extension of the above limit from $(-\lambda)\in J_{\alpha,\theta,c_0}$ to $(-\lambda)\in \mathbb{C}^\pm$ is automatic by holomorphicity.

 This establishes \eqref{eq:NonRelLimThmMain} in all $k$-blocks and completes the proof.
\end{proof}

\appendix

{
\section{Elements of Kre\u{\i}n-Vi\v{s}ik-Birman-Grubb self-adjoint extension theory}\label{sec:proofOfMain}

 We extract from the general discussion available, e.g., in
 \cite{GMO-KVB2017}, \cite[Chapter 2]{GM-SelfAdj_book-2022}, \cite[Chapter 13]{Grubb-DistributionsAndOperators-2009}, and \cite{KM-2015-Birman}, a concise summary of the tools of the Kre\u{\i}n-Vi\v{s}ik-Birman-Grubb self-adjoint extension theory needed in the present work.

 The main result is the following (see, e.g., \cite[Theorem 2.17]{GM-SelfAdj_book-2022}).

\begin{theorem}\label{thm:KVB-General-App}
	Let $S$ be a densely defined symmetric operator on a complex Hilbert space $\mathcal{H}$, which admits a self-adjoint extension $S_D$ that has everywhere defined bounded inverse on $\mathcal{H}$. Then
	\begin{equation}
	 \mathcal{D}(S^*)\,=\,\left\{\varphi+S_D^{-1}z+v \,\Big| \begin{array}{c} \varphi \in \mathcal{D}(\overline{S}),  \\
			z,v \in \ker S^* \end{array} \right\} \, .
	\end{equation}
    Moreover, there is a one-to-one correspondence between the family of the self-adjoint extensions of $S$ in $\mathcal{H}$ and the family of self-adjoint operators on Hilbert subspaces of $\ker S^*$. If $T$ is any such operator, in the correspondence $T \leftrightarrow S_T$ each self-adjoint extension $S_T$ of $S$ is given by
	\begin{equation}\label{eq:KVB-generalissima}
		\begin{split}
			S_T \,&=\, S^*\upharpoonright \mathcal{D}(S_T)\,, \\
			\mathcal{D}(S_T)\,&=\, \left\{\varphi+S_D^{-1}(Tv +w)+v \,\Big| \begin{array}{c} \varphi \in \mathcal{D}(\overline{S}), \, v \in \mathcal{D}(T) \\
			w \in \ker S^* \cap \mathcal{D}(T)^{\perp} \end{array} \right\} \, .
		\end{split}
	\end{equation}

\end{theorem}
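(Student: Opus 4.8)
The plan is to organise everything around the boundary (Green's) form attached to the decomposition of $\mathcal{D}(S^*)$, and then to identify the self-adjoint extensions of $S$ with self-adjoint parametrising operators through this form. First I would exploit the hypothesis on $S_D$ to obtain the decomposition of $\mathcal{D}(S^*)$. Since $\overline{S}$ is closed and $S_D^{-1}$ is a homeomorphism of $\mathcal{H}$ onto $\mathcal{D}(S_D)$ in the graph norm, $\mathcal{D}(\overline{S})$ is complete in the $S_D$-graph norm and is mapped by $S_D$ onto a closed subspace; hence $\mathrm{ran}(\overline{S})$ is closed and, using $\mathrm{ran}(\overline{S})^{\perp}=\ker S^*$, one has the orthogonal decomposition $\mathcal{H}=\mathrm{ran}(\overline{S})\oplus\ker S^*$. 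Given $u\in\mathcal{D}(S^*)$, surjectivity of $S_D$ lets me write $S^*u=S_D w$ with $w\in\mathcal{D}(S_D)$; then $S^*(u-w)=0$, so $u-w=:v\in\ker S^*$, and splitting $S_D w=\overline{S}\varphi+z$ according to the above orthogonal decomposition yields $w=\varphi+S_D^{-1}z$ by injectivity of $S_D$. Directness of the sum $\mathcal{D}(\overline{S})\dot{+}S_D^{-1}(\ker S^*)\dot{+}\ker S^*$ follows by applying $S^*$ to a vanishing combination and invoking the same orthogonality together with injectivity of $\overline{S}$. This proves the first displayed formula and furnishes well-defined boundary maps $\Gamma_0 u:=v$ and $\Gamma_1 u:=z$ on $\mathcal{D}(S^*)$.

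Next I would establish, for $u_1,u_2\in\mathcal{D}(S^*)$ with data $(\varphi_i,z_i,v_i)$, the second Green identity
\[
\langle S^*u_1,u_2\rangle-\langle u_1,S^*u_2\rangle=\langle z_1,v_2\rangle-\langle v_1,z_2\rangle .
\]
This is the core of the bookkeeping: using $S^*u_i=\overline{S}\varphi_i+z_i$, the symmetry of $\overline{S}$, the self-adjointness of $S_D$ and of $S_D^{-1}$, and the orthogonality $\mathrm{ran}(\overline{S})\perp\ker S^*$, every term cancels in pairs except $\langle z_1,v_2\rangle-\langle v_1,z_2\rangle$. Granting this, for a self-adjoint operator $T$ on a closed subspace $\mathcal{K}\subseteq\ker S^*$ I define $S_T$ by the stated domain, with $w\in\ker S^*\cap\mathcal{D}(T)^{\perp}=\mathcal{K}^{\perp}\cap\ker S^*$, and verify symmetry directly: for $u_1,u_2\in\mathcal{D}(S_T)$ one has $z_i=Tv_i+w_i$ with $w_i\perp\overline{\mathcal{D}(T)}$, so that $\langle z_1,v_2\rangle-\langle v_1,z_2\rangle=\langle Tv_1,v_2\rangle-\langle v_1,Tv_2\rangle=0$ by symmetry of $T$, giving $S_T\subseteq S_T^*$.

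The substantive step, and the one I expect to be the main obstacle, is the reverse inclusion $S_T^{*}\subseteq S_T$, i.e.\ maximality. I would take $u\in\mathcal{D}(S_T^{*})\subseteq\mathcal{D}(S^*)$ with data $(z,v)$, split $v=v_{\mathcal{K}}+v_{\mathcal{K}^{\perp}}$ and $z=z_{\mathcal{K}}+z_{\mathcal{K}^{\perp}}$ relative to $\ker S^*=\mathcal{K}\oplus\mathcal{K}^{\perp}$, and turn the defining condition $\langle S_T u',u\rangle=\langle u',S^*u\rangle$ for all $u'\in\mathcal{D}(S_T)$ into the boundary relation $\langle z',v\rangle=\langle v',z\rangle$ via the Green identity. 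Testing against $w'=0$ isolates $\langle Tv',v_{\mathcal{K}}\rangle=\langle v',z_{\mathcal{K}}\rangle$ for all $v'\in\mathcal{D}(T)$, which by $T=T^*$ forces $v_{\mathcal{K}}\in\mathcal{D}(T)$ and $z_{\mathcal{K}}=Tv_{\mathcal{K}}$; testing against $v'=0$ with arbitrary $w'\in\mathcal{K}^{\perp}$ forces $v_{\mathcal{K}^{\perp}}=0$. Hence $v\in\mathcal{D}(T)$ and $z=Tv+w$ with $w:=z_{\mathcal{K}^{\perp}}\in\ker S^*\cap\mathcal{D}(T)^{\perp}$, so $u\in\mathcal{D}(S_T)$ and $S_T$ is self-adjoint. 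The delicate points are precisely the careful handling of the subspace $\mathcal{K}\subsetneq\ker S^*$ and of the extra component $w$, and it is here that the crucial use of $T=T^*$, rather than mere symmetry, enters.

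Finally, to show that every self-adjoint extension arises this way and that the correspondence is bijective, I would read the boundary data off a given self-adjoint extension $\widetilde{S}$: the set $\mathcal{G}:=\{(\Gamma_0 u,\Gamma_1 u):u\in\mathcal{D}(\widetilde{S})\}\subseteq\ker S^*\times\ker S^*$ is, by the Green identity and self-adjointness of $\widetilde{S}$, a self-adjoint linear relation in $\ker S^*$. Its canonical splitting into the graph of a self-adjoint operator $T$ on $\mathcal{K}:=\overline{\mathrm{dom}\,\mathcal{G}}=\mathrm{mul}(\mathcal{G})^{\perp}$ plus the purely multivalued part $\{0\}\times\mathrm{mul}(\mathcal{G})$ produces exactly the parameter $T$, with $\mathrm{mul}(\mathcal{G})=\ker S^*\cap\mathcal{D}(T)^{\perp}$ playing the role of the component $w$. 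Since $\widetilde{S}\mapsto\mathcal{G}\mapsto T$ and $T\mapsto S_T\mapsto\mathcal{G}$ are mutually inverse, each extension being determined by its boundary data through the direct-sum decomposition of the first paragraph, this yields the claimed one-to-one correspondence and completes the proof.
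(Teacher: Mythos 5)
Your proof is correct, but there is an important contextual point: the paper does not prove Theorem \ref{thm:KVB-General-App} at all. It is stated in the appendix as a quoted summary of the Kre\u{\i}n-Vi\v{s}ik-Birman-Grubb theory, with the proof deferred to the cited literature (\cite{GMO-KVB2017}, \cite[Theorem 2.17]{GM-SelfAdj_book-2022}, \cite[Chapter 13]{Grubb-DistributionsAndOperators-2009}), so there is no internal proof to compare against; your attempt must be judged on its own and against those references. On its own it holds up. The decomposition $\mathcal{D}(S^*)=\mathcal{D}(\overline{S})\,\dot{+}\,S_D^{-1}(\ker S^*)\,\dot{+}\,\ker S^*$ is obtained correctly from $0\in\rho(S_D)$ (the bound $\Vert \overline{S}\varphi\Vert\geqslant \Vert S_D^{-1}\Vert^{-1}\Vert\varphi\Vert$ gives closed range and injectivity of $\overline{S}$, whence $\mathcal{H}=\ran(\overline{S})\oplus\ker S^*$), and your Green identity $\langle S^*u_1,u_2\rangle-\langle u_1,S^*u_2\rangle=\langle z_1,v_2\rangle-\langle v_1,z_2\rangle$ checks out term by term using $\overline{S}\subseteq S_D=S_D^*$. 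The maximality step is the right one and is handled correctly: testing against $w'$ alone kills $v_{\mathcal{K}^{\perp}}$, testing against $v'\in\mathcal{D}(T)$ alone yields $v\in\mathcal{D}(T^*)=\mathcal{D}(T)$ and $z_{\mathcal{K}}=Tv$, and you rightly flag that this is where $T=T^*$ (not mere symmetry) enters. Your route differs in packaging from the KVB-style derivations in the cited sources: you recast the classification as a boundary-triple argument, with $\Gamma_0 u=v$, $\Gamma_1 u=z$, $\ker\Gamma_0\cap\ker\Gamma_1=\mathcal{D}(\overline{S})$, and you obtain surjectivity of the correspondence through self-adjoint linear relations and their canonical operator-part decomposition, where $\mathrm{mul}(\mathcal{G})$ realises the component $w$; the classical treatments instead carry the "multivalued" data $w$ explicitly through Birman-type computations. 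The two are equivalent, and your version buys a cleaner conceptual bijection at the cost of importing relation theory. Two steps are left as sketches but are routine and fillable: (i) self-adjointness, not just symmetry, of the boundary relation $\mathcal{G}$ of a given self-adjoint extension $\widetilde{S}$ — for $(v,z)\in\mathcal{G}^*$ one forms $u:=S_D^{-1}z+v$ and concludes $u\in\mathcal{D}(\widetilde{S}^*)=\mathcal{D}(\widetilde{S})$ from the Green identity; (ii) the operator-part splitting of a closed self-adjoint relation with $\mathcal{K}=\overline{\mathrm{dom}\,\mathcal{G}}=\mathrm{mul}(\mathcal{G})^{\perp}$. With those details spelled out, your argument is a complete and legitimate proof of the theorem.
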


Theorem \ref{thm:KVB-General-App}, specialised to the case of interest of the present work, namely deficiency indices $(1,1)$, reads as follows.

 \begin{theorem}\label{thm:KVB-General-App11}
	Let $S$ be a densely defined symmetric operator on a complex Hilbert space $\mathcal{H}$, with deficiency indices $(1,1)$, and admitting a self-adjoint extension $S_D$ that has everywhere defined bounded inverse on $\mathcal{H}$. Define
    \[
     \Psi\,:=\, S_D^{-1}\Phi
	\]
	for a choice $\Phi\in\cH\setminus\{0\}$ such that
	\[
	 \ker S^* = \mathrm{span}_{\mathbb{C}}\{\Phi\}\,.
	\]
	Then
	\begin{equation}\label{eq:domainstar11}
	 \mathcal{D}(S^*)\,=\,\left\{\varphi+c_1\Psi+c_0\Phi \,\Big| \begin{array}{c} \varphi \in \mathcal{D}(\overline{S}),  \\
			c_0,c_1\in\mathbb{C} \end{array} \right\} \, .
	\end{equation}
    Moreover, the self-adjoint extensions of $S$ in $\cH$ form the family $(S^{(\beta)})_{\beta\in\mathbb{R}\cup\{\infty\}}$ where $S^{(\infty)}=S_D$ and where, for $\beta\in\mathbb{R}$,
	\begin{equation}\label{eq:KVB-11}
		\begin{split}
			S^{(\beta)}\,&=\,S^*\upharpoonright \mathcal{D}(S^{(\beta)})\,, \\
			\mathcal{D}(S^{(\beta)})\,&=\, \left\{\varphi+c(\beta\Psi+\Phi) \,\Big| \begin{array}{c} \varphi \in \mathcal{D}(\overline{S})\,, \\
			c\in\mathbb{C} \end{array} \right\} \, .
		\end{split}
	\end{equation}

\end{theorem}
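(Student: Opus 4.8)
The plan is to deduce Theorem \ref{thm:KVB-General-App11} directly as the specialisation of the general Theorem \ref{thm:KVB-General-App} to the case in which $\ker S^*$ is one-dimensional. Since the deficiency indices are $(1,1)$, one has $\ker S^* = \mathrm{span}_{\mathbb{C}}\{\Phi\}$ for the chosen $\Phi\neq 0$, so that every element of $\ker S^*$ is a scalar multiple of $\Phi$, and this single observation is what drives both formulas.

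First I would establish the domain identity \eqref{eq:domainstar11}. In the general decomposition $\mathcal{D}(S^*) = \{\varphi + S_D^{-1}z + v \mid \varphi\in\mathcal{D}(\overline{S}),\, z,v\in\ker S^*\}$ furnished by Theorem \ref{thm:KVB-General-App}, I would write $z = c_1\Phi$ and $v = c_0\Phi$ with $c_0,c_1\in\mathbb{C}$. Using the definition $\Psi := S_D^{-1}\Phi$ and the linearity of $S_D^{-1}$ gives $S_D^{-1}z = c_1\Psi$, so the general decomposition collapses to $\{\varphi + c_1\Psi + c_0\Phi\}$, which is precisely \eqref{eq:domainstar11}.

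Next I would classify, as demanded by the one-to-one correspondence of Theorem \ref{thm:KVB-General-App}, the self-adjoint operators $T$ acting on Hilbert subspaces of the one-dimensional space $\ker S^*$. Only two subspaces exist: the trivial subspace $\{0\}$ and all of $\ker S^*$. In the first case $\mathcal{D}(T) = \{0\}$ forces $v = 0$ in \eqref{eq:KVB-generalissima}, while $w$ ranges over $\ker S^* \cap \{0\}^{\perp} = \ker S^*$, so the domain becomes $\{\varphi + S_D^{-1}w\} = \{\varphi + c_1\Psi\} = \mathcal{D}(S_D)$; this choice therefore reproduces $S_D$, which we label $S^{(\infty)}$. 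In the second case $\mathcal{D}(T) = \ker S^* = \mathrm{span}_{\mathbb{C}}\{\Phi\}$, and a self-adjoint operator on a one-dimensional complex Hilbert space is multiplication by a real scalar: self-adjointness, i.e. $\langle T\Phi,\Phi\rangle = \langle\Phi,T\Phi\rangle$, forces $T\Phi = \beta\Phi$ with $\beta\in\mathbb{R}$. Now $\mathcal{D}(T)^{\perp}$ within $\ker S^*$ is trivial, so $w = 0$, and with $v = c\Phi$ one gets $S_D^{-1}(Tv + w) = \beta c\,\Psi$; substituting into \eqref{eq:KVB-generalissima} yields $\mathcal{D}(S^{(\beta)}) = \{\varphi + c(\beta\Psi + \Phi)\}$, which is \eqref{eq:KVB-11}.

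I do not anticipate a genuine obstacle: the whole argument is a reduction of the general scheme, once one recognises that the self-adjoint operators on Hilbert subspaces of a one-dimensional space are parametrised by $\mathbb{R}\cup\{\infty\}$ — the real scalars $\beta$ together with the empty operator on $\{0\}$ playing the role of $\beta = \infty$. The only point that will require care is the bookkeeping identifying the trivial-subspace choice with the distinguished extension $S_D$, and hence with the conventional label $\beta = \infty$, and checking that the reality of $\beta$ is exactly the self-adjointness condition on the one-dimensional fibre; both are routine but must be stated explicitly to make the correspondence $\beta \leftrightarrow S^{(\beta)}$ bijective onto the full family of self-adjoint extensions.
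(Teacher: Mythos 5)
Your proposal is correct and follows exactly the route the paper itself takes: Theorem \ref{thm:KVB-General-App11} is obtained by specialising Theorem \ref{thm:KVB-General-App} to a one-dimensional $\ker S^*$, writing $z=c_1\Phi$, $v=c_0\Phi$ in the general decomposition, and classifying the Birman parameters $T$ as either the trivial operator on $\{0\}$ (recovering $S_D$, labelled $\beta=\infty$, with domain $\mathcal{D}(\overline{S})\dotplus S_D^{-1}\ker S^*$) or multiplication by a real $\beta$ on $\mathrm{span}_{\mathbb{C}}\{\Phi\}$, which plugged into \eqref{eq:KVB-generalissima} gives \eqref{eq:KVB-11}. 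Your bookkeeping (in particular $w\in\ker S^*\cap\mathcal{D}(T)^\perp$ in the two cases, and the identification $c_1=\beta c_0$) matches the paper's own explanation following the theorem, so there is nothing to add.
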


 The operator $T$ is the `\emph{Birman parameter}' of the extension $S_T$. In the special case of deficiency indices $(1,1)$, $T$ is the multiplication by $\beta\in\mathbb{R}$ on $\mathrm{span}_{\mathbb{C}}\{\Phi\}$, and each self-adjoint extension $S^{(\beta)}$ of $S$ is identified by restricting $S^*$ to the sub-domain obtained from \eqref{eq:domainstar11} with the condition of self-adjointness
 \begin{equation}
  c_1\,=\beta c_0\,.
 \end{equation}
 The extension $S^{(\infty)}\equiv S_D$ is characterised by $c_0=0$.

 Theorem \ref{thm:KVB-General-App11} shows that in order to find the domain \eqref{eq:KVB-11}, one needs the following data: the operator closure domain $\mathcal{D}(\overline{S})$, a spanning element $\Phi$ of $\ker S^*$, and the action of $S_D^{-1}$ on $\Phi$ for a distinguished self-adjoint extension $S_D$ of $S$ having everywhere defined bounded inverse on $\mathcal{H}$.

 The Birman parameter $T$ also determines the expression of $S_T^{-1}$ for those extensions $S_T$ that are invertible with everywhere defined and bounded inverse in $\cH$. In the special case of deficiency indices $(1,1)$ such expression is given as follows (see, e.g., \cite[Theorems 2.21 and 2.27]{GM-SelfAdj_book-2022}).

\begin{theorem}\label{thm:KVB2-d1}
	Under the assumptions of Theorem \ref{thm:KVB-General-App11}, the self-adjoint extension $S^{(\beta)}$ is invertible with everywhere defined and bounded inverse in $\cH$ if and only if $\beta\neq 0$, in which case
	\begin{equation}\label{eq:Res11KVB}
		(S^{(\beta)})^{-1}=S_D^{-1}+\beta^{-1}\|\Phi\|_\cH^{-2}|\Phi\rangle\langle\Phi|\,.
	\end{equation}
\end{theorem}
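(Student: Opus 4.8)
The plan is to prove the equivalence ``$S^{(\beta)}$ boundedly invertible $\iff\beta\neq 0$'' by a direct kernel computation, and then to verify the resolvent formula \eqref{eq:Res11KVB} by checking that the candidate operator $R:=S_D^{-1}+\beta^{-1}\|\Phi\|_{\cH}^{-2}\,|\Phi\rangle\langle\Phi|$ is a genuine two-sided inverse of $S^{(\beta)}$. First I would record the two elementary facts that drive everything. Since $\Phi\in\ker S^*$ and $\Psi=S_D^{-1}\Phi\in\mathcal{D}(S_D)\subseteq\mathcal{D}(S^*)$, and since $S^*$ restricts to $S_D$ on $\mathcal{D}(S_D)$, one has $S^*\Phi=0$ and $S^*\Psi=S_D\Psi=\Phi$. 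Moreover $\Phi\perp\ran(\overline{S})$, because $\langle\Phi,\overline{S}\varphi\rangle=\langle S^*\Phi,\varphi\rangle=0$ for every $\varphi\in\mathcal{D}(\overline{S})$ (using $\overline{S}^*=S^*$).

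Next I would settle injectivity. Taking a generic $g=\varphi+c(\beta\Psi+\Phi)\in\mathcal{D}(S^{(\beta)})$ from \eqref{eq:KVB-11} and applying $S^{(\beta)}=S^*\upharpoonright\mathcal{D}(S^{(\beta)})$, the facts above give $S^{(\beta)}g=\overline{S}\varphi+c\beta\Phi$. Pairing with $\Phi$ annihilates the first summand and yields $c\beta\|\Phi\|_{\cH}^2=0$. For $\beta\neq 0$ this forces $c=0$, hence $\overline{S}\varphi=0$; since $\overline{S}\subseteq S_D$ and $S_D$ is injective, $\varphi=0$ and therefore $g=0$, so $S^{(\beta)}$ is injective. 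For $\beta=0$, instead, $\Phi\in\mathcal{D}(S^{(0)})$ is a nonzero element of $\ker S^{(0)}$, which rules out bounded invertibility and establishes the ``only if'' direction.

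For the inverse formula when $\beta\neq 0$, note that $R$ is manifestly bounded and everywhere defined, being the sum of the bounded everywhere-defined $S_D^{-1}$ and a bounded rank-one operator. The substantive point is that $Rf\in\mathcal{D}(S^{(\beta)})$ for every $f$. Recalling that $\mathcal{D}(S_D)$ is the $c_0=0$ slice of \eqref{eq:domainstar11} (as stated right after Theorem \ref{thm:KVB-General-App11}), I would write $S_D^{-1}f=\varphi_0+c_1\Psi$ with $\varphi_0\in\mathcal{D}(\overline{S})$, so that $Rf=\varphi_0+c_1\Psi+\beta^{-1}\|\Phi\|_{\cH}^{-2}\langle\Phi,f\rangle\Phi$ has $\Phi$-coefficient $c_0=\beta^{-1}\|\Phi\|_{\cH}^{-2}\langle\Phi,f\rangle$. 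Applying $S_D$ to $S_D^{-1}f=\varphi_0+c_1\Psi$ and using $S_D\Psi=\Phi$ gives $f=\overline{S}\varphi_0+c_1\Phi$; pairing with $\Phi$ and invoking $\Phi\perp\ran(\overline{S})$ yields $c_1=\|\Phi\|_{\cH}^{-2}\langle\Phi,f\rangle$, which is exactly the boundary condition $c_1=\beta c_0$. Hence $Rf\in\mathcal{D}(S^{(\beta)})$, and then $S^{(\beta)}Rf=S^*S_D^{-1}f+\beta^{-1}\|\Phi\|_{\cH}^{-2}\langle\Phi,f\rangle\,S^*\Phi=f$, so $S^{(\beta)}R=\mathbbm{1}$; combined with the injectivity already proved, this forces $R=(S^{(\beta)})^{-1}$, completing the proof.

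The hard part is the verification that the $\Psi$-coefficient $c_1$ of $S_D^{-1}f$ equals $\|\Phi\|_{\cH}^{-2}\langle\Phi,f\rangle$: this is precisely where the orthogonality $\Phi\perp\ran(\overline{S})$ and the identity $S_D\Psi=\Phi$ conspire to match the abstract boundary condition $c_1=\beta c_0$. Everything else is bookkeeping within the direct-sum structure of \eqref{eq:domainstar11}, and the ``only if'' direction is immediate from exhibiting $\Phi$ as a kernel element of $S^{(0)}$.
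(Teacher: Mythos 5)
Your proof is correct, and every step survives scrutiny: the three driving identities $S^*\Phi=0$, $S^*\Psi=S_D\Psi=\Phi$ (valid since $S_D=S_D^*\subseteq S^*$) and $\Phi\perp\ran(\overline{S})$ (valid since $(\overline{S})^*=S^*$) are all legitimate; the computation $S^{(\beta)}g=\overline{S}\varphi+c\beta\Phi$ followed by pairing with $\Phi$ correctly yields injectivity for $\beta\neq 0$ (using that $\overline{S}\subseteq S_D$ and $S_D$ is injective), while exhibiting $\Phi\in\ker S^{(0)}$ settles the ``only if'' direction; and the membership $Rf\in\mathcal{D}(S^{(\beta)})$ is exactly right, since writing $S_D^{-1}f=\varphi_0+c_1\Psi$ (the $c_0=0$ slice of \eqref{eq:domainstar11}, as stated after Theorem \ref{thm:KVB-General-App11}), applying $S_D$ and pairing with $\Phi$ gives $c_1=\|\Phi\|_{\cH}^{-2}\langle\Phi,f\rangle=\beta c_0$, i.e.\ precisely the boundary condition of \eqref{eq:KVB-11}; note that only the \emph{existence} of such a representation is needed, so no appeal to uniqueness of the coefficients is required. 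It is worth pointing out that the paper itself contains no proof of this statement: Appendix A quotes it from the general Kre\u{\i}n--Vi\v{s}ik--Birman--Grubb theory, citing \cite[Theorems 2.21 and 2.27]{GM-SelfAdj_book-2022}, where the inverse/resolvent formula is derived for an arbitrary Birman parameter $T$ acting on a subspace of $\ker S^*$ and then specialised to deficiency indices $(1,1)$. Your argument trades that general machinery for a direct two-sided-inverse verification of the rank-one formula \eqref{eq:Res11KVB}; what this buys is a short, fully self-contained proof within the framework of Theorem \ref{thm:KVB-General-App11}, at the (irrelevant here) cost of not covering higher deficiency. The only case you leave tacit is $\beta=\infty$, where $S^{(\infty)}=S_D$ is invertible by hypothesis and \eqref{eq:Res11KVB} degenerates to the trivial identity $(S^{(\infty)})^{-1}=S_D^{-1}$, so nothing is missing.
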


\section{Canonical decomposition of the AB-operators}\label{app:A}

\subsection{Dirac operator} Let us consider the following unitary operators:
\begin{equation}
	\begin{array}{ccccc}
		V_1& :& L^2(\mathbb{R}^2;\mathbb{C}^2, \ud x \ud y) & \longrightarrow & L^2(\mathbb{R}^+ \times \mathbb{S}^1; \mathbb{C}^2,  r \ud r \ud \vartheta) \\
		& & \psi(x,y)=\begin{pmatrix}
			\psi_+(x,y) \\ \psi_-(x,y)
		\end{pmatrix} & \mapsto & \begin{pmatrix}
			e^{\ii \frac{\vartheta}{2}} \psi_+(x(r,\vartheta),y(r,\vartheta)) \\
			e^{-\ii \frac{\vartheta}{2}} \psi_-(x(r,\vartheta),y(r,\vartheta))
		\end{pmatrix}
	\end{array}
\end{equation}
and then
\begin{equation}
	V_2=U_2 \oplus U_2, \qquad V_3=\widetilde{\mathscr{F}}_{\vartheta} \oplus \widetilde{\mathscr{F}}_\vartheta
\end{equation}
\begin{equation}
\begin{array}{ccccc}
	V_3 & : & L^2(\mathbb{R}^+ \times \mathbb{S}^1; \mathbb{C}^2, \ud r \ud \vartheta) & \longrightarrow & \bigoplus_{k \in \mathbb{Z}} L^2(\mathbb{R}^+; \mathbb{C}^2, \ud r) \\
	& & \psi(r,\vartheta) & \mapsto & \frac{1}{2 \pi} \int_0^{2\pi} e^{\ii (k-\frac{1}{2}) \vartheta} \psi(r, \vartheta) \, \ud \vartheta\ .
\end{array}
\end{equation}
Then for the Dirac operator with Aharonov-Bohm vector potential $H_\alpha$ defined in \eqref{magneticDirac} one has
\begin{equation}
	V_3 V_2 V_1 H_\alpha V_1^* V_2^* V_3^* = \bigoplus_{k \in \mathbb{Z}} \mathsf{h}_{\alpha,k} \, .
\end{equation}

\subsection{Schr\"{o}dinger operator}

Let us consider the following unitary operators:
\begin{equation}
\begin{array}{ccccc}
	U_1 &: & L^2(\mathbb{R}^2, \ud x \ud y) &\longrightarrow & L^2(\mathbb{R}^+\times \mathbb{S}^1, r \ud r \ud \vartheta) \\
	& & \psi(x,y) & \mapsto & (U_1 \psi)(r,\vartheta)=\psi(x(r,\vartheta),y(r,\vartheta))
\end{array}
\end{equation}
\begin{equation}
\begin{array}{ccccc}
	U_2 &: & L^2(\mathbb{R}^+ \times \mathbb{S}^1, r \ud r \ud \vartheta) &\longrightarrow & L^2(\mathbb{R}^+\times \mathbb{S}^1, \ud r \ud \vartheta) \\
	& & \psi(r,\vartheta) & \mapsto & (U_2 \psi)(r,\vartheta)=\sqrt{r}\psi(r,\vartheta)
\end{array}
\end{equation}
\begin{equation}
\begin{array}{ccccc}
	\mathscr{F}_\vartheta &: & L^2(\mathbb{R}^+ \times \mathbb{S}^1, \ud r \ud \vartheta) &\longrightarrow & \bigoplus_{k \in \mathbb{Z}}L^2(\mathbb{R}^+, \ud r) \\
	& & \psi(r,\vartheta) & \mapsto & (\mathscr{F}_\vartheta \psi)_k(r)=\frac{1}{\sqrt{2 \pi}} \int_0^{2 \pi} e^{\ii k \vartheta} \psi(r,\vartheta) \, \ud \vartheta\ .
\end{array}
\end{equation}
Then for the Schr\"{o}dinger operator with Aharonov-Bohm vector potential $S_\alpha$ defined in \eqref{eq:2DSochroedinger}, one has
\begin{equation}
	\mathscr{F}_\vartheta U_2 U_1 S_\alpha U_1^* U_2^* \mathscr{F}_\vartheta^* = \bigoplus_{k \in \mathbb{Z}} \mathsf{S}_{\alpha,k}
\end{equation}


\begin{thebibliography}{10}

\bibitem{Abramowitz-Stegun-1964}
{\sc M.~Abramowitz and I.~A. Stegun}, {\em {Handbook of mathematical functions
  with formulas, graphs, and mathematical tables}}, vol.~55 of {National Bureau
  of Standards Applied Mathematics Series}, For sale by the Superintendent of
  Documents, U.S. Government Printing Office, Washington, D.C., 1964.

\bibitem{Adami-Teta-1998-AharonovBohm}
{\sc R.~Adami and A.~Teta}, {\em {On the {A}haronov-{B}ohm {H}amiltonian}},
  Lett. Math. Phys., 43 (1998), pp.~43--53.

\bibitem{Alford-Wilczeck-1989}
{\sc M.~G. Alford and F.~Wilczek}, {\em {Aharonov-{B}ohm interaction of cosmic
  strings with matter}}, Phys. Rev. Lett., 62 (1989), pp.~1071--1074.

\bibitem{Arai-1993}
{\sc A.~Arai}, {\em {Properties of the {D}irac-{W}eyl operator with a strongly
  singular gauge potential}}, J. Math. Phys., 34 (1993), pp.~915--935.

\bibitem{Birman-1956}
{\sc M.~{\v S}. Birman}, {\em {On the theory of self-adjoint extensions of
  positive definite operators (Russian)}}, Mat. Sb. N.S., 38(80) (1956),
  pp.~431--450.

\bibitem{BorgPule2003}
{\sc J.~L. Borg and J.~V. Pul{\'e}}, {\em {Pauli approximations to the
  self-adjoint extensions of the {A}haronov-{B}ohm {H}amiltonian}}, J. Math.
  Phys., 44 (2003), pp.~4385--4410.

\bibitem{BorrelliCorreggiFermiPlag2024}
{\sc W.~Borrelli, M.~Correggi, and D.~Fermi}, {\em {Pauli {H}amiltonians with
  an {A}haronov-{B}ohm flux}}, J. Spectr. Theory, 14 (2024), pp.~1147--1193.

\bibitem{BCF2024-err}
\leavevmode\vrule height 2pt depth -1.6pt width 23pt, {\em {Pauli
  {H}amiltonians with an {A}haronov-{B}ohm flux}}, arXiv:2312.11971v4 (2024).

\bibitem{B-Derezinski-G-AHP2011}
{\sc L.~Bruneau, J.~Derezi{\'n}ski, and V.~Georgescu}, {\em {Homogeneous
  {S}chr{\"o}dinger operators on half-line}}, Ann. Henri Poincar{\'e}, 12
  (2011), pp.~547--590.

\bibitem{CacciaDancYinZhang-2024}
{\sc F.~Cacciafesta, P.~D'Ancona, Z.~Yin, and J.~Zhang}, {\em {Dispersive
  estimates for Dirac equations in Aharonov-Bohm magnetic fields: massless
  case}}, arXiv:2407.12369 (2024).

\bibitem{CacciaFanelli2017}
{\sc F.~Cacciafesta and L.~Fanelli}, {\em {Dispersive estimates for the {D}irac
  equation in an {A}haronov-{B}ohm field}}, J. Differential Equations, 263
  (2017), pp.~4382--4399.

\bibitem{CacciaYinZhang2022}
{\sc F.~Cacciafesta, Z.~Yin, and J.~Zhang}, {\em {Generalized {S}trichartz
  estimates for wave and {D}irac equations in {A}haronov-{B}ohm magnetic
  fields}}, Dyn. Partial Differ. Equ., 19 (2022), pp.~71--90.

\bibitem{Calkin-1940}
{\sc J.~W. Calkin}, {\em {Symmetric transformations in {H}ilbert space}}, Duke
  Math. J., 7 (1940), pp.~504--508.

\bibitem{Dabrowski-Stovicek-1997-AharonovBohm}
{\sc L.~D\c{a}browski and P.~{\v S}{\v t}ov{\'\i}{\v c}ek}, {\em
  {Aharonov-{B}ohm effect with {$\delta$}-type interaction}}, J. Math. Phys.,
  39 (1998), pp.~47--62.

\bibitem{Derezinski-Georgescu-2021}
{\sc J.~Derezi{\'n}ski and V.~Georgescu}, {\em {On the domains of {B}essel
  operators}}, Ann. Henri Poincar{\'e}, 22 (2021), pp.~3291--3309.

\bibitem{Derezinski-Richard-2017}
{\sc J.~Derezi{\'n}ski and S.~Richard}, {\em {On {S}chr{\"o}dinger operators
  with inverse square potentials on the half-line}}, Ann. Henri Poincar{\'e},
  18 (2017), pp.~869--928.

\bibitem{GM-hydrogenoid-2018}
{\sc M.~Gallone and A.~Michelangeli}, {\em {Hydrogenoid spectra with central
  perturbations}}, Rep. Math. Phys., 84 (2019), pp.~215--243.

\bibitem{MG_DiracCoulomb2017}
\leavevmode\vrule height 2pt depth -1.6pt width 23pt, {\em {Self-adjoint
  realisations of the {D}irac-{C}oulomb {H}amiltonian for heavy nuclei}}, Anal.
  Math. Phys., 9 (2019), pp.~585--616.

\bibitem{GM-SelfAdj_book-2022}
\leavevmode\vrule height 2pt depth -1.6pt width 23pt, {\em {Self-adjoint
  extension schemes and modern applications to quantum Hamiltonians}}, Springer
  Cham, 2023.
\newblock Foreword by Sergio Albeverio.

\bibitem{GMO-KVB2017}
{\sc M.~Gallone, A.~Michelangeli, and A.~Ottolini}, {\em {Kre{\u\i}n-Vi{\v
  s}ik-Birman self-adjoint extension theory revisited}}, in {Mathematical
  Challenges of Zero Range Physics}, A.~Michelangeli, ed., {INdAM-Springer
  series, Vol.~42}, Springer International Publishing, 2020, pp.~239--304.

\bibitem{deSousaGerbert_1989}
{\sc P.~d.~S. Gerbert}, {\em {Fermions in an Aharonov-Bohm field and cosmic
  strings}}, Phys. Rev. D, 40 (1989), pp.~1346--1349.

\bibitem{deSousaGerbert_Jackiw_1989}
{\sc P.~d.~S. Gerbert and R.~Jackiw}, {\em {Classical and quantum scattering on
  a spinning cone}}, Comm. Math. Phys., 124 (1989), pp.~229--260.

\bibitem{GitmanTyutinVoronov-AB-2012}
{\sc D.~M. Gitman, I.~V. Tyutin, and B.~L. Voronov}, {\em {Schr{\"o}dinger and
  Dirac operators with the Aharonov--Bohm and magnetic-solenoid fields}},
  Physica Scripta, 85 (2012), p.~045003.

\bibitem{GTV-2012}
{\sc D.~M. Gitman, I.~V. Tyutin, and B.~L. Voronov}, {\em {Self-adjoint
  extensions in quantum mechanics}}, vol.~62 of {Progress in Mathematical
  Physics}, Birkh{\"a}user/Springer, New York, 2012.
\newblock General theory and applications to Schr{\"o}dinger and Dirac
  equations with singular potentials.

\bibitem{Grubb-1968}
{\sc G.~Grubb}, {\em {A characterization of the non-local boundary value
  problems associated with an elliptic operator}}, Ann. Scuola Norm. Sup. Pisa
  (3), 22 (1968), pp.~425--513.

\bibitem{Grubb-DistributionsAndOperators-2009}
\leavevmode\vrule height 2pt depth -1.6pt width 23pt, {\em {Distributions and
  operators}}, vol.~252 of {Graduate Texts in Mathematics}, Springer, New York,
  2009.

\bibitem{Khalilov2010}
{\sc V.~R. Khalilov}, {\em {Fermion bound states in the Aharonov-Bohm field in
  2+1 dimensions}}, Theor. Math. Phys., 163 (2010), pp.~511--516.

\bibitem{Khalilov2014}
\leavevmode\vrule height 2pt depth -1.6pt width 23pt, {\em {Bound states of
  massive fermions in Aharonov--Bohm-like fields}}, Eur. Phys. J. C, 74 (2014),
  p.~2708.

\bibitem{KM-2015-Birman}
{\sc M.~Khotyakov and A.~Michelangeli}, {\em {Translation and adaptation of
  Birman's paper ``On the theory of self-adjoint extensions of positive
  definite operators''}}, in {Mathematical Challenges of Zero Range Physics},
  A.~Michelangeli, ed., {INdAM-Springer series, Vol.~42}, Springer
  International Publishing, 2020.

\bibitem{KraneNuclPhys1988}
{\sc K.~S. Krane}, {\em {Introductory nuclear physics}}, Wiley, New York, NY,
  1988.

\bibitem{Krein-1947}
{\sc M.~G. Kre{\u\i}n}, {\em {The theory of self-adjoint extensions of
  semi-bounded {H}ermitian transformations and its applications. {I}}}, Rec.
  Math. [Mat. Sbornik] N.S., 20(62) (1947), pp.~431--495.

\bibitem{am_GPlim}
{\sc A.~Michelangeli}, {\em {Role of scaling limits in the rigorous analysis of
  {B}ose-{E}instein condensation}}, J. Math. Phys., 48 (2007), p.~102102.

\bibitem{NP98}{\sc D.~Noja, A~Posilicano}, {\em The wave equation with one point interaction and the (linearized) classical electrodynamics of a point particle}, Ann.Inst. H.Poincare' (Phys.Theor), {\bf 68} (3) (1998), 351-377 

\bibitem{NP99}{\sc D.~Noja, A~Posilicano} {\em On the point limit of the Pauli-Fierz model},  Ann.Inst. H.Poincare' (Phys.Theor.), {\bf 71} (4) (1999), 425-457 

\bibitem{Persson06}
{\sc M.~Persson}, {\em {On the {D}irac and {P}auli operators with several
  {A}haronov-{B}ohm solenoids}}, Lett. Math. Phys., 78 (2006), pp.~139--156.

\bibitem{Posilicano-Reginato-2023}
{\sc A.~Posilicano and L.~Reginato}, {\em {{$\rm D^2=H+\frac14$} with point
  interactions}}, Oper. Matrices, 17 (2023), pp.~1167--1190.

\bibitem{rs1}
{\sc M.~Reed and B.~Simon}, {\em {Methods of {M}odern {M}athematical
  {P}hysics}}, vol.~1, New York Academic Press, 1972.

\bibitem{rs2}
\leavevmode\vrule height 2pt depth -1.6pt width 23pt, {\em {Methods of modern
  mathematical physics. {II}. {F}ourier analysis, self-adjointness}}, Academic
  Press [Harcourt Brace Jovanovich, Publishers], New York-London, 1975.

\bibitem{schmu_unbdd_sa}
{\sc K.~Schm{\"u}dgen}, {\em {Unbounded self-adjoint operators on {H}ilbert
  space}}, vol.~265 of {Graduate Texts in Mathematics}, Springer, Dordrecht,
  2012.

\bibitem{Tamura_rel_2003}
{\sc H.~Tamura}, {\em {Resolvent convergence in norm for {D}irac operator with
  {A}haronov-{B}ohm field}}, J. Math. Phys., 44 (2003), pp.~2967--2993.

\bibitem{Thaller-Dirac-1992}
{\sc B.~Thaller}, {\em {The {D}irac equation}}, {Texts and Monographs in
  Physics}, Springer-Verlag, Berlin, 1992.

\bibitem{Vishik-1952}
{\sc M.~I. Vi{\v s}ik}, {\em {On general boundary problems for elliptic
  differential equations (Russian)}}, Trudy Moskov. Mat. Ob{\v s}{\v c}., 1
  (1952), pp.~187--246.

\bibitem{Wasow-ODE}
{\sc W.~Wasow}, {\em {Asymptotic expansions for ordinary differential
  equations}}, Dover Publications, Inc., New York, 1987.
\newblock Reprint of the 1976 edition.

\bibitem{Weidmann-book1987}
{\sc J.~Weidmann}, {\em {Spectral theory of ordinary differential operators}},
  vol.~1258 of {Lecture Notes in Mathematics}, Springer-Verlag, Berlin, 1987.

\end{thebibliography}

\def\cprime{$'$}

\end{document}